\documentclass[showpacs,amsmath,amssymb,twocolumn,aps,pra,superscriptaddress,notitlepage,floatfix,10pt]{revtex4-2}

\usepackage{stylesetting}
\usepackage{orcidlink}
\usepackage[normalem]{ulem}

\makeatletter

\newcommand{\appendixtableofcontents}{%
  \section*{Contents}%
  \@starttoc{atoc}%
}

\newcommand{\startappendixtoc}{%
  \let\originaladdcontentsline\addcontentsline
  \renewcommand{\addcontentsline}[3]{%
    \def\firstarg{##1}%
    \def\tocname{toc}%
    \ifx\firstarg\tocname
      \originaladdcontentsline{atoc}{##2}{##3}%
    \else
      \originaladdcontentsline{##1}{##2}{##3}%
    \fi
  }%
}

\makeatother

\newcommand{\FudanEMW}{Key Laboratory for Information Science of Electromagnetic Waves (Ministry of Education), Fudan University, Shanghai 200433, China}
\newcommand{\PKUCs}{Center on Frontiers of Computing Studies, School of Computer Science, Peking University, Beijing 100871, China}

 \date{\today}

\begin{document}

\title{Ultra-Precise Quantum Projective Designs in Constant Depth}

\author{Qingyue Zhang~\orcidlink{0009-0000-5638-9746}}
\affiliation{\FudanEMW}

\author{Junjie Chen~\orcidlink{0009-0004-0401-1257}}
\affiliation{Center for Quantum Information, Institute for Interdisciplinary Information Sciences, Tsinghua University, Beijing, 100084 China}

\author{Zhou You~\orcidlink{0000-0002-6140-2092}}
\affiliation{\PKUCs}

\author{You Zhou~\orcidlink{0000-0003-0886-077X}}
\email{you\_zhou@fudan.edu.cn}
\affiliation{\FudanEMW}

\date{\today}

\begin{abstract}
Random quantum objects are powerful resources for quantum information processing, yet exact Haar randomness is costly and typically unnecessary. We introduce an explicit sparse commuting circuit ensemble on $n$ qubits that reproduces low-order Haar moments in the stringent relative-error sense. The circuit consists of a sparse Clifford phase layer followed by independent single-qubit Clifford gates. Acting on a simple product state, the resulting ensemble forms $\epsilon$-approximate projective $2$- and $3$-designs in relative error, with the required logarithmic interaction degree being asymptotically optimal within this circuit family. It admits an ancilla-free implementation of quantum depth $O(\log(n/\epsilon))$ on an all-to-all architecture, as well as an adaptive constant-depth implementation---in fact, depth seven---using $O(n\log(n/\epsilon))$ ancilla qubits. Departing from existing shallow-design paradigms, our analysis exploits the intrinsic moment structure of commuting phase circuits; at third order, this requires a new block decomposition and combinatorial analysis that also suggests a route toward higher-order shallow designs. Our results show that precise Haar-like statistics can emerge from sparse commuting dynamics with remarkably low quantum resources, with applications to randomized characterization, quantum metrology, quantum algorithms, and many-body physics.
\end{abstract}
\maketitle

\emph{Introduction.}---
Random quantum states and operations are fundamental resources in quantum information science~\cite{mele2024introduction}, with applications ranging from quantum tomography~\cite{gebhart2023learning}, quantum benchmarking~\cite{helsen2022general,eisert2020quantum,kliesch2021theory} and randomized measurements~\cite{huang2020predicting,elben2023randomized} to quantum advantage~\cite{hangleiter2023computational}, quantum algorithms~\cite{cerezo2021cost,cerezo2021variational}, and quantum many-body dynamics~\cite{hayden2007black}.  
Exact Haar randomness is typically unnecessary; quantum designs reproduce its relevant low-order moments at lower cost.
A central challenge in the field is to minimize the quantum resources required to generate such quantum randomness. 
The recent frameworks~\cite{schuster2024random,laracuente2026approximate} have shown that finite-order Haar statistics can already emerge at logarithmic depth, well before full scrambling~\cite{brandao2016local,haferkamp2022random,brandao2021models}.  Subsequent studies have further explored all-to-all connectivity~\cite{lee2026shallow}, non-Clifford resources~\cite{leone2026non,bittel2026adaptively}, and adaptive circuit models~\cite{foxman2025random}.

Pushing the depth below logarithmic scale, however, has so far come with weaker notions of randomness: known sublogarithmic~\cite{cui2025unitary} or constant-depth constructions~\cite{foxman2025random} generally provide additive or measurable-error guarantees. Relative error formalizes an ultra-precise notion of design accuracy: it requires multiplicative agreement with Haar even on exponentially small moment sectors, and therefore automatically implies weaker additive and measurable-error guarantees. Such fine-grained control is particularly important for moment inversion and variance bounds, as in classical shadows~\cite{hu2023classical,bertoni2024shallow,king2025triply,zhang2025robust} 
and quantum metrology~\cite{zhou2026randomized,du2026complexity,mao2026near}. 
\emph{A key open question is therefore whether constant-depth architectures can realize quantum designs with vanishing relative error.}
\begin{figure}[h]
\centering \includegraphics[width=\linewidth]{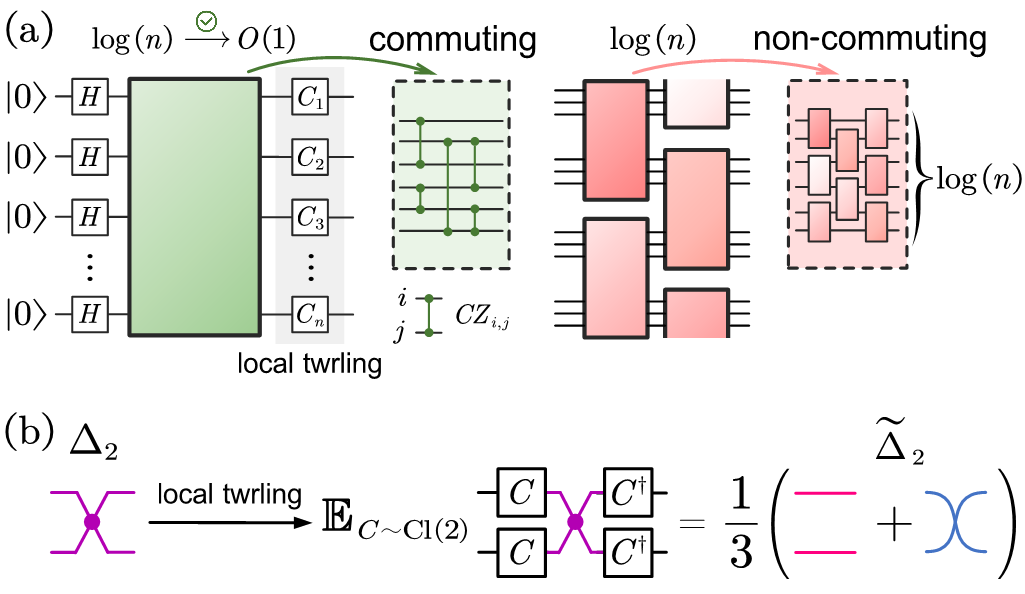}  
    \caption{Schematic illustration of random circuit ensemble.
(a) Structural contrast between our shallow phase-circuit construction (left) and generic shallow random circuits based on "gluing" non-commuting $O(\log n)$-qubit patches \cite{schuster2024random} or non-commuting local bricks~\cite{haferkamp2022random} (right).
(b) Our moment analysis makes the deviation from Haar transparent. For example, at second order, the relative-error obstruction is carried by the residual term $\Delta_2^{\otimes n}$. Local Clifford twirling maps $\Delta_2$ to $\widetilde{\Delta}_2$, exponentially suppressing this contribution, as quantified around \cref{eq:twirled-delta2}.
}
\label{fig:battlehuang}
\end{figure}

Here we resolve this question affirmatively with a simple commuting architecture, which also admits a constant-depth adaptive realization. We consider a sparse Clifford phase circuit followed by independent single-qubit Clifford gates, with each controlled-$Z$ ($CZ$) edge included independently with probability $p=\gamma\log n/n$.
We prove that the resulting ensemble forms approximate projective $2$- and $3$-designs with vanishing relative error, and further show that the resulting logarithmic-degree sparsity threshold is asymptotically optimal within this circuit family.

\Cref{fig:battlehuang}(a) summarizes our circuit architecture and its key properties. Unlike generic brickwork random circuits~\cite{haferkamp2022random,dalzell2022random,schuster2024random}, where randomness is built up through repeated layers of generally non-commuting gates, our circuit structure is built almost entirely from mutually commuting diagonal circuits. Remarkably, we show that repeated non-commuting dynamics are unnecessary for generating precise Haar-like statistics. The moment deviation of commuting phase circuits from Haar is confined to transparent terms. A single layer of independent local Clifford gates suffices to suppress these terms and achieve vanishing relative error, even at the optimal logarithmic interaction degree.
Establishing this persistence at third order is substantially more involved, requiring a new block decomposition and combinatorial analysis beyond the simultaneous diagonalization available at second order.

In our construction, commutativity is not an obstacle to Haar-like randomness but a resource for realizing it in a highly parallel form.
Using adaptive measurements and classical feedforward~\cite{buhrman2024state,zi2025constant,liu2025state}, the $CZ$ layer can be compressed to $O(1)$ quantum depth with $O(n\log(n/\epsilon))$ ancillas, while an ancilla-free implementation on all-to-all architectures has depth $O(\log(n/\epsilon))$. Importantly, the use of measurements and feedforward places our construction beyond the nonadaptive circuit setting underlying earlier depth barriers for shallow designs \cite{schuster2024random,cui2025unitary}.

\emph{Problem setup and main result.}---
We mainly consider the following ensemble of random unitaries on an $n$-qubit system, denoted by the \emph{shallow phase ensemble} $\mc E_{\mathrm{shallow}}^{(\gamma)}$. 
Each unitary has the form $U=U_{\mathrm{loc}}U_SU_{CZ}H^{\otimes n}$ in \cref{fig:battlehuang}(a): a fixed Hadamard layer, followed by a sparse diagonal Clifford layer generated by phase gates $S=\operatorname{diag}(1,i)$ and controlled-$Z$ gates $CZ=\operatorname{diag}(1,1,1,-1)$, and finally a local Clifford layer. Here $U_{\mathrm{loc}}=\bigotimes_{q\in[n]}U_q$, with each $U_q$ sampled independently from the single-qubit Clifford group $\mathrm{Cl}(2)$. 

The diagonal layer is $U_S=\prod_{q\in[n]}S_q^{A_{q,q}}$ and $U_{CZ}=\prod_{i<j}CZ_{i,j}^{A_{i,j}}$, where $A$ is a random symmetric matrix specifying both local phases and the $CZ$ interaction graph. The diagonal entries $A_{q,q}$ are independent and uniform in $\{0,1,2,3\}$, while the off-diagonal entries are independent Bernoulli variables with $p:=\Pr(A_{i,j}=1)=\gamma\log n/n$. Thus, $\gamma$ controls the entangling sparsity, with expected degree $\gamma\log n$. Unless stated otherwise, $\gamma>0$ is constant sparsity parameter independent of $n$.
All gates in $U_SU_{CZ}$ are diagonal and mutually commuting, reflecting the IQP structure~\cite{bremner2011classical}. This structure is also naturally compatible with platforms supporting long-range entangling interactions, such as trapped-ion and Rydberg-atom architectures~\cite{liu2025certified,bluvstein2026faulttolerant,evered2023high}. Although the single-qubit phase layer $U_S$ could be absorbed into $U_{\mathrm{loc}}$ without changing the ensemble distribution, we keep it explicit to expose the phase structure used in the moment analysis.

Moreover, we use three reference ensembles for comparison. The dense phase ensemble
\(\mc E_{\mathrm{phase}}\) has the same circuit structure as $\mc E_{\mathrm{shallow}}^{(\gamma)}$, including
\(U_{\mathrm{loc}}\), but with each \(CZ\) edge chosen independently with
balanced probability \(p=1/2\). The bare phase ensemble
\(\mc E_{\mathrm{bare}}\) is obtained from
\(\mc E_{\mathrm{phase}}\) by removing \(U_{\mathrm{loc}}\). It is used only to isolate the obstruction present before local Clifford twirling. Finally, \(\mc E_{\mathrm{Haar}}\) denotes the ideal Haar ensemble. 

For any unitary ensemble $\mc E$, sampling $U\sim\mc E$ and applying it to the fixed reference state $\ket{\mathbf 0}=\ket{0}^{\otimes n}$ induces the pure-state ensemble $\{U\ket{\mathbf 0}\}_{U\sim\mc E}$, whose $k$-th moment is $\mb M_{\mc E}^{(k)}:=\mathbb E_{U\sim\mc E}(U\ket{\mathbf 0}\!\bra{\mathbf 0}U^\dagger)^{\otimes k}$. For convenience, we use the same notation for a unitary ensemble and its induced state ensemble. In our construction, the latter can be viewed as sparse random graph states~\cite{hein2004multiparty,ghosh2025random,zhou2022entanglement} 
followed by independent local Clifford twirling. For the Haar ensemble, $\mb M_{\mc E_{\mathrm{Haar}}}^{(k)}=\Pi_{\mathrm{sym}}^{(k)}/r_{\mathrm{sym}}^{(k)}$, where $\Pi_{\mathrm{sym}}^{(k)}:=\frac{1}{k!}\sum_{\pi\in S_k}V_n(\pi)$ projects onto the fully symmetric subspace~\cite{mele2024introduction}. Here $V_n(\pi)$ denotes the unitary operator that permutes the $k$ copies of the $n$-qubit Hilbert space according to $\pi$, and factorizes as $V_n(\pi)=V_1(\pi)^{\otimes n}$. The symmetric-subspace dimension is $r_{\mathrm{sym}}^{(k)}=\binom{D+k-1}{k}$, with $D=2^n$.

\comments{For each unitary ensemble \(\mc E\) introduced above, applying a sampled unitary to the reference state \(\ket{\mathbf 0}\) induces the corresponding projective state ensemble. Its \(k\)-th state moment operator is defined as
$\mb M_{\mc E}^{(k)}
:=
\mathbb E_{U\sim \mc E}
\Bigl(U\ket{\mathbf 0}\!\bra{\mathbf 0}U^\dagger\Bigr)^{\otimes k}.$
For example, the \(k\)-th moment operator of the Haar ensemble is $\mb M_{\mc E_{\mathrm{Haar}}}^{(k)}
=
\frac{1}{r_{\mathrm{sym}}^{(k)}}
\Pi_{\mathrm{sym}}^{(k)}$~\cite{mele2024introduction}. Here, $\Pi_{\mathrm{sym}}^{(k)}:= \frac{1}{k!}\sum_{\pi\in S_k}V_n(\pi)$ denotes the projector onto the fully symmetric subspace of the \(k\)-copy
\(n\)-qubit Hilbert space, where \(V_n(\pi)\) permutes the \(k\) copies
according to \(\pi\in S_k\). Its rank is $r_{\mathrm{sym}}^{(k)}
:=\binom{D+k-1}{k},$ with $D:=2^n.$}

We measure the deviation from Haar randomness in relative error~\cite{mele2024introduction}: $\mc E$ is an $\epsilon$-approximate projective state $k$-design if
\begin{equation}
(1-\epsilon)\,\mb M_{\mc E_{\mathrm{Haar}}}^{(k)}
\preceq
\mb M_{\mc E}^{(k)}
\preceq
(1+\epsilon)\,\mb M_{\mc E_{\mathrm{Haar}}}^{(k)} .
\label{eq:rel-design-def}
\end{equation}
Relative error provides a stringent multiplicative notion of randomness and guarantees indistinguishability from Haar under arbitrary $k$-query protocols, including adaptive ones. It directly implies additive-error bounds, whereas converting an additive bound into relative error incurs an exponentially large prefactor \cite{brandao2016local}. Our main result establishes relative-error state $2$- and $3$-designs from shallow phase circuits with a single local Clifford layer, summarized below.

\begin{theorem}[Informal: Relative-error design from shallow phase circuits]
\label{thm:main-informal}
For every target accuracy $\epsilon\in(0,1)$, the $n$-qubit shallow phase ensemble $\mc E_{\mathrm{shallow}}^{(\gamma)}$ forms $\epsilon$-approximate projective $2$- and $3$-designs in relative error.

With overwhelmingly high probability, the ensemble can be implemented either
\begin{itemize}
\item[(i)] in $O(1)$ quantum depth using $O\!\left(n\log\frac{n}{\epsilon}\right)$ ancilla qubits;
\item[(ii)] without ancillas in depth $O\!\left(\log\frac{n}{\epsilon}\right)$.
\end{itemize}
\end{theorem}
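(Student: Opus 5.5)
The plan is to compute the moment operator exactly in the computational basis and compare it with $\Pi_{\mathrm{sym}}^{(k)}/r_{\mathrm{sym}}^{(k)}$. I will first treat the bare part, then the effect of local twirling, and handle the implementation claims last.

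\textbf{Moment of the phase layer.} The state before $U_{\mathrm{loc}}$ is $\ket{\psi_A}=2^{-n/2}\sum_x i^{x^TAx}\ket{x}$, where the diagonal terms contribute $i^{A_{qq}x_q}$ and the $CZ$ terms contribute $(-1)^{x_iA_{ij}x_j}$. Its $k$-th moment has entries
\[
2^{-nk}\,\mathbb E_A\, i^{\sum_a x_a^TAx_a-\sum_a y_a^TAy_a}.
\]
Averaging the diagonal entries uniformly over $\mathbb Z_4$ forces the columnwise constraint $\sum_a x_{a,q}=\sum_a y_{a,q}$ for every qubit $q$.

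Each off-diagonal edge is independent. It contributes the factor $1-2p\cdot\mathbb 1[\text{edge odd}]$, where the edge is odd when $\sum_a(x_{a,i}x_{a,j}-y_{a,i}y_{a,j})$ is odd. For $k=2$ the surviving pairs are of three kinds: $(x_1,x_2)=(y_1,y_2)$, the swap $(x_1,x_2)=(y_2,y_1)$, and "mixed" pairs. In a mixed pair the qubits split into blocks on which the pair looks like the identity or like the swap.

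In the dense case $p=1/2$ the edge factor kills every mixed configuration that has an odd edge. This gives $\mb M_{\mathrm{bare}}^{(2)}=$ (Haar-like part) $+\,c\,\Delta_2^{\otimes n}$. Here $\Delta_2$ is a single-qubit operator that corrects the double counting on the diagonal $x_1=x_2$.

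\textbf{Effect of the local layer.} Twirling each qubit by $\mathrm{Cl}(2)$ (a $3$-design on one qubit) replaces $\Delta_2$ by $\widetilde\Delta_2$, a combination of $V_1(\pi)$'s. This combination has small norm relative to the symmetric projector, giving an $O(c^{n})$ contribution with $c<1$.

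For sparse $p$, each mixed configuration with cut $T\subset[n]$ between identity-type and swap-type qubits is weighted by $(1-2p)^{e(T,\bar T)}$. Since all pairs across the cut are odd edges, this is $(1-2p)^{|T||\bar T|}$. After twirling, the local operators on $T$ and on $\bar T$ carry suppression factors of the form $\lambda^{|T|}$ or $\lambda^{|\bar T|}$, with $\lambda<1$ (e.g.\ $\lambda\sim 1/3$ or $2/3$).

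I will then sum over cuts. The total deviation is bounded by
\[
\sum_{t}\binom nt (1-2p)^{t(n-t)}\lambda^{\min(t,n-t)},
\]
which is at most $\sum_t \bigl(n\, e^{-2pn(1-t/n)}\lambda\bigr)^t$. With $p=\gamma\log n/n$ this is $\le\epsilon$ once $\gamma$ exceeds a constant threshold, for $n$ large; smaller $n$ is absorbed into the constants via $\log(n/\epsilon)$. Converting to the operator inequality \eqref{eq:rel-design-def} uses that each term is a nonnegative combination of permutation operators restricted to the symmetric subspace. So I bound $\|\mb M-\mb M_{\mathrm{Haar}}\|$ relative to $\Pi_{\mathrm{sym}}/r_{\mathrm{sym}}$, using that the twirled moment commutes with local permutations.

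\textbf{Third order.} This is the main obstacle. The column constraint now allows qubitwise patterns indexed by $S_3$ plus extra degenerate patterns, so qubits partition into up to six blocks labelled by permutations. Odd edges appear between blocks whose permutations differ in an appropriate parity, but not between all such pairs. For example, two distinct transpositions might produce even edges on some inputs. So I would show that for any labelling that is not constant, the number of odd cross edges is at least $\Omega(m(n-m))$, where $m$ is the size of the largest block. The local twirl again contributes $\lambda^{n-m}$. A combinatorial counting of labellings then closes the bound as at $k=2$.

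\textbf{Implementation.} For (ii), the $CZ$ graph has maximum degree $O(\log(n/\epsilon))$ with high probability by a Chernoff bound. Vizing's theorem edge-colours it with $\Delta+1$ colours, and each colour class is one layer of commuting $CZ$s, giving depth $O(\log(n/\epsilon))$. Here I allow $\gamma$ to be scaled with $\log(n/\epsilon)/\log n$ so that the accuracy $\epsilon$ is reached. For (i), I use the standard fan-out gadget: copy each qubit into $\deg(q)$ ancillas via a constant-depth GHZ-style preparation with measurement and Pauli feedforward. All $CZ$s then act on disjoint qubit pairs in one layer, after which the ancillas are uncomputed by $X$-measurement with $Z$ corrections. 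This uses $O(n\log(n/\epsilon))$ ancillas and constant depth.
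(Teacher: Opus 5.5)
\textbf{Gap: the third-moment bound, and with it the $3$-design half of the theorem, is not established.} The route you propose for it rests on a false combinatorial claim and on a counting step that cannot close.

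\emph{Second order.} The $k=2$ sketch is off in a way that matters later. Identity- and swap-type sites carry no factor $\lambda$: the operators $\id_4-\widetilde{\Delta}_2$ and $\sw_2-\widetilde{\Delta}_2$ act as $\pm1$ on the antisymmetric local vector. Your cut sum also drops the $\Delta_2$-sites entirely, and these are the real difficulty. If you bound terms by their norms, the placements of $|I_0|\approx n/2$ such sites contribute about $\binom{n}{n/2}(2/3)^{n/2}$, which is exponentially large, while the $CZ$ damping is only polynomial. What rescues $k=2$ is exact simultaneous diagonalization in the local symmetric/antisymmetric basis. There the $\Delta_2$-positions enter with binomial weights $\binom{n_1}{a}(2/3)^a(1/3)^{n_1-a}$, which sum to one. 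At $k=3$ this tool is gone: the $2\times2$ blocks $C(\pi_1,\pi_2)$ of the twirled blocks do not commute, and some twirled blocks are non-Hermitian. So "closes the bound as at $k=2$" has nothing to transfer.

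\emph{Third order: the structure and the false claim.} Each of the $18$ non-aligned local units lies in exactly one odd and one even support. The correct labels are therefore the nine pairs $(\pi_1,\pi_2)$ plus $\Delta_3$, not single permutations. An edge is odd if and only if the two labels differ in both entries. Your claim that every non-constant labelling has $\Omega(m(n-m))$ odd edges then fails. Put $n-2$ sites in $B(\pi_1,\pi_2)$, one in $B(\pi_1,\pi_2')$ and one in $B(\pi_1',\pi_2)$. Every permutation labelling of this configuration is non-constant with $m(n-m)\ge n-1$, yet exactly one edge is odd. The true lower bound is only $N_{\mathbf I}\ge m_1m_2$, where $m_1m_2$ is the largest product of counts lying in different rows and different columns of the $3\times3$ board.

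\emph{Third order: why termwise counting fails.} Take $n_{\mathfrak a},n_{\mathfrak b},n_{\mathfrak c}\approx n/3$ in one row and a single site at $(\pi_1',\pi_2')$. Then $N\approx 2n/3$, so the damping is only about $n^{-4\gamma/3}$. Meanwhile there are about $3^n$ assignments, each of norm $(2/3)^n$, giving roughly $2^n$. No twirl factor $\lambda^{n-m}$ helps, since here $n-m=1$.

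\emph{What is missing.} You need the following ingredients.
\begin{itemize}
\item Fix the count data, so that $(1-2p)^{N_{\mathbf I}}$ factors out of each group.
\item Resum a single macroscopic cell together with the $\Delta_3$-sites, using the common $\mathrm{sym}\oplus\mathrm{std}\oplus\mathrm{std}$ block structure; this gives norm at most $(2/3)^{|\Lambda|}$.
\item When a whole row is macroscopic, resum that row with the $\Delta_3$-sites at the untwirled level, via $\Delta_3+\sum_{\pi_2}B(\pi_1,\pi_2)=V_1(\pi_1)$. Entrywise nonnegativity (Perron--Frobenius) turns this support inclusion into a norm bound, and contractivity of the twirl carries it to the twirled operator.
\item In that last regime, use chessboard maximality to bound the other two rows' counts by $O(m_2)$. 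Their multiplicity $n^{O(m_2)}$ is then beaten by $(1-2p)^{m_1m_2}\le n^{-2\alpha'\gamma m_2}$.
\end{itemize}

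\emph{Implementation.} These claims are fine. For (ii) you follow the paper's Chernoff-plus-Vizing argument. For (i), your GHZ fan-out gadget is a valid constant-depth alternative to the degree-three graph-state extension of H{\o}yer et al.\ used in the paper, which gives depth seven. Both use $O(n+|E|)$ ancillas.
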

~\cref{thm:main-informal} provides a qualitatively different route to ultra-precise shallow designs. The commuting structure of the circuits in $\mc E_{\mathrm{shallow}}^{(\gamma)}$ enables an adaptive implementation with measurements and classical feedforward in quantum depth seven. The construction in Theorem~\ref{thm:main-informal} includes each $CZ$ interaction independently with probability $p=\gamma\log n/n$. We next show that this selection of $p$ is asymptotically optimal within the phase circuit family.
\begin{prop}[Informal: Optimality of logarithmic sparsity]
\label{prop:log-sparsity-optimal-main}
It requires $p=\Omega\left(\log n/n\right)$ to form $\epsilon$-approximate projective designs in relative error within the phase circuit family.
\end{prop}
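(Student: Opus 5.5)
The plan is a contradiction argument. Assume $p=o(\log n/n)$, or more precisely $pn\le \log n - C$ for a suitable constant $C$. I will exhibit a single test vector on which the second moment $\mb M^{(2)}_{\mc E}$ exceeds $(1+\epsilon)\mb M^{(2)}_{\mc E_{\mathrm{Haar}}}$, which violates the upper bound in \cref{eq:rel-design-def}. The mechanism is \emph{isolated qubits} of the random $CZ$ graph $A$. When $pn$ is below $\log n$, the graph has many vertices of degree zero. On each such vertex the output state factorizes into a uniformly random single-qubit stabilizer state, and a stabilizer state is far from Haar in relative error. Since this argument only uses that the diagonal layer is a random phase circuit followed by local Cliffords, it covers the whole phase circuit family.

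\textbf{Step 1 (test vector and Haar value).} Take the test vector $\ket{\mathbf 0}^{\otimes 2}$, which lies in the symmetric subspace. Its Haar value is $\bra{\mathbf 0}^{\otimes 2}\mb M^{(2)}_{\mc E_{\mathrm{Haar}}}\ket{\mathbf 0}^{\otimes 2}=1/r^{(2)}_{\mathrm{sym}}=2/(D(D+1))$. The ensemble value is $\mathbb E\,|\langle \mathbf 0|\psi\rangle|^4$, where $\psi=U\ket{\mathbf 0}$.

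\textbf{Step 2 (factorization on isolated vertices).} Condition on the graph $A$ and let $I$ be its set of isolated vertices, with $m=|I|$. Every $CZ$ gate avoids $I$, and the $S$ gates and local Cliffords act qubitwise. Hence
\[
\psi=\bigotimes_{q\in I}\ket{s_q}\otimes\ket{\phi},
\]
where the $\ket{s_q}=U_qS^{A_{q,q}}\ket{+}$ are independent. Since $U_q$ is uniform on $\mathrm{Cl}(2)$, each $\ket{s_q}$ is uniform over the six single-qubit stabilizer states. A direct count gives
\[
\mathbb E|\langle 0|s_q\rangle|^4=\tfrac16\cdot 1+\tfrac46\cdot\tfrac14=\tfrac13=\tfrac43\cdot\tfrac14 .
\]
For the remaining $n-m$ qubits, the independent local Clifford layer is a $1$-design on each qubit. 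Therefore $\mathbb E|\langle \mathbf 0|\phi\rangle|^2=2^{-(n-m)}$, and Jensen's inequality gives $\mathbb E|\langle \mathbf 0|\phi\rangle|^4\ge 4^{-(n-m)}$. Combining the two factors and averaging over $A$,
\[
\mathbb E|\langle\mathbf 0|\psi\rangle|^4\;\ge\;4^{-n}\,\mathbb E_A\!\left[(4/3)^{m}\right]\;\ge\;4^{-n}(4/3)^{\mathbb E m}
\]
by a second use of Jensen.

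\textbf{Step 3 (counting isolated vertices and concluding).} A vertex is isolated with probability $(1-p)^{n-1}$, so $\mathbb E m=n(1-p)^{n-1}$, which is at least $n e^{-pn(1+o(1))}$ for $p=o(1)$. The upper relative bound in \cref{eq:rel-design-def} then requires
\[
(4/3)^{n(1-p)^{n-1}}\le \frac{2(1+\epsilon)\,4^n}{D(D+1)}\le 2(1+\epsilon),
\]
that is, $n(1-p)^{n-1}\le \log(2(1+\epsilon))/\log(4/3)=O(1)$. Taking logarithms gives $pn\ge \log n-O(1)$, hence $p=\Omega(\log n/n)$. Equivalently, if $p=\gamma\log n/n$ with $\gamma<1$, then $\mathbb E m\sim n^{1-\gamma}\to\infty$ and the relative error blows up exponentially in $n^{1-\gamma}$.

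\textbf{Main obstacle.} The only delicate point is Step 2: I need a lower bound on the non-isolated block $\phi$ that does not depend on its entanglement structure. Using only the first-moment identity from local Clifford twirling together with Jensen avoids any analysis of the sparse graph state. A secondary detail is the lower-order correction in $(1-p)^{n-1}$ versus $e^{-pn}$, which is harmless for $p=o(1)$. If a sharper constant is wanted, the same computation applied to the $k=3$ moment gives the analogous threshold.
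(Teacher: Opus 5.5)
Your proof is correct, and it reaches the conclusion by a different route from the paper's.

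\textbf{The paper's route.} It stays inside its explicit second-moment formula. It takes a vector of $\mathcal B_2^{\otimes n}$ in the fully symmetric sector $I_{\mathrm{sym}}^{(\mathbf z)}=[n]$ and writes the shallow-minus-dense eigenvalue gap $G_n(p)$ as a nonnegative double sum. It keeps only the $j=1$ terms, summed by differentiating a binomial expansion, to get $G_n(p)\ge\frac n3(1-\frac{2p}{3})^{n-1}-\frac n3(\frac23)^{n-1}$. It then reaches Haar through the reverse triangle inequality and the dense-to-Haar bound.

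\textbf{Your route.} Your test vector $\ket{\mathbf 0}^{\otimes 2}=\ket{0,0}^{\otimes n}$ lies in that same sector, so both proofs lower-bound the same eigenvalue. You do it probabilistically: each isolated qubit factors out with weight $\frac13=\frac43\cdot\frac14$, and Jensen handles the rest and the average over the graph. You then compare with Haar directly, with no moment expansion and no dense intermediate ensemble.

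\textbf{What each buys.} Your argument is elementary and exposes a clean mechanism. It gives stretched-exponential rather than polynomial blow-up, $\epsilon_2\gtrsim(4/3)^{n^{1-\gamma+o(1)}}$ versus the paper's $\gtrsim n^{1-2\gamma/3}$. It is also insensitive to the graph model: any edge distribution with $\mathbb E_A(4/3)^{m}$ unbounded fails. The paper's argument buys a better constant. Its $j=1$ terms describe a vertex avoiding edges only to a random third of the other vertices, at cost $2p/3$ per potential edge rather than $p$. So it forces divergence for all $\gamma<3/2$ instead of $\gamma<1$, leaving only $3/2\le\gamma\le4$ open at second order.

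\textbf{Minor corrections.}
\begin{itemize}
\item The factor $4/3$ does not mean a single-qubit stabilizer state is far from Haar. The six stabilizer states form an exact single-qubit $3$-design, and $\frac13$ is the single-qubit Haar value. The gain comes from the product $3^{-m}$ set against the $n$-qubit Haar scale $\approx 2\cdot 4^{-n}$.
\item The $k=3$ version of your argument would not sharpen the constant. It is still governed by $\mathbb E m$ and only changes the base from $4/3$ to $2$.
\item The restriction to $p=o(1)$ in Step~3 is unnecessary. For $p\le\frac12$ one has $-\log(1-p)\le 2p$, so $n(1-p)^{n-1}=O(1)$ already gives $p\ge(\log n-O(1))/(2(n-1))$.
\end{itemize}
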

Throughout the main text, we focus on the main results and underlying mechanisms, while complete proofs and further technical details are collected in the Supplemental Material. 
We next uncover the moment mechanisms underlying the construction, beginning with the second moment.

\emph{Second moment: removing the phase circuit obstruction.}---
The second moment already captures the central mechanism of our construction. Even in the dense limit, the bare phase ensemble $\mc E_{\mathrm{bare}}$ retains a constant relative-error obstruction. Local Clifford twirling turns this obstruction into an exponential contraction, and we further show that the second-order relative error vanishes when the $CZ$ layer is sparsified to logarithmic expected degree.
We first present a general formulation of relative error that will be used throughout the analysis. Since both the phase ensemble and Haar state moments are supported on the symmetric subspace, the relative-error condition in~\cref{eq:rel-design-def} is equivalent to a renormalized operator-norm bound. 
We quantify the relative error of ensemble $\mc E$ at order $k$ by
\begin{equation}
\epsilon_k(\mc E)
=
r_{\mathrm{sym}}^{(k)}
\left\|
\mb M_{\mc E}^{(k)}
-
\mb M_{\mc E_{\mathrm{Haar}}}^{(k)}
\right\|_\infty .
\label{lem:relative-error-opnorm:main}
\end{equation}
In particular, for constant $k$, one has $r_{\mathrm{sym}}^{(k)}=O(D^k)$.
This characterization reduces the relative-error analysis at any fixed order to controlling the operator-norm distance from the Haar moment, with $r_{\mathrm{sym}}^{(k)}$ setting the intrinsic Haar scale. 

We first use it to expose the obstruction of the bare phase ensemble. Consider the dense phase circuit without the final local Clifford twirl, where each $CZ$ edge is sampled independently with probability $p=1/2$. According to Refs.~\cite{zhang2025robust,nechita2021graphical}, the second state moment is $\mb M_{\mathrm{bare}}^{(2)}
=
D^{-2}
\left(
\id_4^{\otimes n}
+
\sw_2^{\otimes n}
-
\Delta_2^{\otimes n}
\right),$
where $\sw_2$ is the two-copy swap operator on one qubit and $\Delta_2:=\ket{0,0}\!\bra{0,0}+\ket{1,1}\!\bra{1,1}$. By contrast, the second moment operator of the Haar ensemble is $\mb M_{\mc E_{\mathrm{Haar}}}^{(2)}
=
\frac{1}{D(D+1)}
\left(
\id_4^{\otimes n}
+
\sw_2^{\otimes n}
\right)$~\cite{mele2024introduction}.

\begin{prop}[Constant relative-error obstruction of the bare phase ensemble]
\label{prop:bare-phase-obstruction}
The bare phase ensemble $\mc E_{\mathrm{bare}}$ remains a constant relative-error distance from a projective $2$-design. More precisely, $\epsilon_2(\mc E_{\mathrm{bare}})\ge 1/2-o(1)$ as $n\to\infty$.
\end{prop}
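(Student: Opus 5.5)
We use the closed forms already stated above: the bare second moment $\mb M_{\mathrm{bare}}^{(2)}=D^{-2}(\id_4^{\otimes n}+\sw_2^{\otimes n}-\Delta_2^{\otimes n})$ and the Haar moment $\mb M_{\mc E_{\mathrm{Haar}}}^{(2)}=\frac{1}{D(D+1)}(\id_4^{\otimes n}+\sw_2^{\otimes n})$. Together with the characterization in~\cref{lem:relative-error-opnorm:main}, the relative error is
\[
\epsilon_2(\mc E_{\mathrm{bare}})=\frac{D(D+1)}{2}\left\|\mb M_{\mathrm{bare}}^{(2)}-\mb M_{\mc E_{\mathrm{Haar}}}^{(2)}\right\|_\infty .
\]
It therefore suffices to lower bound this operator norm by evaluating the difference on one well-chosen vector.

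\textbf{Computing the difference.} First write
\[
\mb M_{\mathrm{bare}}^{(2)}-\mb M_{\mc E_{\mathrm{Haar}}}^{(2)}=\frac{1}{D^2(D+1)}\bigl(\id_4^{\otimes n}+\sw_2^{\otimes n}\bigr)-D^{-2}\Delta_2^{\otimes n}.
\]
Everything here is diagonal in the computational basis of the two copies, except the swap part, which preserves the span of $\{\ket{x,y},\ket{y,x}\}$.

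\textbf{Choice of test vector.} Take $\ket{\mathbf 0,\mathbf 0}=\ket{\mathbf 0}\otimes\ket{\mathbf 0}$. This vector is symmetric, satisfies $\Delta_2^{\otimes n}\ket{\mathbf 0,\mathbf 0}=\ket{\mathbf 0,\mathbf 0}$, and satisfies $(\id+\sw)\ket{\mathbf 0,\mathbf 0}=2\ket{\mathbf 0,\mathbf 0}$. The expectation value of the difference in this state is therefore
\[
\frac{2}{D^2(D+1)}-\frac{1}{D^2}.
\]
Its magnitude is $D^{-2}\bigl(1-\tfrac{2}{D+1}\bigr)$, which lower bounds the operator norm. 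Multiplying by $D(D+1)/2$ gives
\[
\epsilon_2\ \ge\ \frac{D+1}{2D}\Bigl(1-\frac{2}{D+1}\Bigr)=\frac{D-1}{2D}=\frac12-\frac{1}{2^{n+1}}=\frac12-o(1).
\]
The same conclusion holds for any basis state $\ket{x,x}$. Physically, the bare phase states assign the diagonal collision $\ket{x,x}$ weight $D^{-2}(2-1)=D^{-2}$, whereas Haar assigns $2/(D(D+1))\approx 2D^{-2}$. This is a constant-factor deficit on an exponentially small sector.

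\textbf{Main obstacle.} The argument itself is a single evaluation. The only care needed is bookkeeping: the normalization $r_{\mathrm{sym}}^{(2)}=D(D+1)/2$ in~\cref{lem:relative-error-opnorm:main} must be applied correctly, and one must note that the bound is at least $1/2-o(1)$, not an equality. For the equality $\epsilon_2=1/2-o(1)$ one would also upper bound the norm. This can be done by diagonalizing on each two-dimensional block $\{\ket{x,y},\ket{y,x}\}$: the $\Delta_2^{\otimes n}$ term vanishes there for $x\ne y$, and the remaining terms are $O(D^{-3})$. This upper bound is not required for the stated inequality.
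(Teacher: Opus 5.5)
Your proposal is correct and follows the paper's proof essentially verbatim: both evaluate $\mb M_{\mathrm{bare}}^{(2)}-\mb M_{\mc E_{\mathrm{Haar}}}^{(2)}$ on $\ket{\mathbf 0}^{\otimes 2}$ to obtain a gap of $\frac{D-1}{D^2(D+1)}$. Multiplying by $r_{\mathrm{sym}}^{(2)}=D(D+1)/2$ then gives $\epsilon_2(\mc E_{\mathrm{bare}})\ge\frac{D-1}{2D}=\frac12-o(1)$. Your closing remark about the matching upper bound is also correct, though it is not needed for the stated inequality.
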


The origin of this obstruction is transparent from the residual term $-D^{-2}\Delta_2^{\otimes n}$ in $\mb M_{\mathrm{bare}}^{(2)}$. In additive norm this correction is exponentially small, of order $D^{-2}$, but the Haar second moment itself has eigenvalue $1/r_{\mathrm{sym}}^{(2)}=2/[D(D+1)]=\Theta(D^{-2})$. Thus, once measured relative to the intrinsic Haar scale, the residual contribution produces an order-one distortion rather than a vanishing error. Importantly, this obstruction already persists in the fully dense $CZ$ circuit, so increasing the interaction density alone cannot remove it. Consequently, the bare phase ensemble cannot approach a relative-error state $2$-design, and therefore cannot approach any higher-order relative-error design either.

Surprisingly, this obstruction can be overcome by applying a single layer of local Clifford twirling. As shown in~\cref{fig:battlehuang}(b), the local diagonal projector $\Delta_2$ under this twirling is replaced by
\begin{equation}
\widetilde{\Delta}_2
:=
\mathbb E_{C\sim \mathrm{Cl}(2)}
C^{\otimes 2}\Delta_2 C^{\dagger\otimes 2}
=
\frac{\id_4+\sw_2}{3}.
\label{eq:twirled-delta2}
\end{equation}
Both \(\id_4\) and \(\sw_2\) are invariant under local Clifford twirling, so the twirl modifies only the term $D^{-2}\Delta_2$ responsible for the obstruction. While \(\|\Delta_2^{\otimes n}\|_\infty=1\),~\cref{eq:twirled-delta2} gives \(\|\widetilde{\Delta}_2^{\otimes n}\|_\infty=\|\widetilde{\Delta}_2\|_\infty^n=(2/3)^n\). The unwanted
contribution is therefore exponentially suppressed. This leads to an exponentially small second-moment relative error for the dense phase ensemble.

\begin{prop}[Dense phase ensemble at second order]
\label{prop:k2-phase-story:main}
The dense phase ensemble \(\mc E_{\mathrm{phase}}\) satisfies
\(\epsilon_2(\mc E_{\mathrm{phase}})\le (2/3)^n\).
\end{prop}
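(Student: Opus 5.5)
The plan is to compute the twirled second moment in closed form and then diagonalize it together with the Haar moment, using the fact that all operators involved are tensor products of commuting single-qubit pieces. The main tool is the evaluation of $\epsilon_2$ via \cref{lem:relative-error-opnorm:main}.

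\textbf{Step 1: closed form of the twirled moment.} Since $U_{\mathrm{loc}}$ is applied after the diagonal layer and factorizes over qubits, $\mb M_{\mc E_{\mathrm{phase}}}^{(2)}$ is obtained from $\mb M_{\mathrm{bare}}^{(2)}$ by applying the local twirl $\mathbb E_{C\sim\mathrm{Cl}(2)}C^{\otimes 2}(\cdot)C^{\dagger\otimes2}$ independently on each qubit. This twirl is linear and acts factorwise on tensor products, and it fixes $\id_4$ and $\sw_2$. Combining this with \cref{eq:twirled-delta2} gives
\begin{equation*}
\mb M_{\mc E_{\mathrm{phase}}}^{(2)}=D^{-2}\bigl(\id_4^{\otimes n}+\sw_2^{\otimes n}-\widetilde\Delta_2^{\otimes n}\bigr),\qquad \widetilde\Delta_2=\tfrac{1}{3}(\id_4+\sw_2).
\end{equation*}

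\textbf{Step 2: simultaneous diagonalization.} On each qubit, $\id_4$, $\sw_2$ and $\widetilde\Delta_2$ commute. On the local symmetric subspace (dimension 3), $\sw_2$ has eigenvalue $+1$ and $\widetilde\Delta_2$ has eigenvalue $2/3$. On the local antisymmetric singlet, $\sw_2$ has eigenvalue $-1$ and $\widetilde\Delta_2$ has eigenvalue $0$.

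A global simultaneous eigenbasis is therefore labeled by the set $T\subseteq[n]$ of qubits carrying the local singlet. On the sector labeled by $T$:
\begin{itemize}
\item $\sw_2^{\otimes n}$ acts as $(-1)^{|T|}$;
\item $\widetilde\Delta_2^{\otimes n}$ acts as $(2/3)^n$ if $T=\emptyset$, and as $0$ otherwise.
\end{itemize}
The Haar moment $\frac{1}{D(D+1)}(\id_4^{\otimes n}+\sw_2^{\otimes n})$ is diagonal in the same basis. So $\mb M_{\mc E_{\mathrm{phase}}}^{(2)}-\mb M_{\mc E_{\mathrm{Haar}}}^{(2)}$ is diagonal, and its operator norm is the largest eigenvalue difference in absolute value.

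\textbf{Step 3: sector-by-sector comparison.} Multiply each eigenvalue difference by $r_{\mathrm{sym}}^{(2)}=D(D+1)/2$.
\begin{itemize}
\item For $|T|$ odd, both moments vanish.
\item For $|T|$ even and nonzero, the phase moment has eigenvalue $2D^{-2}$ and the Haar moment has eigenvalue $2/(D(D+1))$. The rescaled difference is exactly $1/D=2^{-n}$.
\item For $T=\emptyset$, the phase moment has eigenvalue $D^{-2}(2-(2/3)^n)$. The rescaled difference is
\begin{equation*}
\Bigl(1+\tfrac1D\Bigr)\Bigl(1-\tfrac{(2/3)^n}{2}\Bigr)-1=\tfrac1D-\tfrac{(2/3)^n}{2}\Bigl(1+\tfrac1D\Bigr).
\end{equation*}
Since $0<1/D\le (2/3)^n$ and $\tfrac12(1+1/D)\le1$, the absolute value of this quantity is at most $(2/3)^n$.
\end{itemize}
Taking the maximum over all sectors gives $\epsilon_2(\mc E_{\mathrm{phase}})\le(2/3)^n$.

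\textbf{Expected obstacle.} The only delicate point is justifying Step 1: that the post-twirl moment is exactly the factorwise twirl of the bare formula. This requires $U_{\mathrm{loc}}$ to be independent of $A$ and the twirl to act qubit-by-qubit, both of which hold by construction. I would also double-check the bookkeeping in the $T=\emptyset$ sector, where the $1/D$ correction coming from the Haar normalization partially cancels against the $(2/3)^n$ term. The sectors with $T\neq\emptyset$ are harmless because their error $2^{-n}$ is already smaller than $(2/3)^n$.
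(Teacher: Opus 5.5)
Your proposal is correct and follows essentially the same route as the paper: the closed form $D^{-2}(\id_4^{\otimes n}+\sw_2^{\otimes n}-\widetilde\Delta_2^{\otimes n})$, simultaneous diagonalization in the product basis of local symmetric and singlet vectors, and a sector-by-sector comparison with the Haar eigenvalues. The differences are cosmetic:
\begin{itemize}
\item the paper obtains the closed form by setting $p=1/2$ in its general sparse second-moment formula, whereas you twirl the bare formula;
\item the paper bounds the unnormalized operator norm by $D^{-2}(2/3)^n$ before multiplying by $r_{\mathrm{sym}}^{(2)}$, whereas you rescale first;
\item the paper uses the inequality $2/(D+1)\le(2/3)^n$, whereas you use $1/D\le(2/3)^n$.
\end{itemize}
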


We next compare $\mc E_{\mathrm{phase}}$ with its shallow counterpart, which is the most challenging part of the proof. The second moment is organized by ordered partitions $I_0\sqcup I_1\sqcup I_2=[n]$, where  $`\sqcup'$ denotes disjoint union.  
The three sets specify the sites assigned to the local two-copy operators $\widetilde{\Delta}_2$, $\id_4-\widetilde{\Delta}_2$, and $\sw_2-\widetilde{\Delta}_2$, respectively. We refer to these local operators as blocks. For a local block $A$, we write $A^{\otimes I}:=\bigotimes_{q\in I}A_q$, so that, for example, $\widetilde{\Delta}_2^{\otimes I_0}$ acts on the two-copy local spaces of the qubits in $I_0$. Define the corresponding global block by $ Q_{I_0,I_1,I_2}
:=
\widetilde{\Delta}_2^{\otimes I_0}
\otimes
(\id_4-\widetilde{\Delta}_2)^{\otimes I_1}
\otimes
(\sw_2-\widetilde{\Delta}_2)^{\otimes I_2}$.  
In this way, the second moment reads
\begin{equation}
\mb M_{\mc E_{\mathrm{shallow}}^{(\gamma)}}^{(2)}
=
D^{-2}
\sum_{I_0\sqcup I_1\sqcup I_2=[n]}
\left(1-\frac{2\gamma \log n}{n}\right)^{|I_1||I_2|} Q_{I_0,I_1,I_2}
,
\label{eq:twirled-sparse-second-moment}
\end{equation}
Here, each potential \(CZ\) edge between \(I_1\) and \(I_2\) contributes a factor \(1-2p\) under Bernoulli averaging, producing the total damping factor \((1-2p)^{|I_1||I_2|}\). 
In the dense phase ensemble, \(p=1/2\), so every configuration with \(|I_1||I_2|>0\) is eliminated, and ~\cref{eq:twirled-sparse-second-moment} reduces to the dense second-moment formula.

~\cref{eq:twirled-sparse-second-moment} also reveals the algebraic structure that makes the second-moment comparison tractable. The three local blocks $\widetilde{\Delta}_2$, $\id_4-\widetilde{\Delta}_2$, and $\sw_2-\widetilde{\Delta}_2$ commute with each other.  Hence, the shallow, dense-phase, and Haar second-moment operators are simultaneously diagonal. We can therefore evaluate their eigenvalues explicitly on each basis, reducing the operator-norm comparison to a scalar problem, yielding the sparse-to-dense estimate:
\begin{equation}
\left\|
\mb M_{\mc E_{\mathrm{shallow}}^{(\gamma)}}^{(2)}
-
\mb M_{\mc E_{\mathrm{phase}}}^{(2)}
\right\|_\infty
\le
D^{-2}\left(2n^{1-\gamma/4}+\mathrm{negl}(n)\right).
\label{eq:k2-sparse-to-dense}
\end{equation}
Throughout this work, $\mathrm{negl}(n)$ denotes a negligible function, i.e., $\mathrm{negl}(n)=o(n^{-c})$ for every constant $c>0$. Combining this estimate with the dense-to-Haar bound in~\cref{prop:k2-phase-story:main} and the normalized operator-norm characterization in~\cref{lem:relative-error-opnorm:main} yields the second-order relative-error result below.

\begin{theorem}[Second-order relative error]
\label{thm:second-moment}
For \(\gamma>4\), the shallow phase ensemble satisfies
\begin{equation}
\epsilon_2\!\left(\mc E_{\mathrm{shallow}}^{(\gamma)}\right)
\le
2n^{1-\gamma/4}
+
\mathrm{negl}(n).
\label{eq:second-moment-bound}
\end{equation}
In particular, for any constant \(\gamma>4\), the second-order relative error vanishes as \(n\to\infty\).
\end{theorem}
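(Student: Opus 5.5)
The proof combines three results already in place: the normalized operator-norm characterization \cref{lem:relative-error-opnorm:main}, the dense-to-Haar bound of \cref{prop:k2-phase-story:main}, and the sparse-to-dense estimate \cref{eq:k2-sparse-to-dense}. The triangle inequality gives
\begin{equation*}
\epsilon_2\!\left(\mc E_{\mathrm{shallow}}^{(\gamma)}\right)
\le
r_{\mathrm{sym}}^{(2)}\left\|\mb M_{\mc E_{\mathrm{shallow}}^{(\gamma)}}^{(2)}-\mb M_{\mc E_{\mathrm{phase}}}^{(2)}\right\|_\infty
+\epsilon_2(\mc E_{\mathrm{phase}}).
\end{equation*}
The second term is at most $(2/3)^n=\mathrm{negl}(n)$. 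For the first term we use $r_{\mathrm{sym}}^{(2)}=D(D+1)/2$ together with the bound $D^{-2}(2n^{1-\gamma/4}+\mathrm{negl}(n))$. This yields a prefactor $\frac{D+1}{2D}\le 1$, and hence the claim. All the content therefore lies in establishing \cref{eq:k2-sparse-to-dense}, which I would do as follows.

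\textbf{Diagonalizing.} The local algebra spanned by $\id_4$ and $\sw_2$ has two minimal projectors: $P_s=(\id_4+\sw_2)/2$ (rank $3$) and $P_a=(\id_4-\sw_2)/2$ (rank $1$). Using $\widetilde\Delta_2=\tfrac23 P_s$, the three blocks become
\begin{equation*}
\widetilde\Delta_2=\tfrac23P_s,\qquad \id_4-\widetilde\Delta_2=\tfrac13P_s+P_a,\qquad \sw_2-\widetilde\Delta_2=\tfrac13P_s-P_a .
\end{equation*}
The global moment is supported on the symmetric subspace. On a product eigenvector with pattern $s\in\{s,a\}^n$, the only nonzero contributions come from patterns with an even number $m$ of $a$-sites, so we fix such a set $T$ with $|T|=m$. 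The eigenvalue of $D^2\,\mb M_{\mathrm{shallow}}$ is then a sum over ordered partitions. Sites in $T$ must lie in $I_1\cup I_2$, with weight $\pm1$. Sites outside $T$ carry weights $\tfrac23,\tfrac13,\tfrac13$, and the whole term is damped by $(1-2p)^{|I_1||I_2|}$.

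For the dense ensemble, only partitions with $I_1=\emptyset$ or $I_2=\emptyset$ survive. The difference between the two ensembles is therefore the sum over partitions with both $I_1,I_2$ nonempty, each weighted by $(1-2p)^{|I_1||I_2|}$ and by the local factors. For fixed sizes $a=|I_1|$, $b=|I_2|$, the combinatorial weight is at most multinomial times $3^{-(a+b)}$ on the non-$T$ part. The damping satisfies $(1-2p)^{ab}\le e^{-2\gamma ab\log n/n}=n^{-2\gamma ab/n}$.

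\textbf{Main obstacle: the scalar estimate.} The bound on the eigenvalue difference must be uniform in $T$. Summing over unconstrained non-$T$ sites gives the factor $(\tfrac23+\tfrac13 x+\tfrac13 y)$ per site, where $x,y$ mark membership in $I_1$ and $I_2$. The delicate regime is $\min(a,b)=1$, say $b=1$ with $a$ arbitrary: there the damping is only $(1-2p)^{a}$. Summing $\binom{n}{a}3^{-a}$-type weights against this damping, I expect the dominant contribution to be $n\cdot \tfrac13\cdot\sum_a\binom{n-1}{a}(1/3)^a(2/3)^{n-1-a}(1-2p)^a$. This equals $\tfrac n3\,(1-\tfrac{2p}{3})^{n-1}\approx n^{1-2\gamma/3}$, and the full bookkeeping with signs and the $T$-constraint should reproduce the stated $n^{1-\gamma/4}$ with constant $2$.

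For $\min(a,b)\ge \log n$ the damping is $n^{-\Omega(\log n)}$, which is negligible. For $2\le\min(a,b)<\log n$ I would bound the sum by $n^{\min(a,b)}\,n^{-2\gamma\min(a,b)\max(a,b)/n}$. Then I would split according to whether $\max(a,b)\ge n/4$, in which case the damping is strong, or not, in which case the binomial concentration of the $\tfrac23$-weighted complement makes the term negligible. Before pinning the constant $\gamma/4$, I would check the single worst eigenvector, $T=\emptyset$ (the all-$P_s$ sector), since it should be extremal.
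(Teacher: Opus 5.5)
\textbf{Verdict: genuine gap.} Your outer assembly matches the paper's: the triangle inequality through $\mathcal{E}_{\mathrm{phase}}$, $r^{(2)}_{\mathrm{sym}}=D(D+1)/2$, the $(2/3)^n$ dense-to-Haar bound, and the diagonalization via $P_s,P_a$ (this is the paper's basis $\mathcal{B}_2^{\otimes n}$ and eigenvalue formula). But the sparse-to-dense scalar estimate, which you rightly say carries all the content, is not established, and one step of your sketch is wrong.

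\emph{The $\min(a,b)\ge\log n$ regime.} You assert the damping there is $n^{-\Omega(\log n)}$. This is false unless $\max(a,b)=\Omega(n)$. For example, with $a=b=\lceil\log n\rceil$ one has $(1-2p)^{ab}\ge 1-2p\,ab=1-o(1)$. Even where the damping is $n^{-\Omega(\log n)}$, a damping that does not grow with the minority size cannot absorb the multiplicity of splitting the $r=a+b$ non-$I_0$ sites into $I_1,I_2$. That multiplicity is of order $2^{r}=2^{\Theta(n)}$ when $r=\Theta(n)$.

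\emph{Dependence on $T$.} You treat only $T=\emptyset$ and conjecture it is extremal. Once signs are discarded, the antisymmetric sites carry weight $1$ rather than $1/3$ and are forced out of $I_0$, so extremality is not evident.

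\emph{The exponent.} No step produces $1-\gamma/4$ or the constant $2$. Your $b=1$ computation, $\tfrac n3(1-\tfrac{2p}{3})^{n-1}\approx n^{1-2\gamma/3}$, is correct; it is exactly the lower bound the paper uses in its necessity theorem. But it controls only one slice of the sum.

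\emph{What the paper uses instead.}
\begin{enumerate}
\item Discard the signs and apply Vandermonde's identity. The $\tfrac13$-weighted symmetric sites outside $I_0$ and the antisymmetric sites merge into one pool of size $r=n-a_1$, where $a_1=|I_0|$. Then $D^2$ times the eigenvalue gap is bounded, uniformly in $T$, by $\sum_{a_1}3^{-n_1}\binom{n_1}{a_1}2^{a_1}\sum_{b=1}^{r-1}\binom{r}{b}(1-2p)^{b(r-b)}$. Here $n_1=n-|T|$, and the outer weights sum to one.
\item Use the symmetry $b\leftrightarrow r-b$ and the bound $(1-2p)^{b(r-b)}\le\bigl((1-2p)^{r/2}\bigr)^{b}$ for $b\le r/2$. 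This gives $\sum_{b=1}^{r-1}\binom{r}{b}(1-2p)^{b(r-b)}\le 2\bigl[(1+n^{-\gamma r/n})^{r}-1\bigr]$, which is the damping linear in the minority size that your regime split lacks.
\item Split at $r=n/4$; this is where the exponent comes from. For $r>n/4$, the function $r\,n^{-\gamma r/n}$ is decreasing (since $r>n/(\gamma\log n)$). So for $\gamma>4$ the bracket is at most $\tfrac12 n^{1-\gamma/4}$, giving a total of at most $n^{1-\gamma/4}$.
\item For $r\le n/4$, the event $a_1\ge 3n/4$ has binomial weight $2^{-\Omega(n)}$, since $D_2(3/4\,\|\,2/3)>0$. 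This beats the uniform bound $(1+n^{-\gamma r/n})^{r}\le\exp\bigl(n/(\gamma e\log n)\bigr)$.
\end{enumerate}
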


Here the exponentially small dense-to-Haar contribution $(2/3)^n$ from~\cref{prop:k2-phase-story:main} has been absorbed into $\mathrm{negl}(n)$, so the dominant correction arises only from sparsification. We therefore show that local Clifford twirling removes the constant relative-error obstruction of the bare phase circuit, while logarithmically sparse $CZ$ interactions still yield vanishing second-order relative error. This relative-error guarantee is, in particular, sufficient to imply anti-concentration, a key ingredient underlying quantum sampling advantage~\cite{hangleiter2023computational,dalzell2022random}. Thus, asymptotically exact second-order Haar statistics already emerge from a genuinely sparse commuting circuit.

A complementary perspective is provided by Ref.~\cite{heinrich2025anti}, which shows that sufficiently accurate anti-concentration can imply a relative-error state $2$-design for suitable locally invariant ensembles. This route requires quantitative control of the anti-concentration error. Our explicit second-moment expansion provides such control for sparse phase circuits, both with and without local Clifford twirling, and at the same time makes transparent how local twirling changes the resulting relative-error accuracy. This may offer further insight into the relation between anti-concentration and relative-error randomness, while our moment-based approach also extends naturally to the third-order analysis below.

\emph{Third moment: beyond simultaneous diagonalization.}---
Moving from second to third order introduces qualitatively new structure. Local Clifford twirling remains effective, but the second-order strategy no longer extends directly: relative-error control at third order requires a substantially different analysis.

As in the $k=2$ expansion of~\cref{eq:twirled-sparse-second-moment}, phase averaging decomposes the third moment into tensor products of local operator blocks weighted by interaction-dependent decay coefficients. The richer permutation structure of $S_{3}$ now gives ten local block types: $\Delta_{3}:=\ket{0,0,0}\bra{0,0,0}+\ket{1,1,1}\bra{1,1,1}$ and nine blocks $B(\pi_{1},\pi_{2})$, indexed by odd/even permutation pairs. To define them, we use computational-basis support intersections, which arise naturally under diagonal phase twirling~\cite{zhang2025robust}. At second order, for example, $\Delta_{2}$ is precisely the support intersection of the identity and swap operators, $\Delta_{2}=\id_4\cap\sw_2$. Analogously, we define $B(\pi_1,\pi_2):=V_1(\pi_1)\cap V_1(\pi_2)-\Delta_3$, and show that $\Delta_3$ together with the nine $B$-blocks partitions the local support generated by local phase averaging. Unlike the three second-order blocks, however, these third-order blocks are not simultaneously diagonalizable, and some are non-Hermitian. This loss of simultaneous diagonalization is the first major obstacle in the third-order analysis. Their explicit definitions and support structure are given in the End Matter.

As in~\cref{eq:twirled-sparse-second-moment}, an $n$-qubit block is specified by the position data $\mathfrak I:=\bigl(I^{(0)},\{I^{(\pi_{1},\pi_{2})}\}_{\pi_{1},\pi_{2}}\bigr)$, where these sets partition $[n]$, with corresponding operator $B_{\mathfrak I}:=\Delta_{3}^{\otimes I^{(0)}}\otimes\bigotimes_{\pi_{1},\pi_{2}}B(\pi_{1},\pi_{2})^{\otimes I^{(\pi_{1},\pi_{2})}}$. Its decaying coefficient is $(1-2p)^{N_{\mathfrak I}}$, where $N_{\mathfrak I}$ counts pairs of sites whose block labels differ in both permutation indices. Denoting the locally twirled block by $\widetilde{B}_{\mathfrak I}\coloneq \mbb{E}_{C\sim\mathrm{Cl}(2)^{\otimes n}} C^{\otimes 3}B_{\mathfrak I}C^{\dagger,\otimes 3}$, the third moment takes the compact form
\begin{equation}\label{eq:shallow3-main}
\mb M_{\mc E_{\mathrm{shallow}}^{(\gamma)}}^{(3)}
=
D^{-3}\sum_{\mathfrak I}
\left(1-\frac{2\gamma\log n}{n}\right)^{N_{\mathfrak I}}
\widetilde{B}_{\mathfrak I}.
\end{equation}
The precise block definitions and the derivation are deferred to the End Matter.

This richer block structure underlies our main third-order result, whose proof goes fundamentally beyond the simultaneous-diagonalization strategy available at second order. A more detailed proof sketch is provided in the End Matter; here we briefly outline the key ingredients.
\begin{theorem}[Third-order relative error]
\label{thm:third-moment}
There exist positive constants $A_{0},B_{0},\gamma_{0}$, such that the shallow phase ensemble satisfies
\begin{equation}
\epsilon_{3}\!\left(\mc E_{\mathrm{shallow}}^{(\gamma)}\right)
\le
n^{A_{0}-B_{0}\gamma}
+
\mathrm{negl}(n).
\label{eq:k3-relative}
\end{equation}
In particular, for any constant $\gamma>\gamma_{0}$, the third-order relative error vanishes as $n\to\infty$.
\end{theorem}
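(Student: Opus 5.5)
We mirror the second-order argument and split the error into two pieces:
\[
\epsilon_3(\mc E_{\mathrm{shallow}}^{(\gamma)})\le r_{\mathrm{sym}}^{(3)}\bigl\|\mb M^{(3)}_{\mc E_{\mathrm{shallow}}^{(\gamma)}}-\mb M^{(3)}_{\mc E_{\mathrm{phase}}}\bigr\|_\infty+\epsilon_3(\mc E_{\mathrm{phase}}),
\]
using \cref{lem:relative-error-opnorm:main} and $r^{(3)}_{\mathrm{sym}}=\Theta(D^3)$.

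\textbf{Dense-to-Haar term.} In \cref{eq:shallow3-main} with $p=1/2$, only blocks with $N_{\mathfrak I}=0$ survive. These are configurations in which all sites share a common permutation index, together with $\Delta_3$ sites.
\begin{itemize}
\item Grouping them by inclusion–exclusion over support intersections should recombine the terms into $D^{-3}\sum_{\pi}V_n(\pi)$, plus correction terms that each carry at least one twirled $\Delta_3$-type factor, as in the $k=2$ case.
\item Since $\widetilde\Delta_3=\mathbb E_C C^{\otimes3}\Delta_3C^{\dagger\otimes3}$ is a rescaled projective $3$-design moment on a single qubit (the Clifford group is a $3$-design), we get $\|\widetilde\Delta_3\|_\infty=2\cdot\frac{1}{4}=1/2$ on the symmetric subspace.
\item Comparing with $\Pi^{(3)}_{\mathrm{sym}}/r^{(3)}_{\mathrm{sym}}$, whose normalization is $6/(D(D+1)(D+2))$, should give $\epsilon_3(\mc E_{\mathrm{phase}})\le c^n$ for some $c<1$.
\end{itemize}
Mixed products of $\widetilde\Delta_3$ with other blocks must be controlled in the same way. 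For these I would bound each twirled local block in operator norm, using the fact that twirling projects onto the span of permutation operators, and sum over placements.

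\textbf{Sparse-to-dense term.} Since the blocks no longer commute, I would replace diagonalization by the triangle inequality over configurations:
\[
D^3\bigl\|\mb M^{(3)}_{\mathrm{shallow}}-\mb M^{(3)}_{\mathrm{phase}}\bigr\|_\infty\le\sum_{\mathfrak I:\,N_{\mathfrak I}>0}(1-2p)^{N_{\mathfrak I}}\,\bigl\|\widetilde B_{\mathfrak I}\bigr\|_\infty .
\]
This needs $\|\widetilde B_{\mathfrak I}\|_\infty=\prod_q\|\widetilde B_{q}\|_\infty$. A naive sum over all $10^n$ configurations is hopeless, so the blocks must first be regrouped. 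I would use the new block decomposition: organize configurations by the counts $(m_\sigma)$ of sites whose first (odd) index takes value $\sigma$ and second (even) index takes value $\tau$. Then $N_{\mathfrak I}$ is a quadratic form in the counts, roughly $\sum$ of $m_{\sigma\tau}m_{\sigma'\tau'}$ over pairs with $\sigma\ne\sigma'$ and $\tau\ne\tau'$.

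\textbf{Combinatorial case split.} Let $\mu$ be the number of sites outside the dominant label class.
\begin{itemize}
\item \emph{Small $\mu$.} Here $N_{\mathfrak I}\gtrsim \mu(n-\mu)$, so the damping $n^{-2\gamma\mu(1-\mu/n)}$ beats the entropy factor $\binom{n}{\mu}9^\mu\le n^{O(\mu)}$. This gives a sum of order $n^{A_0-B_0\gamma}$.
\item \emph{Large $\mu$, i.e. $\mu\ge\delta n$.} The damping is then $e^{-\Omega(\gamma n\log n)}$, which beats $10^n$ and is negligible.
\item \emph{Degenerate patterns.} Some configurations have many sites but $N_{\mathfrak I}$ small, because labels share one index (e.g. all $\sigma$ equal while $\tau$ varies). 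These are not damped at all. For them I would show that the partial sums over the free index telescope, by inclusion–exclusion on support intersections, back into the corresponding dense-ensemble terms. Equivalently, they cancel exactly in the difference, since those terms also carry coefficient $1$ in $\mc E_{\mathrm{phase}}$.
\end{itemize}

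\textbf{Main obstacle.} The hardest step is the norm bookkeeping for non-Hermitian, non-commuting twirled blocks. Each local twirled block norm should be at most $1$ after the appropriate regrouping; otherwise the entropy factors explode. I would first try to verify this on the single-qubit $S_3$ commutant, which is six-dimensional, by explicit computation. If some block has norm exceeding $1$, I would pair it with its complementary block so that the combined local operator has norm at most $1$.
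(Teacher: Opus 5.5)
Your sparse-to-dense step has a genuine gap. The termwise triangle inequality over configurations, combined with your three-way case split, cannot close, and the split misses exactly the configurations that make the problem hard.

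Your lower bound $N_{\mathfrak I}\gtrsim\mu(n-\mu)$ is false. Sites carrying $\Delta_3$, or a block sharing a row or column with the dominant one, contribute nothing to $N_{\mathfrak I}$. For the same reason, large $\mu$ does not force $N_{\mathfrak I}=\Omega(n^2)$.

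Concretely, put $\lfloor n/2\rfloor$ sites in $\Delta_3$, one site in $\mathfrak e$, and the rest in $\mathfrak a$.
\begin{itemize}
\item Then $N_{\mb I}=n_{\mathfrak a}\approx n/2$, so the damping is only $\approx n^{-\gamma}$.
\item Since $N_{\mb I}>0$, this term is absent from the dense moment, so nothing cancels. Your ``degenerate'' discussion only covers $N_{\mathfrak I}=0$ terms, and those cancel trivially.
\item There are about $2^n$ placements up to polynomial factors, each of norm $(1/2)^{n/2}(2/3)^{n/2}=3^{-n/2}$.
\item The termwise bound is therefore of order $(2/\sqrt3)^n n^{-\gamma}$ up to polynomial factors, exponentially large for every constant $\gamma$.
\end{itemize}
Similarly, with $\mathfrak a$ and $\mathfrak b$ (same row) each holding $\approx n/2$ sites plus one site in $\mathfrak e$, again $N_{\mb I}\approx n/2$ and the termwise bound is $\sim(4/3)^n$.

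So your stated main obstacle is misplaced. The twirled blocks already satisfy $\|\widetilde\Delta_3\|_\infty=1/2$ and $\|\widetilde B\|_\infty=2/3$, and non-Hermiticity is irrelevant under the triangle inequality. What explodes is the placement entropy of macroscopically occupied blocks that are only polynomially damped. Your small-$\mu$ regime is actually harmless, but by the $(2/3)^n$ contraction alone, not by damping.

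The missing idea is to resum placements \emph{within a fixed count datum} before taking norms. It must be within fixed $\mb I$ because the coefficient $(1-2p)^{N_{\mb I}}$ depends on the counts, so inclusion--exclusion over all cardinalities is not available. The paper does this in two ways.

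\begin{itemize}
\item \textbf{One macroscopic block.} $\widetilde\Delta_3$ and $\widetilde B(\pi_1,\pi_2)$ preserve the sector decomposition $\bigoplus_J\mc K_J$, where $J$ is the set of symmetric sites and $r=|J|$. On each $\mc K_J$, the fixed-cardinality sum $\sum_{|I|=m}\widetilde B^{\otimes I}\otimes\widetilde\Delta_3^{\otimes(\Lambda\setminus I)}$ equals $\binom{r}{j}6^{-j}2^{-(r-j)}\le(2/3)^r$ times the restriction of $(\widetilde B+\widetilde\Delta_3)^{\otimes\Lambda}$. Its norm is therefore at most $(2/3)^{|\Lambda|}$, with no binomial factor.
\item \textbf{Two macroscopic blocks in one row.} This is needed because the single-block lemma still leaves a factor $\binom{n}{n_{\mathfrak b}}$. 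The paper pairs $\mb I$ with $\mb I^\dagger$ and drops the twirl by contractivity. It then applies Perron--Frobenius to the entrywise-nonnegative Hermitian untwirled operator, and dominates the whole first-row sum entrywise by $V_1(\pi_1)^{\otimes\Lambda}$.
\end{itemize}

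The chessboard parameters $m_1,m_2,m_3$ then give $N_{\mb I}\ge m_1m_2$ and bound the remaining counts by $O(m_2)$. So in the regime $m_1\ge\alpha' n$, the residual multiplicity $n^{O(m_2)}$ is beaten by $n^{-2\alpha'\gamma m_2}$, which is where $n^{A_0-B_0\gamma}$ comes from.

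Your dense-to-Haar fallback of summing local norms over placements would blow up in the same way. It works only if the inclusion--exclusion is carried out exactly. That reduces the $N_{\mb I}=0$ sum to $\sum_\pi V_n(\pi)$, minus nine tensor powers of $\tfrac13\bigl(V_1(\pi_1)+V_1(\pi_2)\bigr)$, plus $4\widetilde\Delta_3^{\otimes n}$, each correction having exponentially small norm.
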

Similar to the second-order case, we use $\mc E_{\mathrm{phase}}$ as an intermediate ensemble. In the dense limit, all configurations with $N_{\mathfrak I}>0$ vanish, while the remaining non-Haar terms are exponentially suppressed by the local contraction induced by Clifford twirling, as in~\cref{eq:twirled-delta2}. This yields an exponentially small dense-to-Haar error, analogous to~\cref{prop:k2-phase-story:main}. 

The sparse-to-dense comparison presents two new difficulties. First, the simultaneous diagonalization available at $k=2$ no longer holds. Instead, all locally twirled blocks preserve a common decomposition into $S_{3}$ irreducible sectors,
\(
(\mathbb C^2)^{\otimes3}
=
\mc H_{\mathrm{sym}}
\oplus
\mc H_{\mathrm{std}}^{(0)}
\oplus
\mc H_{\mathrm{std}}^{(1)}.
\)
This confines the non-commuting structure to the last two sectors $\mc H_{\mathrm{std}}^{(0,1)}$, although some local blocks remain non-Hermitian. A global symmetrization then pairs such block configurations with their adjoints, restoring Hermiticity at the grouped level and providing the local foundation for the third-order norm analysis.

Second, even after the local operator structure is controlled, the number of block types grows from three to ten, leading to as many as $10^n$ position assignments. To tame this combinatorial growth, we organize the nine nontrivial blocks $B(\pi_{1},\pi_{2})$ into a $3\times3$ chessboard, with rows and columns indexed by odd and even permutations, respectively. This `bishop method' \footnote{We refer to this organization as the `bishop method' because \(N_{\mathfrak I}\) is obtained by summing products of nontrivial blocks over pairs lying in different rows and different columns, which may be viewed as oblique connections reminiscent of a bishop's diagonal moves on a chessboard. This representation makes both the interaction exponent \(N_{\mathfrak I}\) and the associated multiplicities transparent.} makes both the interaction exponent $N_{\mathfrak I}$ and the associated multiplicities transparent. The key observation is that the relevant counting can be reduced to three effective occupation numbers $m_1,m_2,m_3$ in the chessboard, rather than nine independent block populations. A refined finite-case analysis then establishes that, in every regime, the combinatorial growth is compensated either by local Clifford contraction or by the $CZ$-induced damping encoded in $N_{\mathfrak I}$, yielding~\cref{eq:k3-relative}.

\emph{Constant-depth implementation and applications.}---
The commuting diagonal structure of the phase circuit leads to efficient implementations. First, without using auxiliary systems, the $CZ$ gates can be scheduled by edge coloring the sampled interaction graph. With high probability, the complete circuit has depth $O(\gamma\log n)$~\cite{bremner2016average}.

More importantly, the diagonal structure enables a constant-depth implementation using adaptive circuits and classical feedforward. This is achieved by mapping the interaction graph to a bounded-degree graph-state extension~\cite{Hoyer2006resources}. More precisely, Vizing's theorem bounds the $CZ$ depth by four. Together with one initial Hadamard layer, one simultaneous measurement layer, and one layer of outcome-dependent single-qubit Clifford gates---into which the random phase gates and Pauli corrections can be absorbed---this gives a total quantum depth of at most \emph{seven}.  With overwhelming probability, this requires only $O(n\log n)$ measured ancillas and classical feedforward. Importantly, these adaptive ingredients allow our construction to circumvent the nonadaptive depth barriers for shallow designs~\cite{schuster2024random,cui2025unitary}. Compared with the current constant-depth implementation~\cite{foxman2025random}, our protocol upgrades the guarantee from measurable error to the stronger relative-error notion, while reducing the stated ancilla scaling from $n\,\mathrm{polylog}(n/\epsilon)$ to $O(n\log(n/\epsilon))$. Moreover, such a modest space-for-depth trade-off is particularly well-suited to scalable neutral-atom architectures with large qubit arrays and parallel measurements~\cite{bluvstein2024logical}.

Our explicit moment bounds translate a target relative error $\epsilon$ directly into circuit resources. We highlight two applications: global classical shadows and multiparameter quantum metrology.

Global classical shadows are powerful for estimating fidelities and other low-rank observables, but sufficiently global randomized measurements can require deep circuits~\cite{huang2020predicting,schuster2024random}. A shadow protocol applies a random unitary $V$ followed by a projective measurement, inducing measurement states $\Phi_{V,\mb b}=V^\dagger\ket{\mb b}\!\bra{\mb b}V$. Taking $V=U^\dagger$ with $U\sim\mathcal E_{\rm shallow}^{(\gamma)}$, the averaged basis moments $D^{-1}\sum_{\mb b}\mathbb E_U\Phi_{U^\dagger,\mb b}^{\otimes k}$ coincide, by computational-basis symmetry, with $\mb M_{\mathcal E_{\rm shallow}^{(\gamma)}}^{(k)}$ for $k=2,3$~\cite{zhang2025robust}. Thus our second- and third-moment bounds directly control the shadow bias and variance. For any positive observable $O$, a relative-error $\epsilon$-approximate $3$-design gives~\cite{schuster2024random}
$\operatorname{Var}[\tr(\widehat{\rho}O)]\le 3\tr(O^2)+10\epsilon\tr(O)^2$
and
$|\operatorname{Bias}[\tr(\widehat{\rho}O)]|\le 2\epsilon\tr(O)$. In particular, rank-one fidelity observables retain dimension-independent variance and $O(\epsilon)$ bias, achievable here with ancilla-free depth $O(\log(n/\epsilon))$ or adaptive quantum depth seven.

Our third-moment result also enables shallow randomized readout for multiparameter quantum metrology. Projective $3$-design measurements recover a constant fraction of the quantum Fisher information matrix $J$ for arbitrary pure-state encodings~\cite{zhou2026randomized,du2026complexity,mao2026near}, namely $I\succeq\kappa J$ with $\kappa=\Omega(1)$, where $I$ is the classical Fisher information matrix. While Ref.~\cite{du2026complexity} establishes this principle using approximate unitary $3$-designs, the argument depends only on the corresponding projective moments. Our projective $3$-design therefore suffices, reducing the adaptive all-to-all quantum depth from (sub)logarithmic to constant.

\emph{Conclusion and outlook.}---
Our work shows that sparse commuting phase dynamics, supplemented only by local Clifford twirling, can generate approximate projective $2$- and $3$-designs with vanishing relative error. Local twirling removes the relative-error obstruction of bare phase circuits, while logarithmically sparse $CZ$ interactions suffice to preserve Haar-like low-order moments. At third order, a new block decomposition and combinatorial analysis control the resulting non-commuting moment structure. Together with the constant-depth adaptive implementation, these results show that precise projective randomness can emerge from sparse commuting dynamics well before generic unitary scrambling.

Our results open several directions. First, the phase-circuit mechanism~\cite{nakata2014generating,ji2018pseudorandom,park2023resource,zhang2024minimal} may extend to higher-order designs through controlled non-Clifford or magic resources~\cite{haferkamp2023efficient,leone2026non, leone2021quantum,magni2025anticoncentration}, and potentially to unitary designs~\cite{liu2026almost}. 
Second, shallow relative-error designs may also find more applications in quantum characterization, algorithms and variational circuits~\cite{cerezo2021cost,zhao2022hamiltonian,yan2025variational}, 
and probes of information spreading and many-body chaos~\cite{roberts2017chaos,ho2022exact,cotler2023emergent}. Finally, extending circuit-generated randomness to Hamiltonian dynamics is particularly natural: commuting phase circuits already arise from Ising-type interactions~\cite{cao2026measurement,Diagonal2026shen,hou2026state}, suggesting a bridge between shallow design theory and experimentally native many-body dynamics~\cite{nakata2017efficient,bertini2026non}.

\section*{Acknowledgments}
As always, we emphasize that You Zhou and Zhou You are two distinct people—neither of whom is Hermitian.

We thank Jens Eisert, Jonas Haferkamp, Guoding Liu, Christopher Vairogs, and Francisca Vasconcelos for insightful discussions.
This work is supported by the National Natural Science Foundation of China(NSFC) Grant No.~12575012, Quantum Science and Technology-National Science and Technology Major Project Grant Nos.~2024ZD0301900 and 2021ZD0302000, the Shanghai QiYuan Innovation Foundation, the Shanghai Municipal Commission of Science and Technology with Grant No.~25511103200, the Shanghai Science and Technology Innovation Action Plan Grant No.~24LZ1400200, the Shanghai Pilot Program for Basic Research - Fudan University 21TQ1400100 (25TQ003), the CCF-Quantum CTek Superconducting Quantum Computing CCF-QC2025006.

\clearpage

%

\clearpage
\onecolumngrid
\begin{figure*}[!t]
\centering
\includegraphics[width=\linewidth]{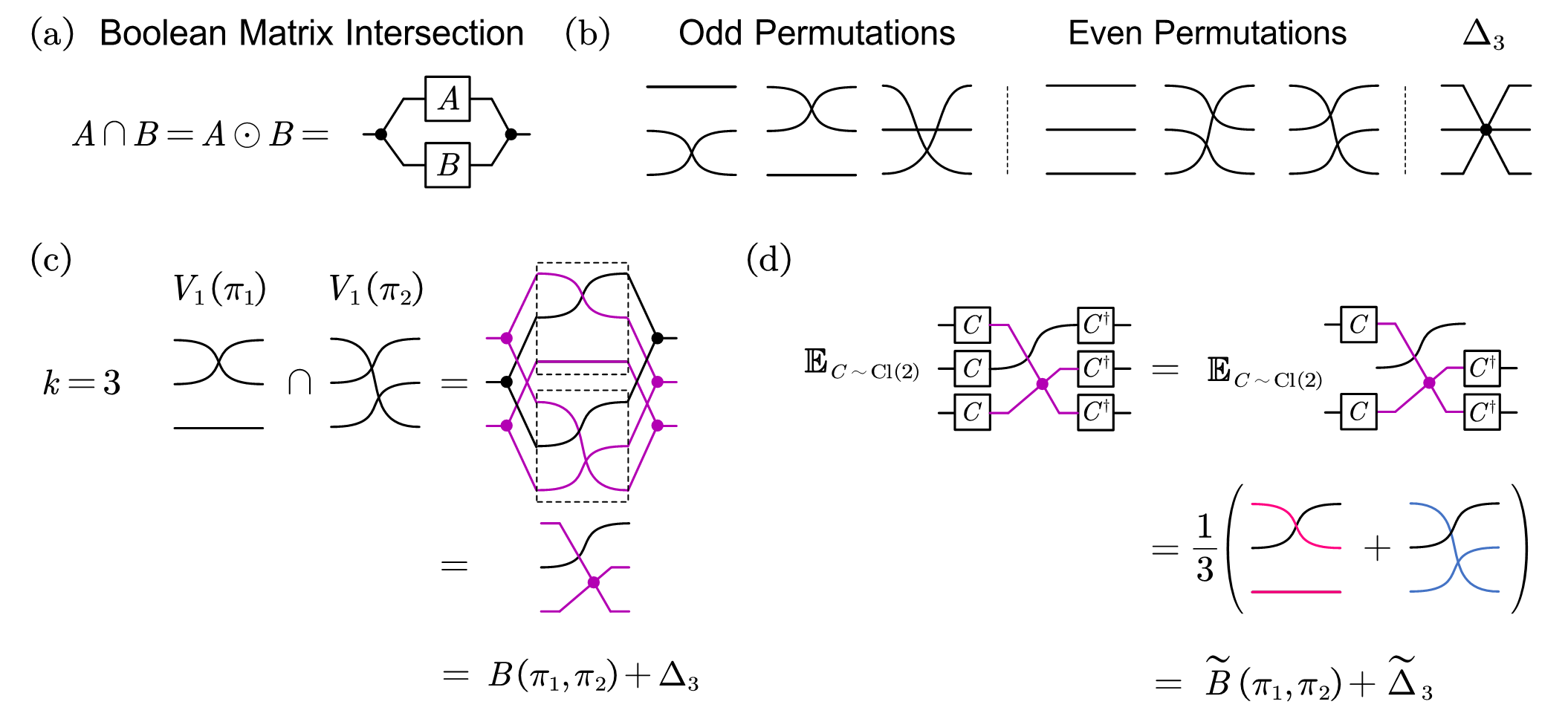}
\caption{Tensor-network diagram of the third-order Boolean-support operator blocks.
 Panel (d) illustrates a representative instance of~\cref{eq:em-B3-twirl} in such tensor-network notation. Panels (a)--(c) provide the corresponding background: (a) defines Boolean-support intersection; (b) displays the odd and even permutations and their common block $\Delta_3$; and (c) applies the intersection introduced in (a) to an odd--even permutation pair, yielding $V_1(\pi_1)\cap V_1(\pi_2)=B(\pi_1,\pi_2)+\Delta_3$.
}
\label{fig:k3-block-twirl}
\end{figure*}
\twocolumngrid
\textbf{End Matter}\\

\emph{Proof of~\cref{prop:bare-phase-obstruction}.---}
The vector \(\ket{\mathbf 0}^{\otimes 2}\) is invariant under the global swap and belongs
to the support of \(\Delta_2^{\otimes n}\). Therefore, we obtain that $\bra{\mb{0}}^{\otimes 2}
\mb M_{\mc E_{\mathrm{bare}}}^{(2)}
\ket{\mathbf 0}^{\otimes 2}=D^{-2}$ and $\bra{\mb{0}}^{\otimes 2}
\mb M_{\mc E_{\mathrm{Haar}}}^{(2)}
\ket{\mathbf 0}^{\otimes 2}=2/[D(D+1)]$. Then, it follows that
\[
\left|
\bra{\mb{0}}^{\otimes 2}
\left(
\mb M_{\mc E_{\mathrm{bare}}}^{(2)}
-
\mb M_{\mc E_{\mathrm{Haar}}}^{(2)}
\right)
\ket{\mathbf 0}^{\otimes 2}\right|
=
\frac{D-1}{D^{2}(D+1)}.
\]

Because the operator norm of a Hermitian operator is at least the absolute
value of its expectation on any normalized vector,
\[
\left\|
\mb M_{\mc E_{\mathrm{bare}}}^{(2)}
-
\mb M_{\mc E_{\mathrm{Haar}}}^{(2)}
\right\|_{\infty}
\ge
\frac{D-1}{D^{2}(D+1)}.
\]
Using
\(r_{\mathrm{sym}}^{(2)}=D(D+1)/2\)
and the normalized operator-norm characterization in
~\cref{lem:relative-error-opnorm:main}, we obtain
\[
\epsilon_2(\mc E_{\mathrm{bare}})
\ge
\frac{D(D+1)}{2}
\frac{D-1}{D^{2}(D+1)}
=
\frac{D-1}{2D}.
\]
Finally, \(D=2^n\), so the right-hand side converges to \(1/2\) as
\(n\to\infty\). Hence,
\[
\epsilon_2(\mc E_{\mathrm{bare}})
\ge
\frac12-o(1),
\]
which proves the proposition.

\emph{Background and proof sketch of \cref{thm:third-moment}}.---
Analogously to the second-order analysis, we use the dense phase ensemble as an intermediate reference. Controlling the sparse-to-dense operator-norm gap constitutes the main technical challenge. Rather than reproducing the full proof, we highlight two structural ingredients that may also be useful at higher orders: the Boolean-support block decomposition and the bishop method for organizing the resulting combinatorics. 

\paragraph{Boolean-support blocks and their Clifford twirls.}

The twirling of diagonal circuits naturally calls for a Boolean language. For Boolean-support operators \(A\) and \(B\), we use \(A\cap B\) and
\(A\cup B\) to denote their elementwise intersection and union in the
computational basis, respectively. 

At second order, we use \(\Delta_2\) to decompose the combined Boolean support \(\id_4\cup\sw_2\)  into the three disjoint blocks
\(\Delta_2\), \(\id_4-\Delta_2\), and
\(\sw_2-\Delta_2\).  At third order, the combined Boolean support is the Boolean union \(\bigcup_{\pi\in S_3}V_1(\pi)\). Its fully aligned part is
\(\Delta_3=\ket{0,0,0}\!\bra{0,0,0}
+\ket{1,1,1}\!\bra{1,1,1}\). Every remaining matrix unit belongs to
exactly two permutation supports, one odd and one even, motivating the
definition $B(\pi_1,\pi_2)
:=
V_1(\pi_1)\cap V_1(\pi_2)-\Delta_3$.
Thus, \(\Delta_3\) and the nine \(B(\pi_1,\pi_2)\) blocks form a disjoint
partition of the Boolean union.

We next evaluate the action of the local Clifford twirl
\(\mc T_3(X):=\mathbb E_{C\sim\mathrm{Cl}(2)}
C^{\otimes3}X(C^\dagger)^{\otimes3}\)
on the ten three-copy blocks. For \(\Delta_3\), a direct Weingarten
calculation shows that all six permutation components have the same
coefficient.
\begin{equation}
\widetilde\Delta_3
=
\mc T_3(\Delta_3)
=
\frac{1}{12}\sum_{\pi\in S_3}V_1(\pi).
\label{eq:em-delta3-twirl}
\end{equation}
The twirl of \(B(\pi_1,\pi_2)\) follows from the observation illustrated in~\cref{fig:k3-block-twirl}: the intersection
\(B(\pi_1,\pi_2)+\Delta_3\) is a permutation-dressed version of
\(\id_2\otimes\Delta_2\). Since copy permutations commute with the local
Clifford action, the calculation reduces to the second-order identity
\(\widetilde\Delta_2=(\id_4+\sw_2)/3\). Consequently,
\begin{equation}
\widetilde B(\pi_1,\pi_2)+\widetilde\Delta_3
=
\frac{1}{3}
\left[
V_1(\pi_1)+V_1(\pi_2)
\right],
\label{eq:em-B3-twirl}
\end{equation}

A complete derivation of~\cref{eq:em-B3-twirl} is provided in the Supplemental Material, while~\cref{fig:k3-block-twirl} represents the block construction and twirling identity diagrammatically using tensor-network notation.

\paragraph{Block expansion and the sparse-to-dense gap.}

\cref{eq:em-delta3-twirl} and
\cref{eq:em-B3-twirl} determine the local Clifford twirl of all ten
blocks. Since the Clifford gates are sampled independently on different
qubits, the twirl of a global block factorizes as $\widetilde B_{\mathfrak I}$ defined in the main text.
These identities therefore provide an explicit local description of every
tensor-product operator appearing in the third-moment expansion.

We next group the position assignments \(\mathfrak I\) according to their
count data
\(\mb I=[|I^{(0)}|;\{|I^{(\pi_1,\pi_2)}|\}_{\pi_1,\pi_2}]\).
Let \(\widetilde{\mc C}_{\mb I}\) denote the collection of all twirled
blocks with the same count data. Since the interaction exponent depends
only on these cardinalities,~\cref{eq:shallow3-main} becomes
\begin{equation}
\mb M_{\mc E_{\mathrm{shallow}}^{(\gamma)}}^{(3)}
=
D^{-3}
\sum_{\mb I}
\left(1-\frac{2\gamma\log n}{n}\right)^{N_{\mb I}}
\sum_{\widetilde B_{\mathfrak I}\in\widetilde{\mc C}_{\mb I}}
\widetilde B_{\mathfrak I}.
\label{eq:em-grouped-third-moment}
\end{equation}
Here, \(N_{\mb I}\) counts pairs of nontrivial blocks whose odd and even
permutation labels are both different. Each such pair contributes one
factor \(1-2p\) under the Bernoulli averaging of the corresponding
\(CZ\) gate.

For the dense phase ensemble, \(p=1/2\), and hence all configurations
with \(N_{\mb I}>0\) vanish. Subtracting the dense moment from the
shallow one therefore gives
\begin{equation}
\mb M_{\mc E_{\mathrm{shallow}}^{(\gamma)}}^{(3)}
-
\mb M_{\mc E_{\mathrm{phase}}}^{(3)}
=
D^{-3}
\sum_{\substack{\mb I,N_{\mb I}>0}}
\left(1-\frac{2\gamma\log n}{n}\right)^{N_{\mb I}}
\sum_{\widetilde B_{\mathfrak I}\in\widetilde{\mc C}_{\mb I}}
\widetilde B_{\mathfrak I}.
\label{eq:em-shallow-phase-gap}
\end{equation}

\paragraph{The bishop method.}
Although~\cref{eq:em-shallow-phase-gap} remains complicated, it
identifies a natural route to estimating its operator norm. As a simplified illustration of the proof strategy, we focus on
the contribution associated with a fixed count configuration \(\mb I\). Define
\(\mb T(\mb I):=
\left(1-\frac{2\gamma\log n}{n}\right)^{N_{\mb I}}
\sum_{\widetilde B_{\mathfrak I}\in\widetilde{\mc C}_{\mb I}}
\widetilde B_{\mathfrak I}\).
Using
\(\|\widetilde\Delta_3\|_\infty=1/2\) and
\(\|\widetilde B(\pi_1,\pi_2)\|_\infty=2/3\), a direct termwise estimate
gives
\begin{equation}
\|\mb T(\mb I)\|_\infty
\le
\left(1-\frac{2\gamma\log n}{n}\right)^{N_{\mb I}}
\#\widetilde{\mc C}_{\mb I}
\left(\frac23\right)^n .
\label{eq:k3-basic-count-bound}
\end{equation}
The problem is therefore to balance three quantities: the local
contraction \((2/3)^n\), the multiplicity
\(\#\widetilde{\mc C}_{\mb I}\), and the interaction exponent \(N_{\mb I}\). The remaining difficulty is that both the multiplicity and the interaction exponent depend on nine coupled block occupations in \(\mb I\). To organize these parameters and make their competition transparent, we introduce the \emph{bishop method}.

As shown in~\cref{fig:bishop}(a), the central device of the bishop method is a \(3\times3\) chessboard on which we organize the nine nontrivial block labels, where \(\pi_1,\pi_1',\pi_1''\) are the three odd permutations and
\(\pi_2,\pi_2',\pi_2''\) are the three even permutations. The
chessboard is defined up to independent row and column permutations and
transposition, all of which leave \(N_{\mb I}\) and
\(\#\widetilde{\mc C}_{\mb I}\) invariant. For each label
\(\mathfrak x\), let
\(n_{\mathfrak x}:=|I^{(\mathfrak x)}|\) denote its occupation.

Using the symmetries of the chessboard, we may label these
cells as \(\mathfrak a\) and \(\mathfrak e\), and define
\begin{equation}
m_1:=n_{\mathfrak a},
\qquad
m_2:=n_{\mathfrak e},
\qquad
m_1\ge m_2.
\label{eq:em-bishop-m12}
\end{equation}

\begin{figure}[t]
\centering \includegraphics[width=\linewidth]{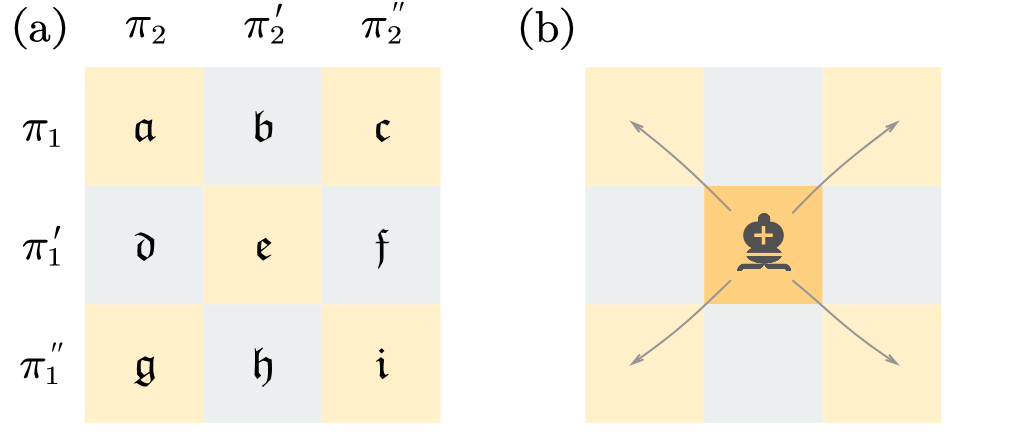} 
    
    \caption{The bishop method for third-order block counting.
(a) The nine nontrivial blocks $B(\pi_1,\pi_2)$ are arranged on a $3\times3$ board. For example, the cell $\mathfrak a$ corresponds to $B(\pi_1,\pi_2)$, and 
$|I^{(\mathfrak a)}|=|I^{(\pi_1,\pi_2)}|$.
(b) The bishop-like moves identify pairs in different rows and columns whose occupation products contribute to $N_{\mb I}$.
}

    \label{fig:bishop}
\end{figure}

We further define $m_3
:=
\max\{
n_{\mathfrak b},
n_{\mathfrak c},
n_{\mathfrak d},
n_{\mathfrak g}
\}$ as
the largest occupation sharing a row or column with the \(m_1\)-cell.
The maximality of \(m_1m_2\) and the definition of \(m_3\) imply that
all nine occupations $n_{\mathfrak x}$ are controlled by the three effective parameters
\(m_1,m_2,m_3\), together with the derived ratio
\(m_1m_2/m_3\). The precise assignment of these bounds to the
individual cells is given in the Supplemental Material.

This three-parameter reduction, which we call the `bishop method', is a key step in our proof. Here, $N_{\mb I}$ sums the products of occupations in cells lying in different rows and columns, as visualized by the bishop-like moves in~\cref{fig:bishop}(b). The bishop method extracts the simple lower bound \(N_{\mb I}\ge m_1m_2\) while simultaneously controlling the multiplicity \(\#\widetilde{\mc C}_{\mb I}\) through the same three parameters. After this reduction, the remaining estimates become tractable.

The remainder of the proof involves more detailed operator and
combinatorial estimates, including block-combination bounds, Hermitian pairing, and a finite case analysis over \(m_1,m_2,m_3\). The detailed estimates are provided in the Supplemental Material. They show that the combinatorial multiplicity is always compensated either by the local Clifford contraction or by the exponent $\left(1-\frac{2\gamma\log n}{n}\right)^{N_{\mb I}}$.
Consequently, combining the resulting sparse-to-dense bound with the exponentially small dense-to-Haar estimate and~\cref{lem:relative-error-opnorm:main} proves the third-moment result in~\cref{thm:third-moment}.

\clearpage
\appendix
\onecolumngrid
\startappendixtoc
\vspace{3em}
\begin{center}
\textbf{\large Supplementary Material: \smallskip Ultra-Precise Quantum Projective Designs in Constant Depth}
\end{center}

All complete proofs and technical details are collected in the Supplemental Material. ~\cref{sec:prelim-relative} introduces the ensemble notation and the relative-error formulation. \cref{ap:mom-func} derives the second- and third-order moment operators, providing the foundation for the subsequent relative-error analysis. \cref{Ap:2twirled} and \cref{Ap:3twirled} establish the second- and third-order relative-error bounds, respectively. \cref{Ap:3twirled} also develops the Boolean-support block decomposition and the bishop method used to control the third-order combinatorics. Finally, \cref{ap:log-sparsity-necessary} proves that $p=\Omega(\log n/n)$ is necessary for bounded second-order relative error within phase circuit architecture, thereby establishing the asymptotic optimality of our logarithmic sparsity scaling.

The correspondence between the main-text results and their complete proofs is as follows. The formal statement and proof of \cref{thm:main-informal} are given in \cref{th:constdepth}. The second-order result in \cref{thm:second-moment} follows by combining \cref{prop:k2-phase-haar-norm} and \cref{prop:k2-shallow-phase-norm} through the triangle inequality and the normalized operator-norm characterization in \cref{lem:relative-error-opnorm}. Likewise, the third-order result in \cref{thm:third-moment} follows from \cref{prop:k3-phase-haar-opnorm}, \cref{prop:twirled-third-moment-main}, and \cref{lem:relative-error-opnorm}. Finally, \cref{prop:log-sparsity-optimal-main} is proved in \cref{ap:log-sparsity-necessary}.
\appendixtableofcontents
\clearpage
\begin{appendix}

\section{Preliminaries for relative-error analysis}
\label{sec:prelim-relative}

In ~\cref{sec:prelim-relative}, we collect the ensemble definitions and the state-moment
notation used in the relative-error analysis for $k=2$ and $k=3$.
Throughout the estimates for the shallow phase ensemble, we work in the parameter
regime
\[
0<p<\frac12,
\qquad\text{equivalently}\qquad
0<1-2p<1.
\]
In particular, for \(p=\gamma\log n/n\), our asymptotic statements concern
fixed \(\gamma>0\) and sufficiently large \(n\). The endpoint \(p=1/2\) is
used only to define the dense phase ensemble and is treated separately.

\subsection{Notation and conventions}
\label{ap:notation}

In this subsection, we collect the notation and mathematical conventions used
throughout the appendices. We write
\[
[n]:=\{0,1,\ldots,n-1\},
\qquad
\mc H_n:=(\mathbb C^2)^{\otimes n},
\qquad
D:=\dim\mc H_n=2^n,
\qquad
\ket{\mb 0}:=\ket{0}^{\otimes n}.
\]
For a local operator \(A\) and a subset \(I\subseteq[n]\), we write
$A^{\otimes I}:=\bigotimes_{j\in I} A_j,$
where \(A_j\) denotes a copy of \(A\) acting on the $j$-th qubit site.
We write
\(I_1\sqcup\cdots\sqcup I_s\) for a disjoint union. We use
\(\id_r\) and \(\mathbf 0_r\) for the \(r\times r\) identity and zero
matrices, respectively. For an operator \(X\), \(\|X\|_\infty\) denotes its
operator norm, and \(X\preceq Y\) means that \(Y-X\) is positive semidefinite.
For a Pauli string \(P=\bigotimes_{q\in[n]}P_q\), with
\(P_q\in\{\id_2,X,Y,Z\}\), its Pauli weight is
\(\mathrm{wt}(P):=\bigl|\{q\in[n]:P_q\neq\id_2\}\bigr|\).

Let \(S_k\) denote the symmetric group on \(k\) elements, and let
\(S_{k,\mathrm{even}}\) and \(S_{k,\mathrm{odd}}\) denote its even and odd
permutations, respectively. For \(\pi\in S_k\), we denote by \(V_n(\pi)\)
the corresponding permutation operator on \(\mc H_n^{\otimes k}\), defined by
\[
V_n(\pi)
\bigl(
\ket{\psi_1}\otimes\cdots\otimes\ket{\psi_k}
\bigr)
=
\ket{\psi_{\pi^{-1}(1)}}\otimes\cdots\otimes
\ket{\psi_{\pi^{-1}(k)}}.
\]
Under the sitewise identification
\(\mc H_n^{\otimes k}\cong(\mathbb C^2)^{\otimes kn}\), one has $V_n(\pi)=V_1(\pi)^{\otimes n}.$
The projector onto the fully symmetric subspace is $\Pi_{\mathrm{sym}}^{(k)}
:=
\frac{1}{k!}\sum_{\pi\in S_k}V_n(\pi).$
We denote its rank by
\begin{equation}\label{eq:rsym-def}
r_{\mathrm{sym}}^{(k)}
:=
\dim\operatorname{Sym}^k(\mc H_n)
=
\binom{D+k-1}{k}.
\end{equation}

For an operator \(X\) acting on the single-site \(k\)-copy space, define the
single-qubit Clifford twirling channel by
\[
\mc T_k(X)
:=
\mathbb E_{C\sim\mathrm{Cl}(2)}
\left[
C^{\otimes k}X(C^\dagger)^{\otimes k}
\right].
\]
Throughout the appendices, a tilde indicates the corresponding locally
twirled operator; that is, $\widetilde X:=\mc T_k(X),$
with the copy number \(k\) clear from context. Independent local Clifford
twirling on \(n\) qubit sites is therefore represented by
\(\mc T_k^{\otimes n}\).

Finally, \(\mathrm{negl}(n)\) denotes a negligible function of \(n\), namely
a function satisfying $\mathrm{negl}(n)=o(n^{-c})$
for every constant \(c>0\). An event is said to hold with overwhelming
probability if its failure probability is negligible in \(n\).

\subsection{State ensembles and moment operators}\label{ap:sparsedef}

We first define the three ensembles considered in this work.

\paragraph{Shallow phase ensemble.}
The random unitaries in the shallow phase ensemble are of the form $U=U_{\mathrm{loc}}\,U_SU_{CZ} H^{\otimes n}\in\mc E_{\mathrm{shallow}}^{(\gamma)},$
where the local Clifford layer is
\begin{equation}\label{eq:prelim-local}
U_{\mathrm{loc}}=\bigotimes_{q\in[n]} U_q,
\end{equation}
and each $U_q$ is sampled independently and uniformly from the single-qubit Clifford group $\mathrm{Cl}(2)$. The diagonal layers are given by
\begin{equation}\label{eq:prelim-sparse-ensemble}
U_S=\prod_{q\in[n]} S_q^{A_{q,q}},
\qquad
U_{CZ}=\prod_{0\le i<j\le n-1} CZ_{i,j}^{A_{i,j}}.
\end{equation}
Here $A$ is a random symmetric matrix: the diagonal entries $A_{q,q}$ are sampled independently and uniformly from $\{0,1,2,3\}$, while the off-diagonal entries $A_{i,j}$ for $i<j$ are independent and identically distributed (i.i.d.) Bernoulli$(p)$ with $p=\frac{\gamma\log n}{n},$
where $\gamma>0$ is a constant.
\paragraph{Dense phase ensemble.}
The dense phase ensemble is the corresponding dense version of the shallow phase  ensemble. Concretely, \(\mc E_{\mathrm{phase}}\) consists of unitaries $U=U_{\mathrm{loc}}U_SU_{CZ}H^{\otimes n}$,
with the same circuit structure and the same distributions of
\(U_{\mathrm{loc}}\) and \(U_S\) as in
\cref{eq:prelim-local,eq:prelim-sparse-ensemble}. The only difference is that
each two-qubit interaction variable \(A_{i,j}\), \(0\le i<j\le n-1\), is sampled
i.i.d. from \(\operatorname{Bernoulli}(1/2)\). Thus,
\(\mc E_{\mathrm{phase}}\) is obtained from the same parametrized family by
setting the interaction probability to \(p=1/2\), instead of
\(p=\gamma\log n/n\).

\paragraph{Haar ensemble.}
The Haar reference ensemble \(\mc E_{\mathrm{Haar}}\) consists of
Haar-random~\cite{mele2024introduction} unitaries \(U\in U(D)\). Its
associated state ensemble $\mc E_{\mathrm{Haar}}$ is obtained by applying \(U\) to the reference state \(\ket{\mb 0}\).

\medskip

For any ensemble $\mc E$ of $n$-qubit unitaries, we associate the random pure
state \(\ket{\psi_U}:=U\ket{\mb 0}\) and its density matrix
\(\rho_U:=\ket{\psi_U}\!\bra{\psi_U}
=U\ket{\mb 0}\!\bra{\mb 0}U^\dagger\).
The $k$-th moment operator of $\mc E$ is then defined by
\begin{equation}\label{eq:prelim-moment-def-new}
\mb M_{\mc E}^{(k)}
:=
\mathbb E_{U\sim\mc E}\,
\rho_U^{\otimes k}
=
\mathbb E_{U\sim\mc E}
\bigl(U\ket{\mb 0}\!\bra{\mb 0}U^\dagger\bigr)^{\otimes k}.
\end{equation}
This is the notion of moment used throughout the paper. In particular,
\begin{equation}
\mb M_{\mc E_{\mathrm{shallow}}^{(\gamma)}}^{(k)}
=
\mathbb E_{U\sim\mc E_{\mathrm{shallow}}^{(\gamma)}}
\bigl(U\ket{\mb 0}\!\bra{\mb 0}U^\dagger\bigr)^{\otimes k},
\end{equation}
and similarly for $\mc E_{\mathrm{phase}}$ and the Haar ensemble $\mc E_{\mathrm{Haar}}$.

For the Haar state ensemble on $n$-qubit systems, the $k$-th moment operator is supported on the fully symmetric subspace. The moment operator of the Haar random ensemble is then
\begin{equation}\label{eq:haar-moment-sym-proj}
\mb M_{\mc E_{\mathrm{Haar}}}^{(k)}
=
{\Pi_{\mathrm{sym}}^{(k)}}/r_{\mathrm{sym}}^{(k)}
\;.
\end{equation}
Thus, \(\mb M_{\mc E_{\mathrm{Haar}}}^{(k)}\) is a normalized projector, and all of its nonzero eigenvalues are equal
to \(1/r_{\mathrm{sym}}^{(k)}\).

\subsection{Relative-error approximate state $k$-designs}\label{ap:relative_design}

We next recall the relative-error notion of an approximate state design used
throughout this work.

\begin{definition}[Relative-error approximate state $k$-design]\label{defi:approx-design-relative}
An ensemble $\mc E$ is called an $\epsilon$-approximate state $k$-design in relative error if 
\begin{equation}\label{eq:prelim-relative-design}
(1-\epsilon)\,\mb M_{\mc{E}_\mathrm{Haar}}^{(k)}
\preceq
\mb M_{\mc E}^{(k)}
\preceq
(1+\epsilon)\,\mb M_{\mc{E}_\mathrm{Haar}}^{(k)}.
\end{equation}
\end{definition}
Equivalently, for every positive semidefinite operator $O\succeq 0$,
\begin{equation}\label{eq:prelim-relative-design-psd}
\left|
\tr\!\left[
\bigl(\mb M_{\mc E}^{(k)}-\mb M_{\mc{E}_\mathrm{Haar}}^{(k)}\bigr)O
\right]
\right|
\le
\epsilon\,
\tr\!\left[\mb M_{\mc{E}_\mathrm{Haar}}^{(k)}O\right].
\end{equation}
Correspondingly, one can define the relative error of $\mc E$ at order $k$ by
\begin{equation}\label{eq:prelim-relative-error}
\epsilon_k(\mc E)
:=
\sup_{\substack{O\succeq 0\\ \tr(\mb M_{\mc{E}_\mathrm{Haar}}^{(k)}O)\neq 0}}
\frac{
\left|
\tr\!\left[
\bigl(\mb M_{\mc E}^{(k)}-\mb M_{\mc{E}_\mathrm{Haar}}^{(k)}\bigr)O
\right]
\right|
}{
\tr\!\left[\mb M_{\mc{E}_\mathrm{Haar}}^{(k)}O\right]
}.
\end{equation}
Thus, \(\mc E\) is an \(\epsilon\)-approximate state \(k\)-design in
relative error if and only if \(\epsilon_k(\mc E)\le\epsilon\).

\begin{lemma}[Relative error as a normalized operator norm]
\label{lem:relative-error-opnorm}
The relative error of a state ensemble \(\mc E\) at order \(k\) is exactly
\begin{equation}\label{eq:relative-error-exact-opnorm}
\epsilon_k(\mc E)
=
r_{\mathrm{sym}}^{(k)}
\left\|
\mb M_{\mc E}^{(k)}
-
\mb M_{\mc E_{\mathrm{Haar}}}^{(k)}
\right\|_\infty .
\end{equation}
Consequently, \(\mc E\) is an \(\epsilon\)-approximate state \(k\)-design in
relative error whenever the right-hand side is at most \(\epsilon\).
\end{lemma}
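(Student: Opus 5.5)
Write $\Delta:=\mb M_{\mc E}^{(k)}-\mb M_{\mc E_{\mathrm{Haar}}}^{(k)}$ and $P:=\Pi_{\mathrm{sym}}^{(k)}$. The plan is to show that both moment operators are supported on the symmetric subspace $\operatorname{Sym}^k(\mc H_n)$. Once this holds, the relative-error quotient in \cref{eq:prelim-relative-error} reduces to a Rayleigh-type quotient whose supremum is exactly the operator norm of $\Delta$, rescaled by $r_{\mathrm{sym}}^{(k)}$.

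\emph{Support.} Each $\rho_U^{\otimes k}=(\ket{\psi_U}\!\bra{\psi_U})^{\otimes k}$ satisfies $V_n(\pi)\rho_U^{\otimes k}=\rho_U^{\otimes k}=\rho_U^{\otimes k}V_n(\pi)$ for every $\pi\in S_k$, because $\ket{\psi_U}^{\otimes k}$ is permutation invariant. Averaging over $U$ gives $P\,\mb M_{\mc E}^{(k)}P=\mb M_{\mc E}^{(k)}$. The same holds for the Haar moment by \cref{eq:haar-moment-sym-proj}. Hence $\Delta=P\Delta P$, and $\Delta$ is Hermitian.

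\emph{Upper bound.} Take any $O\succeq0$. We have $\tr[\Delta O]=\tr[\Delta\,POP]$, and $POP\succeq0$. Therefore
\[
|\tr[\Delta O]|\le\|\Delta\|_\infty\tr[POP]=\|\Delta\|_\infty\tr[PO]=r_{\mathrm{sym}}^{(k)}\|\Delta\|_\infty\,\tr\bigl[\mb M_{\mc E_{\mathrm{Haar}}}^{(k)}O\bigr],
\]
where the last equality uses $\mb M_{\mc E_{\mathrm{Haar}}}^{(k)}=P/r_{\mathrm{sym}}^{(k)}$. This shows $\epsilon_k(\mc E)\le r_{\mathrm{sym}}^{(k)}\|\Delta\|_\infty$.

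\emph{Lower bound.} Since $\Delta$ is Hermitian and supported in $\operatorname{Sym}^k(\mc H_n)$, it has a normalized eigenvector $\ket v\in\operatorname{Sym}^k(\mc H_n)$ whose eigenvalue has modulus $\|\Delta\|_\infty$. If $\Delta=0$ there is nothing to prove. Choose $O=\ket v\!\bra v$. Then $\tr[\mb M_{\mc E_{\mathrm{Haar}}}^{(k)}O]=1/r_{\mathrm{sym}}^{(k)}\neq0$, so this $O$ is admissible. Its quotient equals $r_{\mathrm{sym}}^{(k)}\|\Delta\|_\infty$, which gives equality.

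The final sentence of the lemma then follows from \cref{defi:approx-design-relative}. The equivalence between \cref{eq:prelim-relative-design} and \cref{eq:prelim-relative-design-psd} holds because checking operator inequalities against all rank-one $O$ suffices.

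The only delicate point is the support argument. It ensures that off-symmetric directions of $O$ cannot inflate the numerator without also inflating the denominator, so it must be stated carefully. Everything else is routine.
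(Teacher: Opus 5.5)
Your proof is correct and follows essentially the same route as the paper. Both arguments use that the two moment operators are supported on $\operatorname{Sym}^k(\mc H_n)$, so $O$ may be replaced by $\Pi_{\mathrm{sym}}^{(k)}O\Pi_{\mathrm{sym}}^{(k)}$. Both then bound the quotient by $r_{\mathrm{sym}}^{(k)}\|\mb M_{\mc E}^{(k)}-\mb M_{\mc E_{\mathrm{Haar}}}^{(k)}\|_\infty$ and attain equality with the rank-one projector onto an eigenvector of largest modulus. Your version also spells out the permutation-invariance justification of the support claim and the degenerate cases (your $\Delta=0$, and $O$ with vanishing denominator), which the paper leaves implicit.
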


\begin{proof}
Both \(\mb M_{\mc E}^{(k)}\) and
\(\mb M_{\mc E_{\mathrm{Haar}}}^{(k)}\) are supported on the fully symmetric
subspace. Hence, in~\cref{eq:prelim-relative-error}, replacing any
\(O\succeq0\) by
\(\Pi_{\mathrm{sym}}^{(k)}O\Pi_{\mathrm{sym}}^{(k)}\) leaves both the
numerator and the denominator unchanged. It therefore suffices to optimize
over positive semidefinite operators supported on
\(\operatorname{Sym}^k(\mc H_n)\).

On this subspace,
\(\mb M_{\mc E_{\mathrm{Haar}}}^{(k)}
=\Pi_{\mathrm{sym}}^{(k)}/r_{\mathrm{sym}}^{(k)}\), and hence
\(\tr[\mb M_{\mc E_{\mathrm{Haar}}}^{(k)}O]
=\tr(O)/r_{\mathrm{sym}}^{(k)}\). Therefore,
\begin{equation}
\epsilon_k(\mc E)
=
r_{\mathrm{sym}}^{(k)}
\sup_{\substack{O\succeq0\\ \tr(O)>0}}
\frac{
\left|
\tr\!\left[
\left(
\mb M_{\mc E}^{(k)}
-
\mb M_{\mc E_{\mathrm{Haar}}}^{(k)}
\right)O
\right]
\right|
}{
\tr(O)
}.
\end{equation}
Since the quotient is homogeneous in \(O\), we can restrict the supremum to
operators satisfying \(\tr(O)=1\). Moreover,
\[
\left|
\tr\!\left[
\left(
\mb M_{\mc E}^{(k)}
-
\mb M_{\mc E_{\mathrm{Haar}}}^{(k)}
\right)O
\right]
\right|
\le
\left\|
\mb M_{\mc E}^{(k)}
-
\mb M_{\mc E_{\mathrm{Haar}}}^{(k)}
\right\|_\infty .
\]
Equality is attained by choosing \(O\) as the rank-one projector onto an
eigenvector corresponding to an eigenvalue of largest absolute value.
Consequently,
\begin{equation}\label{eq:rsym-app}
    \epsilon_k(\mc E)
=
r_{\mathrm{sym}}^{(k)}
\left\|
\mb M_{\mc E}^{(k)}
-
\mb M_{\mc E_{\mathrm{Haar}}}^{(k)}
\right\|_\infty ,
\end{equation}
which proves the claim.
\end{proof}

This lemma plays a critical role in the subsequent analysis. In the following sections, we first present explicit formulas for the moment operators $\mb M_{\mc E_{\mathrm{phase}}}^{(k)}$ and $\mb M_{\mc E_{\mathrm{shallow}}^{(\gamma)}}^{(k)}$ in~\cref{ap:mom-func}, and then estimate the relative error of $\mc E_{\mathrm{shallow}}^{(\gamma)}$ through the operator-norm differences
\begin{equation}\label{eq:OpGap}
\bigl\|\mb M_{\mc E_{\mathrm{shallow}}^{(\gamma)}}^{(k)}-\mb M_{\mc E_{\mathrm{phase}}}^{(k)}\bigr\|_\infty
\quad\text{and}\quad
\bigl\|\mb M_{\mc E_{\mathrm{phase}}}^{(k)}-\mb M_{\mc E_{\mathrm{Haar}}}^{(k)}\bigr\|_\infty.    
\end{equation}

Thus the relative-error analysis is reduced to proving quantitative bounds for these two operator norm gaps. Finally, the main result (\cref{thm:main-informal}) of this work is formally presented as follows.

\begin{theorem}[Relative-error design from shallow
phase circuits]\label{th:constdepth}
For every fixed $\epsilon\in(0,1)$ and all sufficiently large \(n\), the
shallow phase ensemble over $n$ qubits forms an $\epsilon$-approximate
state $2$-design and $3$-design in relative error, with overwhelming
probability, in
\begin{itemize}
\item $O(1)$-depth, using
    $O\!\left(n\log\frac{n}{\epsilon}\right)$ ancilla qubits,
\item $O\!\left(\log\frac{n}{\epsilon}\right)$-depth, without ancillas.
\end{itemize}
\end{theorem}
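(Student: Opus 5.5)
The plan is to combine the moment bounds already stated with a resource analysis of the sampled circuit. The first step is to choose the sparsity parameter as a function of $\epsilon$. By \cref{thm:second-moment}, $\epsilon_2\le 2n^{1-\gamma/4}+\mathrm{negl}(n)$. By \cref{thm:third-moment}, $\epsilon_3\le n^{A_0-B_0\gamma}+\mathrm{negl}(n)$. For fixed $\epsilon\in(0,1)$, I would pick a constant $\gamma>\max\{4,\gamma_0\}$, for instance $\gamma=\max\{5,\,2A_0/B_0,\,\gamma_0+1\}$. Then both right-hand sides fall below $\epsilon$ for all sufficiently large $n$, and \cref{lem:relative-error-opnorm} converts these bounds into the relative-error $2$- and $3$-design property. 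To track the $\epsilon$ dependence in the resource counts, I would instead let $\gamma$ grow like $c\log(n/\epsilon)/\log n$. This makes $n^{1-\gamma/4}\lesssim \epsilon$ and the expected degree $\gamma\log n=O(\log(n/\epsilon))$. The catch is that the bounds of \cref{thm:second-moment} and \cref{thm:third-moment} must then hold uniformly in $\gamma\ge\gamma_0$, which I would verify by checking that their proofs only use $(1-2p)^{N}\le e^{-2pN}$ with $p=\gamma\log n/n$.

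The second step is to control the interaction graph $G$ drawn from the adjacency matrix $A$. Each vertex degree is $\mathrm{Bin}(n-1,p)$ with mean $\gamma\log n$. A Chernoff bound together with a union bound over the $n$ vertices gives maximum degree $\Delta_{\max}\le 2e\gamma\log n=O(\log(n/\epsilon))$ with failure probability $n\cdot n^{-\Omega(\gamma)}$, which is negligible. The total edge count is $O(n\log(n/\epsilon))$ with overwhelming probability by the same argument. This is the sense of "with overwhelming probability" in the statement. The moment guarantee is a property of the full ensemble, while the depth and ancilla bounds hold for all but a negligible fraction of samples.

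The third step builds the ancilla-free circuit. All $CZ$ gates commute, so I would properly edge-colour $G$ with $\Delta_{\max}+1$ colours using Vizing's theorem. Each colour class is a matching, so it fits in one layer on an all-to-all architecture. Adding the Hadamard layer, the single-qubit phase layer and the local Clifford layer gives depth $O(\log(n/\epsilon))$.

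The fourth step builds the constant-depth adaptive circuit, and this is where I expect the main obstacle. The idea is to replace each high-degree vertex $v$ by a path or star of $\deg(v)$ ancillas, all prepared in $\ket{+}$ and fused by $CZ$s into a bounded-degree graph-state extension, following \cite{Hoyer2006resources}. Each original edge is then attached to a distinct ancilla copy. Measuring the redundant qubits in the $X$ basis contracts the extension back to the graph state $U_{CZ}H^{\otimes n}\ket{\mb 0}$, up to a Pauli correction that can be computed classically from the outcomes. The delicate points are twofold. First, the extended graph must have maximum degree at most three, so that Vizing's theorem yields four $CZ$ layers. Second, the Pauli byproducts must be exactly single-qubit Paulis on the surviving vertices. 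I would check this with the stabilizer rule for $X$-measurements on a degree-two chain vertex, which merges its neighbours at the cost of a local Clifford/Pauli. The byproduct, the $S^{A_{qq}}$ phases and the random $U_q$ then combine into a single outcome-dependent single-qubit Clifford layer. Because $\mathrm{Cl}(2)$ is a group, this does not change the output distribution.

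Counting layers gives one Hadamard layer, four $CZ$ layers, one measurement layer and one feedforward Clifford layer, for a depth of seven. The ancilla count equals the sum of degrees, $O(n\log(n/\epsilon))$.
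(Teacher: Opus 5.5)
Your plan follows the paper's proof of \cref{th:constdepth} step for step. The moment part combines the dense-to-Haar and sparse-to-dense propositions with \cref{lem:relative-error-opnorm}, which is exactly what \cref{thm:second-moment,thm:third-moment} package, and fixes a constant $\gamma$ above all thresholds. The resource part uses Chernoff bounds, Vizing edge colouring, and the degree-three graph-state extension of H{\o}yer et al., giving one Hadamard, four $CZ$, one measurement and one feedforward Clifford layer. Your $\epsilon$-tracking via $\gamma\approx c\log(n/\epsilon)/\log n$ is unnecessary for fixed $\epsilon$: this $\gamma$ stays bounded, and a constant $\gamma$ already gives degree $O(\log n)\subseteq O(\log(n/\epsilon))$. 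It is harmless, though.

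One step is wrong as written: the maximum-degree tail. For constant $\gamma$, a failure probability $n\cdot n^{-\Omega(\gamma)}$ is polynomially small, not negligible. This is not an artifact of a loose Chernoff bound. A single vertex already has $\Pr[\deg(q)\ge K\gamma\log n]\ge n^{-C(K,\gamma)}$ for a finite constant $C(K,\gamma)$, and an edge-colouring ($CZ$-only) schedule needs at least $\Delta(G)$ layers. So the ancilla-free depth bound can only be guaranteed with probability $1-n^{-c}$ for any prescribed constant $c$, obtained by enlarging the constant in the depth. That is precisely what \cref{prop:implementation-bounds} asserts (``arbitrarily high polynomial probability''). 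Overwhelming probability is available only for the edge count: its mean is $\Theta(n\log n)$, so Chernoff gives failure probability $\exp(-\Omega(n\log n))$, and that is all the adaptive construction needs.

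Two smaller points on the adaptive step. First, a star replacement violates the degree-three requirement, since its centre has degree $\deg(v)$. Use a path alternating copy and connector qubits, with each copy carrying one original edge. Measuring the connectors and all but one copy in the $X$ basis then merges the copies with only Pauli-$Z$ byproducts. Second, you need not insist on Pauli byproducts at all: any classically known single-qubit Clifford correction is absorbed into the final random Clifford layer.
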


\begin{proof}
{For each \(k\in\{2,3\}\), the triangle inequality gives}
\begin{align}
\left\|
\mb M_{\mc E_{\mathrm{shallow}}^{(\gamma)}}^{(k)}
-
\mb M_{\mc E_{\mathrm{Haar}}}^{(k)}
\right\|_\infty
&{\le
\left\|
\mb M_{\mc E_{\mathrm{shallow}}^{(\gamma)}}^{(k)}
-
\mb M_{\mc E_{\mathrm{phase}}}^{(k)}
\right\|_\infty}
\nonumber+
\left\|
\mb M_{\mc E_{\mathrm{phase}}}^{(k)}
-
\mb M_{\mc E_{\mathrm{Haar}}}^{(k)}
\right\|_\infty.
\label{eq:thm-main-triangle}
\end{align}
{Combining Propositions~\ref{prop:k2-phase-haar-norm} and~\ref{prop:k2-shallow-phase-norm} for \(k=2\), and Propositions~\ref{prop:k3-phase-haar-opnorm} and~\ref{prop:twirled-third-moment-main} for \(k=3\), with the normalized operator-norm characterization in~\cref{lem:relative-error-opnorm}, we obtain the following bound. For every
\(k\in\{2,3\}\), there exist absolute constants
\(A_k>0\), \(B_k>0\), and \(\gamma_k>0\) such that}
\begin{equation}\label{eq:thm-main-unified-opnorm}
\left\|
\mb M_{\mc E_{\mathrm{shallow}}^{(\gamma)}}^{(k)}
-
\mb M_{\mc E_{\mathrm{Haar}}}^{(k)}
\right\|_\infty
\le
2^{-kn}
\Bigl(
n^{A_k-B_k\gamma}
+
\mathrm{negl}(n)
\Bigr)
\end{equation}
for every fixed \(\gamma>\gamma_k\) and all sufficiently large \(n\).

Define $\gamma_\star
:=
\max_{k\in\{2,3\}}
\left\{
\gamma_k,\,
\frac{A_k}{B_k}
\right\}$,
{and fix any constant \(\gamma>\gamma_\star\).}
Then we denote $\delta
:=
\min_{k\in\{2,3\}}
\bigl(B_k\gamma-A_k\bigr)
>0,$ so \(n^{A_k-B_k\gamma}\le n^{-\delta}\) for each \(k\in\{2,3\}\).
For sufficiently large \(n\), both \(n^{-\delta}\) and the negligible term
in~\cref{eq:thm-main-unified-opnorm} are bounded by \(\epsilon/2\). Hence,
by~\cref{lem:relative-error-opnorm},
\begin{equation}
r_{\mathrm{sym}}^{(k)}
\left\|
\mb M_{\mc E_{\mathrm{shallow}}^{(\gamma)}}^{(k)}
-
\mb M_{\mc E_{\mathrm{Haar}}}^{(k)}
\right\|_\infty
\le
D^k
\left\|
\mb M_{\mc E_{\mathrm{shallow}}^{(\gamma)}}^{(k)}
-
\mb M_{\mc E_{\mathrm{Haar}}}^{(k)}
\right\|_\infty
\le
\epsilon.
\end{equation}
Here we used \(r_{\mathrm{sym}}^{(k)}\le D^k\).
By the definition of relative-error approximate state designs, this proves
that \(\mc E_{\mathrm{shallow}}^{(\gamma)}\) is an
\(\epsilon\)-approximate state \(2\)-design and \(3\)-design in relative
error. To achieve a prescribed accuracy $\epsilon$, it suffices to choose $\gamma$ such that $n^{A_k-B_k\gamma}\le \epsilon/2$ for $k=2,3$, which requires only an expected interaction degree $\gamma\log n=O(\log(n/\epsilon))$.

The implementation bounds are established separately in the following \cref{prop:implementation-bounds} of \cref{ap:efficient-implementation}. Combining those bounds with the relative-error estimates above completes the proof.
\end{proof}

\subsection{Adaptive and ancilla-free implementations}
\label{ap:efficient-implementation}

We now establish the implementation bounds stated in
\cref{th:constdepth}. Unlike the moment analysis, these bounds follow
directly from the graph-state structure of the sampled phase circuit.

\begin{prop}[Implementation bounds for the shallow phase ensemble]
\label{prop:implementation-bounds}
Let \(p=\gamma\log n/n\), with \(\gamma>0\) fixed. A circuit sampled from
\(\mc E_{\mathrm{shallow}}^{(\gamma)}\) admits

\begin{itemize}
\item an adaptive implementation of quantum depth at most seven using
\(O(n\log \frac{n}{\epsilon})\) measured ancillas, with overwhelming probability;
\item an ancilla-free implementation of depth \(O(\log \frac{n}{\epsilon})\) on an
all-to-all architecture, with arbitrarily high polynomial probability.
\end{itemize}
\end{prop}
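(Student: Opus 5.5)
The plan starts from the random interaction graph $G$ defined by the off-diagonal entries of $A$; it is an Erd\H{o}s--R\'enyi graph $G(n,p)$ with $p=\gamma\log n/n$. The first step is a degree bound. Each degree is $\mathrm{Bin}(n-1,p)$ with mean $\mu\le\gamma\log n$. A Chernoff bound gives $\Pr[\deg(v)\ge c\gamma\log n]\le n^{-\Omega(c\gamma)}$ for $c\ge e$, and a union bound over the $n$ vertices gives $\Delta(G)\le c\gamma\log n$ except with probability $n^{1-\Omega(c\gamma)}$. Taking $c$ large makes this failure probability an arbitrary inverse polynomial. This is the source of the ``arbitrarily high polynomial probability'' in the ancilla-free item. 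By the choice made in the proof of \cref{th:constdepth}, the required $\gamma\log n$ is $O(\log(n/\epsilon))$. The edge count is $|E|\sim\mathrm{Bin}(\binom n2,p)$ with mean $\Theta(\gamma n\log n)$, so Chernoff gives $|E|\le 2\gamma n\log n$ except with probability $e^{-\Omega(n\log n)}$, which is negligible.

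\emph{Ancilla-free item.} All gates in $U_SU_{CZ}$ commute, so the $CZ$ gates can be executed in any order. By Vizing's theorem, $G$ has a proper edge coloring with $\Delta(G)+1$ colors, and it can be computed classically in polynomial time. Each color class is a matching, so its $CZ$ gates act on disjoint qubit pairs and form a single layer on an all-to-all architecture. The full circuit consists of the Hadamard layer, the $\Delta(G)+1$ $CZ$ layers, and one layer combining $U_S$ and $U_{\mathrm{loc}}$. Since $S^{A_{q,q}}$ is Clifford, it merges into $U_q$. The total depth is therefore $\Delta(G)+3=O(\log(n/\epsilon))$ on the degree event above.

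\emph{Adaptive item.} I would use a measurement-based ``wire stretching'' gadget, as in \cite{Hoyer2006resources}. Orient each edge $\{i,j\}$ of $G$ with $i<j$. For each edge, introduce two ancillas $a_{ij},b_{ij}$ prepared in $\ket{+}$, together with the data qubits also prepared in $\ket{+}$ (after the initial $H$). The state is built on a graph where data qubit $i$ is linked to $a_{ij}$, $a_{ij}$ to $b_{ij}$, and $b_{ij}$ to $j$. This is not yet correct, because every data vertex would still have high degree.

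To bound degrees, I would replace each data vertex $i$ by a path or star of copies: an $X$-basis measured chain that teleports the logical vertex across $\deg(i)$ physical copies, each copy carrying one incident edge. The resulting extended graph has maximum degree at most $3$. By Vizing's theorem it is edge colorable with $4$ colors, giving $4$ $CZ$ layers. Measuring all non-output vertices in the $X$ (or $Y$) basis collapses the extended graph state onto the graph state $\ket{G}$, up to Pauli corrections $\prod Z^{s}$ determined by linear functions of the outcomes. This follows from standard local-complementation and vertex-deletion rules for graph states \cite{hein2004multiparty}.

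The quantum depth count is: one Hadamard layer, four $CZ$ layers, one simultaneous measurement layer, and one correction layer, for a total of seven. The Pauli corrections, $S^{A_{q,q}}$, and $U_q$ are all single-qubit Cliffords, so they compose into a single outcome-dependent Clifford gate per output qubit. The feedforward is a parity computation, which is classical only.

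The ancilla count is $O(\sum_v\deg(v)+|E|)=O(|E|)=O(n\gamma\log n)=O(n\log(n/\epsilon))$ on the negligible-failure edge-count event. This gives ``overwhelming probability''. It is stronger than the ancilla-free case because only the total edge count matters, not the maximum degree.

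The main obstacle will be choosing a gadget that is correct and degree-$\le 3$. The measurement pattern must reproduce exactly $\prod_{(i,j)\in E}CZ_{ij}\ket{+}^{\otimes n}$, with corrections that are purely Pauli and depend only on the measurement outcomes, so they can be absorbed without extra depth. I would first verify a single edge replaced by a length-$3$ path with two $X$-measured middle vertices. Then I would verify vertex splitting: a data vertex replaced by a path of copies joined through measured intermediate vertices, where an even-length measured path acts as an identity wire up to Pauli corrections. Once both pieces are checked, the rest of the argument is the bookkeeping above.
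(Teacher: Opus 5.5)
Your proposal is correct and follows the paper's proof essentially step for step. The ingredients are the same: Chernoff-plus-union bounds on $\Delta(G)$ and $|E|$; Vizing edge coloring for the ancilla-free depth $\Delta(G)+O(1)=O(\log\frac{n}{\epsilon})$; and, for the adaptive version, a maximum-degree-three graph-state extension. That extension is run as one Hadamard layer, four $CZ$ layers, one simultaneous measurement layer, and one outcome-dependent local Clifford layer that absorbs the Pauli byproducts and $U_S$, using $O(|E|)=O(n\log\frac{n}{\epsilon})$ ancillas.

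The only difference is that the paper cites \cite{Hoyer2006resources} for the degree-three extension, whereas you construct it explicitly. Your construction works as long as consecutive copies of a vertex are joined through a single $X$-measured vertex, with all but one copy also measured in $X$. Joining them through two measured vertices, as in your edge gadget, would create an edge between the copies rather than merging them.
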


\begin{proof}
Let \(G=([n],E)\sim G(n,p)\) denote the interaction graph of the sampled
\(CZ\) layer. We first consider an ancilla-assisted adaptive
implementation using mid-circuit measurements and classical feedforward.
It suffices to prepare
\[
U_{CZ}H^{\otimes n}\ket{\mathbf 0}=\ket{G},
\]
where \(\ket{G}\) is the graph state associated with \(G\).
Ref.~\cite{Hoyer2006resources} shows that \(\ket{G}\) can be obtained by
measuring a graph-state extension of maximum degree three containing
\(O(n+|E|)\) qubits. By Vizing's theorem, the edges of this extension can
be partitioned into at most four matchings, so all of its \(CZ\) gates
can be applied in four quantum layers.

All auxiliary qubits can then be measured simultaneously. The resulting
Pauli corrections can be absorbed into the final random local Clifford
layer, together with the single-qubit phase gates in \(U_S\). The complete
sampled circuit therefore has quantum depth at most seven: one Hadamard
layer, four \(CZ\) layers, one measurement layer, and one
outcome-dependent local Clifford layer.

It remains to control the number of edges. Since
\[
|E|\sim
\operatorname{Bin}\!\left(
\binom{n}{2},\frac{\gamma\log n}{n}
\right),
\qquad
\mu_E:=\mathbb E|E|
=
\frac{\gamma(n-1)\log n}{2},
\]
the multiplicative Chernoff bound gives
\begin{equation}
\Pr\!\left(|E|\ge 2\mu_E\right)
\le
\exp\!\left(-\frac{\mu_E}{3}\right)
=
\exp\!\left[-\frac{\gamma(n-1)\log n}{6}\right].
\label{eq:implementation-edge-tail}
\end{equation}
Thus, with overwhelming probability, \(|E|=O(\gamma n\log n)\), and the
adaptive construction uses \(O(\gamma n\log n)=O(n\log\frac{n}{\epsilon})\) measured ancillas.

Without ancillas, the \(CZ\) gates can instead be scheduled directly by
edge coloring \(G\). Vizing's theorem bounds their depth by
\(\Delta(G)+1\). For every vertex \(q\),
\[
\deg_G(q)\sim\operatorname{Bin}(n-1,p),
\qquad
\mathbb E[\deg_G(q)]\le\gamma\log n.
\]
A Chernoff bound followed by a union bound shows that, for every fixed
\(c>0\), there exists a constant \(C_c>0\) such that
\[
\Pr\!\left[\Delta(G)>C_c\gamma\log n\right]\le n^{-c}.
\]
Hence the sampled circuit has ancilla-free depth
\(O(\gamma\log n)\) with arbitrarily high polynomial probability. The
Hadamard, phase, and final local Clifford layers contribute only constant
additional depth. Finally, for the fixed target accuracy \(\epsilon\) in
\cref{th:constdepth},  $C_c\gamma\log n = O(\log \frac{n}{\epsilon})$, which proves the claimed implementation bounds.
\end{proof}

\section{Moment operators of the shallow phase ensemble}\label{ap:mom-func}

In this section, we derive the second- and third-order moment operators of the shallow phase ensemble, $\mb{M}_{\mc{E}_{\text{shallow}}^{(\gamma)}}^{(2)}$ and $\mb{M}_{\mc{E}_{\text{shallow}}^{(\gamma)}}^{(3)}$. These moment formulas are the key analytical tools required for the relative-error operator-norm comparisons in the main text and in the subsequent sections of the appendix.

To this end, we first analyze the structural form of the ensemble $\mc{E}_{\text{shallow}}^{(\gamma)}$. According to~\cref{ap:sparsedef}, a unitary element $U\in \mc{E}_{\text{shallow}}^{(\gamma)}$ takes the form $U=U_{\mathrm{loc}}U_SU_{CZ}{H}^{\otimes n}.$
Since $U_S$ and $U_{CZ}$ are both diagonal, they commute. Thus, for $k\in\{2,3\}$, we can represent the $k$-th moment operator of the ensemble as
\begin{equation}\label{Eq:momentHSCZ}
\begin{split}
\mb{M}_{\mc{E}_{\text{shallow}}^{(\gamma)}}^{(k)} &= \mathbb{E}_{U\sim \mc{E}_{\text{shallow}}^{(\gamma)}} (U\ket{\mb{0}}\bra{\mb{0}}U^{\dagger})^{\otimes k}\\
    &=\mathbb E_{U_{\mathrm{loc}}}
    U_{\mathrm{loc}}^{\otimes k}
    \Bigl[
    \mathbb E_{U_{CZ}}
    U_{CZ}^{\otimes k}
    \bigl[
    \mathbb E_{U_S}
    (U_SH^{\otimes n}\ket{\mb{0}}\bra{\mb{0}}H^{\otimes n}U_S^{\dagger})^{\otimes k}
    \bigr]
    (U_{CZ}^{\dagger})^{\otimes k}
    \Bigr]
    (U_{\mathrm{loc}}^{\dagger})^{\otimes k}.
\end{split}   
\end{equation}
This decomposition conveniently separates the averages over the local Clifford layer $U_{\mathrm{loc}}$, the single-qubit diagonal phase layer $U_S$, and the two-qubit diagonal layer $U_{CZ}$.

The first averaging step is the average over the random single-qubit diagonal layer $U_S$. This step enforces a local phase-cancellation condition on each qubit across the $k$ copies. 
Specifically, twirling over the single-qubit phase layer $U_S$ acts as a filter on the computational-basis matrix units $\ket{\hat{\mb{x}}}\bra{\hat{\mb{y}}}$ in the $k$-copy, $n$-qubit Hilbert space, retaining only those whose local phase exponents cancel modulo $4$. Equivalently, following Ref.~\cite{zhang2025robust}, one obtains
\begin{equation}\label{Eq:momentHS}
\mathbb{E}_{U_S}
\left(
U_S H^{\otimes n}\ket{\mb{0}}\bra{\mb{0}}H^{\otimes n} U_S^{\dagger}
\right)^{\otimes k}
=
D^{-k}
\sum_{(\hat{\mb{x}},\hat{\mb{y}})\in C^{(k)}}
\ket{\hat{\mb{x}}}\bra{\hat{\mb{y}}},
\end{equation}
where the index set \(C^{(k)}\) specifies the \(k\)-copy
computational-basis matrix units retained by the single-qubit phase average.
Writing
\(\hat{\mb x}=(x_{q,t})_{q\in[n],\,t\in[k]}\) and
\(\hat{\mb y}=(y_{q,t})_{q\in[n],\,t\in[k]}\), it is given explicitly by
\begin{equation}\label{Eq:CmDef}
    C^{(k)} := 
    \big\{ (\hat{\mb{x}}, \hat{\mb{y}}) 
    \in (\{0,1\}^{kn})^2 
    \;\big|\;
    \forall\, q\in[n],\;
    \sum_{t=0}^{k-1} 
    (x_{q,t} - y_{q,t}) := 0 \pmod{4} 
    \big\}.
\end{equation}
That is, for each qubit index \(q\), the total phase-exponent difference
across all \(k\) copies must vanish modulo \(4\). This support-selection rule is naturally described at the level of
computational-basis support.

If \(A\) and \(B\) have entries in \(\{0,1\}\) in the computational basis, we
regard them as the sets of matrix units on which their entries equal one. The
symbols \(\cup\), \(\cap\), \(\subseteq\), and \(\in\) below refer exclusively
to these sets of supported matrix units. In particular, union and intersection
are the entrywise Boolean operations
\begin{equation}\label{eq:Boolean-support-notation}
\begin{split}
    C=A\cup B
    &:\quad c_{i,j}=a_{i,j}\vee b_{i,j},\\
    C=A\cap B
    &:\quad c_{i,j}=a_{i,j}\wedge b_{i,j}.
\end{split}
\end{equation}
Thus \(\ket{i}\!\bra{j}\in A\) means
\(\bra{i}A\ket{j}=1\), and we say $A\subseteq B$ if $B_{i,j}=1$ for all $(i,j)$ such that $A_{i,j}=1$.
Ordinary \(+\) and \(-\) continue to denote operator addition and subtraction.
If \(A\) and \(B\) are Boolean-support operators satisfying
\(B\subseteq A\), then \(A-B\) is also a Boolean-support operator, whose
Boolean support is precisely the set difference between the supports of
\(A\) and \(B\).

The remainder of this section is organized as follows. We first specialize the filtered expression above to $k=2$ and prove the expression for the second moment operator in~\cref{Ap:moment2}. We then use the Boolean-support language to organize the $k=3$ surviving computational-basis matrix units into three-copy blocks and derive the third-order moment operator in~\cref{Ap:moment3}.

\subsection{The second-order moment operator}\label{Ap:moment2}

Here, we give a convenient explicit expression for the second moment operator of the shallow phase ensemble $\mb{M}_{\mc{E}_{\text{shallow}}^{(\gamma)}}^{(2)}$. The formula below shows that the moment operator decomposes into three local two-copy Boolean-support blocks, with coefficients determined by the random \(CZ\) layer.

Throughout this subsection, sums over \(I_i,I_j,I_k\) are taken over ordered partitions
\(I_i\sqcup I_j\sqcup I_k=[n]\). We also define $\widetilde\Delta_2
:=
\frac{\id_4+\sw_2}{3},$
where \(\sw_2\) denotes the two-copy swap operator on a single qubit.
\begin{lemma}[Second moment of the shallow phase ensemble]\label{lem:sparse-second-moment}
The second moment operator of the shallow phase ensemble $\mc{E}_{\text{shallow}}^{(\gamma)}$ is
\begin{equation}
\mb{M}_{\mc{E}_{\text{shallow}}^{(\gamma)}}^{(2)}
=D^{-2} \sum_{I_i\sqcup I_j\sqcup I_k=[n]}
\Bigl(1-\frac{2\gamma \log n}{ n}\Bigr)^{|I_j|\cdot|I_k|}\,
\widetilde\Delta_2^{\otimes I_i}
\otimes (\id_4-\widetilde\Delta_2)^{\otimes I_j}
\otimes (\sw_2-\widetilde\Delta_2)^{\otimes I_k}.
\end{equation}
\end{lemma}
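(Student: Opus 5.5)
The plan is to evaluate the three averages in \cref{Eq:momentHSCZ} from the inside out: first the phase layer \(U_S\), then the \(CZ\) layer \(U_{CZ}\), then the local Clifford layer.

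\emph{Phase layer.} Specializing \cref{Eq:momentHS} and \cref{Eq:CmDef} to \(k=2\), the condition \(x_{q,0}+x_{q,1}-y_{q,0}-y_{q,1}\equiv 0 \pmod 4\) with bits forces \(\{x_{q,0},x_{q,1}\}=\{y_{q,0},y_{q,1}\}\) as multisets. Hence the surviving local matrix units at each site are exactly the Boolean support \(\id_4\cup\sw_2\). This support splits disjointly into three pieces: \(\Delta_2\); \(\id_4-\Delta_2\), spanned by \(\ket{01}\!\bra{01}\) and \(\ket{10}\!\bra{10}\); and \(\sw_2-\Delta_2\), spanned by \(\ket{01}\!\bra{10}\) and \(\ket{10}\!\bra{01}\). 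Expanding the tensor product over sites gives
\[
D^{-2}\sum_{I_i\sqcup I_j\sqcup I_k=[n]}\Delta_2^{\otimes I_i}\otimes(\id_4-\Delta_2)^{\otimes I_j}\otimes(\sw_2-\Delta_2)^{\otimes I_k}.
\]

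\emph{CZ layer.} Each term is a sum of matrix units \(\ket{\hat{\mb x}}\!\bra{\hat{\mb y}}\). Conjugation by \(CZ_{a,b}^{\otimes 2}\) multiplies such a unit by
\[
(-1)^{\sum_t x_{a,t}x_{b,t}-\sum_t y_{a,t}y_{b,t}}.
\]
I will check this sign by cases on the labels of sites \(a,b\). If either site is in \(I_i\), its two copies carry equal bits on both sides, and the exponent vanishes mod 2. If both sites are in \(I_j\) or both in \(I_k\), a short check also gives exponent zero. If one site is in \(I_j\) and the other in \(I_k\), then \((x_{a},x_b)\) pair up as, say, \((01),(01)\) or \((01),(10)\) against \(y\) values that are swapped on exactly one site. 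Enumerating these cases gives the sign \(-1\) uniformly. Averaging over Bernoulli\((p)\) then yields \(1-2p\) for each \(I_j\)–\(I_k\) pair and \(1\) otherwise. Independence of the edges gives the factor \((1-2p)^{|I_j||I_k|}\), which is the same for every matrix unit in the term, so it factors out. The sign enumeration is the only delicate step, and I expect it to be the main (if mild) obstacle: one must verify that the sign is constant within each block pair rather than varying from unit to unit.

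\emph{Local Clifford layer.} The independent Clifford twirl acts sitewise as \(\mc T_2^{\otimes n}\). Since \(\id_4\) and \(\sw_2\) commute with \(C^{\otimes 2}\), the twirl fixes them. Using \cref{eq:twirled-delta2}, \(\mc T_2(\Delta_2)=\widetilde\Delta_2\), and by linearity \(\id_4-\Delta_2\mapsto \id_4-\widetilde\Delta_2\) and \(\sw_2-\Delta_2\mapsto\sw_2-\widetilde\Delta_2\). Substituting \(p=\gamma\log n/n\) gives the claimed formula.
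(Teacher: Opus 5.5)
Your proposal is correct and follows essentially the same route as the paper: the phase average restricts to $C^{(2)}$, a blockwise check shows each $CZ$ edge contributes $1-2p$ exactly for pairs with one site in $I_j$ and one in $I_k$, and the sitewise Clifford twirl replaces $\Delta_2$ by $\widetilde\Delta_2$. The only difference is cosmetic. You split into the three Boolean blocks before the $CZ$ average, while the paper first records the raw exponent $N^{(2)}=\sum_{a<b}T^{(2)}_{a,b}$ and then shows it equals $|I_j||I_k|$. Your explicit case enumeration, including the use of equal Hamming weight at the partner site when one site lies in $I_i$, fills in the step the paper states without detail.
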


\begin{proof}
The proof consists of four steps. We first evaluate the average over the diagonal phase layer $U_S$, which restricts the retained two-copy computational-basis patterns to the phase-matching set $C^{(2)}$. We then average over the random $CZ$ layer $U_{CZ}$ directly at the level of computational-basis matrix units. After this raw expression is obtained, we reorganize the retained matrix units into three local Boolean-support blocks. Finally, we apply the local Clifford twirling $U_{\mathrm{loc}}$ site by site, which replaces the untwirled local blocks by their Clifford-averaged counterparts and yields the claimed formula.

We now specialize the general framework to the $k=2$ case. Write
\[
\ket{\hat{\mb x}}
=
\ket{\mb x}\otimes\ket{\mb w}
=
\ket{\mb{x,w}},
\qquad
\bra{\hat{\mb y}}
=
\bra{\mb y}\otimes\bra{\mb s}
=
\bra{\mb{y,s}},
\]
where $\mb{x},\mb{w},\mb{y},\mb{s}\in\{0,1\}^n$. Since the local sums take values in $\{0,1,2\}$, the congruence condition modulo $4$ is equivalent to ordinary equality. Therefore, \cref{Eq:CmDef} becomes
\begin{equation}\label{Eq:CmDef2}
C^{(2)}
=
\Bigl\{
(\mb{x},\mb{w},\mb{y},\mb{s})
\in(\{0,1\}^n)^4
\;\Big|\;
x_l+w_l=y_l+s_l,\ \forall l\in[n]
\Bigr\},
\end{equation}
with $|C^{(2)}|=6^n$. Substituting \cref{Eq:CmDef2} into \cref{Eq:momentHS} gives
\begin{equation}\label{Eq:moment2final}
\mathbb{E}_{U_S}
\left(
U_S H^{\otimes n}\ket{\mb{0}}\bra{\mb{0}}H^{\otimes n} U_S^{\dagger}
\right)^{\otimes 2}
=
D^{-2}
\sum_{(\mb{x},\mb{w},\mb{y},\mb{s})\in C^{(2)}}
\ket{\mb{x,w}}\bra{\mb{y,s}}.
\end{equation}
We next average over the random $CZ$ layer. It suffices to consider a single edge $(a,b)$. For a computational-basis matrix unit
$\ket{\mb{x,w}}\bra{\mb{y,s}}$, define
\begin{equation}\label{eq:Tij2-def}
T_{a,b}^{(2)}
:=
x_ax_b+w_aw_b-y_ay_b-s_as_b
\pmod 2
\in\{0,1\}.
\end{equation}
Averaging over the Bernoulli random variable deciding whether the gate $CZ_{a,b}$ is present gives
\begin{equation}\label{eq:single-cz-average-k2}
\begin{split}
{\mathbb E_{A_{a,b}}
\left(CZ_{a,b}^{A_{a,b}}\right)^{\otimes2}
\ket{\mb{x,w}}\bra{\mb{y,s}}
\left(CZ_{a,b}^{\dagger,A_{a,b}}\right)^{\otimes2}}&
=
(1-p)\ket{\mb{x,w}}\bra{\mb{y,s}}
+p\,CZ_{a,b}^{\otimes2}
\ket{\mb{x,w}}\bra{\mb{y,s}}
\left(CZ_{a,b}^{\dagger}\right)^{\otimes2}
\\
&
=
\bigl(1-2p\,T_{a,b}^{(2)}\bigr)
\ket{\mb{x,w}}\bra{\mb{y,s}}.
\end{split}
\end{equation}
Since the different $CZ$ edges are sampled independently, their contributions multiply. Using
\(p=\gamma\log n/n\), we obtain that
\begin{equation}\label{eq:CZ-raw-second}
\mathbb E_{U_{CZ}}
U_{CZ}^{\otimes2}
\ket{\mb{x,w}}\bra{\mb{y,s}}
\left(U_{CZ}^{\dagger}\right)^{\otimes2}
=
\Bigl(1-\frac{2\gamma\log n}{n}\Bigr)^{
N^{(2)}_{\mb{x},\mb{w},\mb{y},\mb{s}}
}
\ket{\mb{x,w}}\bra{\mb{y,s}},
\end{equation}
where
\begin{equation}\label{eq:N2-raw-def}
N^{(2)}_{\mb{x},\mb{w},\mb{y},\mb{s}}
:=
\sum_{a<b}T_{a,b}^{(2)}.
\end{equation}
Substituting this raw $CZ$-averaged expression into the moment formula gives
\begin{equation}\label{eq:channel-noise-raw}
\begin{split}
\mb{M}_{\mc{E}_{\text{shallow}}^{(\gamma)}}^{(2)}
&=
D^{-2}
\mathbb E_{U_{\mathrm{loc}}}
U_{\mathrm{loc}}^{\otimes2}
\Biggl[
\sum_{(\mb{x},\mb{w},\mb{y},\mb{s})\in C^{(2)}}
\Bigl(1-\frac{2\gamma\log n}{n}\Bigr)^{
N^{(2)}_{\mb{x},\mb{w},\mb{y},\mb{s}}
}
\ket{\mb{x,w}}\bra{\mb{y,s}}
\Biggr]
\left(U_{\mathrm{loc}}^{\dagger}\right)^{\otimes2}.
\end{split}
\end{equation}

We now reorganize the retained computational-basis matrix units in terms of three local two-copy Boolean-support blocks. At each qubit, the six retained local matrix units are partitioned into
\begin{equation}\label{eq:k2-local-blocks}
\Delta_2
=
\ket{0,0}\bra{0,0}
+
\ket{1,1}\bra{1,1},
\end{equation}
\begin{equation}
\id_4-\Delta_2
=
\ket{0,1}\bra{0,1}
+
\ket{1,0}\bra{1,0},
\end{equation}
and
\begin{equation}
\sw_2-\Delta_2
=
\ket{0,1}\bra{1,0}
+
\ket{1,0}\bra{0,1}.
\end{equation}
Thus each retained $n$-qubit computational-basis matrix unit
$\ket{\mb{x,w}}\bra{\mb{y,s}}$ belongs to a unique tensor product Boolean-support block of the form
\[
\Delta_2^{\otimes I_i}
\otimes
(\id_4-\Delta_2)^{\otimes I_j}
\otimes
(\sw_2-\Delta_2)^{\otimes I_k},
\qquad
I_i\sqcup I_j\sqcup I_k=[n].
\]
For a pair of qubits $(a,b)$, the value $T_{a,b}^{(2)}=1$ if and only if one of the two local matrix units lies in the block $\id_4-\Delta_2$ and the other lies in the block $\sw_2-\Delta_2$. In all other cases, $T_{a,b}^{(2)}=0$. Therefore, we use the Boolean-support notation in~\cref{eq:Boolean-support-notation} to show that
\begin{equation}\label{eq:N2-block-value}
N^{(2)}_{\mb{x},\mb{w},\mb{y},\mb{s}}
=
|I_j|\cdot|I_k|,
\end{equation}
if $\ket{\mb{x,w}}\bra{\mb{y,s}}
\in
\Delta_2^{\otimes I_i}
\otimes
(\id_4-\Delta_2)^{\otimes I_j}
\otimes
(\sw_2-\Delta_2)^{\otimes I_k}$. The raw expression in~\cref{eq:channel-noise-raw} becomes
\begin{equation}\label{Eq:channel_noise}
\begin{split}
\mb{M}_{\mc{E}_{\text{shallow}}^{(\gamma)}}^{(2)}
&=
D^{-2}
\mathbb E_{U_{\mathrm{loc}}}
U_{\mathrm{loc}}^{\otimes2}
\Biggl[
\sum_{I_i\sqcup I_j\sqcup I_k=[n]}
\Bigl(1-\frac{2\gamma\log n}{n}\Bigr)^{|I_j||I_k|}
\Delta_2^{\otimes I_i}
\otimes
(\id_4-\Delta_2)^{\otimes I_j}
\otimes
(\sw_2-\Delta_2)^{\otimes I_k}
\Biggr]
\left(U_{\mathrm{loc}}^{\dagger}\right)^{\otimes2}.
\end{split}
\end{equation}

It remains to average over the local Clifford layer. Since
$U_{\mathrm{loc}}=\bigotimes_{q\in[n]}U_q$ is sampled independently on each qubit, the local Clifford twirl factorizes site by site. In particular,
\begin{equation}\label{eq:twirl-Delta2}
\mathbb E_{U_q\sim \mathrm{Cl}(2)}
\,U_q^{\otimes 2}\Delta_2\left(U_q^\dagger\right)^{\otimes 2}
=
\frac{\id_4+\sw_2}{3}
:=
\widetilde{\Delta}_2.
\end{equation}
Moreover, since both $\id_4$ and $\sw_2$ are invariant under local Clifford conjugation,
\begin{equation}\label{eq:twirl-k2-other-blocks}
\mathbb E_{U_q\sim \mathrm{Cl}(2)}
\,U_q^{\otimes 2}(\id_4-\Delta_2)\left(U_q^\dagger\right)^{\otimes 2}
=
\id_4-\widetilde{\Delta}_2,
\qquad
\mathbb E_{U_q\sim \mathrm{Cl}(2)}
\,U_q^{\otimes 2}(\sw_2-\Delta_2)\left(U_q^\dagger\right)^{\otimes 2}
=
\sw_2-\widetilde{\Delta}_2.
\end{equation}
Applying these identities to each tensor factor in~\cref{Eq:channel_noise}, we obtain
\[
\mb{M}_{\mc{E}_{\text{shallow}}^{(\gamma)}}^{(2)}
=
D^{-2}
\sum_{I_i\sqcup I_j\sqcup I_k=[n]}
\Bigl(1-\frac{2\gamma \log n}{n}\Bigr)^{|I_j||I_k|}
\,
\widetilde\Delta_2^{\otimes I_i}
\otimes
(\id_4-\widetilde\Delta_2)^{\otimes I_j}
\otimes
(\sw_2-\widetilde\Delta_2)^{\otimes I_k},
\]
which is exactly the claimed formula. Thus, the proof of~\cref{lem:sparse-second-moment} is completed.
\end{proof}

\subsection{The third-order moment operator}\label{Ap:moment3}

We now turn to the case \(k=3\). The overall averaging architecture is the same as in the second-order case. we first average over the single-qubit phase layer \(U_S\), then over the random \(CZ\) layer \(U_{CZ}\), and finally over the local Clifford layer \(U_{\mathrm{loc}}\). The main difference is the organization of the retained computational-basis matrix units. For \(k=2\), the retained local matrix units decompose into three local two-copy Boolean-support blocks. For \(k=3\), the retained local matrix units carry a richer \(S_3\)-structure, and the correct analogue is an \(S_3\)-adapted block decomposition.

The derivation is organized as follows. First, we obtain a raw computational-basis expression for the third moment after the \(U_S\)- and \(U_{CZ}\)-averages, while keeping the local Clifford twirling explicit. This is the direct analogue of the raw second-order expression before regrouping into Boolean-support blocks. Next, we introduce the single-qubit three-copy Boolean-support blocks. These blocks play the same role as \(\Delta_2\), \(\id_4-\Delta_2\), and \(\sw_2-\Delta_2\) in the second-order derivation. Finally, we regroup the raw expression according to these blocks and apply the local Clifford twirling site by site.

\paragraph{A computational-basis expression of the third-order moment operator.}

We first derive the third-order analogue of the raw second-order expression
in~\cref{eq:channel-noise-raw}. As in the \(k=2\) case, this step does not yet
use the local block decomposition. It only records the action of the
single-qubit phase average and the random \(CZ\) average on each retained
computational-basis matrix unit, while keeping the local Clifford twirling
explicit.

\begin{lemma}[Third moment operator in the computational basis]\label{lem:third-moment-preblock}
The third moment operator of the shallow phase ensemble
\(\mc E_{\mathrm{shallow}}^{(\gamma)}\) satisfies
\begin{equation}\label{eq:moment3-preblock}
\mb M_{\mc E_{\mathrm{shallow}}^{(\gamma)}}^{(3)}
=
D^{-3}\,
\mathbb E_{U_{\mathrm{loc}}}
U_{\mathrm{loc}}^{\otimes 3}
\Biggl[
\sum_{(\mb{x},\mb{w},\mb{z},\mb{y},\mb{s},\mb{t})\in C^{(3)}}
\left(1-\frac{2\gamma\log n}{n}\right)^{
N^{(3)}_{\mb{x},\mb{w},\mb{z},\mb{y},\mb{s},\mb{t}}
}
\ket{\mb{x,w,z}}\bra{\mb{y,s,t}}
\Biggr]
\left(U_{\mathrm{loc}}^\dagger\right)^{\otimes 3},
\end{equation}
where $N^{(3)}_{\mb{x},\mb{w},\mb{z},\mb{y},\mb{s},\mb{t}}
:=
\sum_{a<b}
T_{a,b}^{(3)},$
with
\begin{equation}\label{eq:defT3}
T_{a,b}^{(3)}
:=
x_ax_b+w_aw_b+z_az_b-y_ay_b-s_as_b-t_at_b
\mod 2 \in \{0,1\}.
\end{equation}
\end{lemma}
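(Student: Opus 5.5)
The plan is to follow the first two steps of the proof of \cref{lem:sparse-second-moment}, now with $k=3$. We stop before any regrouping into local blocks and keep the local Clifford average explicit.

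Start from the factorized form \cref{Eq:momentHSCZ}. Since $U_S$ and $U_{CZ}$ are diagonal and commute, the $U_S$-average can be carried out first on the state $(H^{\otimes n}\ket{\mb 0}\!\bra{\mb 0}H^{\otimes n})^{\otimes 3}$. This state equals $D^{-3}\sum \ket{\mb{x,w,z}}\bra{\mb{y,s,t}}$, with the sum over all matrix units.

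Conjugating by $U_S^{\otimes 3}$ multiplies each matrix unit by
\[
\prod_q i^{A_{q,q}\sum_t (x_{q,t}-y_{q,t})}.
\]
Averaging each $A_{q,q}$ uniformly over $\{0,1,2,3\}$ gives $\frac14\sum_{a=0}^3 i^{a m}$, which equals $1$ if $m\equiv 0 \pmod 4$ and $0$ otherwise. This is exactly the filter \cref{Eq:momentHS} with index set $C^{(3)}$ from \cref{Eq:CmDef}.

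Next comes the average over $U_{CZ}$, edge by edge. For a fixed pair $a<b$, the operator $CZ_{a,b}^{\otimes 3}$ acts on $\ket{\mb{x,w,z}}$ by the phase $(-1)^{x_ax_b+w_aw_b+z_az_b}$. Likewise, $(CZ_{a,b}^\dagger)^{\otimes 3}$ acts on $\bra{\mb{y,s,t}}$ by $(-1)^{y_ay_b+s_as_b+t_at_b}$.

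The conjugated matrix unit therefore acquires the sign $(-1)^{T^{(3)}_{a,b}}$, with $T^{(3)}_{a,b}$ as in \cref{eq:defT3}; only the parity matters, which is why the reduction mod $2$ is harmless. The Bernoulli$(p)$ average then gives
\[
(1-p)+p(-1)^{T^{(3)}_{a,b}}=1-2pT^{(3)}_{a,b},
\]
which equals $(1-2p)^{T^{(3)}_{a,b}}$ because $T^{(3)}_{a,b}\in\{0,1\}$.

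The edges are independent, and every edge acts diagonally on each matrix unit, so the edge averages commute and their factors multiply. This gives $(1-2p)^{N^{(3)}}$ with $N^{(3)}=\sum_{a<b}T^{(3)}_{a,b}$. Substituting $p=\gamma\log n/n$ and pulling the linear map $\mathbb E_{U_{\mathrm{loc}}}U_{\mathrm{loc}}^{\otimes3}(\cdot)U_{\mathrm{loc}}^{\dagger\otimes3}$ outside the sum yields \cref{eq:moment3-preblock}.

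No step is a serious obstacle; this is a bookkeeping lemma. The only points needing care are these. First, the $U_S$-average uses the exact congruence mod $4$: unlike at $k=2$, the local sums now range over $\{0,\dots,3\}$, so it cannot be replaced by ordinary equality. Second, the sign convention in the phase for the bra side of the conjugation must be tracked correctly.
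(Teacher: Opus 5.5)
Your proposal is correct and follows essentially the same route as the paper: the $U_S$ average filters the matrix units down to $C^{(3)}$, each edge $(a,b)$ contributes the sign $(-1)^{T^{(3)}_{a,b}}$ and hence the Bernoulli factor $1-2pT^{(3)}_{a,b}=(1-2p)^{T^{(3)}_{a,b}}$, and independence of the edges gives $(1-2p)^{N^{(3)}}$. One correction to your closing remark: at $k=3$ the mod-$4$ condition \emph{is} equivalent to ordinary equality, because $\sum_t x_{q,t}-\sum_t y_{q,t}\in\{-3,\dots,3\}$; the paper uses exactly this to write $C^{(3)}$ as the Hamming-weight-matching set with $|C^{(3)}|=20^n$, and the equivalence first fails at $k=4$. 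The slip does not affect your derivation, since you work directly with the mod-$4$ definition of $C^{(3)}$.
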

\begin{proof}
We write the three-copy computational-basis vectors as
\[
\ket{\hat{\mb x}}
=
\ket{\mb x}\otimes\ket{\mb w}\otimes\ket{\mb z}
=
\ket{\mb{x,w,z}},
\qquad
\bra{\hat{\mb y}}
=
\bra{\mb y}\otimes\bra{\mb s}\otimes\bra{\mb t}
=
\bra{\mb{y,s,t}},
\]
where
\(\mb{x},\mb{w},\mb{z},\mb{y},\mb{s},\mb{t}\in\{0,1\}^n\).
For \(k=3\), the local sums take values in \(\{0,1,2,3\}\), so the phase-cancellation condition modulo \(4\) is equivalent to ordinary equality. Thus \cref{Eq:CmDef} becomes
\begin{equation}\label{Eq:CmDef3}
C^{(3)}
=
\Bigl\{
(\mb{x},\mb{w},\mb{z},\mb{y},\mb{s},\mb{t})
\in(\{0,1\}^n)^6
\;\Big|\;
x_l+w_l+z_l=y_l+s_l+t_l,\ \forall l\in[n]
\Bigr\},
\end{equation}
with \( |C^{(3)}|=20^n \). Indeed, at each qubit, there are
\(\sum_{r=0}^3\binom3r^2=20\) retained local matrix units.

Therefore, the single-qubit phase average retains exactly the
computational-basis matrix units indexed by \(C^{(3)}\). Equivalently,
\cref{Eq:momentHS} specializes to
\begin{equation}\label{eq:US-average-k3}
\mathbb{E}_{U_S}
\left(
U_S H^{\otimes n}\ket{\mb 0}\bra{\mb 0}H^{\otimes n}U_S^\dagger
\right)^{\otimes 3}
=
D^{-3}
\sum_{(\mb{x},\mb{w},\mb{z},\mb{y},\mb{s},\mb{t})\in C^{(3)}}
\ket{\mb{x,w,z}}\bra{\mb{y,s,t}}.
\end{equation}
We next average over the random \(CZ\) layer, still at the level of
computational-basis matrix units. Consider a single edge \((a,b)\). Since
\(CZ_{a,b}\ket{\mb u}=(-1)^{u_au_b}\ket{\mb u}\), the conjugation by
\(CZ_{a,b}^{\otimes3}\) acts on
\(\ket{\mb{x,w,z}}\bra{\mb{y,s,t}}\) by the relative phase $(-1)^{
x_ax_b+w_aw_b+z_az_b-y_ay_b-s_as_b-t_at_b
}.$
Thus \(T_{a,b}^{(3)}\in\{0,1\}\) is precisely the indicator that the edge \((a,b)\)
produces a relative sign \(-1\) on this matrix unit. In other words,
\(T_{a,b}^{(3)}=0\) means that the edge contributes no sign, while
\(T_{a,b}^{(3)}=1\) means that it contributes a factor \(-1\).

Averaging over the Bernoulli random variable deciding whether \(CZ_{a,b}\) is
present gives
\begin{equation}\label{eq:single-cz-average-k3}
\begin{split}
&{\mathbb E_{A_{a,b}}
\left(CZ_{a,b}^{A_{a,b}}\right)^{\otimes3}
\ket{\mb{x,w,z}}\bra{\mb{y,s,t}}
\left(CZ_{a,b}^{\dagger,A_{a,b}}\right)^{\otimes3}}
\\
&\qquad
=
(1-p)\ket{\mb{x,w,z}}\bra{\mb{y,s,t}}
+
p\,(-1)^{T_{a,b}^{(3)}}
\ket{\mb{x,w,z}}\bra{\mb{y,s,t}}
\\
&\qquad
=
\left(1-2p\,T_{a,b}^{(3)}\right)
\ket{\mb{x,w,z}}\bra{\mb{y,s,t}}.
\end{split}
\end{equation}
This is exactly the same single-edge mechanism as in the \(k=2\) proof: the
only difference is that the parity test now involves three copies instead of
two.

Since the different \(CZ\) edges are sampled independently, their scalar
contributions multiply. Therefore
\begin{equation}\label{eq:CZ-raw-third}
\mathbb E_{U_{CZ}}
U_{CZ}^{\otimes 3}
\ket{\mb{x,w,z}}\bra{\mb{y,s,t}}
\left(U_{CZ}^{\dagger}\right)^{\otimes 3}
=
\left(1-2p\right)^{
N^{(3)}_{\mb{x},\mb{w},\mb{z},\mb{y},\mb{s},\mb{t}}
}
\ket{\mb{x,w,z}}\bra{\mb{y,s,t}}.
\end{equation}
Finally, using \(p=\gamma\log n/n\), this becomes
\[
\mathbb E_{U_{CZ}}
U_{CZ}^{\otimes 3}
\ket{\mb{x,w,z}}\bra{\mb{y,s,t}}
\left(U_{CZ}^{\dagger}\right)^{\otimes 3}
=
\left(1-\frac{2\gamma\log n}{n}\right)^{
N^{(3)}_{\mb{x},\mb{w},\mb{z},\mb{y},\mb{s},\mb{t}}
}
\ket{\mb{x,w,z}}\bra{\mb{y,s,t}}.
\]
Substituting the \(U_S\)- and \(U_{CZ}\)-averaged expression into the definition
of the third moment operator, and leaving the local Clifford average explicit,
gives~\cref{eq:moment3-preblock}.
\end{proof}

The lemma above is the exact third-order analogue of the raw second-order
formula in~\cref{eq:channel-noise-raw}. It gives a computational-basis
expansion before any block regrouping is performed. The next step is to
organize the retained matrix units in \(C^{(3)}\) into local three-copy
Boolean-support blocks. For \(k=2\), this regrouping used the three local blocks
\(\Delta_2\), \(\id_4-\Delta_2\), and \(\sw_2-\Delta_2\). For \(k=3\), these are
replaced by the \(S_3\)-adapted family consisting of the computational-basis block
\(\Delta_3\) and the nine odd-even overlap blocks \(B(\pi_1,\pi_2)\).

\paragraph{Single-qubit three-copy Boolean-support blocks.}

We now describe the local blocks into which the retained single-qubit three-copy matrix units decompose. Recall from~\cref{ap:notation} the definition of the permutation group $S_3\coloneq \{\pi_{()},\pi_{(23)},\pi_{(12)},\pi_{(13)},\pi_{(123)},\pi_{(132)}\},$
we obtain that in the three-copy regime
\begin{equation}
S_{3,\mathrm{odd}}
:=
\{\pi_{(23)},\pi_{(12)},\pi_{(13)}\},
\qquad
S_{3,\mathrm{even}}
:=
\{\pi_{()},\pi_{(123)},\pi_{(132)}\}.
\end{equation}
For \(\pi\in S_3\), let \(V_1(\pi)\) denote the single-qubit three-copy permutation operator. In the computational basis, \(V_1(\pi)\) is a Boolean-support operator whose matrix units pair each three-bit string with its copy-permuted version.

Define the three-copy computational-basis block
\begin{equation}\label{eq:Delta3-def}
\Delta_3
:=
\ket{0,0,0}\bra{0,0,0}
+
\ket{1,1,1}\bra{1,1,1}.
\end{equation}
This block records the fully aligned local configurations. It is the common part of all single-qubit permutation supports \(V_1(\pi)\).

For every \(\pi_1\in S_{3,\mathrm{odd}}\) and
\(\pi_2\in S_{3,\mathrm{even}}\), define the computational-basis block
\begin{equation}\label{eq:B-pi1-pi2-def}
B(\pi_1,\pi_2)
:=
V_1(\pi_1)\cap V_1(\pi_2)-\Delta_3,
\end{equation}
where \(\cap\) is the Boolean-support intersection introduced in~\cref{eq:Boolean-support-notation}. Since
\(\Delta_3\) is contained in this intersection, 
\(B(\pi_1,\pi_2)\) consists precisely of the matrix units in the Boolean support of \(V_1(\pi_1)\) and 
\(V_1(\pi_2)\) that lie outside \(\Delta_3\).

\begin{lemma}[Single-qubit three-copy Boolean-support decomposition]\label{obs:B-partition-m3}
The ten Boolean-support blocks
\begin{equation}\label{eq:k3Blocks}
    \{\Delta_3\}
\cup
\{B(\pi_1,\pi_2):
\pi_1\in S_{3,\mathrm{odd}},\
\pi_2\in S_{3,\mathrm{even}}\}
\end{equation}
satisfy the following properties:
\begin{enumerate}
\item Each block contains exactly two computational-basis matrix units.
\item They are pairwise disjoint in Boolean support.
\item For every \(\pi_1\in S_{3,\mathrm{odd}}\), $V_1(\pi_1)
=
\Delta_3
+
\sum_{\pi_2\in S_{3,\mathrm{even}}}
B(\pi_1,\pi_2).$
\item For every \(\pi_2\in S_{3,\mathrm{even}}\), $V_1(\pi_2)
=
\Delta_3
+
\sum_{\pi_1\in S_{3,\mathrm{odd}}}
B(\pi_1,\pi_2).$
\item The nonzero entries of every \(B(\pi_1,\pi_2)\) form a subset
of the nonzero entries of both \(V_1(\pi_1)\) and \(V_1(\pi_2)\).
Consequently, each row and each column of \(B(\pi_1,\pi_2)\) contains at
most one nonzero entry, and $\|B(\pi_1,\pi_2)\|_\infty\le 1.$
\end{enumerate}
\end{lemma}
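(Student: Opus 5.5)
The plan is to prove all five properties by a direct combinatorial description of the $20$ retained local matrix units $\ket{a}\!\bra{b}$ with $a,b\in\{0,1\}^3$ of equal Hamming weight. First we record that $V_1(\pi)$ has support $\{\ket{\pi\cdot b}\!\bra{b}: b\in\{0,1\}^3\}$, where $\pi\cdot b$ permutes the copy indices. Each $V_1(\pi)$ is a permutation matrix with $8$ matrix units, and all of them lie in the weight-preserving set. Hence $\ket{a}\!\bra{b}\in V_1(\pi)$ if and only if $a=\pi\cdot b$.

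Next we classify units by weight. For weight $0$ or $3$ the string is fixed by all of $S_3$. The two units $\ket{000}\!\bra{000}$ and $\ket{111}\!\bra{111}$ therefore lie in every $V_1(\pi)$, and together they form $\Delta_3$. For weight $1$ or $2$, the string $b$ has a stabilizer of order two in $S_3$, consisting of the identity and one transposition. So for a fixed pair $(a,b)$ with $a$ in the orbit of $b$, the set $\{\pi:\pi\cdot b=a\}$ is a coset $\sigma\,\mathrm{Stab}(b)$ of size two. This coset contains exactly one even and one odd permutation, because the stabilizer contains a transposition. Consequently each non-aligned retained unit lies in exactly one $V_1(\pi_1)$ with $\pi_1$ odd and exactly one $V_1(\pi_2)$ with $\pi_2$ even. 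This is the key lemma.

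From the key lemma, properties 2--4 follow at once. Disjointness holds because a unit determines its pair $(\pi_1,\pi_2)$ uniquely. For property 3, fix $\pi_1$ odd. Each of the $6$ non-aligned units of $V_1(\pi_1)$ lies in exactly one $B(\pi_1,\pi_2)$, and adding $\Delta_3$ recovers $V_1(\pi_1)$; property 4 is symmetric.

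For property 1 we count. There are $18$ non-aligned units (six of weight one and six of weight two in each of $\ket{a}\!\bra{b}$ form: $3\cdot3+3\cdot3=18$). These are distributed over $9$ blocks, so it remains to show each block contains exactly two. The intersection $V_1(\pi_1)\cap V_1(\pi_2)$ consists of the units $\ket{\pi_1 b}\!\bra{b}$ with $\pi_1 b=\pi_2 b$, i.e.\ those $b$ fixed by $\pi_2^{-1}\pi_1$. The element $\pi_2^{-1}\pi_1$ is an odd permutation, hence a transposition $(ij)$. Its fixed strings are those with $b_i=b_j$, which number $4$. Removing the two aligned strings $000$ and $111$ leaves exactly $2$ units, one of weight $1$ and one of weight $2$. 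This gives property 1, and it is consistent with the count $9\cdot 2=18$.

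Property 5 is immediate. $B(\pi_1,\pi_2)$ is a sub-support of the permutation matrix $V_1(\pi_1)$, so each row and each column has at most one unit entry. A $0/1$ matrix with this property is a partial permutation matrix, i.e.\ a partial isometry, so its operator norm is at most $1$.

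The only mildly delicate step is the key lemma that each coset $\sigma\,\mathrm{Stab}(b)$ splits into one odd and one even permutation. It reduces to the observation that the stabilizer of a non-constant $3$-bit string is generated by the transposition swapping its two equal positions. I would also state explicitly the convention $\pi\cdot b$ versus $\pi^{-1}\cdot b$ used in $V_1$; the conclusion is insensitive to it, since inversion preserves parity.
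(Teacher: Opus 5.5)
Your proof is correct, and for properties 1--4 it takes a genuinely different route from the paper; property 5 is argued identically. The paper obtains properties 1--4 by an exhaustive enumeration of the 18 non-aligned retained units against the nine odd--even pairs, and simply reports the outcome.

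You instead derive them from two structural facts:
\begin{enumerate}
\item For a non-constant string $b$, the set $\{\pi:\pi\cdot b=a\}$ is a coset $\{\sigma,\sigma\tau\}$ of a stabilizer generated by a transposition $\tau$. It therefore contains exactly one odd and one even permutation.
\item $V_1(\pi_1)\cap V_1(\pi_2)$ is indexed by the four strings fixed by the transposition $\pi_2^{-1}\pi_1$. This is the same mechanism the paper uses only later, in the lemma on single-qubit Clifford twirls of the three-copy blocks, through $V_1(\pi_1)\cap V_1(\pi_2)=P_L(\id_2\otimes\Delta_2)P_R$.
\end{enumerate}

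Your route has three advantages. First, it gives an actual argument in place of an asserted finite check. Second, it makes primary the fact that a unit of $B(\pi_1,\pi_2)$ lies in exactly the two supports $V_1(\pi_1)$ and $V_1(\pi_2)$ and no other. The paper relies on exactly this when computing $N_{\mathfrak I}$ blockwise. Third, it shows why the odd/even pairing is special to $k=3$. The pairing needs the stabilizer $S_w\times S_{k-w}$ of every non-constant string to be $\{e,\tau\}$, which fails for $k\ge 4$. That is a relevant caveat for the paper's suggested extension to higher orders. The enumeration's only advantage is that it is mechanical and convention-free.

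One minor slip: there are nine, not six, non-aligned units of each weight. Your total of $18$ is still right. Since property 1 already follows from the fixed-point count, that total serves only as a consistency check. Together with property 2, it also shows that the ten blocks exhaust all $20$ retained units.
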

\begin{proof}
The condition
\(x_l+w_l+z_l=y_l+s_l+t_l\) in~\cref{Eq:CmDef3} retains
exactly the local computational-basis matrix units whose ket and bra labels
have the same Hamming weight. Enumerating the four possible weights gives
\[
\sum_{r=0}^{3}\binom{3}{r}^{2}=1+9+9+1=20
\]
retained units. The two units of weights \(0\) and \(3\) form \(\Delta_3\).
An exhaustive enumeration of the remaining \(18\) units over the
\(3\times3\) odd--even pairs shows that each belongs to exactly one
\(B(\pi_1,\pi_2)\), with two units in every such block. This establishes
properties 1 and 2.

Each single-qubit permutation support \(V_1(\pi)\) contains eight
computational-basis matrix units: the two fully aligned units in
\(\Delta_3\) and six non-fully aligned units. In the same enumeration, the
six latter units of a fixed odd permutation split into its three disjoint
two-unit intersections with the even permutations; the analogous statement
holds with odd and even interchanged. Summing these disjoint supports gives
properties 3 and 4.

Finally, by definition, the nonzero entries of
\(B(\pi_1,\pi_2)\)  form a subset of the nonzero entries
of each permutation matrix \(V_1(\pi_1)\) and \(V_1(\pi_2)\).
Since every row and every column of a permutation matrix contains exactly
one nonzero entry, every row and every column of \(B(\pi_1,\pi_2)\)
contains at most one nonzero entry. Consequently,
\(\|B(\pi_1,\pi_2)\|_\infty\le 1\), which proves property~5.
\end{proof}

\paragraph{Blockwise organization of the third-order moment operator.}

We now return to the computational-basis expression of the moment operator in~\cref{eq:moment3-preblock}
and organize its retained matrix units according to the local three-copy blocks introduced above. This is the third-order analogue of the \(k=2\) step, where each retained two-copy matrix unit was assigned to one of the three local blocks
\(\Delta_2\), \(\id_4-\Delta_2\), and \(\sw_2-\Delta_2\).
We write
\begin{equation}\label{eq:m3decom}
\ket{\mb{x,w,z}}\bra{\mb{y,s,t}}
=
\bigotimes_{l\in[n]}
\ket{x_l,w_l,z_l}\bra{y_l,s_l,t_l}.
\end{equation}
For each site \(l\), the local matrix unit
\(\ket{x_l,w_l,z_l}\bra{y_l,s_l,t_l}\)
belongs to exactly one of the ten Boolean-support blocks in~\cref{eq:k3Blocks},
by~\cref{obs:B-partition-m3}. Therefore every retained global matrix unit in
\(C^{(3)}\) belongs to a unique tensor-product block.

We encode this assignment by the \emph{position data}
\begin{equation}\label{eq:postion-data}
\mathfrak I
:=
\Bigl(
I^{(0)},
\{I^{(\pi_1,\pi_2)}\}_{\pi_1\in S_{3,\mathrm{odd}},\,\pi_2\in S_{3,\mathrm{even}}}
\Bigr),
\end{equation}
where $I^{(0)}
\sqcup
\Bigl(
\bigsqcup_{\pi_1\in S_{3,\mathrm{odd}},\,\pi_2\in S_{3,\mathrm{even}}}
I^{(\pi_1,\pi_2)}
\Bigr)
=
[n].$
Here \(I^{(0)}\) records the sites where the local matrix unit belongs to
\(\Delta_3\), while \(I^{(\pi_1,\pi_2)}\) records the sites where it belongs to
\(B(\pi_1,\pi_2)\). The corresponding tensor-product block is
\begin{equation}\label{eq:def-B-frakI-m3}
B_{\mathfrak I}
:=
\Delta_3^{\otimes I^{(0)}}
\otimes
\bigotimes_{\pi_1\in S_{3,\mathrm{odd}},\,\pi_2\in S_{3,\mathrm{even}}}
B(\pi_1,\pi_2)^{\otimes I^{(\pi_1,\pi_2)}},
\end{equation}
and each retained $n$-qubit computational-basis matrix unit
$\ket{\mb{x,w,z}}\bra{\mb{y,s,t}}$ belongs to a unique tensor-product Boolean-support block $B_{\mathfrak I}$.

We now translate \(T_{a,b}^{(3)}\) from~\cref{eq:defT3}
into the language of the local block labels encoded by \(B_{\mathfrak I}\).
The key point is that, once the position data \(\mathfrak I\) is fixed, all
global computational-basis matrix units $\ket{\mb{x,w,z}}\bra{\mb{y,s,t}}$ in the Boolean support of $ B_{\mathfrak I}$  have the same exponent $N^{(3)}_{\mb{x},\mb{w},\mb{z},\mb{y},\mb{s},\mb{t}}$, and this exponent can be computed purely from the block sizes.

\begin{lemma}[The exponent
\(N^{(3)}_{\mb{x},\mb{w},\mb{z},\mb{y},\mb{s},\mb{t}}\) for three-copy
blocks]
\label{lem:N-frakI-blockwise}
Fix position data $\mathfrak I$
and the corresponding tensor-product block \(B_{\mathfrak I}\) in~\cref{eq:def-B-frakI-m3}.
Then every retained computational-basis matrix unit $\ket{\mb{x,w,z}}\bra{\mb{y,s,t}}$
in the Boolean support of \(B_{\mathfrak I}\) has the same value of
\(N^{(3)}_{\mb{x},\mb{w},\mb{z},\mb{y},\mb{s},\mb{t}}\). This common value is
\begin{equation}\label{eq:def-N-frakI-m3}
N_{\mathfrak I}
=
\frac12
\sum_{\pi_1\in S_{3,\mathrm{odd}},\,\pi_2\in S_{3,\mathrm{even}}}
|I^{(\pi_1,\pi_2)}|
\sum_{\substack{
\pi_1'\in S_{3,\mathrm{odd}}\setminus\{\pi_1\}\\
\pi_2'\in S_{3,\mathrm{even}}\setminus\{\pi_2\}}}
|I^{(\pi_1',\pi_2')}|.
\end{equation}
\end{lemma}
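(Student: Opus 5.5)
The plan is to show that $T^{(3)}_{a,b}$ depends only on the block labels of sites $a$ and $b$. Specifically, for $a\neq b$ I will prove that $T^{(3)}_{a,b}=1$ exactly when both sites carry nontrivial blocks $B(\pi_1,\pi_2)$ and $B(\pi_1',\pi_2')$ with $\pi_1\neq\pi_1'$ and $\pi_2\neq\pi_2'$, and that $T^{(3)}_{a,b}=0$ otherwise. Granting this, $N^{(3)}=\sum_{a<b}T^{(3)}_{a,b}$ counts unordered pairs of distinct sites whose labels differ in both indices. Summing over ordered pairs of labels gives $\sum_{(\pi_1,\pi_2)}|I^{(\pi_1,\pi_2)}|\sum_{\pi_1'\ne\pi_1,\pi_2'\ne\pi_2}|I^{(\pi_1',\pi_2')}|$, which counts each unordered pair twice. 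This produces the factor $\tfrac12$ in \cref{eq:def-N-frakI-m3}. Distinct labels force distinct sites, so no diagonal terms need to be excluded.

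\emph{Vanishing case.} Write the local units as $\ket{\mathbf u_a}\bra{\mathbf v_a}$ and $\ket{\mathbf u_b}\bra{\mathbf v_b}$ with $\mathbf u,\mathbf v\in\{0,1\}^3$. Then $T^{(3)}_{a,b}=\sum_t u_{a,t}u_{b,t}-\sum_t v_{a,t}v_{b,t}\bmod 2$. Suppose both units lie in the support of a common permutation operator $V_1(\sigma)$. Then $\mathbf u_a=\sigma\cdot\mathbf v_a$ and $\mathbf u_b=\sigma\cdot\mathbf v_b$ for the same copy permutation $\sigma$. The copy-wise inner product $\sum_t u_{a,t}u_{b,t}$ is invariant under a simultaneous relabelling of copies, so $T^{(3)}_{a,b}=0$. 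By \cref{obs:B-partition-m3}, $\Delta_3$ lies in every $V_1(\sigma)$, and $B(\pi_1,\pi_2)$ lies in both $V_1(\pi_1)$ and $V_1(\pi_2)$. Hence if either site is in $\Delta_3$, or the two labels share an odd index, or they share an even index, a common $\sigma$ exists and $T^{(3)}_{a,b}=0$.

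\emph{Nonvanishing case.} It remains to show that $T^{(3)}_{a,b}=1$ for every pair of units drawn from $B(\pi_1,\pi_2)$ and $B(\pi_1',\pi_2')$ with $\pi_1\ne\pi_1'$ and $\pi_2\ne\pi_2'$. To cut down the case analysis I will use two symmetries. First, apply a simultaneous copy relabelling $\rho\in S_3$ to kets and bras. This conjugates labels by $\pi\mapsto\rho\pi\rho^{-1}$, preserves parity, and preserves whether labels agree, and $T$ is invariant under it. Second, take complements of all bits at one site. I expect this to map $B(\pi_1,\pi_2)$ to itself and to change $T$ by $\sum_t u_{b,t}-\sum_t v_{b,t}=0$, because Hamming weights of ket and bra agree. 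It also helps to note that each $B(\pi_1,\pi_2)$ consists of one weight-1 unit and its weight-2 complement, with bra fixed by the transposition $\pi_1^{-1}\pi_2$. After these reductions only a handful of configurations should remain, and I will check each by direct evaluation of the two inner products.

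The main obstacle is this last step: showing $T=1$ uniformly over all units in the relevant blocks, not merely for one representative. I would first try to fix the orbit structure under $S_3$-conjugation and complementation explicitly. If that stays messy, the fallback is the finite enumeration, which is small: $2\times 2$ unit choices for each of the relevant label pairs up to symmetry.
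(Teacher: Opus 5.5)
Your proposal is correct and follows the same route as the paper. Both arguments use a pairwise criterion showing that $T^{(3)}_{a,b}$ depends only on the two block labels: it vanishes exactly when the local units share a permutation support, i.e.\ when one site carries $\Delta_3$ or the labels agree in at least one index. Both then finish with the same double count that produces the factor $\tfrac12$.

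The only difference is how the local criterion is verified. The paper enumerates all $20^2$ unit pairs. You instead obtain the vanishing direction from invariance of copywise overlaps under a common copy permutation, and use complementation to reduce the remaining case to weight-one units $\ket{\mathbf e_i}\bra{\mathbf e_j}$ and $\ket{\mathbf e_k}\bra{\mathbf e_l}$, with $\mathbf e_i$ having a single $1$ in copy $i$. There $T^{(3)}_{a,b}=[i=k]\oplus[j=l]$, and a common $\sigma$ with $\sigma(j)=i$, $\sigma(l)=k$ exists exactly when $[i=k]=[j=l]$. So the uniformity over units that you flagged as the main obstacle is automatic.
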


\begin{proof}
By~\cref{eq:defT3}, the exponent
\(N^{(3)}_{\mb{x},\mb{w},\mb{z},\mb{y},\mb{s},\mb{t}}\) is the sum of the edge
indicators \(T_{a,b}^{(3)}\) over all pairs \(a<b\). Hence, it suffices to decide
\(T_{a,b}^{(3)}\) from the two local block labels at sites \(a\) and \(b\).

For retained one-qubit matrix units
\[
E_a=\ket{x_a,w_a,z_a}\bra{y_a,s_a,t_a},
\qquad
E_b=\ket{x_b,w_b,z_b}\bra{y_b,s_b,t_b},
\]
the following finite local identity holds:
\begin{equation}\label{eq:k3-local-compatibility}
T_{a,b}^{(3)}=0
\quad\Longleftrightarrow\quad
\text{there exists \(\pi\in S_3\) such that }
E_a,E_b\in V_1(\pi).
\end{equation}
Indeed, the phase constraint leaves \(20\) possible matrix units at each
site. Exhaustively evaluating the parity in~\cref{eq:defT3} on these
\(20^2\) pairs gives exactly the common-permutation-support criterion in
\cref{eq:k3-local-compatibility}. Organizing this finite table by the ten
blocks in~\cref{eq:k3Blocks} yields the blockwise rule used below.

Now use the block partition in~\cref{obs:B-partition-m3}. A local matrix unit in
\(\Delta_3\) belongs to every permutation support \(V_1(\pi)\). Therefore, if
either site \(a\) or site \(b\) lies in \(I^{(0)}\), then the two local matrix
units share a common permutation support and \(T_{a,b}^{(3)}=0\).

It remains to consider two matrix units from operator blocks. Suppose $
a\in I^{(\pi_1,\pi_2)},
b\in I^{(\pi_1',\pi_2')}.$
A local matrix unit in \(B(\pi_1,\pi_2)\) belongs precisely to the two
permutation supports \(V_1(\pi_1)\) and \(V_1(\pi_2)\). Hence the two sites
share a common permutation support \textit{if and only if} $\pi_1=\pi_1'
\text{ or }
\pi_2=\pi_2'.$
Equivalently,
\[
T_{a,b}^{(3)}=1
\quad\Longleftrightarrow\quad
\pi_1\neq\pi_1'
\ \text{ and }\
\pi_2\neq\pi_2'.
\]
Thus \(T_{a,b}^{(3)}\) depends only on the block labels of sites \(a\) and \(b\),
not on the particular matrix units chosen inside those blocks. Consequently,
the total exponent is constant on the Boolean support of \(B_{\mathfrak I}\).

For each block type
\((\pi_1,\pi_2)\), there are \(|I^{(\pi_1,\pi_2)}|\) choices for the first site,
and the second site must lie in a block
\((\pi_1',\pi_2')\) with
\(\pi_1'\neq\pi_1\) and \(\pi_2'\neq\pi_2\). Summing over all block types counts
each unordered pair twice, so we divide by \(2\). This gives the stated formula
for \(N_{\mathfrak I}\).
\end{proof}

Since the expression in~\cref{eq:def-N-frakI-m3} depends only on the cardinalities
of the position sets, we pass from position data to \textit{count data}:
\begin{equation}\label{eq:defI}
\mb I
:=
\Bigl[
|I^{(0)}|;
\,|I^{(\pi_1,\pi_2)}|
\text{ for all }
(\pi_1,\pi_2)\in S_{3,\mathrm{odd}}\times S_{3,\mathrm{even}}
\Bigr].
\end{equation}
We call such count data admissible. Equivalently, \(\mb I\) is a
nonnegative ten-component vector whose entries sum to \(n\). All sums over \(\mb I\) below range over admissible count data.

For fixed count data \(\mb I\), let $\mc C_{\mb I}
:=
\{B_{\mathfrak I}:\ \mathfrak I \text{ has count data } \mb I\}.$
If \(\mathfrak I\) has count data \(\mb I\), then the value
\(N_{\mathfrak I}\) in~\cref{eq:def-N-frakI-m3} depends only on \(\mb I\), not
on the actual positions of the sets in \(\mathfrak I\). We denote this common
value by \(N_{\mb I}\). 
\begin{equation}\label{eq:def-N-I-m3}
N_{\mb I}
=
\frac12
\sum_{\pi_1\in S_{3,\mathrm{odd}},\,\pi_2\in S_{3,\mathrm{even}}}
|I^{(\pi_1,\pi_2)}|
\sum_{\substack{
\pi_1'\in S_{3,\mathrm{odd}}\setminus\{\pi_1\}\\
\pi_2'\in S_{3,\mathrm{even}}\setminus\{\pi_2\}}}
|I^{(\pi_1',\pi_2')}|.
\end{equation}
Equivalently, \(N_{\mb I}\) is given by the same counting
formula as \(N_{\mathfrak I}\), now written only in terms of the entries of the
count data \(\mb I\).

Using this blockwise value of the exponent in the computational-basis formula
\cref{eq:moment3-preblock}, we obtain
\begin{equation}\label{eq:moment3-block-organized}
\mb M_{\mc E_{\mathrm{shallow}}^{(\gamma)}}^{(3)}
=
D^{-3}\,
\mathbb E_{U_{\mathrm{loc}}}
U_{\mathrm{loc}}^{\otimes 3}
\Biggl[
\sum_{\mb I}
\left(1-\frac{2\gamma\log n}{n}\right)^{N_{\mb I}}
\sum_{B_{\mathfrak I}\in \mc C_{\mb I}}
B_{\mathfrak I}
\Biggr]
\left(U_{\mathrm{loc}}^\dagger\right)^{\otimes 3}.
\end{equation}

\paragraph{Applying the local Clifford twirling.}

It remains to average over the local Clifford layer. This step is completely
parallel to the \(k=2\) case: since
\(U_{\mathrm{loc}}=\bigotimes_{q\in[n]}U_q\) is sampled independently across
qubits, the local Clifford twirl factorizes site by site. We use the
single-qubit three-copy channel \(\mc T_3\) defined in
\cref{ap:notation}.
We write  $\widetilde\Delta_3
:=
\mc T_3(\Delta_3),
\widetilde B(\pi_1,\pi_2)
:=
\mc T_3\!\left(B(\pi_1,\pi_2)\right).$ Let
\(\Pi_{\mathrm{sym},1}^{(3)}
:=\frac16\sum_{\pi\in S_3}V_1(\pi)\)
denote the projector onto the fully symmetric subspace of the single-qubit
three-copy space.
The following lemma records the explicit single-qubit twirled blocks.
\begin{lemma}[Single-qubit Clifford twirls of the three-copy blocks]
\label{obs:k3-single-site-twirls}
The twirled single-qubit three-copy blocks are
\begin{equation}\label{eq:twirl-Delta3}
\widetilde{\Delta}_3
=
\frac12\Pi_{\mathrm{sym},1}^{(3)}
=
\frac{1}{12}\sum_{\pi\in S_3}V_1(\pi), \qquad \widetilde B(\pi_1,\pi_2)
=
\frac13V_1(\pi_1)+\frac13V_1(\pi_2)-\widetilde\Delta_3,
\end{equation}
for every odd-even pair
\((\pi_1,\pi_2)\in S_{3,\mathrm{odd}}\times S_{3,\mathrm{even}}\).
\end{lemma}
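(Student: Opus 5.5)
The plan is to compute the two twirls separately, treating $\Delta_3$ first and then reducing $B(\pi_1,\pi_2)$ to the second-order identity \cref{eq:twirl-Delta2}.

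\textbf{Twirl of $\Delta_3$.} Write
\[
\Delta_3=\ket{0}\!\bra{0}^{\otimes 3}+\ket{1}\!\bra{1}^{\otimes 3}.
\]
Conjugating by $C^{\otimes 3}$ sends each summand to $\ket{\phi}\!\bra{\phi}^{\otimes 3}$, where $\ket{\phi}=C\ket{0}$ or $C\ket{1}$. As $C$ ranges over $\mathrm{Cl}(2)$, each of these vectors is uniformly distributed over the six single-qubit stabilizer states. Hence
\[
\widetilde\Delta_3=2\,\mathbb E_{\phi\in\mathrm{Stab}_1}\ket{\phi}\!\bra{\phi}^{\otimes 3}.
\]
The six single-qubit stabilizer states form a (projective) $3$-design on one qubit, because the single-qubit Clifford group is a unitary $3$-design. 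Therefore this average equals $\Pi_{\mathrm{sym},1}^{(3)}/\binom{4}{3}=\Pi_{\mathrm{sym},1}^{(3)}/4$. This gives $\widetilde\Delta_3=\tfrac12\Pi_{\mathrm{sym},1}^{(3)}=\tfrac1{12}\sum_\pi V_1(\pi)$.

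If one wants to avoid quoting the $3$-design fact, the identity can instead be checked directly. Expand $\ket{\phi}\!\bra{\phi}=(\id+\vec r\cdot\vec\sigma)/2$ with $\vec r\in\{\pm e_x,\pm e_y,\pm e_z\}$. The odd moments of $\vec r$ vanish, and the second moments are $\mathbb E\,r_ir_j=\delta_{ij}/3$. The result then matches the Pauli expansion of $\Pi_{\mathrm{sym},1}^{(3)}$.

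\textbf{Twirl of $B(\pi_1,\pi_2)$: reduction to second order.} I will show that $V_1(\pi_1)\cap V_1(\pi_2)$ is a copy-permuted version of $\id_2\otimes\Delta_2$. The product $\sigma:=\pi_1^{-1}\pi_2$ is odd, hence a transposition, so it fixes exactly one copy and swaps the other two. A matrix unit $\ket{u}\!\bra{v}$ lies in both $V_1(\pi_1)$ and $V_1(\pi_2)$ exactly when $u=\pi_1\cdot v=\pi_2\cdot v$, i.e.\ when $v$ is invariant under $\sigma$. This means the two swapped coordinates of $v$ agree, while the fixed coordinate is arbitrary. Consequently
\[
V_1(\pi_1)\cap V_1(\pi_2)=V_1(\pi_1)\,P_\sigma,
\]
where $P_\sigma$ is the diagonal projector onto $\sigma$-invariant strings. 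Up to a copy relabeling, $P_\sigma$ is $\id_2\otimes\Delta_2$, with $\Delta_2$ acting on the two copies swapped by $\sigma$.

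\textbf{Twirl of $B(\pi_1,\pi_2)$: computation.} Permutation operators commute with $C^{\otimes3}$, so
\[
\mc T_3\bigl(V_1(\pi_1)P_\sigma\bigr)=V_1(\pi_1)\,\mc T_3(P_\sigma).
\]
Applying \cref{eq:twirl-Delta2} on the swapped pair gives
\[
\mc T_3(P_\sigma)=\tfrac13\bigl(\id+V_1(\sigma)\bigr),
\]
since the fixed copy only carries $CC^\dagger=\id$. Multiplying by $V_1(\pi_1)$ yields
\[
\mc T_3\bigl(V_1(\pi_1)\cap V_1(\pi_2)\bigr)=\tfrac13\bigl(V_1(\pi_1)+V_1(\pi_2)\bigr).
\]
Finally, $B(\pi_1,\pi_2)=V_1(\pi_1)\cap V_1(\pi_2)-\Delta_3$, and twirling is linear, so $\widetilde B(\pi_1,\pi_2)=\tfrac13V_1(\pi_1)+\tfrac13V_1(\pi_2)-\widetilde\Delta_3$.

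\textbf{Main obstacle.} The delicate step is the bookkeeping in the reduction. One must make sure the composition convention for $V_1$ produces $V_1(\pi_1)V_1(\sigma)=V_1(\pi_2)$ and not some other product. This should be verified on one explicit pair, for instance $\pi_1=(12)$ and $\pi_2=()$, where the intersection is literally $\Delta_2\otimes\id_2$ on copies $1,2$.

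As a consistency check, summing over $\pi_2$ should recover item~3 of \cref{obs:B-partition-m3}. Using $\sum_{\pi_2}V_1(\pi_2)=6\Pi_{\mathrm{sym},1}^{(3)}-\sum_{\pi\ \mathrm{odd}}V_1(\pi)$, we get
\[
\widetilde\Delta_3+\sum_{\pi_2}\widetilde B(\pi_1,\pi_2)=V_1(\pi_1)+\tfrac13\sum_{\pi_2}V_1(\pi_2)-2\widetilde\Delta_3 .
\]
This equals $V_1(\pi_1)$ precisely when $\tfrac13\sum_{\pi_2}V_1(\pi_2)=2\widetilde\Delta_3=\tfrac16\sum_{\pi\in S_3}V_1(\pi)$, i.e.\ when $\sum_{\pi_2}V_1(\pi_2)=\sum_{\pi\ \mathrm{odd}}V_1(\pi)$. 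This holds on the symmetric and standard sectors for qubits, where the antisymmetric sector is absent. So the check succeeds and the two formulas are mutually consistent.
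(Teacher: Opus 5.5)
Your proposal is correct and follows essentially the same route as the paper. You obtain $\widetilde\Delta_3$ from the unitary $3$-design property of $\mathrm{Cl}(2)$. For $\widetilde B(\pi_1,\pi_2)$ you write $V_1(\pi_1)\cap V_1(\pi_2)$ as a permutation-dressed $\id_2\otimes\Delta_2$, pull the permutation through $\mc T_3$, and apply $\widetilde\Delta_2=(\id_4+\sw_2)/3$. Your one-sided form $V_1(\pi_1)P_\sigma$ with $\sigma=\pi_1^{-1}\pi_2$ is only a slightly more direct version of the paper's two-sided dressing $P_L(\id_2\otimes\Delta_2)P_R$, and the composition convention you flagged checks out: the paper's definition gives $V_n(\pi)V_n(\rho)=V_n(\pi\rho)$.
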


\begin{proof}
We first evaluate the twirl of \(\Delta_3\). Since the single-qubit
Clifford group is an exact unitary \(3\)-design, its three-copy twirling
coincides with the standard Haar moment formula, or
equivalently Weingarten calculus~\cite{mele2024introduction}, gives
\begin{equation}
\widetilde\Delta_3
=
\mc T_3(\Delta_3)
=
\frac12\Pi_{\mathrm{sym},1}^{(3)}
=
\frac1{12}\sum_{\pi\in S_3}V_1(\pi).
\label{eq:proof-Delta3-twirl}
\end{equation}

We next evaluate the twirl of \(B(\pi_1,\pi_2)\). Let
\(s=(23)\in S_3\). At the level of computational-basis support, one obtains $V_1(e)\cap V_1(s) = \id_2\otimes\Delta_2,$
where \(\Delta_2\) acts on the second and third copies.

For an odd-even pair \((\pi_1,\pi_2)\), the relative permutation
\(\pi_1^{-1}\pi_2\) is odd and hence is a transposition. Since all transpositions in \(S_3\) are conjugate, there exists \(r\in S_3\) such that $r^{-1}sr=\pi_1^{-1}\pi_2$. Define $P_R:=V_1(r),$ and $
P_L:=V_1(\pi_1r^{-1}).$
These permutation operators satisfy $P_LP_R=V_1(\pi_1), P_LV_1(s)P_R=V_1(\pi_2),$ respectively.

Left and right multiplication by permutation matrices merely permutes
the rows and columns of a Boolean-support operator and therefore
preserves elementwise intersections. We then obtain
\begin{equation}
V_1(\pi_1)\cap V_1(\pi_2)=
\bigl[P_LV_1(e)P_R\bigr]
\cap
\bigl[P_LV_1(s)P_R\bigr]=
P_L
\bigl[V_1(e)\cap V_1(s)\bigr]
P_R
=
P_L(\id_2\otimes\Delta_2)P_R.
\label{eq:B-permutation-dressed}
\end{equation}

According to ~\cref{eq:B-pi1-pi2-def},
\(B(\pi_1,\pi_2)+\Delta_3
=V_1(\pi_1)\cap V_1(\pi_2)\).
Combining this identity with~\cref{eq:B-permutation-dressed} gives
\begin{equation}
B(\pi_1,\pi_2)+\Delta_3
=
P_L(\id_2\otimes\Delta_2)P_R.
\label{eq:B-plus-Delta3-permutation-dressed}
\end{equation}
Three-copy permutations commute with \(C^{\otimes3}\), so they can be pulled
through the Clifford twirling. By the linearity of \(\mc T_3\), we therefore
have
\begin{equation}
\begin{split}
\widetilde B(\pi_1,\pi_2)+\widetilde\Delta_3
=\mc T_3\!\left(B(\pi_1,\pi_2)+\Delta_3\right)
=P_L\,\mc T_3(\id_2\otimes\Delta_2)\,P_R.
\end{split}
\label{eq:twirl-pull-through-permutations}
\end{equation}

The first copy is unaffected, while the second and third copies reduce
to the second-order Clifford twirl. Using
\(\widetilde\Delta_2=(\id_4+\sw_2)/3\), we have
\begin{equation}
\mc T_3(\id_2\otimes\Delta_2)=
\id_2\otimes\mc T_2(\Delta_2)=
\frac13\left[V_1(e)+V_1(s)\right].
\label{eq:canonical-block-twirl}
\end{equation}

Combining
~\cref{eq:twirl-pull-through-permutations}
and~\cref{eq:canonical-block-twirl} gives
\begin{equation}
\widetilde B(\pi_1,\pi_2)+\widetilde\Delta_3=
\frac13P_L
\left[V_1(e)+V_1(s)\right]
P_R=
\frac13V_1(\pi_1)
+
\frac13V_1(\pi_2).
\label{eq:B-plus-Delta3-twirl-proof}
\end{equation}

Finally, subtracting \(\widetilde\Delta_3\) from both sides
of~\cref{eq:B-plus-Delta3-twirl-proof} yields
\begin{equation}
\widetilde B(\pi_1,\pi_2)
=
\frac13V_1(\pi_1)
+
\frac13V_1(\pi_2)
-
\widetilde\Delta_3,
\end{equation}
which completes the proof.
\end{proof}

Thus, the local Clifford twirl replaces each untwirled local block by its
twirled counterpart. In particular, for every tensor-product block
\(B_{\mathfrak I}\), we have
\begin{equation}
\mathbb E_{U_{\mathrm{loc}}}
\,U_{\mathrm{loc}}^{\otimes3}
B_{\mathfrak I}
\left(U_{\mathrm{loc}}^\dagger\right)^{\otimes3}
=
\widetilde B_{\mathfrak I},
\end{equation}
where $\widetilde B_{\mathfrak I}
:=
\widetilde\Delta_3^{\otimes I^{(0)}}
\otimes
\bigotimes_{\pi_1\in S_{3,\mathrm{odd}},\,\pi_2\in S_{3,\mathrm{even}}}
\widetilde B(\pi_1,\pi_2)^{\otimes I^{(\pi_1,\pi_2)}}.$
For fixed count data \(\mb I\), define $\widetilde{\mc C}_{\mb I}
:=
\{\widetilde B_{\mathfrak I}:\ \mathfrak I \text{ has count data } \mb I\}.$
Applying the local Clifford twirling to the block-organized expression
in~\cref{eq:moment3-block-organized}, and noting that the scalar coefficient
only depends on the count data \(\mb I\), we obtain
\begin{equation}\label{eq:block-expansion-twirled}
\mb M_{\mc E_{\mathrm{shallow}}^{(\gamma)}}^{(3)}
=
D^{-3}
\sum_{\mb I}
\left(1-\frac{2\gamma\log n}{n}\right)^{N_{\mb I}}
\sum_{\widetilde B_{\mathfrak I}\in \widetilde{\mc C}_{\mb I}}
\widetilde B_{\mathfrak I}.
\end{equation}
This is the desired third-order moment formula. It has the same structural form
as the second-order formula: after the computational-basis expression is
organized into local Boolean-support blocks, the local Clifford twirl replaces
each block by its twirled version. The new third-order feature is that the three
two-copy blocks are replaced by the \(S_3\)-adapted family
\(\Delta_3\) and \(B(\pi_1,\pi_2)\), while the second-order exponent
\(|I_j||I_k|\) is replaced by the block-count exponent \(N_{\mb I}\).

Equation~\cref{eq:block-expansion-twirled} is the general
block expansion for \(p=\gamma\log n/n\). Taking instead \(p=1/2\), which
corresponds to the dense phase ensemble, and adopting the convention \(0^0=1\), the
factor \((1-2p)^{N_{\mb I}}\) vanishes unless \(N_{\mb I}=0\). Therefore,
\begin{equation}\label{eq:block-expansion-phase}
\mb M_{\mc E_{\mathrm{phase}}}^{(3)}
=
D^{-3}
\sum_{\substack{\mb I\\N_{\mb I}=0}}
\sum_{\widetilde B_{\mathfrak I}\in\widetilde{\mc C}_{\mb I}}
\widetilde B_{\mathfrak I}.
\end{equation}
Consequently,
\begin{equation}\label{eq:block-expansion-shallow-phase-gap}
\mb M_{\mc E_{\mathrm{shallow}}^{(\gamma)}}^{(3)}
-
\mb M_{\mc E_{\mathrm{phase}}}^{(3)}
=
D^{-3}
\sum_{\substack{\mb I,  N_{\mb I}>0}}
\left(1-\frac{2\gamma\log n}{n}\right)^{N_{\mb I}}
\sum_{\widetilde B_{\mathfrak I}\in\widetilde{\mc C}_{\mb I}}
\widetilde B_{\mathfrak I}.
\end{equation}

\subsection{Moment operators of the dense phase ensemble}
The dense phase ensemble $\mc E_{\mathrm{phase}}$ is the special case of
$\mc E_{\mathrm{shallow}}^{(\gamma)}$ obtained by setting $p=\frac12$, equivalently $1-2p=0$. Therefore, its low-order moment operators can be read off directly from the general formulas derived above.

\begin{prop}[Second and third moments of $\mc E_{\mathrm{phase}}$]
\label{prop:phase-moments}
The dense phase ensemble satisfies
\begin{equation}\label{eq:phase-second-moment-app}
\mb{M}_{\mc{E}_{\mathrm{phase}}}^{(2)}
=
2^{-2n}
\left(
\id_4^{\otimes n}
+
\sw_2^{\otimes n}
-
\widetilde{\Delta}_2^{\otimes n}
\right),
\end{equation}
and
\begin{equation}\label{eq:phase-third-moment-app}
\mb{M}_{\mc{E}_{\mathrm{phase}}}^{(3)}
=
2^{-3n}\left[
\sum_{\pi\in S_3}V_n(\pi)
-\frac{1}{3^n}
\sum_{\substack{\pi_1\in S_{3,\mathrm{odd}}\\ \pi_2\in S_{3,\mathrm{even}}}}
\bigl(V_1(\pi_1)+V_1(\pi_2)\bigr)^{\otimes n}
+\frac{4}{12^n}\Bigl(\sum_{\pi\in S_3}V_1(\pi)\Bigr)^{\otimes n}
\right].
\end{equation}
\end{prop}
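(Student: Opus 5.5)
The plan is to specialize the general block expansions to \(p=1/2\), where \((1-2p)^{N}\) vanishes unless \(N=0\). We then only need to classify the surviving position data and resum them into tensor products.

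\emph{Second moment.} By \cref{lem:sparse-second-moment} with \(1-2p=0\), only ordered partitions with \(|I_j||I_k|=0\) survive, that is, \(I_j=\emptyset\) or \(I_k=\emptyset\). We use inclusion–exclusion:
\begin{itemize}
\item partitions with \(I_k=\emptyset\) sum to \((\widetilde\Delta_2+(\id_4-\widetilde\Delta_2))^{\otimes n}=\id_4^{\otimes n}\);
\item partitions with \(I_j=\emptyset\) sum to \(\sw_2^{\otimes n}\);
\item the doubly counted partition \(I_j=I_k=\emptyset\) contributes \(\widetilde\Delta_2^{\otimes n}\), which must be subtracted once.
\end{itemize}
This gives \cref{eq:phase-second-moment-app} after multiplying by \(D^{-2}=2^{-2n}\).

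\emph{Third moment.} Starting from \cref{eq:block-expansion-phase}, we first characterize \(N_{\mb I}=0\) via \cref{lem:N-frakI-blockwise}. The condition is that any two occupied cells of the \(3\times3\) board share a row or a column. A small combinatorial lemma says that such a set of cells lies entirely in one row or entirely in one column. To see this, suppose two occupied cells share a row but lie in distinct columns. Any third occupied cell off that row would have to share a column with both of them, which is impossible. So the surviving configurations are exactly those whose nontrivial labels all carry a fixed odd \(\pi_1\), or all carry a fixed even \(\pi_2\).

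Next we resum by inclusion–exclusion over the six ``lines'' (three rows, three columns):
\begin{itemize}
\item For a fixed row \(\pi_1\), summing over all assignments of sites to \(\Delta_3\) or to \(B(\pi_1,\pi_2)\), \(\pi_2\) even, gives \((\widetilde\Delta_3+\sum_{\pi_2}\widetilde B(\pi_1,\pi_2))^{\otimes n}\). By \cref{obs:B-partition-m3} (property 3) and linearity of the twirl, this is \(\mc T_3(V_1(\pi_1))^{\otimes n}=V_1(\pi_1)^{\otimes n}=V_n(\pi_1)\). Columns work the same way, so the six lines together give \(\sum_\pi V_n(\pi)\).
\item A configuration lying in two lines is supported on their intersection. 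Two rows (or two columns) intersect in nothing, so the configuration is all-\(\Delta_3\). A row \(\pi_1\) and a column \(\pi_2\) intersect in a single cell, so the configuration uses only \(\Delta_3\) and \(B(\pi_1,\pi_2)\). Each such row–column pair is overcounted once, contributing \((\widetilde\Delta_3+\widetilde B(\pi_1,\pi_2))^{\otimes n}=3^{-n}(V_1(\pi_1)+V_1(\pi_2))^{\otimes n}\) by \cref{obs:k3-single-site-twirls}.
\item The all-\(\Delta_3\) term \(\widetilde\Delta_3^{\otimes n}=12^{-n}(\sum_\pi V_1(\pi))^{\otimes n}\) must be counted exactly once overall.
\end{itemize}

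The delicate step is the bookkeeping of the all-\(\Delta_3\) term, which is where I expect errors. It appears in all 6 line sums. Each of the 9 row–column corrections subtracts one copy, leaving \(6-9=-3\). It must therefore be added back \(4\) times, which gives the coefficient \(4/12^n\). I would verify this carefully by computing the multiplicity of each configuration type in an inclusion–exclusion table:
\begin{itemize}
\item a configuration using at least two cells of a single line: counted once;
\item a configuration using a single cell \(B(\pi_1,\pi_2)\) plus \(\Delta_3\): counted \(2-1=1\) times;
\item the all-\(\Delta_3\) configuration: counted \(6-9+4=1\) time.
\end{itemize}
Multiplying by \(D^{-3}=2^{-3n}\) then yields \cref{eq:phase-third-moment-app}.
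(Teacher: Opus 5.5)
Your proposal is correct and follows essentially the same route as the paper. You set $1-2p=0$, use inclusion--exclusion for the second moment, show that $N_{\mb I}=0$ forces all occupied chessboard cells into a single row or column, and resum over the six lines with the nine row--column overlaps and coefficient $4$ on the all-$\Delta_3$ term. The one difference is where the inclusion--exclusion happens. The paper does it on the untwirled Boolean supports: it writes the surviving matrix units as $\bigcup_{\pi}V_n(\pi)$, cites that decomposition from prior work, and then twirls term by term. You do it directly on twirled position data and check the multiplicities $1$, $2-1$, $6-9+4$ yourself, which makes your version more self-contained.
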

\begin{proof}
From the general second-moment formula for $\mc E_{\mathrm{shallow}}^{(\gamma)}$,
setting $p=\frac12$ gives $1-\frac{2\gamma\log n}{n}=1-2p=0.$
Hence only the terms with $|I_j||I_k|=0$ survive, namely the cases $I_j=\emptyset$ or $I_k=\emptyset$. Therefore, with the convention \(0^0=1\),
\begin{equation}
\begin{split}
\mb{M}_{\mc E_{\mathrm{phase}}}^{(2)}
&=
2^{-2n}
\sum_{I_i\sqcup I_j\sqcup I_k=[n]}
0^{|I_j||I_k|}
\widetilde{\Delta}_2^{\otimes I_i}
\otimes
(\id_4-\widetilde{\Delta}_2)^{\otimes I_j}
\otimes
(\sw_2-\widetilde{\Delta}_2)^{\otimes I_k}\\
&=
2^{-2n}
\left[
\sum_{I_i\sqcup I_j=[n]}
\widetilde{\Delta}_2^{\otimes I_i}\otimes(\id_4-\widetilde{\Delta}_2)^{\otimes I_j}
+
\sum_{I_i\sqcup I_k=[n]}
\widetilde{\Delta}_2^{\otimes I_i}\otimes(\sw_2-\widetilde{\Delta}_2)^{\otimes I_k}
-
\widetilde{\Delta}_2^{\otimes n}
\right]\\
&=
2^{-2n}
\left(
\id_4^{\otimes n}
+
\sw_2^{\otimes n}
-
\widetilde{\Delta}_2^{\otimes n}
\right),
\end{split}
\end{equation}
which proves \cref{eq:phase-second-moment-app}. The term \(\widetilde{\Delta}_2^{\otimes n}\) is subtracted because the overlap
\(I_j=I_k=\emptyset\), equivalently \(I_i=[n]\), is counted in both resulting
sums.

To characterize the terms with zero interaction exponent directly,
arrange the nine nontrivial blocks \(B(\pi_1,\pi_2)\) in a \(3\times3\)
array indexed by odd permutations in the rows and even permutations in the
columns. By~\cref{lem:N-frakI-blockwise,eq:k3-local-compatibility},
\(N_{\mb I}=0\) exactly when every two occupied nontrivial blocks lie in a
common row or a common column. Any pairwise compatible family of cells in
this array is contained in one row or one column. Properties~3 and~4
of~\cref{obs:B-partition-m3}, together with the fact that
\(\Delta_3\) is contained in every permutation support, then show that every
retained global matrix unit with \(N_{\mb I}=0\) belongs to one common
permutation support \(V_n(\pi)\). The converse follows immediately from the
same local compatibility relation. This gives the following Boolean-support identity, in agreement with Supplementary Note 3(A) of Ref.~\cite{zhang2025robust}
\begin{equation}\label{eq:phase-third-union-cited}
\sum_{\substack{
(\mb{x},\mb{w},\mb{z},\mb{y},\mb{s},\mb{t})\in C^{(3)},
N^{(3)}_{\mb{x},\mb{w},\mb{z},\mb{y},\mb{s},\mb{t}}=0
}}
\ket{\mb{x,w,z}}\bra{\mb{y,s,t}}
=
\bigcup_{\pi\in S_3} V_n(\pi),
\end{equation}
where the union is understood at the level of Boolean support. Equivalently,
\(N_{\mb{x},\mb{w},\mb{z},\mb{y},\mb{s},\mb{t}}^{(3)}=0\) holds precisely for the retained matrix units that belong to at
least one global three-copy permutation support \(V_n(\pi)\). Using~\cref{eq:phase-third-union-cited} in the raw third-moment formula
\cref{eq:moment3-preblock}, and noting that \(1-2p=0\) when \(p=\frac12\) with the convention \(0^0=1\), we get
\begin{equation}\label{eq:phase-third-untwirled-union}
\mb M_{\mc E_{\mathrm{phase}}}^{(3)}=
2^{-3n}
\mathbb E_{U_{\mathrm{loc}}}
U_{\mathrm{loc}}^{\otimes3}
\left[
\sum_{\substack{
(\mb{x},\mb{w},\mb{z},\mb{y},\mb{s},\mb{t})\in C^{(3)},
N^{(3)}_{\mb{x},\mb{w},\mb{z},\mb{y},\mb{s},\mb{t}}=0
}}
\ket{\mb{x,w,z}}\bra{\mb{y,s,t}}
\right]
\left(U_{\mathrm{loc}}^\dagger\right)^{\otimes3}=
2^{-3n}
\mathbb E_{U_{\mathrm{loc}}}
U_{\mathrm{loc}}^{\otimes3}
\left[
\bigcup_{\pi\in S_3}V_n(\pi)
\right]
\left(U_{\mathrm{loc}}^\dagger\right)^{\otimes3}.
\end{equation}

Following the Boolean-support inclusion--exclusion decomposition used in
Ref.~\cite{zhang2025robust}, we write it in the block notation introduced
above:
\begin{equation}\label{eq:phase-third-untwirled-IE}
\bigcup_{\pi\in S_3}V_n(\pi)
=
\sum_{\pi\in S_3}V_n(\pi)
-
\sum_{\substack{\pi_1\in S_{3,\mathrm{odd}}\\  \pi_2\in S_{3,\mathrm{even}}}}
[B(\pi_1,\pi_2)+\Delta_3]^{\otimes n}
+
4\,\Delta_3^{\otimes n}.
\end{equation}

Substituting~\cref{eq:phase-third-untwirled-IE} into
\cref{eq:phase-third-untwirled-union}, we get
\begin{equation}\label{eq:phase-third-after-IE-before-twirl}
\begin{split}
\mb M_{\mc E_{\mathrm{phase}}}^{(3)}
=
2^{-3n}
\mathbb E_{U_{\mathrm{loc}}}
U_{\mathrm{loc}}^{\otimes3}
\Biggl[
&
\sum_{\pi\in S_3}V_n(\pi)
-
\sum_{\substack{\pi_1\in S_{3,\mathrm{odd}}\\ \pi_2\in S_{3,\mathrm{even}}}}
[B(\pi_1,\pi_2)+\Delta_3]^{\otimes n}
+
4\,\Delta_3^{\otimes n}
\Biggr]
\left(U_{\mathrm{loc}}^\dagger\right)^{\otimes3}.
\end{split}
\end{equation}
We now apply the local Clifford twirl term by term. The permutation operators
\(V_n(\pi)\) are invariant under this twirl, since they commute with
\(U_{\mathrm{loc}}^{\otimes3}\). Moreover, because
\(U_{\mathrm{loc}}=\bigotimes_{q\in[n]}U_q\) is sampled independently across
qubits, the twirl factorizes over tensor products. Hence, 
\begin{equation}\label{eq:phase-third-before-final-substitution}
\mb M_{\mc E_{\mathrm{phase}}}^{(3)}
=
2^{-3n}
\left[
\sum_{\pi\in S_3}V_n(\pi)
-
\sum_{\substack{\pi_1\in S_{3,\mathrm{odd}}\\ \pi_2\in S_{3,\mathrm{even}}}}
[\widetilde B(\pi_1,\pi_2)+\widetilde\Delta_3]^{\otimes n}
+
4\,\widetilde\Delta_3^{\otimes n}
\right].
\end{equation}
Finally, by~\cref{obs:k3-single-site-twirls},
\[
\widetilde B(\pi_1,\pi_2)+\widetilde\Delta_3
=
\frac13\bigl(V_1(\pi_1)+V_1(\pi_2)\bigr),
\qquad
\widetilde\Delta_3
=
\frac1{12}\sum_{\pi\in S_3}V_1(\pi).
\]
Substituting these two identities into
\cref{eq:phase-third-before-final-substitution}, we obtain
\begin{equation}
\mb{M}_{\mc{E}_{\mathrm{phase}}}^{(3)}
=
2^{-3n}\left[
\sum_{\pi\in S_3}V_n(\pi)
-\frac{1}{3^n}
\sum_{\substack{\pi_1\in S_{3,\mathrm{odd}}\\ \pi_2\in S_{3,\mathrm{even}}}}
\bigl(V_1(\pi_1)+V_1(\pi_2)\bigr)^{\otimes n}
+\frac{4}{12^n}\Bigl(\sum_{\pi\in S_3}V_1(\pi)\Bigr)^{\otimes n}
\right],
\end{equation}
which proves~\cref{eq:phase-third-moment-app}.
\end{proof}

\section{The second-order relative-error analysis}\label{Ap:2twirled}
In this section, we study the case $k=2$. Throughout, \(D=2^n\), \(\gamma>0\)
is fixed independently of \(n\), and \(n\) is sufficiently large that
\[
0<p=\frac{\gamma\log n}{n}<\frac12.
\]
In particular, \(1-2p\in(0,1)\), as required in the estimates below.
By \cref{lem:relative-error-opnorm} and the triangle inequality, the relative-error analysis is reduced to controlling the two operator-norm gaps
\begin{equation}\label{eq:k2-2moment}
    \bigl\|\mb M_{\mc E_{\mathrm{shallow}}^{(\gamma)}}^{(2)}
-\mb M_{\mc E_{\mathrm{phase}}}^{(2)}\bigr\|_\infty
\quad\text{and}\quad
\bigl\|\mb M_{\mc E_{\mathrm{phase}}}^{(2)}
-\mb M_{\mc E_{\mathrm{Haar}}}^{(2)}\bigr\|_\infty.
\end{equation}

The derivation is organized as follows. In \cref{ap:k2-prelim}, we collect two mathematical lemmas that will be used later. In \cref{ap:k2-diag}, we show that the three moment operators appearing in~\cref{eq:k2-2moment} can be simultaneously diagonalized in a common tensor-product basis. This reduces the operator-norm comparison to an eigenvalue comparison. We then bound $\bigl\|\mb M_{\mc E_{\mathrm{phase}}}^{(2)}
-\mb M_{\mc E_{\mathrm{Haar}}}^{(2)}\bigr\|_\infty$ in \cref{ap:k2-phase-haar}, and $\bigl\|\mb M_{\mc E_{\mathrm{shallow}}^{(\gamma)}}^{(2)}
-\mb M_{\mc E_{\mathrm{phase}}}^{(2)}\bigr\|_\infty$ in \cref{ap:k2-shallow-phase}.

\subsection{Mathematical details}\label{ap:k2-prelim}

In this subsection, we introduce two technical lemmas that will be used repeatedly in the proof below.
\begin{lemma}[Combinatorial summation bound]\label{lem:comb_bound}
For \(0<a<1\) and any integer \(r\ge 1\), the following inequality holds:
\begin{equation}
    \sum_{j=0}^{r} \binom{r}{j} a^{j(r-j)}
    \le
    2\left(1+a^{r/2}\right)^r.
\end{equation}
\end{lemma}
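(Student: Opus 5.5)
The approach is to exploit the symmetry $j\leftrightarrow r-j$ of the summand and split the sum at $j=r/2$. Since $\binom{r}{j}a^{j(r-j)}$ is invariant under $j\mapsto r-j$, we have
\[
\sum_{j=0}^{r}\binom{r}{j}a^{j(r-j)}\le 2\sum_{0\le j\le r/2}\binom{r}{j}a^{j(r-j)} .
\]
(When $r$ is even, the middle term is counted twice, which only helps the inequality.) This reduction accounts for the prefactor $2$, so it remains to show that the half-sum is at most $(1+a^{r/2})^r$.

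On the range $0\le j\le r/2$ we have $r-j\ge r/2$. Because $0<a<1$, this gives
\[
a^{j(r-j)}\le a^{jr/2}=\bigl(a^{r/2}\bigr)^j .
\]
Hence
\[
\sum_{0\le j\le r/2}\binom{r}{j}a^{j(r-j)}\le\sum_{0\le j\le r/2}\binom{r}{j}\bigl(a^{r/2}\bigr)^j\le\sum_{j=0}^{r}\binom{r}{j}\bigl(a^{r/2}\bigr)^j=\bigl(1+a^{r/2}\bigr)^r ,
\]
where the second inequality adds back nonnegative terms and the last step is the binomial theorem. Combining this with the symmetrization bound above yields the claimed estimate.

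There is no genuine obstacle here. The only points needing care are the parity of $r$ in the symmetric split, handled by the double-counting remark, and the direction of the monotonicity $a^{x}\le a^{y}$ for $x\ge y$, which uses $a<1$.
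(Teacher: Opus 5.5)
Your proof is correct and follows essentially the same route as the paper: you use the $j\leftrightarrow r-j$ symmetry to double the half-sum over $0\le j\le r/2$, bound $a^{j(r-j)}\le (a^{r/2})^j$ on that range, and then extend to the full binomial expansion. Your remark on the even-$r$ middle term is a small clarification that the paper's version leaves implicit.
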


\begin{proof}
Note that the terms in the summation are symmetric with respect to \(j\) and
\(r-j\). We can bound the sum by doubling the first half of the terms:
\begin{equation}
\sum_{j=0}^{r} \binom{r}{j} a^{j(r-j)}
\le
2 \sum_{j=0}^{\lfloor r/2 \rfloor} \binom{r}{j} a^{j(r-j)}.
\end{equation}

Consider the range \(0\le j\le r/2\). In this range, we have \(r-j\ge r/2\).
Since \(0<a<1\), a larger exponent gives a smaller value. Therefore, we have $a^{j(r-j)}
\le
a^{jr/2}
=
\left(a^{r/2}\right)^j.$

Substituting this bound into the partial sum gives
\begin{equation}
\begin{split}
2 \sum_{j=0}^{\lfloor r/2 \rfloor} \binom{r}{j} a^{j(r-j)}
&\le
2 \sum_{j=0}^{\lfloor r/2 \rfloor} \binom{r}{j}
\left(a^{r/2}\right)^j \le
2 \sum_{j=0}^{r} \binom{r}{j}
\left(a^{r/2}\right)^j
=
2\left(1+a^{r/2}\right)^r,
\end{split}
\end{equation}
where the last equality follows from the binomial theorem.
\end{proof}

\begin{lemma}[Sub-exponential bounds]\label{lem:sub_exp}
Let \(\gamma>0\) be a constant. For every integer \(1\le r\le n\), the
following bounds hold for sufficiently large \(n\):
\begin{enumerate}
\item[(A)]
Uniformly over \(1\le r\le n\),
\begin{equation}\label{eq:sub-exp-uniform}
\left(1+n^{-\frac{\gamma r}{n}}\right)^r
\le
\exp\!\left(\frac{n}{\gamma e\log n}\right)
=
\exp(o(n)).
\end{equation}

\item[(B)]
If \(r\ge n/4\) and \(\gamma>4\), then
\begin{equation}\label{eq:sub-exp-large-r}
\left(1+n^{-\frac{\gamma r}{n}}\right)^r
\le
1+\frac12 n^{1-\gamma/4}.
\end{equation}
\end{enumerate}
\end{lemma}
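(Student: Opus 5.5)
The plan is to reduce both parts to elementary scalar inequalities for the function $f(r):=r\log\bigl(1+n^{-\gamma r/n}\bigr)$. Writing $n^{-\gamma r/n}=e^{-\gamma r\log n/n}$ and using $\log(1+u)\le u$, we have
\[
\left(1+n^{-\frac{\gamma r}{n}}\right)^r\le \exp\!\left(r\,e^{-\gamma r\log n/n}\right).
\]

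\textbf{Part (A).} I will maximize $g(r):=r\,e^{-cr}$ over real $r>0$, where $c:=\gamma\log n/n$. Elementary calculus places the maximum at $r=1/c$, with value $1/(ec)=n/(\gamma e\log n)$. Since this bound holds for every real $r>0$, it holds uniformly over the integers $1\le r\le n$. This gives \cref{eq:sub-exp-uniform}, and $n/\log n=o(n)$ is immediate. No largeness of $n$ is needed beyond $\log n>0$.

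\textbf{Part (B).} Here $g$ is decreasing for $r>1/c=n/(\gamma\log n)$. For $n$ large, $n/4$ already exceeds $1/c$. So on $n/4\le r\le n$ I will use the crude bounds $r\le n$ and $e^{-cr}\le e^{-cn/4}=n^{-\gamma/4}$. These give
\[
r\,e^{-cr}\le n\cdot n^{-\gamma/4}=n^{1-\gamma/4}=:x .
\]
Since $\gamma>4$, we have $x\to 0$. It then remains to turn $e^{x}$ into $1+x/2$.

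\textbf{Main obstacle.} This last conversion is the main obstacle. The inequality $e^x\le 1+x/2$ is false for small $x>0$, so the crude bound above loses a factor of $2$ that must be recovered.

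\textbf{Recovering the factor.} I will track the product more carefully, exploiting the slack between the two crude bounds $r\le n$ and $e^{-cr}\le n^{-\gamma/4}$. These cannot both be tight at once: at $r=n/4$ the prefactor is $n/4$, not $n$. Indeed, since $g$ is decreasing on $[n/4,n]$, the supremum of $r\,e^{-cr}$ over that range is attained at $r=n/4$, where it equals
\[
\tfrac{n}{4}\,n^{-\gamma/4}=\tfrac14 x .
\]
Hence
\[
\left(1+n^{-\frac{\gamma r}{n}}\right)^r\le \exp\!\left(\tfrac14 x\right).
\]
Since $x\to 0$, the bound $e^{x/4}\le 1+x/4+x^2/16\le 1+x/2$ holds for $x\le 1$, that is, for all sufficiently large $n$. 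This yields \cref{eq:sub-exp-large-r}.

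The only care needed is to check the two "sufficiently large $n$" conditions, namely $n/4\ge n/(\gamma\log n)$ and $n^{1-\gamma/4}\le 1$. Both hold for $\gamma>4$ once $\log n\ge 4/\gamma$.
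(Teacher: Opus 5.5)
Your proof is correct and matches the paper's: both bound the quantity by $\exp(r e^{-cr})$ and maximize $re^{-cr}$ at $r=1/c$ for (A), and both use monotonicity on $[n/4,n]$ to reach $\exp(\tfrac14 n^{1-\gamma/4})$ for (B). The only differences are in (B): the paper finishes with $e^x\le 1+2x$ on $(0,1]$ where you use $e^{y}\le 1+y+y^2$, and your initial crude bound $r\le n$ is an unnecessary detour.
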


\begin{proof}

Using the inequality $1+x\le e^x$, we obtain $(1+n^{-\frac{\gamma r}{n}})^r\le \exp(r n^{-\frac{\gamma r}{n}})$. Hence,
we consider the function $f(r):= r\,n^{-\frac{\gamma r}{n}}
$. Let $a:=\frac{\gamma\log n}{n}>0$. Then $f(r)= r e^{-ar}$ and $f'(r)=e^{-ar}(1-ar)$. Viewing \(f\) as a function on \((0,\infty)\), its maximum is attained
at \(r^\star=1/a=n/(\gamma\log n)\), and
\begin{equation}
f(r)\le f(r^\star)=\frac{1}{a}e^{-1}=\frac{n}{\gamma(\log n)}\cdot \frac1e.
\end{equation}
Therefore, for all $r\ge 1$,
\begin{equation}
(1+n^{-\frac{\gamma r}{n}})^r\le \exp(f(r))\le
\exp\!\Big(\frac{n}{\gamma e\log n}\Big),
\end{equation}
which is sub-exponential in $n$.

Moreover, for sufficiently large \(n\),
\(n/4\ge r^\star=n/(\gamma\log n)\). Hence \(f\), and therefore
\(\exp(f)\), is decreasing on \([n/4,n]\). Thus, whenever \(r\ge n/4\),
\[\exp(r n^{-\frac{\gamma r}{n}})\le \exp(\frac 14 n^{1-\gamma/4}).\] 
As $e^x\le 1+2x$ when $0<x\le 1$, we obtain that $\exp(\frac 14 n^{1-\gamma/4})\leq 1+\frac 12 n^{1-\gamma/4}$ when $\gamma>4$ for sufficiently large $n$.
\end{proof}

\subsection{Second-moment formulas and simultaneous diagonalization}\label{ap:k2-diag}

The goal of this subsection is to turn the calculation of the second-order relative error into a scalar eigenvalue comparison. By~\cref{lem:relative-error-opnorm}, the
relative-error estimate is reduced to bounding operator-norm gaps between moment operators. Directly estimating these operator norms is inconvenient. Instead, we show that the relevant second-moment operators are simultaneously diagonalizable in a common tensor-product basis. Once this is done, each operator-norm gap is
the maximum absolute difference between the corresponding eigenvalues. The rest of the second-order analysis will therefore be reduced to bounding explicit scalar expressions.

We implement this reduction in two steps: first, we identify a common diagonalizing basis for the three
second-moment operators $\mb M_{\mc E_{\mathrm{shallow}}^{(\gamma)}}^{(2)},
\mb M_{\mc E_{\mathrm{phase}}}^{(2)},
\mb M_{\mc E_{\mathrm{Haar}}}^{(2)}$. Second, we compute the eigenvalues of the corresponding moment operators in this basis.

We first recall the three second-moment formulas. From~\cref{lem:sparse-second-moment}, the
second moment of the shallow phase ensemble is
\begin{equation}\label{eq:k2-shallow-moment}
\mb{M}_{\mc E_{\mathrm{shallow}}^{(\gamma)}}^{(2)}
=
2^{-2n}
\sum_{I_i\sqcup I_j\sqcup I_k=[n]}
\left(1-\frac{2\gamma \log n}{n}\right)^{|I_j||I_k|}
\widetilde{\Delta}_2^{\otimes I_i}
\otimes
(\id_4-\widetilde{\Delta}_2)^{\otimes I_j}
\otimes
(\sw_2-\widetilde{\Delta}_2)^{\otimes I_k}.
\end{equation}
Likewise, from~\cref{prop:phase-moments}, the second moment of the dense phase ensemble is
\begin{equation}\label{eq:k2-phase-moment}
\mb{M}_{\mc E_{\mathrm{phase}}}^{(2)}
=
2^{-2n}
\left(
\id_4^{\otimes n}
+
\sw_2^{\otimes n}
-
\widetilde{\Delta}_2^{\otimes n}
\right).
\end{equation}
Finally, the Haar second moment is
\begin{equation}\label{eq:k2-haar-moment}
\mb M_{\mc E_{\mathrm{Haar}}}^{(2)}
=
\frac{1}{D(D+1)}
\left(
\id_4^{\otimes n}
+
\sw_2^{\otimes n}
\right).
\end{equation}

\begin{lemma}[Common diagonal basis for the second moments]
\label{lem:k2-common-diagonal-basis}
Define the single-qubit two-copy basis
\begin{equation}\label{eq:basis}
\mc B_2
:=
\left\{
\ket{0,0},
\ket{1,1},
\frac{\ket{0,1}+\ket{1,0}}{\sqrt2},
\frac{\ket{0,1}-\ket{1,0}}{\sqrt2}
\right\}.
\end{equation}
Then  $\mb M_{\mc E_{\mathrm{shallow}}^{(\gamma)}}^{(2)},
\mb M_{\mc E_{\mathrm{phase}}}^{(2)},
\mb M_{\mc E_{\mathrm{Haar}}}^{(2)}$
are simultaneously diagonalizable in the tensor-product basis
\(\mc B_2^{\otimes n}\).
\end{lemma}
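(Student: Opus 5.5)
The plan is to reduce the statement to a single-qubit check. Each of the three moment operators, written as in \cref{eq:k2-shallow-moment,eq:k2-phase-moment,eq:k2-haar-moment}, is a linear combination of tensor products of the local two-copy operators $\id_4$, $\sw_2$ and $\widetilde\Delta_2$. The blocks $\id_4-\widetilde\Delta_2$ and $\sw_2-\widetilde\Delta_2$ also lie in $\operatorname{span}\{\id_4,\sw_2\}$, since $\widetilde\Delta_2=(\id_4+\sw_2)/3$ by \cref{eq:twirl-Delta2}. So every local factor appearing anywhere is a linear combination of $\id_4$ and $\sw_2$.

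If every element of $\mc B_2$ is an eigenvector of both $\id_4$ and $\sw_2$, the conclusion follows. Every element of $\mc B_2^{\otimes n}$ is then a simultaneous eigenvector of each tensor-product term $A_1\otimes\cdots\otimes A_n$, with each $A_q\in\operatorname{span}\{\id_4,\sw_2\}$. Its eigenvalue is the product of the local eigenvalues. By linearity it is also an eigenvector of each of the three moment operators.

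The local check is immediate. The vectors $\ket{0,0}$, $\ket{1,1}$ and $(\ket{0,1}+\ket{1,0})/\sqrt2$ are symmetric, so they have eigenvalue $+1$ under $\sw_2$. The singlet $(\ket{0,1}-\ket{1,0})/\sqrt2$ has eigenvalue $-1$. Every vector has eigenvalue $1$ under $\id_4$. The set $\mc B_2$ is orthonormal, so $\mc B_2^{\otimes n}$ is an orthonormal basis of $(\mathbb C^4)^{\otimes n}$. The three Hermitian operators are therefore simultaneously diagonal in it.

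For later use, I will also record the local eigenvalues of the blocks. On the three symmetric vectors, $\widetilde\Delta_2$ has eigenvalue $2/3$, $\id_4-\widetilde\Delta_2$ has $1/3$, and $\sw_2-\widetilde\Delta_2$ has $1/3$. On the singlet, the three blocks have eigenvalues $0$, $1$ and $-1$ respectively.

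There is no real obstacle here. The only point needing care is to confirm that the shallow formula involves no local operator outside $\operatorname{span}\{\id_4,\sw_2\}$. This holds because the untwirled $\Delta_2$ never appears after the local Clifford average in \cref{lem:sparse-second-moment}. This span is commutative, which is exactly what fails at third order.
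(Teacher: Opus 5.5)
Your proposal is correct and follows essentially the same route as the paper. Both arguments note that every local factor lies in $\operatorname{span}\{\id_4,\sw_2\}$ (using $\widetilde\Delta_2=(\id_4+\sw_2)/3$), that $\mc B_2$ diagonalizes these with $[\sw_2]_{\mc B_2}=\mathrm{diag}(1,1,1,-1)$, and then conclude by tensoring and linearity; your recorded local eigenvalues agree with \cref{eq:k2block-basis}.
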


\begin{proof}
The basis \(\mc B_2\) separates the three-dimensional symmetric subspace from the one-dimensional antisymmetric subspace. On this basis,
\begin{equation}\label{eq:Iswap-basis}
[\id_4]_{\mc B_2}=\id_4,
\qquad
[\sw_2]_{\mc B_2}
=
\mathrm{diag}(1,1,1,-1).
\end{equation}
Since $\widetilde\Delta_2=\frac 13({\id_4+\sw_2}),$ we have
\begin{equation}\label{eq:k2block-basis}
[\widetilde{\Delta}_2]_{\mc B_2}
=
\mathrm{diag}\!\left(\frac23,\frac23,\frac23,0\right),\qquad[\id_4-\widetilde{\Delta}_2]_{\mc B_2}
=
\mathrm{diag}\!\left(\frac13,\frac13,\frac13,1\right),
\qquad
[\sw_2-\widetilde{\Delta}_2]_{\mc B_2}
=
\mathrm{diag}\!\left(\frac13,\frac13,\frac13,-1\right).
\end{equation}

Thus all local two-copy twirled blocks appearing in
\cref{eq:k2-shallow-moment,eq:k2-phase-moment,eq:k2-haar-moment}
are diagonal in the same local basis \(\mc B_2\). Since the global moment
operators are sums of tensor products of these local diagonal blocks, they are
diagonal in the tensor-product basis \(\mc B_2^{\otimes n}\).
\end{proof}

We now compute the corresponding eigenvalues. Let
\begin{equation}
    \mc B_{2,\mathrm{sym}}
:=
\left\{
\ket{0,0},
\ket{1,1},
\frac{\ket{0,1}+\ket{1,0}}{\sqrt2}
\right\},
\qquad
\ket{\psi_-}
:=
\frac{\ket{0,1}-\ket{1,0}}{\sqrt2}.
\end{equation}
For a tensor-product basis vector $\ket{\mb{z}}=\bigotimes_{q\in[n]}\ket{z_q}\in\mc B_2^{\otimes n},$ 
we denote $I_{\mathrm{sym}}^{(\mb{z})}
:=
\{q\in[n]:\ket{z_q}\in\mc B_{2,\mathrm{sym}}\}.$
Thus \(I_{\mathrm{sym}}^{(\mb{z})}\) records the sites whose local basis vector is one of the three symmetric basis vectors in $\mc B_{2,\mathrm{sym}}$, while \([n]\setminus I_{\mathrm{sym}}^{(\mb{z})}\) records the sites whose local basis vector is \(\ket{\psi_-}\). Since all local
eigenvalues above depend only on whether a site is symmetric or antisymmetric, this parameterization is sufficient to determine the eigenvalues of the three
global moment operators.

For \(\ket{\mb{z}}\in\mc B_2^{\otimes n}\), define
\[
\lambda_{\mc E_{\mathrm{phase}}}(\ket{\mb{z}})
:=
\bra{\mb{z}}
\left[\mb M_{\mc E_{\mathrm{phase}}}^{(2)}\right]_{\mc B_2^{\otimes n}}
\ket{\mb{z}},\qquad \lambda_{\mc E_{\mathrm{shallow}}^{(\gamma)}}(\ket{\mb{z}})
:=
\bra{\mb{z}}
\left[\mb M_{\mc E_{\mathrm{shallow}}^{(\gamma)}}^{(2)}\right]_{\mc B_2^{\otimes n}}
\ket{\mb{z}},
\qquad
\lambda_{\mc E_{\mathrm{Haar}}}(\ket{\mb{z}})
:=
\bra{\mb{z}}
\left[\mb M_{\mc E_{\mathrm{Haar}}}^{(2)}\right]_{\mc B_2^{\otimes n}}
\ket{\mb{z}}.
\]
\begin{lemma}[Eigenvalue formulas in the common basis]
\label{lem:k2-eigenvalues}
With the notation above, for every \(\ket{\mb z}\in\mc B_2^{\otimes n}\), the three second-moment operators in~\cref{eq:k2-shallow-moment,eq:k2-phase-moment,eq:k2-haar-moment} have the following eigenvalues:
\begin{align}
\lambda_{\mc E_{\mathrm{shallow}}^{(\gamma)}}(\ket{\mb z})
&=
2^{-2n}
\sum_{\substack{I_i\sqcup I_j\sqcup I_k=[n]\\ I_i\subseteq I_{\mathrm{sym}}^{(\mb z)}}}
\left(1-\frac{2\gamma \log n}{n}\right)^{|I_j||I_k|}
\left(\frac23\right)^{|I_i|}
\left(\frac13\right)^{|I_{\mathrm{sym}}^{(\mb z)}|-|I_i|}
(-1)^{|I_k\cap ([n]\setminus I_{\mathrm{sym}}^{(\mb z)})|},
\label{eq:k2-shallow-lambda}
\\
\lambda_{\mc E_{\mathrm{phase}}}(\ket{\mb z})
&=
2^{-2n}
\left[
1+(-1)^{n-|I_{\mathrm{sym}}^{(\mb z)}|}
-\delta_{|I_{\mathrm{sym}}^{(\mb z)}|,n}\left(\frac23\right)^n
\right],
\label{eq:k2-phase-lambda}
\\
\lambda_{\mc E_{\mathrm{Haar}}}(\ket{\mb z})
&=
\frac{1+(-1)^{n-|I_{\mathrm{sym}}^{(\mb z)}|}}{D(D+1)}.
\label{eq:k2-haar-lambda}
\end{align}
\end{lemma}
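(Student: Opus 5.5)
The plan is to use \cref{lem:k2-common-diagonal-basis}: every local two-copy operator in \cref{eq:k2-shallow-moment,eq:k2-phase-moment,eq:k2-haar-moment} is diagonal in $\mc B_2$. Hence the diagonal entry of any tensor product of such operators on $\ket{\mb z}=\bigotimes_q\ket{z_q}$ is the product of the local diagonal entries $\bra{z_q}A_q\ket{z_q}$. The local eigenvalues are read off from \cref{eq:Iswap-basis,eq:k2block-basis}. On a symmetric site ($q\in I_{\mathrm{sym}}^{(\mb z)}$) we have $\id_4\mapsto1$, $\sw_2\mapsto1$, $\widetilde\Delta_2\mapsto2/3$, $\id_4-\widetilde\Delta_2\mapsto1/3$ and $\sw_2-\widetilde\Delta_2\mapsto1/3$. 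On an antisymmetric site ($\ket{\psi_-}$) we have $\id_4\mapsto1$, $\sw_2\mapsto-1$, $\widetilde\Delta_2\mapsto0$, $\id_4-\widetilde\Delta_2\mapsto1$ and $\sw_2-\widetilde\Delta_2\mapsto-1$.

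\textbf{Shallow eigenvalue.} Fix a partition $I_i\sqcup I_j\sqcup I_k=[n]$ and compute the product of local entries of the corresponding term of \cref{eq:k2-shallow-moment}.
\begin{itemize}
\item If $I_i$ contains an antisymmetric site, the factor $0$ kills the term. This gives the restriction $I_i\subseteq I_{\mathrm{sym}}^{(\mb z)}$.
\item Otherwise, the sites of $I_i$ contribute $(2/3)^{|I_i|}$.
\item The remaining symmetric sites, which lie in $I_j\cup I_k$ and number $|I_{\mathrm{sym}}^{(\mb z)}|-|I_i|$, each contribute $1/3$.
\item The antisymmetric sites contribute $1$ if they lie in $I_j$ and $-1$ if they lie in $I_k$. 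Their total is $(-1)^{|I_k\cap([n]\setminus I_{\mathrm{sym}}^{(\mb z)})|}$.
\end{itemize}
Multiplying by the scalar weight $(1-2p)^{|I_j||I_k|}$ and $2^{-2n}$, and summing over partitions, yields \cref{eq:k2-shallow-lambda} directly.

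\textbf{Phase and Haar eigenvalues.} For \cref{eq:k2-phase-moment}, take the diagonal entry of each of the three terms separately.
\begin{itemize}
\item $\id_4^{\otimes n}$ gives $1$.
\item $\sw_2^{\otimes n}$ gives $(-1)^{n-|I_{\mathrm{sym}}^{(\mb z)}|}$.
\item $\widetilde\Delta_2^{\otimes n}$ gives $(2/3)^{|I_{\mathrm{sym}}^{(\mb z)}|}\cdot 0^{\,n-|I_{\mathrm{sym}}^{(\mb z)}|}$, which equals $\delta_{|I_{\mathrm{sym}}^{(\mb z)}|,n}(2/3)^n$.
\end{itemize}
This is \cref{eq:k2-phase-lambda}. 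The Haar formula \cref{eq:k2-haar-lambda} follows in the same way from the first two items with prefactor $1/[D(D+1)]$.

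As a consistency check, \cref{eq:k2-shallow-lambda} at $p=1/2$ (with $0^0=1$) should reduce to \cref{eq:k2-phase-lambda}. It reduces to the same collapse of partitions to $I_j=\emptyset$ or $I_k=\emptyset$ used in \cref{prop:phase-moments}.

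\textbf{Main obstacle.} There is no genuine difficulty. The only care needed is bookkeeping: the vanishing of $\widetilde\Delta_2$ on $\ket{\psi_-}$, which produces the constraint $I_i\subseteq I_{\mathrm{sym}}^{(\mb z)}$, and the sign tracking on antisymmetric sites. Both are immediate from the local diagonal forms.
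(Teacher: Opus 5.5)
Your proposal is correct and follows essentially the same route as the paper. Both work in the common diagonal basis $\mc B_2^{\otimes n}$, factorize each diagonal entry over sites using the local values $(2/3,1/3,1/3)$ on symmetric sites and $(0,1,-1)$ on $\ket{\psi_-}$, obtain the constraint $I_i\subseteq I_{\mathrm{sym}}^{(\mb z)}$ from the vanishing of $\widetilde\Delta_2$ on the antisymmetric vector, and evaluate the dense-phase and Haar formulas term by term exactly as you do.
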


\begin{proof}
We first compute the eigenvalue $\lambda_{\mc E_{\mathrm{phase}}}(\ket{\mb{z}})$. Using~\cref{eq:Iswap-basis}, the identity term contributes $\bra{\mb{z}}[\id_4^{\otimes n}]_{\mc B_2^{\otimes n}}\ket{\mb{z}}=1$, and the swap term is $\bra{\mb{z}}[\sw_2^{\otimes n}]_{\mc B_2^{\otimes n}}\ket{\mb{z}}
=
(-1)^{n-|I_{\mathrm{sym}}^{(\mb{z})}|}.$
Besides, we obtain from~\cref{eq:k2block-basis} that
\[
\bra{\mb{z}}[\widetilde\Delta_2^{\otimes n}]_{\mc B_2^{\otimes n}}\ket{\mb{z}}
=
\begin{cases}
\left(\frac23\right)^n, & I_{\mathrm{sym}}^{(\mb{z})}=[n],\\
0, & \text{otherwise}.
\end{cases}
\]
Substituting these three contributions into~\cref{eq:k2-phase-moment} gives
\begin{equation}\label{eq:k2-phase-lambda-proof}
\lambda_{\mc E_{\mathrm{phase}}}(\ket{\mb{z}})
=
2^{-2n}\left[
1+(-1)^{n-|I_{\mathrm{sym}}^{(\mb{z})}|}
-\delta_{|I_{\mathrm{sym}}^{(\mb{z})}|,n}
\left(\frac23\right)^n
\right],
\end{equation}
which proves~\cref{eq:k2-phase-lambda}. Similarly,
\begin{equation}\label{eq:k2-haar-lambda-proof}
\lambda_{\mc E_{\mathrm{Haar}}}(\ket{\mb{z}})
=
\frac{\bra{\mb{z}}(\id_4^{\otimes n}+\sw_2^{\otimes n})\ket{\mb{z}}}{D(D+1)}
=
\frac{1+(-1)^{n-|I_{\mathrm{sym}}^{(\mb{z})}|}}{D(D+1)},
\end{equation}
which proves~\cref{eq:k2-haar-lambda}.

We next compute the eigenvalue $\lambda_{\mc E_{\mathrm{shallow}}^{(\gamma)}}(\ket{\mb{z}})$. By~\cref{eq:k2-shallow-moment},
\begin{equation}\label{eq:sparselambda-deriv-1}
\begin{split}
\lambda_{\mc E_{\mathrm{shallow}}^{(\gamma)}}(\ket{\mb{z}})
=
2^{-2n}
\sum_{I_i\sqcup I_j\sqcup I_k=[n]}
\left(1-\frac{2\gamma \log n}{n}\right)^{|I_j||I_k|}\times
\bra{\mb{z}}
\left[
\widetilde{\Delta}_2^{\otimes I_i}
\otimes
(\id_4-\widetilde{\Delta}_2)^{\otimes I_j}
\otimes
(\sw_2-\widetilde{\Delta}_2)^{\otimes I_k}
\right]_{\mc B_2^{\otimes n}}
\ket{\mb{z}} .
\end{split}
\end{equation}
Write \(\ket{\mb{z}}=\bigotimes_{q\in[n]}\ket{z_q}\). If \(q\in I_{\mathrm{sym}}^{(\mb{z})}\), then
\begin{equation}\label{eq:eigs-Sz}
\bra{z_q}[\widetilde{\Delta}_2]_{\mc B_2}\ket{z_q}=\frac23,
\qquad
\bra{z_q}[\id_4-\widetilde{\Delta}_2]_{\mc B_2}\ket{z_q}=\frac13,
\qquad
\bra{z_q}[\sw_2-\widetilde{\Delta}_2]_{\mc B_2}\ket{z_q}=\frac13.
\end{equation}
If \(q\in[n]\setminus I_{\mathrm{sym}}^{(\mb{z})}\), then
\begin{equation}\label{eq:eigs-Id}
\bra{z_q}[\widetilde{\Delta}_2]_{\mc B_2}\ket{z_q}=0,
\qquad
\bra{z_q}[\id_4-\widetilde{\Delta}_2]_{\mc B_2}\ket{z_q}=1,
\qquad
\bra{z_q}[\sw_2-\widetilde{\Delta}_2]_{\mc B_2}\ket{z_q}=-1.
\end{equation}
Thus, the matrix element in~\cref{eq:sparselambda-deriv-1} factorizes over qubits. If \(I_i\cap([n]\setminus I_{\mathrm{sym}}^{(\mb{z})})\neq\emptyset\), then this matrix
element vanishes because \(\widetilde\Delta_2\) has eigenvalue \(0\) in~\cref{eq:eigs-Id}. Hence, only terms with \(I_i\subseteq I_{\mathrm{sym}}^{(\mb{z})}\) survive.
For such terms, using~\cref{eq:eigs-Sz,eq:eigs-Id}, we obtain
\[
\begin{split}
&
\bra{\mb{z}}
\left[
\widetilde{\Delta}_2^{\otimes I_i}
\otimes
(\id_4-\widetilde{\Delta}_2)^{\otimes I_j}
\otimes
(\sw_2-\widetilde{\Delta}_2)^{\otimes I_k}
\right]_{\mc B_2^{\otimes n}}
\ket{\mb{z}} =
\left(\frac23\right)^{|I_i|}
\left(\frac13\right)^{|I_j\cap I_{\mathrm{sym}}^{(\mb{z})}|}
\left(\frac13\right)^{|I_k\cap I_{\mathrm{sym}}^{(\mb{z})}|}
(-1)^{|I_k\cap([n]\setminus I_{\mathrm{sym}}^{(\mb{z})})|}.
\end{split}
\]
Since \(I_i\sqcup I_j\sqcup I_k=[n]\) and \(I_i\subseteq I_{\mathrm{sym}}^{(\mb{z})}\),
\[
|I_j\cap I_{\mathrm{sym}}^{(\mb{z})}|
+
|I_k\cap I_{\mathrm{sym}}^{(\mb{z})}|
+
|I_i|
=
|I_{\mathrm{sym}}^{(\mb{z})}|.
\]
Substituting this into~\cref{eq:sparselambda-deriv-1} gives
\cref{eq:k2-shallow-lambda}.
\end{proof}

Therefore, the two operator-norm comparisons in~\cref{eq:k2-2moment} reduce to
the scalar eigenvalue comparisons
\[
\left\|
\mb M_{\mc E_{\mathrm{shallow}}^{(\gamma)}}^{(2)}
-
\mb M_{\mc E_{\mathrm{phase}}}^{(2)}
\right\|_\infty
=
\max_{\ket{\mb{z}}\in\mc B_2^{\otimes n}}
\left|
\lambda_{\mc E_{\mathrm{shallow}}^{(\gamma)}}(\ket{\mb{z}})
-
\lambda_{\mc E_{\mathrm{phase}}}(\ket{\mb{z}})
\right|,
\]
and
\[
\left\|
\mb M_{\mc E_{\mathrm{phase}}}^{(2)}
-
\mb M_{\mc E_{\mathrm{Haar}}}^{(2)}
\right\|_\infty
=
\max_{\ket{\mb{z}}\in\mc B_2^{\otimes n}}
\left|
\lambda_{\mc E_{\mathrm{phase}}}(\ket{\mb{z}})
-
\lambda_{\mc E_{\mathrm{Haar}}}(\ket{\mb{z}})
\right|.
\]
These scalar comparisons are carried out in the next two subsections.

\subsection{The norm gap between $\mc E_{\mathrm{phase}}$ and $\mc E_{\mathrm{Haar}}$}\label{ap:k2-phase-haar}

We next compare the second moment of $\mc E_{\mathrm{phase}}$ with the Haar second moment. This part is simpler, since both eigenvalue formulas are already explicit in the common basis $\mc B_2^{\otimes n}$.

\begin{prop}[Second-moment gap between the dense phase and Haar ensembles]
\label{prop:k2-phase-haar-norm}
One has
\begin{equation}
\left\|
\mb M_{\mc E_{\mathrm{phase}}}^{(2)}
-
\mb M_{\mc E_{\mathrm{Haar}}}^{(2)}
\right\|_\infty
\le
D^{-2}\left(\frac23\right)^n.
\end{equation}
In particular, the ratio of this gap to \(D^{-2}\) decays exponentially
with \(n\).
\end{prop}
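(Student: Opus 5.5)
By \cref{lem:k2-common-diagonal-basis}, both moment operators are diagonal in the common tensor-product basis \(\mc B_2^{\otimes n}\). The operator norm of their difference is therefore
\[
\max_{\ket{\mb z}\in\mc B_2^{\otimes n}}\bigl|\lambda_{\mc E_{\mathrm{phase}}}(\ket{\mb z})-\lambda_{\mc E_{\mathrm{Haar}}}(\ket{\mb z})\bigr|.
\]
We plan to evaluate this maximum directly from the explicit formulas \cref{eq:k2-phase-lambda,eq:k2-haar-lambda} in \cref{lem:k2-eigenvalues}. Both eigenvalues depend on \(\ket{\mb z}\) only through \(s:=|I_{\mathrm{sym}}^{(\mb z)}|\), so we split into three cases according to \(s\).

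\textbf{Case 1: \(n-s\) odd.} Here \(1+(-1)^{n-s}=0\). Since \(s\neq n\), the Kronecker delta in \cref{eq:k2-phase-lambda} also vanishes. Hence both eigenvalues are zero and the gap is zero.

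\textbf{Case 2: \(n-s\) even, \(s<n\).} The phase eigenvalue is \(2D^{-2}\) and the Haar eigenvalue is \(2/[D(D+1)]\). Their difference is
\[
\frac{2}{D^{2}}-\frac{2}{D(D+1)}=\frac{2}{D^2(D+1)}=D^{-2}\cdot\frac{2}{D+1}.
\]

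\textbf{Case 3: \(s=n\).} The phase eigenvalue is \(D^{-2}\bigl(2-(2/3)^n\bigr)\), so the difference is
\[
D^{-2}\Bigl(\frac{2}{D+1}-\Bigl(\tfrac23\Bigr)^n\Bigr).
\]
Both terms in the bracket are positive and bounded by \((2/3)^n\), the first because \(2/(D+1)\le 2^{1-n}\le(2/3)^n\) for \(n\ge 2\). Their difference therefore has absolute value at most \((2/3)^n\).

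Combining the three cases, the maximal absolute gap is at most \(D^{-2}\max\{2/(D+1),(2/3)^n\}\). This is bounded by \(D^{-2}(2/3)^n\) because \(2/(2^n+1)\le(2/3)^n\) for all \(n\ge1\): at \(n=1\) both sides equal \(2/3\), and for \(n\ge2\) we have \(2/(2^n+1)<2^{1-n}\le(2/3)^n\), the latter since \((4/3)^n\ge 2\).

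The only step needing any care is this final elementary comparison between the Haar normalization correction and the \((2/3)^n\) term, in particular that it holds uniformly in \(n\), including the boundary case \(n=1\). The rest is bookkeeping with the eigenvalue formulas. The claim that the ratio to \(D^{-2}\) decays exponentially follows immediately from the bound.
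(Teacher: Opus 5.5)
Your argument follows the paper's proof step for step. You diagonalize simultaneously in $\mathcal B_2^{\otimes n}$, split into the same three cases by $|I_{\mathrm{sym}}^{(\mathbf z)}|$, and end by comparing $2/(D+1)$ with $(2/3)^n$.

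The one slip is in that final comparison, the step you flagged as needing care. The inequality $(4/3)^n\ge 2$ holds only for $n\ge 3$, because $(4/3)^2=16/9<2$. So your chain $2/(2^n+1)<2^{1-n}\le(2/3)^n$ fails at $n=2$, where $2^{1-n}=1/2>4/9$. You reuse the same chain in your Case 3, so it breaks there too at $n=2$.

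The target inequality itself still holds at $n=2$, since $2/5\le 4/9$. You can repair the proof in either of two ways:
\begin{itemize}
\item check $n=2$ by hand, or
\item follow the paper: $2/(2^n+1)\le(2/3)^n$ is equivalent to $2\cdot 3^n\le 4^n+2^n$, which holds for every $n\ge 1$ by convexity of $x\mapsto x^n$, since $3^n\le(4^n+2^n)/2$.
\end{itemize}
With that fix your proof is complete.
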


\begin{proof}
Since both operators are diagonal in $\mc B_2^{\otimes n}$, it suffices to compare their eigenvalues.
From \cref{eq:k2-phase-lambda,eq:k2-haar-lambda}, for all basis states
$\ket{\mb z}$, there are three cases:
\[
\lambda_{\mc E_{\mathrm{Haar}}}(\ket{\mb z})
=
\begin{cases}
\displaystyle \frac{2}{D(D+1)}, & |I_{\mathrm{sym}}^{(\mb z)}|=n,\\[6pt]
\displaystyle \frac{2}{D(D+1)}, & n-|I_{\mathrm{sym}}^{(\mb z)}|>0 \text{ even},\\[6pt]
0, & n-|I_{\mathrm{sym}}^{(\mb z)}| \text{ odd},
\end{cases}
\]
while
\[
\lambda_{\mc E_{\mathrm{phase}}}(\ket{\mb z})
=
\begin{cases}
\displaystyle D^{-2}\left(2-\left(\frac23\right)^n\right), & |I_{\mathrm{sym}}^{(\mb z)}|=n,\\[6pt]
\displaystyle 2D^{-2}, & n-|I_{\mathrm{sym}}^{(\mb z)}|>0 \text{ even},\\[6pt]
0, & n-|I_{\mathrm{sym}}^{(\mb z)}| \text{ odd}.
\end{cases}
\]
Thus the eigenvalue difference is
\[
\left|
\lambda_{\mc E_{\mathrm{phase}}}(\ket{\mb z})-\lambda_{\mc E_{\mathrm{Haar}}}(\ket{\mb z})
\right|
=
\begin{cases}
\displaystyle
\displaystyle D^{-2}
\left|
\left(\frac23\right)^n-\frac{2}{D+1}
\right|, & |I_{\mathrm{sym}}^{(\mb z)}|=n,\\[10pt]
\displaystyle
\displaystyle D^{-2}\frac{2}{D+1},
& n-|I_{\mathrm{sym}}^{(\mb z)}|>0 \text{ even},\\[10pt]
0, & n-|I_{\mathrm{sym}}^{(\mb z)}| \text{ odd}.
\end{cases}
\]
For every \(n\ge1\),
\begin{equation}\label{eq:k2-explicit-constant}
\frac{2}{D+1}
=
\frac{2}{2^n+1}
\le
\left(\frac23\right)^n.
\end{equation}
Indeed, the convexity of \(x\mapsto x^n\) on \((0,\infty)\) gives
\(3^n\le(4^n+2^n)/2\), which is equivalent to~\cref{eq:k2-explicit-constant}.
It follows that each of the two nonzero differences above is at most
\(D^{-2}(2/3)^n\). Therefore,
\begin{equation}
\max_{\ket{\mb z}\in\mc B_2^{\otimes n}}
\left|
\lambda_{\mc E_{\mathrm{phase}}}(\ket{\mb z})-\lambda_{\mc E_{\mathrm{Haar}}}(\ket{\mb z})
\right|
\le
D^{-2}\left(\frac23\right)^n,
\end{equation}
which proves the claim.
\end{proof}

\subsection{The norm gap between $\mc E_{\mathrm{shallow}}^{(\gamma)}$ and $\mc E_{\mathrm{phase}}$}\label{ap:k2-shallow-phase}

In this subsection, we bound the operator norm of
$
\mb M_{\mc E_{\mathrm{shallow}}^{(\gamma)}}^{(2)}
-
\mb M_{\mc E_{\mathrm{phase}}}^{(2)}
$ as follows.

\begin{prop}[Second-moment gap between the shallow and dense ensembles]
\label{prop:k2-shallow-phase-norm}
For every fixed $\gamma>4$ and all sufficiently large \(n\),
\begin{equation}
\left\|
\mb M_{\mc E_{\mathrm{shallow}}^{(\gamma)}}^{(2)}
-
\mb M_{\mc E_{\mathrm{phase}}}^{(2)}
\right\|_\infty
\le
D^{-2}\Bigl(n^{1-\gamma/4}+\mathrm{negl}(n)\Bigr).
\end{equation}
\end{prop}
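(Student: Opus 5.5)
The plan is to turn the operator-norm gap into a scalar estimate. By \cref{lem:k2-common-diagonal-basis}, both operators are diagonal in $\mc B_2^{\otimes n}$, so the gap equals $\max_{\ket{\mb z}}|\lambda_{\mc E_{\mathrm{shallow}}^{(\gamma)}}(\ket{\mb z})-\lambda_{\mc E_{\mathrm{phase}}}(\ket{\mb z})|$. Fix $\ket{\mb z}$ and set $s:=|I_{\mathrm{sym}}^{(\mb z)}|$.

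\textbf{Step 1: isolate the terms that differ.} By \cref{prop:phase-moments}, the dense moment is the $p=1/2$ specialization of \cref{lem:sparse-second-moment}. It therefore keeps exactly the terms with $I_j=\emptyset$ or $I_k=\emptyset$, and these carry coefficient $1$ in both ensembles. Consequently $D^2$ times the eigenvalue difference equals the sum in \cref{eq:k2-shallow-lambda} restricted to $I_j\neq\emptyset\neq I_k$, with $I_i\subseteq I_{\mathrm{sym}}^{(\mb z)}$.

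\textbf{Step 2: bound each term by a product of two factors.} In each such term the site factors satisfy the following bounds: $2/3$ on $I_i$, $1/3$ on symmetric sites outside $I_i$, and modulus $1$ on antisymmetric sites. Fix $|I_i|=i$ and write $r:=n-i$. Since $1/3\le 1$, each term is bounded in modulus by $(2/3)^i(1/3)^{s-i}\,a^{|I_j|(r-|I_j|)}$, where $a:=1-2p$. This gives
\[
D^2\,|\lambda_{\mathrm{shallow}}-\lambda_{\mathrm{phase}}|\le\sum_{i=0}^{s}\binom{s}{i}\Bigl(\tfrac23\Bigr)^i\Bigl(\tfrac13\Bigr)^{s-i}\,S_r,\qquad S_r:=\sum_{j=1}^{r-1}\binom{r}{j}a^{j(r-j)}.
\]
Repeating the symmetric doubling argument of \cref{lem:comb_bound}, but now excluding $j=0$, yields $S_r\le 2\bigl[(1+a^{r/2})^r-1\bigr]$. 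Using $a\le e^{-2p}$ we have $a^{r/2}\le n^{-\gamma r/n}$. The outer weights are binomial probabilities $\Pr[\mathrm{Bin}(s,2/3)=i]$ and hence sum to at most $1$.

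\textbf{Step 3: split into two regimes of $r$.} If $r\ge n/4$, \cref{lem:sub_exp}(B) with $\gamma>4$ gives $S_r\le n^{1-\gamma/4}$. Since the outer weights sum to at most $1$, this regime contributes at most $n^{1-\gamma/4}$.

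If $r<n/4$, then $i>3n/4$. The outer weight of this regime is at most $\Pr[\mathrm{Bin}(s,2/3)>3n/4]$. Because $s\le n$, the mean is at most $2n/3$, so a Chernoff bound makes this probability $e^{-\Omega(n)}$. By \cref{lem:sub_exp}(A), $S_r\le 2e^{o(n)}$ uniformly in $r$. The product is $e^{-\Omega(n)}$, which is $\mathrm{negl}(n)$. Adding the two regimes gives the claimed bound uniformly in $\ket{\mb z}$.

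The main obstacle is the small-$r$ regime. There the damping $a^{j(r-j)}$ is weak and the combinatorial sum is only controlled up to $e^{o(n)}$. The argument works only because the Clifford-twirl weights $(2/3)^i(1/3)^{s-i}$ form a binomial distribution whose upper tail beyond $3n/4$ is exponentially small. One should check that this exponential decay beats $\exp(n/(\gamma e\log n))$ from \cref{lem:sub_exp}(A); it does, since $n/\log n=o(n)$.
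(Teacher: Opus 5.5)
Your proposal is correct and follows the paper's proof essentially step for step: simultaneous diagonalization in $\mc B_2^{\otimes n}$, restriction of the difference to $I_j\neq\emptyset\neq I_k$, the bound $\sum_i\binom{s}{i}(2/3)^i(1/3)^{s-i}S_r$ with $S_r\le 2[(1+n^{-\gamma r/n})^r-1]$, and the split at $r=n/4$ using both parts of \cref{lem:sub_exp}. The differences are only cosmetic. You count the pairs $(I_j,I_k)$ directly rather than through the $a_1,\dots,a_5$ parametrization and Vandermonde's identity. You also bound the binomial tail $\Pr[\mathrm{Bin}(s,2/3)>3n/4]$ by a Chernoff/Hoeffding estimate, where the paper uses the relative-entropy bound $2^{-n_1D_2(x\|2/3)}$.
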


\begin{proof}
Since the two second-moment operators are simultaneously diagonalizable in
$\mc B_2^{\otimes n}$, their operator norm difference is the maximal eigenvalue difference:
\begin{equation}\label{eq:k2-opnorm-reduce}
\left\|
\mb M_{\mc E_{\mathrm{shallow}}^{(\gamma)}}^{(2)}
-
\mb M_{\mc E_{\mathrm{phase}}}^{(2)}
\right\|_\infty
=
\max_{\ket{\mb{z}}\in \mc B_2^{\otimes n}}
\left|
\lambda_{\mc E_{\mathrm{shallow}}^{(\gamma)}}(\ket{\mb{z}})
-
\lambda_{\mc E_{\mathrm{phase}}}(\ket{\mb{z}})
\right|.
\end{equation}

For convenience, we introduce the following notation. Let
\(n_1:=|I_{\mathrm{sym}}^{(\mb{z})}|\) and
\(n_2:=|[n]\setminus I_{\mathrm{sym}}^{(\mb{z})}|=n-n_1\).
We reparameterize the ordered partition \(I_i\sqcup I_j\sqcup I_k=[n]\) by
\begin{equation}\label{eq:a-def}
a_1:=|I_i|,
\qquad
a_2:=|I_j\cap I_{\mathrm{sym}}^{(\mb{z})}|,
\qquad
a_3:=|I_k\cap I_{\mathrm{sym}}^{(\mb{z})}|,
\qquad
a_4:=|I_j\cap([n]\setminus I_{\mathrm{sym}}^{(\mb{z})})|,
\qquad
a_5:=|I_k\cap([n]\setminus I_{\mathrm{sym}}^{(\mb{z})})|.
\end{equation}
These parameters satisfy $a_1+a_2+a_3=n_1, 
a_4+a_5=n_2.$
For fixed values \(a_1,\ldots,a_5\), the number of choices of
\(I_i,I_j,I_k\) is $\binom{n_1}{a_1}\binom{n_1-a_1}{a_2}\binom{n_2}{a_4}.$
Indeed, we first choose \(I_i\subseteq I_{\mathrm{sym}}^{(\mb{z})}\), then choose
\(I_j\cap I_{\mathrm{sym}}^{(\mb{z})}\) from the remaining sites in \(I_{\mathrm{sym}}^{(\mb{z})}\), and finally choose
\(I_j\cap([n]\setminus I_{\mathrm{sym}}^{(\mb{z})})\). All remaining sites are assigned to \(I_k\). The number of partitions $(I_i,I_j,I_k)$ with the parameters satisfying~\cref{eq:a-def} is therefore $\binom{n_1}{a_1}
\binom{n_1-a_1}{a_2}
\binom{n_2}{a_4}$.
In this way, we can rewrite~\cref{eq:k2-shallow-lambda} as 
\begin{equation}
\begin{split}
    \lambda_{\mc E_{\mathrm{shallow}}^{(\gamma)}}(\ket{\mb{z}})=
2^{-2n}
\sum_{\substack{
a_1+a_2+a_3=n_1\\
a_4+a_5=n_2
}}\binom{n_1}{a_1}
\binom{n_1-a_1}{a_2}
\binom{n_2}{a_4}
\left(1-\frac{2\gamma \log n}{n}\right)^{(a_2+a_4)(a_3+a_5)}
\left(\frac23\right)^{a_1}
\left(\frac13\right)^{a_2+a_3}
(-1)^{a_5},
\end{split}
\label{eq:k2-shallow-lambda-caseB}
\end{equation}
The contribution of $\mc E_{\mathrm{phase}}$ in~\cref{eq:k2-shallow-lambda-caseB} is the part of the expression with
\(|I_j||I_k|=0\), namely \(a_2+a_4=0\) or \(a_3+a_5=0\). Hence, after subtracting
the eigenvalues of $\mc E_{\mathrm{phase}}$, only the terms with $a_2+a_4>0,
a_3+a_5>0$ remain. Therefore,
\begin{equation}\label{eq:caseB-signed}
\begin{split}
&D^2
\left(
\lambda_{\mc E_{\mathrm{shallow}}^{(\gamma)}}(\ket{\mb{z}})
-
\lambda_{\mc E_{\mathrm{phase}}}(\ket{\mb{z}})
\right)
=
\sum_{\substack{
a_1+a_2+a_3=n_1\\
a_4+a_5=n_2\\
a_2+a_4>0,\ a_3+a_5>0
}}
\binom{n_1}{a_1}
\binom{n_1-a_1}{a_2}
\binom{n_2}{a_4}
\left(\frac23\right)^{a_1}
\left(\frac13\right)^{a_2+a_3}(-1)^{a_5}\times
\left(1-\frac{2\gamma\log n}{n}\right)^{(a_2+a_4)(a_3+a_5)}.
\end{split}
\end{equation}
Discarding the sign \((-1)^{a_5}\) gives
\begin{equation}\label{eq:caseB-abs}
\begin{split}
&D^2
\left|
\lambda_{\mc E_{\mathrm{shallow}}^{(\gamma)}}(\ket{\mb{z}})
-
\lambda_{\mc E_{\mathrm{phase}}}(\ket{\mb{z}})
\right|\leq \sum_{\substack{
a_1+a_2+a_3=n_1\\
a_4+a_5=n_2\\
a_2+a_4>0,\ a_3+a_5>0
}}
\binom{n_1}{a_1}
\binom{n_1-a_1}{a_2}
\binom{n_2}{a_4}
\left(\frac23\right)^{a_1}
\left(\frac13\right)^{a_2+a_3}
\left(1-\frac{2\gamma\log n}{n}\right)^{(a_2+a_4)(a_3+a_5)}.
\\
&=
3^{-n_1}
\sum_{\substack{
a_1+a_2+a_3=n_1\\
a_4+a_5=n_2\\
a_2+a_4>0,\ a_3+a_5>0
}}
\binom{n_1}{a_1}
\binom{n_1-a_1}{a_2}
\binom{n_2}{a_4}
2^{a_1}
\left(1-\frac{2\gamma\log n}{n}\right)^{(a_2+a_4)(a_3+a_5)}.
\end{split}
\end{equation}
Now set \(b:=a_2+a_4\) and \(r:=n-a_1\). Then
\(a_3+a_5=r-b\), and the two positivity conditions become
\(1\le b\le r-1\). For fixed \(a_1\) and \(b\), Vandermonde's identity,
with the admissible ranges made explicit, gives
\begin{equation}\label{eq:k2-vandermonde-identity}
\sum_{\substack{a_2+a_4=b\\
0\le a_2\le n_1-a_1\\
0\le a_4\le n_2}}
\binom{n_1-a_1}{a_2}
\binom{n_2}{a_4}
=
\binom{n-a_1}{b}.
\end{equation}
If \(r\in\{0,1\}\), there is no admissible \(b\), so the corresponding
contribution vanishes. Hence only
\(0\le a_1\le\min\{n_1,n-2\}\) needs to be retained. Applying
\cref{eq:k2-vandermonde-identity} to~\cref{eq:caseB-abs} gives
\begin{equation}\label{eq:caseB-vandermonde}
\begin{split}
&D^2
\left|
\lambda_{\mc E_{\mathrm{shallow}}^{(\gamma)}}(\ket{\mb{z}})
-
\lambda_{\mc E_{\mathrm{phase}}}(\ket{\mb{z}})
\right|
\le
3^{-n_1}
\sum_{a_1=0}^{\min\{n_1,n-2\}}
\binom{n_1}{a_1}2^{a_1}
\sum_{b=1}^{r-1}
\binom{r}{b}
\left(1-\frac{2\gamma\log n}{n}\right)^{b(r-b)}
\\
&\le
2\sum_{a_1=0}^{\min\{n_1,n-2\}}
3^{-n_1}
\binom{n_1}{a_1}
2^{a_1}
\left(
\left[
1+
\left(1-\frac{2\gamma\log n}{n}\right)^{\frac{r}{2}}
\right]^r
-1
\right)
\qquad
\text{(by~\cref{lem:comb_bound})}
\\
&\le
2\sum_{a_1=0}^{\min\{n_1,n-2\}}
3^{-n_1}
\binom{n_1}{a_1}
2^{a_1}
\left(
\left[
1+n^{-\frac{\gamma r}{n}}
\right]^r
-1
\right).
\end{split}
\end{equation}
In the last step, we used
\[
\left(1-\frac{2\gamma\log n}{n}\right)^{r/2}
\le
\exp\!\left(-\frac{\gamma r\log n}{n}\right)
=
n^{-\frac{\gamma r}{n}}.
\]

We split the last sum in~\cref{eq:caseB-vandermonde} according to the
effective size \(n-a_1\). 

First, consider the summation in the range
\(n-a_1\le n/4\), equivalently \(a_1\ge 3n/4\). This range is empty unless
\(n_1\ge 3n/4\). We use the binary entropy and binary relative entropy
\[
H_2(x):=-x\log_2x-(1-x)\log_2(1-x),
\qquad
D_2(x\|y):=x\log_2\frac{x}{y}+(1-x)\log_2\frac{1-x}{1-y},
\]
with the usual continuous conventions at \(x=0,1\). For every contributing
\(a_1\), let \(x:=a_1/n_1\ge 3/4\). Using the standard entropy bound
\(\binom{n_1}{a_1}\le 2^{n_1H_2(x)}\), we obtain
\[
\begin{split}
3^{-n_1}\binom{n_1}{a_1}2^{a_1}=
\binom{n_1}{a_1}
\left(\frac23\right)^{xn_1}
\left(\frac13\right)^{(1-x)n_1}\le
2^{n_1H_2(x)}
\left(\frac23\right)^{xn_1}
\left(\frac13\right)^{(1-x)n_1}=
2^{-n_1D_2(x\|2/3)}.
\end{split}
\]
Moreover, \(D_2(x\|2/3)\) is increasing for \(x\ge 2/3\). Since
\(x\ge3/4\), it follows that $2^{-n_1D_2(x\|2/3)}
\le
2^{-n_1D_2(3/4\|2/3)}.$
 Since there are at most \(n_1+1\) terms and
\(n_1\ge3n/4\),
\begin{equation}\label{eq:caseB-small-effective-weight}
\sum_{\substack{0\le a_1\le n_1\\ n-a_1\le n/4}}
3^{-n_1}\binom{n_1}{a_1}2^{a_1}
\le
(n_1+1)2^{-\frac{3n}{4}D_2(3/4\|2/3)}
\le
e^{-n/200}
\end{equation}
for sufficiently large \(n\). By part~(A) of~\cref{lem:sub_exp}, uniformly over this range,
\[
\left[
1+n^{-\frac{\gamma(n-a_1)}{n}}
\right]^{n-a_1}
= \exp(\frac{n}{\gamma e\log n}).
\]
Therefore, the contribution of the range \(n-a_1\le n/4\) to
\cref{eq:caseB-vandermonde} is bounded by $2e^{-n/200}\exp(\frac{n}{\gamma e\log n})
=
\mathrm{negl}(n).$

It remains to consider the range \(n-a_1>n/4\). By part~(B) of
\cref{lem:sub_exp}, for \(\gamma>4\) and sufficiently large \(n\), we have
\[
\left[
1+n^{-\frac{\gamma(n-a_1)}{n}}
\right]^{n-a_1}
-1
\le
\frac12 n^{1-\gamma/4}.
\]
Moreover, $\sum_{a_1=0}^{n_1}
3^{-n_1}
\binom{n_1}{a_1}
2^{a_1}
=
1.$
Therefore, the contribution from the range \(n-a_1>n/4\) is bounded by $n^{1-\gamma/4}.$

Combining the two ranges gives, uniformly over
\(\ket{\mb z}\in\mc B_2^{\otimes n}\),
\begin{equation}\label{eq:caseB-final}
D^2
\left|
\lambda_{\mc E_{\mathrm{shallow}}^{(\gamma)}}(\ket{\mb{z}})
-
\lambda_{\mc E_{\mathrm{phase}}}(\ket{\mb{z}})
\right|
\le
n^{1-\gamma/4}
+
\mathrm{negl}(n).
\end{equation}
Taking the maximum over \(\ket{\mb z}\) and using
\cref{eq:k2-opnorm-reduce} proves the proposition.
\end{proof}

\begin{corollary}[Second-order relative-error bound]
\label{cor:k2-relative-error}
For every fixed \(\gamma>4\) and all sufficiently large \(n\), the shallow
phase ensemble is an approximate state \(2\)-design in relative error, with
relative error at most
\begin{equation}\label{eq:k2-relative-error-final}
\frac{D+1}{2D}
\left[
n^{1-\gamma/4}
+
\left(\frac23\right)^n
+
\mathrm{negl}(n)
\right].
\end{equation}
In particular, this relative error vanishes as \(n\to\infty\).
\end{corollary}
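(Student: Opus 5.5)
The corollary follows by combining \cref{lem:relative-error-opnorm} with the triangle inequality, using the two propositions just proved.

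We start from
\[
\bigl\|\mb M_{\mc E_{\mathrm{shallow}}^{(\gamma)}}^{(2)}-\mb M_{\mc E_{\mathrm{Haar}}}^{(2)}\bigr\|_\infty
\le
\bigl\|\mb M_{\mc E_{\mathrm{shallow}}^{(\gamma)}}^{(2)}-\mb M_{\mc E_{\mathrm{phase}}}^{(2)}\bigr\|_\infty
+
\bigl\|\mb M_{\mc E_{\mathrm{phase}}}^{(2)}-\mb M_{\mc E_{\mathrm{Haar}}}^{(2)}\bigr\|_\infty .
\]
Proposition~\ref{prop:k2-shallow-phase-norm} bounds the first term by $D^{-2}\bigl(n^{1-\gamma/4}+\mathrm{negl}(n)\bigr)$ for fixed $\gamma>4$ and large $n$. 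Proposition~\ref{prop:k2-phase-haar-norm} bounds the second term by $D^{-2}(2/3)^n$. Hence the total gap is at most $D^{-2}\bigl[n^{1-\gamma/4}+(2/3)^n+\mathrm{negl}(n)\bigr]$.

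Next we multiply by $r_{\mathrm{sym}}^{(2)}=D(D+1)/2$, as prescribed by \cref{lem:relative-error-opnorm}. This gives
\[
\epsilon_2\bigl(\mc E_{\mathrm{shallow}}^{(\gamma)}\bigr)\le \frac{D(D+1)}{2}\,D^{-2}\bigl[\cdots\bigr]=\frac{D+1}{2D}\bigl[n^{1-\gamma/4}+(2/3)^n+\mathrm{negl}(n)\bigr],
\]
which is exactly \cref{eq:k2-relative-error-final}.

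For the vanishing claim, note that $(D+1)/(2D)\le 1$. Since $\gamma>4$, we have $n^{1-\gamma/4}\to0$, and $(2/3)^n$ and $\mathrm{negl}(n)$ also tend to zero, so the whole bound tends to $0$. For any target $\epsilon$, the ensemble is therefore an $\epsilon$-approximate relative-error $2$-design once $n$ is large enough.

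All the work lies in the two propositions. The only points to check here are the exact prefactor $r_{\mathrm{sym}}^{(2)}$ and that both propositions hold in the same regime, namely fixed $\gamma>4$ and sufficiently large $n$. No step poses a real obstacle.
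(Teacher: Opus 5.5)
Your proposal is correct and matches the paper's proof: you pass through $\mc E_{\mathrm{phase}}$ with the triangle inequality, apply Propositions~\ref{prop:k2-shallow-phase-norm} and~\ref{prop:k2-phase-haar-norm}, and then multiply by $r_{\mathrm{sym}}^{(2)}=D(D+1)/2$ via \cref{lem:relative-error-opnorm}. Your added observation that $(D+1)/(2D)\le 1$ and that each bracketed term vanishes for $\gamma>4$ correctly justifies the ``in particular'' clause.
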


\begin{proof}
By the triangle inequality and Propositions~\ref{prop:k2-phase-haar-norm} and \ref{prop:k2-shallow-phase-norm},
\[
\left\|
\mb M_{\mc E_{\mathrm{shallow}}^{(\gamma)}}^{(2)}
-
\mb M_{\mc E_{\mathrm{Haar}}}^{(2)}
\right\|_\infty
\le
D^{-2}
\left[
n^{1-\gamma/4}
+
\left(\frac23\right)^n
+
\mathrm{negl}(n)
\right].
\]
The claim follows from \cref{lem:relative-error-opnorm} and
\(r_{\mathrm{sym}}^{(2)}=\binom{D+1}{2}=D(D+1)/2\).
\end{proof}
We finally note a complementary second-order perspective from Ref.~\cite{heinrich2025anti}, which promotes sufficiently accurate anti-concentration to a relative-error state $2$-design for suitable locally invariant ensembles. The reduction is specific to the second moment and requires quantitative control of the anti-concentration error. Our moment expansion can in fact evaluate this anti-concentration directly for sparse phase circuits, both with and without the final local Clifford twirl, providing a possible bridge between the two approaches. 

\section{The third-order relative error analysis}\label{Ap:3twirled}

In this section, we establish the third-order relative-error bound using the
dense phase ensemble $\mc{E}_{\mathrm {phase}}$ as an intermediate reference. As in the \(k=2\) case,
it reduces the problem to controlling
\[
\left\|
\mb M_{\mc E_{\mathrm{phase}}}^{(3)}
-
\mb M_{\mc E_{\mathrm{Haar}}}^{(3)}
\right\|_\infty,
\qquad
\left\|
\mb M_{\mc E_{\mathrm{shallow}}^{(\gamma)}}^{(3)}
-
\mb M_{\mc E_{\mathrm{phase}}}^{(3)}
\right\|_\infty.
\]

The third-order analysis is substantially more involved than its second-order
counterpart. At \(k=2\), the three relevant local blocks \(\widetilde{\Delta}_2\),
\(\id_4-\widetilde{\Delta}_2\), and
\(\sw_2-\widetilde{\Delta}_2\) commute and are
simultaneously diagonalizable, so the global operator-norm problem reduces to
scalar estimates. At \(k=3\), by contrast, the local decomposition contains ten blocks in the
computational basis (see~\cref{eq:k3Blocks} for their explicit forms). These
blocks are no longer simultaneously diagonalizable, and the possible ways of
assigning them to the \(n\) qubit sites give rise to a substantially more complicated combinatorial counting problem.

The structure of this section is as follows. In
\cref{ap:k3-local-blocks}, we analyze the local third-order blocks by
decomposing them into several simple subspaces and deriving the norm bounds
needed later. In \cref{ap:k3-phase-haar}, we use these local estimates to
bound the gap between the dense phase and Haar ensembles. Finally, in
\cref{ap:k3-shallow-phase}, we bound the gap between the dense phase and shallow phase ensembles, which constitutes
the most technically demanding part of the entire proof. There, we introduce the chessboard technique and show that the combinatorial multiplicities associated with the nine nontrivial block types can be
controlled using only three parameters \(m_1,m_2,m_3\), reducing the proof to a finite case analysis.

\subsection{Local structure of the three-copy blocks}
\label{ap:k3-local-blocks}

In this subsection, we show that
\(\widetilde{\Delta}_3\) and
\(\widetilde B(\pi_1,\pi_2)\) preserve the common decomposition
\[
(\mathbb C^2)^{\otimes3}
=
\mc H_{\mathrm{sym}}
\oplus
\mc H_{\mathrm{std}}^{(0)}
\oplus
\mc H_{\mathrm{std}}^{(1)},
\]
and establish the exact local norm identities
\[
\|\widetilde{\Delta}_3\|_\infty=\frac12,
\qquad
\|\widetilde B(\pi_1,\pi_2)\|_\infty
=
\|\widetilde B(\pi_1,\pi_2)+\widetilde{\Delta}_3\|_\infty
=\frac23.
\]
These identities provide the local contraction used in the two third-moment
comparisons below. To derive these results, we write them in the following orthonormal basis.
\begin{equation}\label{eq:k3-local-basis}
\mc B_3
:=
\left\{
\ket{s_0},
\ket{s_1},
\ket{s_2},
\ket{s_3},
\ket{u_1},
\ket{u_2},
\ket{v_1},
\ket{v_2}
\right\},
\end{equation}
where
\begin{align}
\ket{s_0}
&=
\ket{0,0,0},
&
\ket{s_1}
&=
\frac{
\ket{0,0,1}
+
\ket{0,1,0}
+
\ket{1,0,0}
}{\sqrt{3}},
\nonumber\\
\ket{s_2}
&=
\frac{
\ket{1,1,0}
+
\ket{1,0,1}
+
\ket{0,1,1}
}{\sqrt{3}},
&
\ket{s_3}
&=
\ket{1,1,1},
\label{eq:k3-symmetric-basis}
\end{align}
and
\begin{align}
\ket{u_1}
&=
\frac{
\ket{0,0,1}
-
\ket{0,1,0}
}{\sqrt{2}},
&
\ket{u_2}
&=
\frac{
\ket{0,0,1}
+
\ket{0,1,0}
-
2\ket{1,0,0}
}{\sqrt{6}},
\nonumber\\
\ket{v_1}
&=
\frac{
\ket{1,1,0}
-
\ket{1,0,1}
}{\sqrt{2}},
&
\ket{v_2}
&=
\frac{
\ket{1,1,0}
+
\ket{1,0,1}
-
2\ket{0,1,1}
}{\sqrt{6}}.
\label{eq:k3-standard-basis}
\end{align}

The first four vectors span the fully symmetric subspace $\mc H_{\mathrm{sym}}
:=
\operatorname{Sym}^3(\mathbb C^2),$
while $\mc H_{\mathrm{std}}^{(0)}
:=
\operatorname{span}
\left\{
\ket{u_1},
\ket{u_2}
\right\},
\mc H_{\mathrm{std}}^{(1)}
:=
\operatorname{span}
\left\{
\ket{v_1},
\ket{v_2}
\right\}$
carry two copies of the standard irreducible two-dimensional representation of the
third-order permutation group \(S_3\).
Therefore,
\begin{equation}\label{eq:k3-local-space-decomposition}
(\mathbb C^2)^{\otimes 3}
=
\mc H_{\mathrm{sym}}
\oplus
\mc H_{\mathrm{std}}^{(0)}
\oplus
\mc H_{\mathrm{std}}^{(1)}.
\end{equation}
For later use, we also define the nonsymmetric subspace $\mc H_{\mathrm{ns}}
:=
\mc H_{\mathrm{std}}^{(0)}
\oplus
\mc H_{\mathrm{std}}^{(1)}.$

\begin{observation}[Single-qubit three-copy block forms]
\label{obs:k3-local-block-forms}
With respect to the basis \(\mc B_3\), the following statements hold.

\begin{enumerate}
\item For every \(\pi\in S_3\), there exists a real orthogonal matrix
\(A_\pi\in O(2)\) such that
\begin{equation}\label{eq:k3-local-permutation-block}
[V_1(\pi)]_{\mc B_3}
=
\id_4
\oplus
A_\pi
\oplus
A_\pi.
\end{equation}
Moreover, the matrices \(A_\pi\) form the standard two-dimensional
representation of \(S_3\).

\item The twirled diagonal block satisfies
\begin{equation}\label{eq:widetildedelta}
[\widetilde{\Delta}_3]_{\mc B_3}
=
\frac{1}{2}\id_4
\oplus
\mathbf{0}_2
\oplus
\mathbf{0}_2
\end{equation} 

\item For
\(\pi_1\in S_{3,\mathrm{odd}}\) and
\(\pi_2\in S_{3,\mathrm{even}}\), define $C(\pi_1,\pi_2)
:=
\frac{1}{3}
\left(
A_{\pi_1}
+
A_{\pi_2}
\right).$
Then
\begin{equation}\label{eq:widetildeB}
[\widetilde{B}(\pi_1,\pi_2)]_{\mc B_3}
=
\frac{1}{6}\id_4
\oplus
C(\pi_1,\pi_2)
\oplus
C(\pi_1,\pi_2).
\end{equation}
\end{enumerate}
\end{observation}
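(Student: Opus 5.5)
The argument is representation-theoretic: the basis $\mc B_3$ is adapted to the $S_3$ action, and the twirled blocks are linear combinations of permutation operators. Everything is read off from the three local twirl formulas in \cref{obs:k3-single-site-twirls}.

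\emph{Item 1.} The three-copy permutation operators $V_1(\pi)$ commute with the copy action of $S_3$. Each Hamming-weight sector $\mathrm{span}\{\ket{x}:|x|=r\}$ is invariant under $V_1(\pi)$.
\begin{itemize}
\item[(a)] Weights $0$ and $3$: these are one-dimensional, spanned by $\ket{s_0}$ and $\ket{s_3}$, and $V_1(\pi)$ acts trivially on them.
\item[(b)] Weight $1$: this sector is the three-dimensional permutation representation. It splits as the trivial line $\ket{s_1}$ plus the standard representation on $\mathrm{span}\{\ket{u_1},\ket{u_2}\}$. The vectors $\ket{u_1},\ket{u_2}$ form an orthonormal basis of the orthogonal complement of $\ket{s_1}$ in this sector.
\item[(c)] Weight $2$: the sector is the bitwise complement of weight $1$, via $X^{\otimes 3}$, which commutes with every $V_1(\pi)$. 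Under this map $\ket{u_1}\mapsto\ket{v_1}$ and $\ket{u_2}\mapsto\ket{v_2}$, so $V_1(\pi)$ acts on $\mathrm{span}\{\ket{v_1},\ket{v_2}\}$ by the same matrix $A_\pi$ as on $\mathrm{span}\{\ket{u_1},\ket{u_2}\}$.
\end{itemize}
The matrices $A_\pi$ are real orthogonal because $V_1(\pi)$ is a real permutation matrix and the basis vectors are real and orthonormal. They form the standard representation since the restriction of a representation is again a representation. I would confirm this for the two generators $(12)$ and $(23)$ by direct computation.

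\emph{Item 2.} Use $\widetilde\Delta_3=\tfrac{1}{12}\sum_\pi V_1(\pi)$. By Item 1 this equals $\tfrac12\,\id_4\oplus\tfrac1{12}\sum_\pi A_\pi\oplus\tfrac1{12}\sum_\pi A_\pi$. The sum $\sum_\pi A_\pi$ vanishes by Schur orthogonality, since it is $6$ times the projector onto the trivial isotypic component of the standard irrep, and that component is zero.

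\emph{Item 3.} Use $\widetilde B(\pi_1,\pi_2)=\tfrac13 V_1(\pi_1)+\tfrac13V_1(\pi_2)-\widetilde\Delta_3$. On $\mc H_{\mathrm{sym}}$ this gives $\tfrac23-\tfrac12=\tfrac16$. On each standard copy it gives $\tfrac13(A_{\pi_1}+A_{\pi_2})-0=C(\pi_1,\pi_2)$.

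The only genuinely computational step is the explicit check in Item 1 that the given $\ket{u_i},\ket{v_i}$ span invariant subspaces carrying identical matrices $A_\pi$. This needs the complementation argument to be compatible with the chosen phases: $\ket{v_2}$ must be exactly the complement of $\ket{u_2}$, and it is. Once Item 1 is in place, Items 2 and 3 are immediate.
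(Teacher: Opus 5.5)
Your proposal is correct and follows the paper's route. Item~1 is the evaluation of $V_1(\pi)$ in $\mathcal{B}_3$, which you organize via Hamming-weight sectors and the complementation symmetry $X^{\otimes 3}$ (a cleaner substitute for the paper's direct calculation). Items~2 and~3 are read off from $\widetilde\Delta_3=\frac{1}{12}\sum_{\pi}V_1(\pi)$ and $\widetilde B(\pi_1,\pi_2)=\frac13V_1(\pi_1)+\frac13V_1(\pi_2)-\widetilde\Delta_3$, exactly as the paper does.

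One minor slip: your opening sentence that $V_1(\pi)$ ``commutes with the copy action of $S_3$'' is not literally true, since the $V_1(\pi)$ are that action and $S_3$ is non-abelian. Nothing depends on it, because the facts you actually use are the invariance of the weight sectors and $[X^{\otimes 3},V_1(\pi)]=0$.
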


As a reminder, \(S_{3,\mathrm{odd}}\) and \(S_{3,\mathrm{even}}\) denote the
odd and even permutations in \(S_3\), respectively, while the notation \(\widetilde{(\,\cdot\,)}\) indicates the corresponding local block after
single-qubit Clifford twirling. See~\cref{Ap:moment3} for the precise definitions.

The first statement follows from a direct calculation of the permutation
operators in the basis \(\mc B_3\). The remaining two statements follow from
the single-qubit twirling identities in~\cref{obs:k3-single-site-twirls}.
For clarity, all of these identities admit an exact finite local
verification: one evaluates the six permutation matrices in the basis
\(\mc B_3\), performs the uniform average over the \(24\) single-qubit
Clifford operators, and compares the resulting \(4\oplus2\oplus2\) blocks.
This verification is independent of \(n\) and yields precisely
\cref{eq:k3-local-permutation-block,eq:widetildedelta,eq:widetildeB}.

\begin{lemma}[Operator norms of twirled local blocks]
\label{lem:k3-local-block-norms}
For every
\(\pi_1\in S_{3,\mathrm{odd}}\) and
\(\pi_2\in S_{3,\mathrm{even}}\), the matrix
\(C(\pi_1,\pi_2)\) has singular values \(2/3\) and \(0\). Consequently,
\begin{align}
\|\widetilde{\Delta}_3\|_\infty
&=
\frac{1}{2},
\label{eq:k3-delta-local-norm}
\\
\|\widetilde{B}(\pi_1,\pi_2)\|_\infty
&=
\frac{2}{3},
\label{eq:blockinfty2-m3}
\\
\|\widetilde{B}(\pi_1,\pi_2)+\widetilde{\Delta}_3\|_\infty
&=
\frac{2}{3}.
\label{eq:blockinfty3-m3}
\end{align}
\end{lemma}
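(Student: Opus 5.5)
The plan is to read everything off the block forms in \cref{obs:k3-local-block-forms}. The operators $[\widetilde{\Delta}_3]_{\mc B_3}$ and $[\widetilde B(\pi_1,\pi_2)]_{\mc B_3}$ are direct sums over the decomposition \cref{eq:k3-local-space-decomposition}. The basis $\mc B_3$ is orthonormal, so the operator norm of each operator is the maximum of the operator norms of its summands. Almost all of the work is therefore the $2\times2$ computation for $C(\pi_1,\pi_2)=\frac13(A_{\pi_1}+A_{\pi_2})$.

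\textbf{Step 1: the $2\times 2$ block.} Since $A$ is an orthogonal representation, $A_{\pi_1}+A_{\pi_2}=A_{\pi_1}(\id_2+A_{\pi_1^{-1}\pi_2})$. Multiplying by the orthogonal matrix $A_{\pi_1}$ does not change singular values, so it suffices to treat $\id_2+A_\tau$. Here $\tau=\pi_1^{-1}\pi_2$ is odd, hence a transposition. In the standard representation a transposition acts as a reflection of $\mathbb R^2$, so it has eigenvalues $+1$ and $-1$ and, being orthogonal, is symmetric. Therefore $\id_2+A_\tau$ is symmetric with eigenvalues $2$ and $0$, and it is the projector onto the fixed line scaled by $2$. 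Dividing by $3$, the singular values of $C(\pi_1,\pi_2)$ are $2/3$ and $0$, and $\|C(\pi_1,\pi_2)\|_\infty=2/3$.

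\textbf{Step 2: the three norms.} For $\widetilde{\Delta}_3$, \cref{eq:widetildedelta} gives the norm $\max\{1/2,0\}=1/2$. For $\widetilde B(\pi_1,\pi_2)$, \cref{eq:widetildeB} gives $\max\{1/6,\,2/3,\,2/3\}=2/3$. For the sum, adding the two block forms gives
\[
[\widetilde B(\pi_1,\pi_2)+\widetilde{\Delta}_3]_{\mc B_3}=\tfrac23\id_4\oplus C(\pi_1,\pi_2)\oplus C(\pi_1,\pi_2),
\]
whose norm is again $2/3$. As a cross-check, this is consistent with \cref{obs:k3-single-site-twirls}: the sum equals $\frac13[V_1(\pi_1)+V_1(\pi_2)]$, and by \cref{eq:k3-local-permutation-block} that operator is $\frac23\id_4\oplus C\oplus C$.

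\textbf{Expected obstacle.} The only nontrivial point is verifying that $A_\tau$ is a reflection for every transposition $\tau$, rather than only the particular transposition shown in a drawn basis. I would argue this abstractly. The standard representation has character value $0$ on transpositions, so $A_\tau$ has trace $0$, and since $\tau^2=e$ its eigenvalues lie in $\{\pm1\}$; trace $0$ forces the eigenvalues $1$ and $-1$. An orthogonal matrix with real eigenvalues is symmetric, so $A_\tau$ is a reflection. This avoids any explicit matrix computation beyond what \cref{obs:k3-local-block-forms} already provides. The equalities, not just upper bounds, then follow because each maximum is attained on an explicit block.
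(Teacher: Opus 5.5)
Your proposal is correct and follows the paper's proof essentially step for step. It uses the same factorization $A_{\pi_1}+A_{\pi_2}=A_{\pi_1}(\id_2+A_\tau)$ with $\tau=\pi_1^{-1}\pi_2$ a transposition, and reads the three norms off the $4\oplus2\oplus2$ block forms in the same way. The one addition is your character-theoretic justification that $A_\tau$ is a reflection (trace zero, $A_\tau^2=\id_2$, and an orthogonal matrix with real spectrum is symmetric), a fact the paper only asserts.
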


\begin{proof}
The identity~\cref{eq:k3-delta-local-norm} follows immediately from
\cref{eq:widetildedelta}.

We next determine the singular values of \(C(\pi_1,\pi_2)\). Set $\tau:=\pi_1^{-1}\pi_2.$
Since \(\pi_1\) is odd and \(\pi_2\) is even, \(\tau\) is odd and hence is a
transposition. By the representation property, $A_{\pi_1}+A_{\pi_2}
=
A_{\pi_1}\bigl(\id_2+A_\tau\bigr).$
Because \(A_{\pi_1}\) is orthogonal, left multiplication by \(A_{\pi_1}\)
preserves singular values.

In the standard two-dimensional representation of \(S_3\), a transposition
acts as a reflection. Thus \(A_\tau\) has eigenvalues \(1\) and \(-1\), and
\(\id_2+A_\tau\) has singular values \(2\) and \(0\). It follows that $C(\pi_1,\pi_2)
=
\frac{1}{3}\bigl(A_{\pi_1}+A_{\pi_2}\bigr)$
has singular values \(2/3\) and \(0\).

By~\cref{eq:widetildeB}, the three blocks therefore have norms \(1/6\), \(2/3\), and \(2/3\),
respectively, proving~\cref{eq:blockinfty2-m3}.

Finally, combining~\cref{eq:widetildedelta,eq:widetildeB} gives
\[
[\widetilde{B}(\pi_1,\pi_2)+\widetilde{\Delta}_3]_{\mc B_3}
=
\frac{2}{3}\id_4
\oplus
C(\pi_1,\pi_2)
\oplus
C(\pi_1,\pi_2).
\]
Each diagonal block in $
[\widetilde{B}(\pi_1,\pi_2)+\widetilde{\Delta}_3]_{\mc B_3}$ has norm \(2/3\), which proves
\cref{eq:blockinfty3-m3}.
\end{proof}

The block forms above provide the local input for the remainder of the third-order analysis below.

\subsection{The norm gap between
\(\mc E_{\mathrm{phase}}\) and \(\mc E_{\mathrm{Haar}}\)}
\label{ap:k3-phase-haar}

We first compare the third moment of $\mc E_{\mathrm{phase}}$ with the Haar third moment.
The explicit moment formulas from the previous appendices give an exponentially
small normalized operator-norm gap.

\begin{prop}[Third-moment gap between the dense phase and Haar ensembles]
\label{prop:k3-phase-haar-opnorm}
One has
\begin{equation}\label{eq:k3-phase-haar-opnorm}
\left\|
\mb M_{\mc E_{\mathrm{phase}}}^{(3)}
-
\mb M_{\mc E_{\mathrm{Haar}}}^{(3)}
\right\|_\infty
\le
2^{-3n}
\left[
9\left(\frac{2}{3}\right)^n
+
22\cdot 2^{-n}
\right].
\end{equation}
In particular, $2^{3n}\left\|
\mb M_{\mc E_{\mathrm{phase}}}^{(3)}
-
\mb M_{\mc E_{\mathrm{Haar}}}^{(3)}
\right\|_\infty$ still decays exponentially
with \(n\).
\end{prop}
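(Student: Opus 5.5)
The plan is to compare the explicit formula for $\mb M_{\mc E_{\mathrm{phase}}}^{(3)}$ in \cref{prop:phase-moments} directly with the Haar moment $\Pi_{\mathrm{sym}}^{(3)}/r_{\mathrm{sym}}^{(3)}=\sum_{\pi}V_n(\pi)/[6r_{\mathrm{sym}}^{(3)}]$. Writing
\[
\mb M_{\mc E_{\mathrm{phase}}}^{(3)}-\mb M_{\mc E_{\mathrm{Haar}}}^{(3)}
=\Bigl(2^{-3n}-\tfrac{1}{6r_{\mathrm{sym}}^{(3)}}\Bigr)\sum_{\pi\in S_3}V_n(\pi)
-\frac{2^{-3n}}{3^n}\sum_{\pi_1,\pi_2}\bigl(V_1(\pi_1)+V_1(\pi_2)\bigr)^{\otimes n}
+\frac{4\cdot 2^{-3n}}{12^n}\Bigl(\sum_{\pi}V_1(\pi)\Bigr)^{\otimes n},
\]
I will bound the three pieces separately in operator norm and add them via the triangle inequality. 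In other words, I am not trying to exploit cancellations; I only need an exponentially small bound relative to $2^{-3n}$.

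For the second piece, note that $\frac13(V_1(\pi_1)+V_1(\pi_2))=\widetilde B(\pi_1,\pi_2)+\widetilde\Delta_3$. By \cref{lem:k3-local-block-norms} this has norm exactly $2/3$, so each tensor power has norm $(2/3)^n$. There are nine odd--even pairs, which gives the term $9(2/3)^n$.

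For the third piece, $\frac{1}{12}\sum_\pi V_1(\pi)=\widetilde\Delta_3$ has norm $1/2$ by \cref{eq:k3-delta-local-norm}. Hence this piece contributes $4\cdot 2^{-n}$ relative to $2^{-3n}$.

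For the first piece, $\|\sum_\pi V_n(\pi)\|_\infty=6\|\Pi_{\mathrm{sym}}^{(3)}\|_\infty=6$. The remaining task is the scalar estimate
\[
\Bigl|1-\frac{D^3}{6r_{\mathrm{sym}}^{(3)}}\Bigr|
=\Bigl|1-\frac{D^2}{(D+1)(D+2)}\Bigr|
=\frac{3D+2}{(D+1)(D+2)}\le \frac{3}{D}.
\]
This gives a contribution of at most $18\cdot 2^{-n}$. Combined with the third piece, the total is $22\cdot 2^{-n}$, matching the stated constant.

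The only step needing care is this first scalar comparison, namely getting the sign and the constant right using $r_{\mathrm{sym}}^{(3)}=D(D+1)(D+2)/6$. Everything else follows directly from the local norm identities already established. The exponential decay of $2^{3n}\|\cdot\|_\infty$ is then immediate.
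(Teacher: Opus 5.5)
Your proposal is correct and follows the paper's proof essentially step for step. Both use the same three-term decomposition, the triangle inequality with the local norms $\|\widetilde B(\pi_1,\pi_2)+\widetilde\Delta_3\|_\infty=2/3$ and $\|\widetilde\Delta_3\|_\infty=1/2$, and the same normalization gap (the paper writes it as $|6/D^3-6/[D(D+1)(D+2)]|\le 18/D^4$), yielding $9(2/3)^n+22\cdot 2^{-n}$; note that only the upper bound $2/3$ is needed there, and it already follows from $\|V_1(\pi_1)+V_1(\pi_2)\|_\infty\le 2$ without the exact norm identity.
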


\begin{proof}
By~\cref{eq:phase-third-moment-app,eq:haar-moment-sym-proj}, specialized to
\(k=3\), and using \(D=2^n\), we have
\begin{align}
&
\mb M_{\mc E_{\mathrm{phase}}}^{(3)}
-
\mb M_{\mc E_{\mathrm{Haar}}}^{(3)}=
2^{-3n}
\left[
\sum_{\pi\in S_3}V_n(\pi)
-
\sum_{\substack{
\pi_1\in S_{3,\mathrm{odd}}\\
\pi_2\in S_{3,\mathrm{even}}
}}
\left(
\widetilde B(\pi_1,\pi_2)
+
\widetilde\Delta_3
\right)^{\otimes n}
+
4\widetilde\Delta_3^{\otimes n}
\right]
-
\frac{1}{D(D+1)(D+2)}
\sum_{\pi\in S_3}V_n(\pi)\\
&=
\left(
\frac{1}{D^3}
-
\frac{1}{D(D+1)(D+2)}
\right)
\sum_{\pi\in S_3}V_n(\pi)-
2^{-3n}
\sum_{\substack{
\pi_1\in S_{3,\mathrm{odd}}\\
\pi_2\in S_{3,\mathrm{even}}
}}
\left(
\widetilde B(\pi_1,\pi_2)
+
\widetilde\Delta_3
\right)^{\otimes n}
+
4\cdot 2^{-3n}\widetilde\Delta_3^{\otimes n}\\
&=
\left(
\frac{6}{D^3}
-
\frac{6}{D(D+1)(D+2)}
\right)
\Pi_{\mathrm{sym}}^{(3)}-
2^{-3n}
\sum_{\substack{
\pi_1\in S_{3,\mathrm{odd}}\\
\pi_2\in S_{3,\mathrm{even}}
}}
\left(
\widetilde B(\pi_1,\pi_2)
+
\widetilde\Delta_3
\right)^{\otimes n}
+
4\cdot 2^{-3n}\widetilde\Delta_3^{\otimes n}.
\label{eq:k3-phase-haar-difference}
\end{align}
Applying the triangle inequality to
\cref{eq:k3-phase-haar-difference}, we obtain
\begin{align}
&
\left\|
\mb M_{\mc E_{\mathrm{phase}}}^{(3)}
-
\mb M_{\mc E_{\mathrm{Haar}}}^{(3)}
\right\|_\infty\le
\left|
\frac{6}{D^3}
-
\frac{6}{D(D+1)(D+2)}
\right|
\left\|
\Pi_{\mathrm{sym}}^{(3)}
\right\|_\infty+
2^{-3n}
\sum_{\substack{
\pi_1\in S_{3,\mathrm{odd}}\\
\pi_2\in S_{3,\mathrm{even}}
}}
\left\|
\left(
\widetilde B(\pi_1,\pi_2)
+
\widetilde\Delta_3
\right)^{\otimes n}
\right\|_\infty
+
4\cdot 2^{-3n}
\left\|
\widetilde\Delta_3^{\otimes n}
\right\|_\infty\\
&=\left|
\frac{6}{D^3}
-
\frac{6}{D(D+1)(D+2)}
\right|+
2^{-3n}
\sum_{\substack{
\pi_1\in S_{3,\mathrm{odd}}\\
\pi_2\in S_{3,\mathrm{even}}
}}
\left\|
\widetilde B(\pi_1,\pi_2)
+
\widetilde\Delta_3
\right\|_\infty^n
+
4\cdot 2^{-3n}
\left\|
\widetilde\Delta_3
\right\|_\infty^n.
\label{eq:k3-phase-haar-triangle}
\end{align}
First, the normalization difference satisfies
\begin{equation}\label{eq:k3-haar-normalization-gap}
\begin{split}
\left|
\frac{6}{D^3}
-
\frac{6}{D(D+1)(D+2)}
\right|
&=
\frac{6(3D+2)}{D^3(D+1)(D+2)}\le
\frac{18}{D^4}
=
18\cdot 2^{-4n}.
\end{split}
\end{equation}

Moreover, we have used the multiplicativity of the operator norm under tensor
products together with~\cref{lem:k3-local-block-norms}, which gives local
norm bounds of \(2/3\) and \(1/2\) for the relevant twirled blocks. Since these
norms are strictly smaller than one, local Clifford twirling produces an exponential suppression in the number of qubits.

\begin{align*}
\left\|
\mb M_{\mc E_{\mathrm{phase}}}^{(3)}
-
\mb M_{\mc E_{\mathrm{Haar}}}^{(3)}
\right\|_\infty
&\le
18\cdot 2^{-4n}
+
2^{-3n}
\left[
9\left(\frac{2}{3}\right)^n
+
4\left(\frac{1}{2}\right)^n
\right]=
2^{-3n}
\left[
9\left(\frac{2}{3}\right)^n
+
22\cdot 2^{-n}
\right].
\end{align*}
This proves the proposition. 
\end{proof}

\subsection{The norm gap between $\mc E_{\mathrm{shallow}}^{(\gamma)}$ and $\mc E_{\mathrm{phase}}$}\label{ap:k3-shallow-phase}

We now estimate the third-order norm gap $\bigl\|
\mb M_{\mc E_{\mathrm{shallow}}^{(\gamma)}}^{(3)}
-
\mb M_{\mc E_{\mathrm{phase}}}^{(3)}
\bigr\|_\infty$ as follows.

\begin{prop}\label{prop:twirled-third-moment-main}
There exist absolute constants $A_0,B_0,\gamma_0>0$ such that
\begin{equation}\label{eq:twirled-third-moment-main}
\left\|
 {\mb{M}}_{\mc E_{\mathrm{shallow}}^{(\gamma)}}^{(3)}
- {\mb{M}}_{\mc E_{\mathrm{phase}}}^{(3)}
\right\|_\infty
\le2^{-3n}\left(n^{A_0-B_0\gamma}+\mathrm{negl}(n)\right),
\end{equation}
for every fixed constant $\gamma>\gamma_0$ and all sufficiently large \(n\).
\end{prop}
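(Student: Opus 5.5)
The plan is to start from the exact gap formula \cref{eq:block-expansion-shallow-phase-gap} and split the sum over admissible count data $\mb I$ with $N_{\mb I}>0$ into two regimes, using the chessboard (bishop) parameters $m_1\ge m_2$, $m_3$ of the End Matter. Here $m_1m_2$ is the largest product of two occupations lying in different rows and different columns. The combinatorial lower bound I will use throughout is $N_{\mb I}\ge m_1m_2$. In addition, every cell not sharing a row or column with the $m_1$-cell has occupation at most $m_2$, and every cell sharing a line with it has occupation at most $m_3$.

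\emph{Regime 1: no dominant line.} Suppose the sites not covered by $\Delta_3$ and the heaviest row or column number at least $\kappa n$, for a small constant $\kappa$. I expect this to force $m_1m_2\ge c(\kappa)n^2$, since a pigeonhole argument on the $3\times3$ board forces two heavily occupied cells in different rows and columns. In this regime I will use the termwise bound \cref{eq:k3-basic-count-bound}, together with $\#\widetilde{\mc C}_{\mb I}\le 10^n$ and at most $(n+1)^{9}$ count vectors. This gives a contribution of at most
\[
(n+1)^9\,\Bigl(\tfrac{20}{3}\Bigr)^n e^{-2c(\kappa)\gamma n\log n},
\]
which is $\mathrm{negl}(n)$.

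\emph{Regime 2: a dominant line.} Suppose almost all sites sit in $\Delta_3$ or in one line, say row $\pi_1$. Termwise bounds now fail badly. For example, a single site outside the row already carries multiplicity of order $n4^{n}$ against only $(2/3)^n$ contraction. So I will not bound such terms one at a time. Instead, I will resum the in-line sites using property~3 of \cref{obs:B-partition-m3} and its twirled version: $\widetilde\Delta_3+\sum_{\pi_2}\widetilde B(\pi_1,\pi_2)=V_1(\pi_1)$, which has norm $1$.

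Concretely, fix the set $R$ of off-line sites, with $|R|=k$, and their labels. The crucial point is that the damping factor depends only on the counts. Each off-line site in cell $(\pi_1',\pi_2')$, with $\pi_1'\neq\pi_1$, interacts with every in-line site whose even label differs from $\pi_2'$. I will therefore expand the coefficient $(1-2p)^{N_{\mb I}}$ in the in-line column counts. Summing over in-line assignments then yields a tensor product over $[n]\setminus R$ of local operators of the form
\[
\widetilde\Delta_3+\sum_{\pi_2}x_{\pi_2}\widetilde B(\pi_1,\pi_2),\qquad x_{\pi_2}\in[0,1].
\]
I will bound the norm of each such local factor by $1$, using the block form \cref{eq:widetildeB} in the decomposition $\mc H_{\mathrm{sym}}\oplus\mc H_{\mathrm{std}}^{(0)}\oplus\mc H_{\mathrm{std}}^{(1)}$. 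To handle the non-Hermitian pieces, I will pair each configuration with its adjoint, which is the ``Hermitian pairing'' step. The off-line sites are bounded by $\|\widetilde B\|_\infty=2/3$ and have multiplicity $\binom nk 6^k$.

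The resulting bound should be of the form
\[
\sum_{k\ge1}\binom nk 6^k\Bigl(\tfrac23\Bigr)^k e^{-2\gamma\log n\,k(n-k)\theta/n}
\]
for some constant $\theta>0$. This is at most $n^{A_0-B_0\gamma}$ once $\gamma$ is large, following the pattern of \cref{lem:comb_bound} and \cref{lem:sub_exp}. There is one complication: within the line there are three columns, and their mutual damping with the off-line sites is column-dependent. If needed, I will further split the in-line sites by whether the column matches that of some off-line site, with a finite case analysis over $(m_1,m_2,m_3)$.

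\emph{Main obstacle.} The hard step is Regime 2. I must show that after partial resummation, the signed, non-commuting local factors carrying fractional weights $x_{\pi_2}$ still have norm at most $1$, rather than something like $1+O(1)$ per site, which would reintroduce exponential growth. I also need to handle the crossover between the two regimes, where $k$ is linear but small, so that the entropy $\binom nk$ is beaten by the $n^{-\Omega(\gamma k)}$ damping. My first attempt will be to verify the local norm claim directly on the $4\oplus2\oplus2$ blocks. If it fails, the fallback is to keep $(2/3)$ contraction on all non-$\Delta_3$ sites and use the $\mc H_{\mathrm{sym}}$ block value $\tfrac16$ to compensate.
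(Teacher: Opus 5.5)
Your Regime~1 is sound. If every line leaves at least $\kappa n$ non-$\Delta_3$ sites outside it, the cells of occupation $\ge\kappa n/6$ cannot all lie in one line, so $m_1m_2\ge\kappa^2n^2/36$. Resumming a whole row in the twirled picture is also a legitimate alternative to the paper's case analysis. However, the step you single out as the crux is the wrong one, and as planned Regime~2 does not close.

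After you absorb $(1-2p)^{N_{\mb I}}$ into the weights $x_{\pi_2}=(1-2p)^{k-k_{\pi_2}}$, where $k_{\pi_2}$ is the number of off-line sites in column $\pi_2$, the only scalar damping left is $(1-2p)^{N_{oo}}$ from off-line pairs. You cannot instead pull out a uniform bound $N_{\mb I}\ge\theta k(n-k)$, because in-line sites carrying $\Delta_3$ contribute nothing to $N_{\mb I}$. So with only $\|F\|_\infty\le1$, the $k=1$ term (where $N_{oo}=0$) is bounded merely by $6n\cdot\tfrac23=4n$, and your displayed sum does not follow.

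The bound $\|F\|_\infty\le1$ is in fact immediate. In the $4\oplus2\oplus2$ form, $F=\bigl(\tfrac12+\tfrac16\sum_jx_j\bigr)\id_4\oplus M\oplus M$ with $M=\tfrac23A_{\pi_1}\sum_jx_jP_j$. Here $P_j$ projects onto the fixed line of the transposition $\pi_1^{-1}\pi_{2,j}$, and $\sum_jP_j=\tfrac32\id_2$.

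What you actually need is the weight-dependent contraction $\|F\|_\infty\le\tfrac23+\tfrac13x_{(2)}$, where $x_{(2)}$ is the second-largest weight. This follows from $\lambda_{\max}(aP_1+bP_2)\le a+\tfrac b2$ for $a\ge b\ge0$. At most one column can contain more than half of the off-line sites, so $x_{(2)}\le(1-2p)^{k/2}\le e^{-pk}$. Hence $\|F\|_\infty^{\,n-k}\le n^{-c\gamma k}$ when $pk\le1$, and $\le e^{-cn}$ otherwise. This beats $\binom nk 4^k$ for $k<\kappa n$ with $\kappa$ small. It also settles your column-matching worry: a single undamped column together with $\widetilde\Delta_3$ has norm only $\tfrac23$.

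There is a second, bookkeeping gap. The resummation needs the sum over \emph{all} in-line assignments for a fixed off-line configuration, but your Regime~2 is not of that product form, for three reasons:
\begin{itemize}
\item It is cut out by count data.
\item A configuration can be dominant for several lines: one heavy cell is dominant for both its row and its column, and mostly-$\Delta_3$ configurations are dominant for all six lines.
\item The gap excludes $N_{\mb I}=0$.
\end{itemize}
Since the summands are not positive, overcounting cannot be absorbed into an upper bound. You need inclusion--exclusion over the six events ``fewer than $\kappa n$ non-$\Delta_3$ sites off $L$''. Each event is a product set, and the complement of their union is exactly your Regime~1. The overlaps are harmless, but must be handled:
\begin{itemize}
\item two parallel lines force fewer than $2\kappa n$ non-$\Delta_3$ sites, so a termwise bound works;
\item a row--column overlap resums over the shared cell, with $\|\widetilde\Delta_3+y\widetilde B\|_\infty\le\tfrac23$;
\item the $N_{\mb I}=0$ terms with $k\ge1$ are exponentially small.
\end{itemize}

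The paper avoids both issues by never resumming beyond fixed count data. It keeps $(1-2p)^{N_{\mb I}}$ as a scalar and uses the fixed-cardinality lemma for a single cell plus $\widetilde\Delta_3$ in Cases~2--3. In Case~4 it untwirls, takes a nonnegative Perron--Frobenius vector for the Hermitian pair $\widehat{\mb X}(\mb I)$, and dominates the fixed-count first-row sum entrywise by $V_1(\pi_1)^{\otimes\Lambda}$. There the multiplicity $n^{c\,m_2}$ is paid for by damping $n^{-2\alpha'\gamma m_2}$.

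In your route the Hermitian pairing is unnecessary, since you only use multiplicativity of norms over tensor products. In exchange, once the contraction estimate above is in place, your line resummation treats the paper's Cases~2--4 uniformly without the $(m_1,m_2,m_3)$ split.
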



We begin with a fixed-cardinality estimate for sums of local twirled blocks. Then we prove~\cref{prop:twirled-third-moment-main} officially.

\begin{lemma}[Fixed-cardinality block-sum bound]
\label{lem:fixed-cardinality-block-sum}
For every finite position set \(\Lambda\) and every integer
\(0\le m\le |\Lambda|\), define $\mb S_m(\Lambda)
:=
\sum_{I\subseteq \Lambda,\,|I|=m}
\widetilde B(\pi_1,\pi_2)^{\otimes I}\otimes \widetilde\Delta_3^{\otimes(\Lambda\setminus I)},$
where $\pi_1\in S_{3,\mathrm{odd}}$ and $\pi_2\in S_{3,\mathrm{even}}$.
Then
\begin{equation}
\|\mb S_m(\Lambda)\|_\infty
\le
\|(\widetilde B(\pi_1,\pi_2)+\widetilde\Delta_3)^{\otimes |\Lambda|}\|_\infty=(2/3)^{|\Lambda|}.
\end{equation}
\end{lemma}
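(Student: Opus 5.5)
The plan is to exploit the common decomposition of Observation~\ref{obs:k3-local-block-forms}: both $\widetilde B:=\widetilde B(\pi_1,\pi_2)$ and $\widetilde\Delta_3$ are block diagonal with respect to $\mc H_{\mathrm{sym}}\oplus\mc H_{\mathrm{ns}}$. Concretely,
\[
[\widetilde B]_{\mc B_3}=\tfrac16\id_4\oplus C\oplus C,
\qquad
[\widetilde\Delta_3]_{\mc B_3}=\tfrac12\id_4\oplus\mathbf 0_2\oplus\mathbf 0_2,
\]
with $C=C(\pi_1,\pi_2)$. Write $P$ for the projector onto $\mc H_{\mathrm{sym}}$ and $Q=\id_8-P$ for the projector onto $\mc H_{\mathrm{ns}}$. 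Then
\[
\widetilde B=\tfrac16P+Q\widetilde BQ,
\qquad
\widetilde\Delta_3=\tfrac12P.
\]
Hence both operators, and every tensor product of them, preserve the orthogonal decomposition of $((\mathbb C^2)^{\otimes3})^{\otimes\Lambda}$ into the sectors $\mc H_{\mathrm{sym}}^{\otimes(\Lambda\setminus J)}\otimes\mc H_{\mathrm{ns}}^{\otimes J}$, one for each $J\subseteq\Lambda$. Consequently $\mb S_m(\Lambda)$ is block diagonal over $J$, and its norm is the maximum of the norms of its sector restrictions. The same holds for $(\widetilde B+\widetilde\Delta_3)^{\otimes|\Lambda|}$, which the text has already computed to have norm $(2/3)^{|\Lambda|}$ by multiplicativity and Lemma~\ref{lem:k3-local-block-norms}.

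Fix a sector $J$ and restrict $\mb S_m(\Lambda)$ to it. Since $\widetilde\Delta_3$ vanishes on $\mc H_{\mathrm{ns}}$, only subsets with $I\supseteq J$ contribute. On such a term, each site of $J$ carries $C$, each site of $I\setminus J$ carries the scalar $1/6$, and each site of $\Lambda\setminus I$ carries the scalar $1/2$. Choosing $K=I\setminus J\subseteq\Lambda\setminus J$ with $|K|=m-|J|$, the restriction is
\[
C^{\otimes J}\cdot\binom{|\Lambda|-|J|}{m-|J|}\Bigl(\tfrac16\Bigr)^{m-|J|}\Bigl(\tfrac12\Bigr)^{|\Lambda|-m},
\]
which vanishes when $|J|>m$. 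By Lemma~\ref{lem:k3-local-block-norms}, $\|C\|_\infty=2/3$, so this sector contributes at most
\[
\Bigl(\tfrac23\Bigr)^{|J|}\binom{r}{\ell}\Bigl(\tfrac16\Bigr)^{\ell}\Bigl(\tfrac12\Bigr)^{r-\ell},
\qquad r:=|\Lambda|-|J|,\quad \ell:=m-|J|.
\]

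The binomial term above is a single term of the expansion $(1/6+1/2)^r=(2/3)^r$, so it is at most $(2/3)^r$. Multiplying by $(2/3)^{|J|}$ bounds the sector contribution by $(2/3)^{|\Lambda|}$ for every $J$. Taking the maximum over $J$ then gives
\[
\|\mb S_m(\Lambda)\|_\infty\le\Bigl(\tfrac23\Bigr)^{|\Lambda|}=\bigl\|(\widetilde B+\widetilde\Delta_3)^{\otimes|\Lambda|}\bigr\|_\infty .
\]

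The only step requiring care is the block-diagonal reduction. The operators are non-Hermitian, so it is essential to use the direct-sum structure from Observation~\ref{obs:k3-local-block-forms}, which gives genuine invariant subspaces for both $\widetilde B$ and its adjoint. Because the sectors are mutually orthogonal and each is invariant, the norm of $\mb S_m(\Lambda)$ is exactly the maximum of the sector norms, even for non-normal operators. I do not expect any deeper obstacle: the combinatorial factor is absorbed by one term of the binomial expansion of $(2/3)^r$.
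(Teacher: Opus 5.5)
Your proof is correct and follows essentially the same route as the paper. Both decompose the space into the orthogonal sectors built from $\mathcal H_{\mathrm{sym}}$ and $\mathcal H_{\mathrm{ns}}$ site by site, use that $\widetilde\Delta_3$ vanishes on nonsymmetric sites so each sector restriction of $\mathbf S_m(\Lambda)$ is a single binomial term times a fixed operator, and absorb that term into $(1/6+1/2)^r=(2/3)^r$. The differences are cosmetic: you index sectors by the nonsymmetric sites and bound via $\|C\|_\infty=2/3$ directly, whereas the paper indexes them by the symmetric sites and compares with the restriction of $(\widetilde B+\widetilde\Delta_3)^{\otimes|\Lambda|}$ as a scalar multiple of the same operator. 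Also, each nonsymmetric site strictly carries $C\oplus C$ rather than $C$, which has the same norm.
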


\begin{proof}
By \cref{eq:widetildedelta,eq:widetildeB}, the
single-qubit operators $\widetilde\Delta_3$ and $\widetilde B(\pi_1,\pi_2)$ in $\mc B_3$ are
simultaneously block diagonal with respect to the decomposition into the symmetric
subspace $\mc H_{\mathrm{sym}}$ and the two nonsymmetric
subspaces.
More precisely,
\begin{equation}
\widetilde\Delta_3
=
\frac12\id_4\oplus\mathbf 0_2\oplus\mathbf 0_2,
\qquad
\widetilde B(\pi_1,\pi_2)
=
\frac16\id_4\oplus C\oplus C,
\end{equation}
where $C:=\frac13(A_{\pi_1}+A_{\pi_2})$.
Hence the three-copy Hilbert space associated with the sites in
\(\Lambda\) admits the orthogonal decomposition
\begin{equation}\label{eq:KJsubspace}
\bigotimes_{q\in\Lambda}(\mathbb C^2)^{\otimes3}
=
\bigoplus_{J\subseteq\Lambda}\mc K_J,
\qquad
\mc K_J
:=
\bigotimes_{q\in J}\mc H_{\mathrm{sym}}
\otimes
\bigotimes_{q\in\Lambda\setminus J}\mc H_{\mathrm{ns}} .
\end{equation}

Here
\(\mc H_{\mathrm{ns}}
=
\mc H_{\mathrm{std}}^{(0)}
\oplus
\mc H_{\mathrm{std}}^{(1)}\),
so both copies of the standard representation are included in
\(\mc K_J\). For each \(J\subseteq\Lambda\), let \(P_J\) denote the
orthogonal projector onto \(\mc K_J\), and write
\(X|_{\mc K_J}:=P_JXP_J\). for the restriction of $X$ to $\mc K_J$.  Since every $\mc K_J$ is invariant under both operators
\begin{equation}\label{eq:block-max-norm}
\begin{aligned}
\|\mb S_m(\Lambda)\|_\infty
&=
\max_{J\subseteq\Lambda}
\|\mb S_m(\Lambda)|_{\mc K_J}\|_\infty,\\
\|(\widetilde B+\widetilde\Delta_3)^{\otimes|\Lambda|}\|_\infty
&=
\max_{J\subseteq\Lambda}
\left\|
(\widetilde B+\widetilde\Delta_3)^{\otimes|\Lambda|}
|_{\mc K_J}
\right\|_\infty .
\end{aligned}
\end{equation}
By the definition of \(\mc K_J\), the local three-copy space at each site
\(q\in J\) is restricted to \(\mc H_{\mathrm{sym}}\), while at each site
\(q\in\Lambda\setminus J\) it is restricted to $\mc H_{\mathrm{ns}}$. Set
\(r:=|J|\), so there are \(r\) symmetric sites and \(|\Lambda|-r\)
nonsymmetric sites. By~\cref{eq:widetildedelta}, $\widetilde\Delta_3$ vanishes on the
nonsymmetric subspace. Consequently, the restriction to \(\mc K_J\) of a term $\widetilde B(\pi_1,\pi_2)^{\otimes I}
\otimes
\widetilde\Delta_3^{\otimes(\Lambda\setminus I)}$
can be nonzero only if $\Lambda\setminus I\subseteq J,$
or equivalently, \(\Lambda\setminus J\subseteq I\). Thus all \(|\Lambda|-r\)
nonsymmetric sites must carry \(\widetilde B(\pi_1,\pi_2)\). Since
\(|I|=m\), the number of symmetric sites carrying
\(\widetilde B(\pi_1,\pi_2)\) is $j:=|I\cap J|=m-(|\Lambda|-r).$
It follows that
\(\mb S_m(\Lambda)|_{\mc K_J}=0\) whenever
\(j\notin\{0,\ldots,r\}\).
When \(0\le j\le r\), there are exactly \(\binom{r}{j}\) choices from $\mb S_m(\Lambda)$ for the
\(j\) symmetric sites carrying \(\widetilde B(\pi_1,\pi_2)\). Therefore,
\begin{equation}\label{eq:Sm-restriction}
\mb S_m(\Lambda) |_{\mc K_J}
=
\binom{r}{j}
\left(\frac16\right)^j
\left(\frac12\right)^{r-j}
\id_{\mc H_{\mathrm{sym}}^{\otimes r}}
\otimes
(C\oplus C)^{\otimes(|\Lambda|-r)}.
\end{equation}
Indeed, \(\widetilde B(\pi_1,\pi_2)\) acts as
\(\frac16\id_4\) on each of the \(j\) selected symmetric sites, while
\(\widetilde\Delta_3\) acts as \(\frac12\id_4\) on the remaining \(r-j\)
symmetric sites. On every nonsymmetric site, only
\(\widetilde B(\pi_1,\pi_2)\) contributes, with restriction
\(C\oplus C\) to \(\mc H_{\mathrm{ns}}\).

On the other hand,
\(\widetilde B(\pi_1,\pi_2)+\widetilde\Delta_3\) acts as
\(\frac23\id_4\) on \(\mc H_{\mathrm{sym}}\) and as \(C\) on each
standard-representation subspace. Hence
\begin{equation}\label{eq:combined-restriction}
\left.
\bigl(
\widetilde B(\pi_1,\pi_2)+\widetilde\Delta_3
\bigr)^{\otimes |\Lambda|}
\right|_{\mc K_J}
=
\left(\frac23\right)^r
\id_{\mc H_{\mathrm{sym}}^{\otimes r}}
\otimes
(C\oplus C)^{\otimes(|\Lambda|-r)}.
\end{equation}
Thus the two restrictions in
\cref{eq:Sm-restriction,eq:combined-restriction} are nonnegative scalar
multiples of the same operator.

Moreover,
\begin{equation}
\binom{r}{j}
\left(\frac16\right)^j
\left(\frac12\right)^{r-j}
\le
\sum_{\ell=0}^{r}
\binom{r}{\ell}
\left(\frac16\right)^\ell
\left(\frac12\right)^{r-\ell}
=
\left(\frac23\right)^r.
\end{equation}
Therefore, $\|\mb S_m(\Lambda)|_{\mc K_J}\|_\infty
\le
\left\|
\left.
\bigl(
\widetilde B(\pi_1,\pi_2)+\widetilde\Delta_3
\bigr)^{\otimes |\Lambda|}
\right|_{\mc K_J}
\right\|_\infty,$
for every invariant subspace \(\mc K_J\). Taking the maximum over all invariant subspaces yields the claimed bound
\(\|\mb S_m(\Lambda)\|_\infty\le (2/3)^{|\Lambda|}\).
\end{proof}

\textbf{We now prove~\cref{prop:twirled-third-moment-main}.}

\begin{proof}

Fix a constant \(\gamma>\gamma_0\), and take \(n\) sufficiently large that
the standing condition \(0<2\gamma\log n/n<1\) holds. In particular, every
factor \(\left(1-2\gamma\log n/n\right)^{N_{\mb I}}\) appearing below is
nonnegative and at most one.

We first recall the notation entering the twirled block
expansion~\cref{eq:block-expansion-twirled}. The position data \(\mathfrak I\) specifies a partition of the \(n\) qubit sites, as defined in
\cref{eq:postion-data}. It specifies which local block is assigned to
each site and therefore determines the twirled tensor-product block
\(\widetilde B_{\mathfrak I}\) in
\cref{eq:def-B-frakI-m3}. The associated count data $\mb I$
records only the cardinalities of these position sets (see
\cref{eq:defI}). For fixed count data \(\mb I\),
\(\widetilde{\mc C}_{\mb I}\) denotes the collection of all
\(\widetilde B_{\mathfrak I}\) obtained from position data with these
cardinalities. Finally, the interaction
exponent \(N_{\mathfrak I}\) defined in~\cref{eq:def-N-frakI-m3} depends only
on these cardinalities, and its common value within
\(\widetilde{\mc C}_{\mb I}\) is denoted by \(N_{\mb I}\); see
\cref{eq:def-N-I-m3}.

Consider the third-order moment formula in \cref{eq:block-expansion-twirled}. For fixed count data $\mb I$, define
\begin{equation}\label{eq:tI-new}
\mb T(\mb I)
:=
\left(1-\frac{2\gamma\log n}{n}\right)^{N_{\mb I}}
\sum_{\widetilde B_{\mathfrak I}\in \widetilde{\mc C}_{\mb I}}
\widetilde B_{\mathfrak I},
\qquad\text{so that}\qquad
\mb M_{\mc E_{\mathrm{shallow}}^{(\gamma)}}^{(3)}
-
\mb M_{\mc E_{\mathrm{phase}}}^{(3)}
=
2^{-3n}\sum_{\mb I:\,N_{\mb I}>0}\mb T(\mb I),
\end{equation}
due to~\cref{eq:block-expansion-shallow-phase-gap}. Since \(\mb I\) consists of ten nonnegative block counts whose sum is \(n\),
the number of possible count data is $\binom{n+9}{9}=O(n^9).$
In particular, the number of count data satisfying \(N_{\mb I}>0\) is also at
most \(O(n^9)\).

The proof proceeds in three steps. First, we exploit the adjoint symmetry of the block classes and regroup the above sum into Hermitian pieces. Next, for individual count data, we encode the nine nontrivial block multiplicities by a $3\times 3$ chessboard and distinguish four cases according to the sizes of the dominant entries. Finally, in each case, we bound the corresponding contribution and sum over all possible count data. 

\medskip
\noindent
\paragraph{\textbf{Step 1: Hermitian regrouping.}}
Starting from~\cref{eq:tI-new}, the most direct approach would be to apply the
triangle inequality term by term, which gives
$\left\|
\mb M_{\mc E_{\mathrm{shallow}}^{(\gamma)}}^{(3)}
-
\mb M_{\mc E_{\mathrm{phase}}}^{(3)}
\right\|_\infty
\le
2^{-3n}\sum_{\mb I:\,N_{\mb I}>0}\|\mb T(\mb I)\|_\infty. $
This estimate is sufficient in the simpler regimes considered later, but it
does not retain the Hermitian structure needed in Case~4 of the next step, since the individual
operators \(\mb T(\mb I)\) are not Hermitian in general. We therefore first
regroup the contributions according to their adjoints.

For the local twirled blocks, taking the adjoint replaces each permutation by
its inverse. Every odd permutation in \(S_3\) is a transposition and
hence self-inverse, whereas $\pi_{(123)}^{-1}=\pi_{(132)}$. Therefore, the adjoint operation exchanges the two columns indexed by
\(\pi_{(123)}\) and \(\pi_{(132)}\), and leaves the remaining column unchanged. In particular, for all $\pi_1\in S_{3,\mathrm{odd}}$, 
\begin{equation}
    \widetilde B(\pi_1,\pi_{(123)})^\dagger
=
\widetilde B(\pi_1,\pi_{(132)}),
\qquad \widetilde B(\pi_1,\pi_{()})^\dagger
=
\widetilde B(\pi_1,\pi_{()}),\qquad
\widetilde\Delta_3^\dagger=\widetilde\Delta_3.
\end{equation}
Accordingly, for each count data \(\mb I\) introduced in~\cref{eq:defI}, we denote \(\mb I^\dagger\) by
exchanging the two entries $\left|I^{(\pi_1,\pi_{(123)})}\right|
\quad\text{and}\quad
\left|I^{(\pi_1,\pi_{(132)})}\right|,$ while leaving all remaining entries unchanged.

The expression for \(N_{\mb I}\) in~\cref{eq:def-N-I-m3} is invariant under
this exchange. Hence
\begin{equation}\label{eq:NI-dagger}
N_{\mb I^\dagger}
=
N_{\mb I}.
\end{equation}

Moreover, taking adjoints maps the block class
\(\widetilde{\mc C}_{\mb I}\) bijectively onto
\(\widetilde{\mc C}_{\mb I^\dagger}\). Together
with~\cref{eq:NI-dagger}, this gives
\begin{equation}\label{eq:TIdagger}
\mb T(\mb I)^\dagger
=
\mb T(\mb I^\dagger).
\end{equation}

For each count data $\mb I$, define the Hermitian paired contribution
\begin{equation}\label{eq:hermitian-paired-block}
\widehat{\mb T}(\mb I)
:=\frac12[\mb T(\mb I)+\mb T(\mb I^\dagger)].
\end{equation}
By~\cref{eq:TIdagger}, each
\(\widehat{\mb T}(\mb I)\) is Hermitian. This replacement preserves the total sum. Indeed, for every pair
\(\mb I,\mb I^\dagger\),
\[
\widehat{\mb T}(\mb I)
+
\widehat{\mb T}(\mb I^\dagger)
=
\mb T(\mb I)
+
\mb T(\mb I^\dagger).
\]

Moreover, by~\cref{eq:NI-dagger}, the value of \(N_{\mb I}\) is invariant under
\(\mb I\mapsto\mb I^\dagger\). Therefore,~\cref{eq:tI-new} can be rewritten as
\begin{equation}\label{eq:proof-main-sum-hermitian}
\mb M_{\mc E_{\mathrm{shallow}}^{(\gamma)}}^{(3)}
-
\mb M_{\mc E_{\mathrm{phase}}}^{(3)}
=
2^{-3n}
\sum_{\mb I:\,N_{\mb I}>0}
\widehat{\mb T}(\mb I).
\end{equation}
Applying the triangle inequality gives
\begin{equation}\label{eq:triangle-hermitian}
\left\|
\mb M_{\mc E_{\mathrm{shallow}}^{(\gamma)}}^{(3)}
-
\mb M_{\mc E_{\mathrm{phase}}}^{(3)}
\right\|_\infty
\le
2^{-3n}
\sum_{\mb I:\,N_{\mb I}>0}
\left\|
\widehat{\mb T}(\mb I)
\right\|_\infty.
\end{equation}
We therefore fix count data \(\mb I\) with \(N_{\mb I}>0\) and study the individual contributions \(\|\widehat{\mb T}(\mb I)\|_\infty\).
\medskip
\noindent

\paragraph{\textbf{Step 2: Chessboard parametrization.}}

In this step, we focus on bounding a single
\(\|\widehat{\mb T}(\mb I)\|_\infty\) in~\cref{eq:triangle-hermitian}. We first derive a preliminary bound of $\|\widehat{\mb T}(\mb I)\|_\infty$ in terms
of the cardinality \(\#\widetilde{\mc C}_{\mb I}\), and the exponent \(N_{\mb I}\).  By~\cref{eq:TIdagger,eq:hermitian-paired-block}, $\|\widehat{\mb T}(\mb I)\|_\infty
\le
\|\mb T(\mb I)\|_\infty$ using the triangle inequality.

\begin{equation}\label{eq:k3-basic-count-bound}
\begin{split}
\|\widehat{\mb T}(\mb I)\|_\infty\leq \|\mb T(\mb I)\|_\infty
&\le
\left(1-\frac{2\gamma\log n}{n}\right)^{N_{\mb I}}
\sum_{\widetilde B_{\mathfrak I}\in\widetilde{\mc C}_{\mb I}}
\|\widetilde B_{\mathfrak I}\|_\infty \qquad\left(\text{using triangle inequality}\right)\\
&\le
\left(1-\frac{2\gamma\log n}{n}\right)^{N_{\mb I}}
\sum_{\widetilde B_{\mathfrak I}\in\widetilde{\mc C}_{\mb I}}
\|\widetilde\Delta_3^{\otimes I^{(0)}}
\otimes
\bigotimes_{\pi_1\in S_{3,\mathrm{odd}},\,\pi_2\in S_{3,\mathrm{even}}}
\widetilde B(\pi_1,\pi_2)^{\otimes I^{(\pi_1,\pi_2)}}\|_\infty \qquad\left(\text{using}~\cref{eq:def-B-frakI-m3}\right)\\
&=
\left(1-\frac{2\gamma\log n}{n}\right)^{N_{\mb I}}\#\widetilde{\mc C}_{\mb I}\times 
\|\widetilde\Delta_3\|_\infty^{|I^{(0)}|}\times
\prod_{\pi_1\in S_{3,\mathrm{odd}},\,\pi_2\in S_{3,\mathrm{even}}}
\|\widetilde B(\pi_1,\pi_2)\|_\infty^{|I^{(\pi_1,\pi_2)}|}\\
&=
\left(1-\frac{2\gamma\log n}{n}\right)^{N_{\mb I}}\#\widetilde{\mc C}_{\mb I}\times 
(\frac 12)^{|I^{(0)}|}\times (\frac 23)^{n-|I^{(0)}|}\qquad\left(\text{using}~\cref{eq:k3-delta-local-norm}\text{ and }\cref{eq:blockinfty2-m3}\right)\\
&\leq \left(1-\frac{2\gamma\log n}{n}\right)^{N_{\mb I}}\#\widetilde{\mc C}_{\mb I}\times (\frac 23)^{n}.
\end{split}
\end{equation}

To apply~\cref{eq:k3-basic-count-bound}, it remains to control the exponent \(N_{\mb I}\)
and the multiplicity
\(\#\widetilde{\mc C}_{\mb I}\). A direct difficulty is that the count data \(\mb I\) contain ten coupled parameters: one count $|I^{(0)}|$ associated with
\(\widetilde\Delta_3\) and nine counts $|I^{(\pi_1,\pi_2)}|$ associated with the nontrivial local
blocks \(\widetilde B(\pi_1,\pi_2)\). Fortunately, the dependence of the relevant bounds on these counts can
be controlled by only two or three suitably chosen dominant parameters. To
make this reduction transparent, we first organize the nine nontrivial counts
into a \(3\times3\) chessboard.
The rows are indexed by the three odd permutations, and the columns
by the three even permutations. For convenience, we write $n_{\mu,\nu}
:=
\left|I^{(\mu,\nu)}\right|,
\mu\in S_{3,\mathrm{odd}},
\nu\in S_{3,\mathrm{even}}.$ Relabeling the remaining rows and columns if
necessary, we may therefore arrange the nine blocks as
\begin{equation}\label{eq:chessboard-new}
\renewcommand{\arraystretch}{1.25}
\begin{array}{c|c|c}
\mathfrak a & \mathfrak b & \mathfrak c\\
\hline
\mathfrak d & \mathfrak e & \mathfrak f\\
\hline
\mathfrak g & \mathfrak h & \mathfrak i
\end{array}
=
\begin{array}{c|c|c}
(\pi_1,\pi_2)
&
(\pi_1,\pi_2')
&
(\pi_1,\pi_2'')
\\
\hline
(\pi_1',\pi_2)
&
(\pi_1',\pi_2')
&
(\pi_1',\pi_2'')
\\
\hline
(\pi_1'',\pi_2)
&
(\pi_1'',\pi_2')
&
(\pi_1'',\pi_2'')
\end{array}.
\end{equation}

For a chessboard label
\(\mathfrak x\in
\{\mathfrak a,\mathfrak b,\mathfrak c,
\mathfrak d,\mathfrak e,\mathfrak f,
\mathfrak g,\mathfrak h,\mathfrak i\}\)
corresponding to the pair \((\mu,\nu)\), we use the shorthand
\[
n_0:=|I^{(0)}|,
\qquad
I^{(\mathfrak x)}
:=
I^{(\mu,\nu)},
\qquad
n_{\mathfrak x}
:=
|I^{(\mathfrak x)}|,
\qquad
B(\mathfrak x)
:=
B(\mu,\nu),
\qquad
\widetilde B(\mathfrak x)
:=
\widetilde B(\mu,\nu).
\]

Among all pairs of counts lying in different rows and different columns,
choose a pair whose product is maximal. 
After relabeling the rows and columns
if necessary, we may assume that these counts are
\(n_{\mathfrak a}\) and \(n_{\mathfrak e}\). Define
\begin{equation}\label{eq:def-m1-m2}
m_1
:=
n_{\mathfrak a},
\qquad
m_2
:=
n_{\mathfrak e},
\qquad
m_1\ge m_2.
\end{equation}
By construction,
\begin{equation}\label{eq:m1m2max}
m_1m_2
=
\max_{\substack{
\mu\neq\mu'\\
\nu\neq\nu'
}}
n_{\mu,\nu}\,n_{\mu',\nu'}.
\end{equation}
We now use the chessboard parametrization to control the two quantities in
\cref{eq:k3-basic-count-bound}, beginning with the interaction exponent
\(N_{\mb I}\):
\begin{equation}\label{eq:NI-bound}
    N_{\mb I}\geq m_1m_2,
\end{equation}
since~\cref{eq:def-N-I-m3} contains the products of counts located in different rows and different columns. Thus, $N_{\mb I}$ is lower bounded using only the two dominant parameters.

We next use the chessboard parametrization to control the multiplicity
\(\#\widetilde{\mc C}_{\mb I}\) in~\cref{eq:k3-basic-count-bound}. For fixed
count data, choosing the nine nontrivial position sets successively gives
\begin{equation}\label{eq:exact-multiplicity-new}
\#\widetilde{\mc C}_{\mb I}
=
\binom{n}{n_{\mathfrak a}}
\binom{n-n_{\mathfrak a}}{n_{\mathfrak b}}
\binom{n-n_{\mathfrak a}-n_{\mathfrak b}}{n_{\mathfrak c}}
\cdots
\binom{
n-\sum_{\mathfrak x\prec\mathfrak i}n_{\mathfrak x}
}{
n_{\mathfrak i}
}.
\end{equation}
Here, \(\prec\) denotes the fixed ordering
\(\mathfrak a\prec\mathfrak b\prec\cdots\prec\mathfrak i\), so that
\(\sum_{\mathfrak x\prec\mathfrak i}n_{\mathfrak x}\) runs over all block
types preceding \(\mathfrak i\).

Although~\cref{eq:exact-multiplicity-new} is exact, it still depends on all
nine nontrivial counts. We now show that these counts can be controlled using
\(m_1,m_2\) and one additional parameter. Define
\[
m_3
:=
\max\left\{
n_{\mathfrak b},
n_{\mathfrak c},
n_{\mathfrak d},
n_{\mathfrak g}
\right\},
\]
so that \(m_3\) is the largest count sharing either the row or the column of
\(\mathfrak a\), excluding \(n_{\mathfrak a}\) itself.

Once the maximizing pair
\((n_{\mathfrak a},n_{\mathfrak e})=(m_1,m_2)\) is fixed, the four
positions entering the definition of \(m_3\) split into two symmetry
classes: $\{\mathfrak b,\mathfrak d\},
\{\mathfrak c,\mathfrak g\}.$
Interchanging the row and column roles maps \(\mathfrak d\) to
\(\mathfrak b\) and \(\mathfrak g\) to \(\mathfrak c\). Hence, after
this symmetry reduction, it is enough to treat the two representative
orientations
\[
m_3=n_{\mathfrak b}
\qquad\text{or}\qquad
m_3=n_{\mathfrak c}.
\]
Because the fixed count data satisfy \(N_{\mb I}>0\), the maximal
product obeys \(m_1m_2>0\), and in particular \(m_2>0\). In the
\(\mathfrak c\)-orientation, \(n_{\mathfrak c}=m_3\) and
\(n_{\mathfrak e}=m_2\) lie in different rows and columns. The
maximality of \(m_1m_2\) therefore gives
\[
m_3m_2
=
n_{\mathfrak c}m_2
\le
m_1m_2,
\qquad\text{and hence}\qquad
m_3\le m_1.
\]
Hence, the maximality properties
defining \(m_1,m_2,m_3\) are unchanged up to a relabeling of columns, and all
bounds below apply uniformly to both members of the Hermitian pair.

We now use the value of \(m_1,m_2,m_3\) to control the nine chessboard
counts. Since each of
\(\mathfrak c,\mathfrak g,\mathfrak i\) lies in a row and a column different
from those of \(\mathfrak e\),~\cref{eq:m1m2max} gives
\begin{equation}\label{eq:board-bound-1}
n_{\mathfrak a},\,
n_{\mathfrak c},\,
n_{\mathfrak g},\,
n_{\mathfrak i}
\le
m_1.
\end{equation}
By the definition of \(m_3\),
\begin{equation}\label{eq:board-bound-2}
n_{\mathfrak b},\,
n_{\mathfrak c},\,
n_{\mathfrak d},\,
n_{\mathfrak g}
\le
m_3.
\end{equation}
Similarly, each of
\(\mathfrak f,\mathfrak h,\mathfrak i\) lies in a row and a column different
from those of \(\mathfrak a\), and hence
\begin{equation}\label{eq:board-bound-3}
n_{\mathfrak e},\,
n_{\mathfrak f},\,
n_{\mathfrak h},\,
n_{\mathfrak i}
\le
m_2\le m_1.
\end{equation}
Finally, suppose that \(m_3=n_{\mathfrak b}>0\). Each of
\(\mathfrak d,\mathfrak f,\mathfrak g,\mathfrak i\) lies in a row
and a column different from those of \(\mathfrak b\). The maximality
in~\cref{eq:m1m2max} therefore implies
\begin{equation}\label{eq:board-bound-4}
n_{\mathfrak d},\,
n_{\mathfrak f},\,
n_{\mathfrak g},\,
n_{\mathfrak i}
\le
\frac{m_1m_2}{m_3}.
\end{equation}
If instead \(m_3=n_{\mathfrak c}>0\), the positions
\(\mathfrak d,\mathfrak e,\mathfrak g,\mathfrak h\) lie in rows and
columns different from those of \(\mathfrak c\). The same maximality
argument gives
\begin{equation}\label{eq:board-bound-4c}
n_{\mathfrak d},
n_{\mathfrak e},
n_{\mathfrak g},
n_{\mathfrak h}
\le
\frac{m_1m_2}{m_3}.
\end{equation}

Consequently,
\cref{eq:board-bound-1,eq:board-bound-2,eq:board-bound-3}, together
with \cref{eq:board-bound-4} in the \(\mathfrak b\)-orientation or
\cref{eq:board-bound-4c} in the \(\mathfrak c\)-orientation, reduce
the original nine-variable multiplicity estimate to bounds involving
only \(m_1,m_2,m_3\).

We now divide the remaining analysis into four cases according to the sizes of
the dominant count \(m_1\) and the auxiliary count \(m_3\). We choose constants $0<\alpha<\alpha'=\beta=\beta'<\frac12,$
whose explicit values will be given at the end of the proof. With these
choices, the following four cases exhaust all possible regimes of
\(m_1\) and \(m_3\).

\medskip
\noindent
\textbf{Case 1: \(m_1\le \alpha n\) and \(m_3\le \beta n\).}
In this regime, all relevant block counts are small, so the direct bound
in~\cref{eq:k3-basic-count-bound} is already sufficient.
\begin{equation}\label{eq:case1-starting-bound}
\left\|
\widehat{\mb T}(\mb I)
\right\|_\infty
\le
\#\widetilde{\mc C}_{\mb I}
\left(\frac23\right)^n = \binom{n}{n_{\mathfrak a}}
\binom{n-n_{\mathfrak a}}{n_{\mathfrak b}}
\binom{n-n_{\mathfrak a}-n_{\mathfrak b}}{n_{\mathfrak c}}
\cdots
\binom{
n-\sum_{\mathfrak x\prec\mathfrak i}n_{\mathfrak x}
}{
n_{\mathfrak i}
}\left(\frac23\right)^n
\leq \prod_{\mathfrak x\in \{\mathfrak a,...,\mathfrak i\}}\binom{n}{n_{\mathfrak x}}\left(\frac23\right)^n.
\end{equation}

Since \(m_2\le m_1\), the assumptions of Case~1 together
with~\cref{eq:board-bound-1,eq:board-bound-2,eq:board-bound-3} imply
\[
n_{\mathfrak a},
n_{\mathfrak c},
n_{\mathfrak e},
n_{\mathfrak f},
n_{\mathfrak g},
n_{\mathfrak h},
n_{\mathfrak i}
\le
\alpha n,
\qquad
n_{\mathfrak b},
n_{\mathfrak d}
\le
\beta n.
\]
Assuming \(0<\alpha,\beta<1/2\), the exact multiplicity formula
in~\cref{eq:exact-multiplicity-new} and the entropy bound $\binom{n}{k}
\le
2^{nH_2(k/n)}$, where \(H_2\) denotes the binary entropy,
give
\begin{equation}\label{eq:case1-combined-bound}
\left\|
\widehat{\mb T}(\mb I)
\right\|_\infty
\le\prod_{\mathfrak x\in \{\mathfrak a,...,\mathfrak i\}}\binom{n}{n_{\mathfrak x}}\left(\frac23\right)^n\leq
2^{\,n[7H_2(\alpha)+2H_2(\beta)]}
\left(\frac23\right)^n.
\end{equation}

Choose \(\alpha,\beta>0\) sufficiently small so that
\begin{equation}\label{eq:anc_const1}
7H_2(\alpha)+2H_2(\beta)
<
\log_2\!\left(\frac32\right).
\end{equation}
Then there exists a constant $\delta_1
:=
\log_2\!\left(\frac32\right)
-
7H_2(\alpha)
-
2H_2(\beta)
>
0$ such that
\begin{equation}\label{eq:case1-final-polished}
\left\|
\widehat{\mb T}(\mb I)
\right\|_\infty
\le
2^{-\delta_1 n}
=
\mathrm{negl}(n).
\end{equation}

\medskip
\noindent
\textbf{Case 2: \(m_1>\alpha n\) and \(m_3\le \beta' n\).}

In this regime, the direct multiplicity estimate used in Case~1 is no longer
sufficient. Indeed, \(m_1=n_{\mathfrak a}>\alpha n\), and therefore the
multiplicity term in~\cref{eq:k3-basic-count-bound} contains the factor $\binom{n}{n_{\mathfrak a}}.$
This factor can be exponentially large and cannot, in general, be absorbed by
the local decay \((2/3)^n\). We therefore use a block-combination argument to remove this
large combinatorial contribution.

For fixed choices of the pairwise disjoint position sets
\(I^{(\mathfrak b)},\ldots,I^{(\mathfrak i)}\), we set $\Lambda
=
[n]\setminus
\bigsqcup_{\mathfrak x\in\{\mathfrak b,\ldots,\mathfrak i\}}
I^{(\mathfrak x)}.$
Then, for every term in the inner sum, $\Lambda
=
I^{(0)}\sqcup I^{(\mathfrak a)}.$
For fixed choices of
\(I^{(\mathfrak b)},\ldots,I^{(\mathfrak i)}\), the contribution of the
\(\mathfrak a\)-block and the residual \(\widetilde\Delta_3\)-positions is $\mb{S}_{n_{\mathfrak a}}(\Lambda)\coloneq \sum_{\substack{
I^{(\mathfrak a)}\subseteq\Lambda\\
|I^{(\mathfrak a)}|=n_{\mathfrak a}
}}
\widetilde B(\mathfrak a)^{\otimes I^{(\mathfrak a)}}
\otimes
\widetilde\Delta_3^{\otimes
(\Lambda\setminus I^{(\mathfrak a)})}.$
Applying~\cref{lem:fixed-cardinality-block-sum} to this fixed-cardinality
sum gives
\begin{equation}\label{eq:case2-block-combination}
\|\mb{S}_{n_{\mathfrak a}}(\Lambda)\|_\infty\le
\left(\frac23\right)^{|\Lambda|}.
\end{equation}

The remaining eight local blocks satisfy $\|\widetilde B(\mathfrak x)\|_\infty
\le
\frac23,\text{ for }
\mathfrak x\in
\{\mathfrak b,\mathfrak c,\mathfrak d,\mathfrak e,
\mathfrak f,\mathfrak g,\mathfrak h,\mathfrak i\}$. Since $|\Lambda|
+
\sum_{\mathfrak x\in
\{\mathfrak b,\ldots,\mathfrak i\}}
n_{\mathfrak x}
=
n,$
each fixed outer configuration contributes at most \((2/3)^n\).
Therefore, we have 

\begin{equation}\label{eq:case2-T-decomposition}
\begin{split}
\mb T(\mb I)
&=
\left(1-\frac{2\gamma\log n}{n}\right)^{N_{\mb I}}
\sum_{\substack{
I^{(0)},I^{(\mathfrak a)},\ldots,I^{(\mathfrak i)}
\subseteq[n]\\
I^{(0)}\sqcup I^{(\mathfrak a)}\sqcup\cdots\sqcup
I^{(\mathfrak i)}=[n]\\
|I^{(0)}|=n_0,\;
|I^{(\mathfrak x)}|=n_{\mathfrak x}
}}
\widetilde\Delta_3^{\otimes I^{(0)}}
\otimes
\bigotimes_{\mathfrak x\in\{\mathfrak a,\ldots,\mathfrak i\}}
\widetilde B(\mathfrak x)^{\otimes I^{(\mathfrak x)}}
\\
&=
\left(1-\frac{2\gamma\log n}{n}\right)^{N_{\mb I}}
\sum_{\substack{
I^{(\mathfrak b)},\ldots,I^{(\mathfrak i)}\subseteq[n]\\
|I^{(\mathfrak x)}|=n_{\mathfrak x},
\ \mathfrak x\in\{\mathfrak b,\ldots,\mathfrak i\}\\
\text{pairwise disjoint}
}}
\left[
\sum_{\substack{
I^{(\mathfrak a)}\subseteq\Lambda\\
|I^{(\mathfrak a)}|=n_{\mathfrak a}
}}
\widetilde B(\mathfrak a)^{\otimes I^{(\mathfrak a)}}
\otimes
\widetilde\Delta_3^{
\otimes(\Lambda\setminus I^{(\mathfrak a)})}
\right]
\otimes
\bigotimes_{\mathfrak x\in\{\mathfrak b,\ldots,\mathfrak i\}}
\widetilde B(\mathfrak x)^{\otimes I^{(\mathfrak x)}}
\\
&=
\left(1-\frac{2\gamma\log n}{n}\right)^{N_{\mb I}}
\sum_{\substack{
I^{(\mathfrak b)},\ldots,I^{(\mathfrak i)}\subseteq[n]\\
|I^{(\mathfrak x)}|=n_{\mathfrak x},
\ \mathfrak x\in\{\mathfrak b,\ldots,\mathfrak i\}\\
\text{pairwise disjoint}
}}
\mb S_{n_{\mathfrak a}}(\Lambda)
\otimes
\bigotimes_{\mathfrak x\in\{\mathfrak b,\ldots,\mathfrak i\}}
\widetilde B(\mathfrak x)^{\otimes I^{(\mathfrak x)}}.
\end{split}
\end{equation}
and
\begin{equation}\label{eq:case2-master-bound}
\begin{split}
&\|\widehat{\mb T}(\mb I)\|_\infty\le\|{\mb T}(\mb I)\|_\infty\le\left(1-\frac{2\gamma\log n}{n}\right)^{N_{\mb I}}
\sum_{\substack{
I^{(\mathfrak b)},\ldots,I^{(\mathfrak i)}\subseteq[n]\\
|I^{(\mathfrak x)}|=n_{\mathfrak x},
\ \mathfrak x\in\{\mathfrak b,\ldots,\mathfrak i\}\\
\text{pairwise disjoint}
}} (\frac 23)^n\\
&\le\left(1-\frac{2\gamma\log n}{n}\right)^{N_{\mb I}}(\frac 23)^n\times
\binom{n}{n_{\mathfrak b}}
\binom{n-n_{\mathfrak b}}{n_{\mathfrak c}}
\binom{n-n_{\mathfrak b}-n_{\mathfrak c}}{n_{\mathfrak d}}
\cdots
\binom{
n-\sum_{\mathfrak x\in
\{\mathfrak b,\ldots,\mathfrak h\}}
n_{\mathfrak x}
}{
n_{\mathfrak i}
}.\\
\end{split}
\end{equation}

Compared with~\cref{eq:case1-starting-bound}, the bound
in~\cref{eq:case2-master-bound} no longer contains the binomial factor
associated with \(n_{\mathfrak a}=m_1\). This is precisely the gain obtained
from the block-combination estimate using~\cref{lem:fixed-cardinality-block-sum}, and it is
essential here because \(m_1>\alpha n\) may be macroscopically large. 
We now bound~\cref{eq:case2-master-bound} by distinguishing two subcases.

\medskip
\noindent
\emph{Subcase 2.1: \(m_1m_2\ge n^2/\log n\).} By~\cref{eq:m1m2max}, $N_{\mb I}\ge m_1m_2.$
Hence,
\begin{equation}\label{eq:case2-large-sparse-decay}
\left(1-\frac{2\gamma\log n}{n}\right)^{N_{\mb I}}
\le
\exp\left(
-\frac{2\gamma\log n}{n}m_1m_2
\right)
\le
e^{-2\gamma n}.
\end{equation}

The remaining multiplicity in~\cref{eq:case2-master-bound} contains only the
eight non-\(\mathfrak a\) block types. It is bounded by the total number of
assignments of these eight types and one residual class, hence $\binom{n}{n_{\mathfrak b}}
\binom{n-n_{\mathfrak b}}{n_{\mathfrak c}}
\cdots
\binom{
n-\sum_{\mathfrak x\in\{\mathfrak b,\ldots,\mathfrak h\}}
n_{\mathfrak x}
}{
n_{\mathfrak i}
}
\le9^n$.
Therefore, $\|\widehat{\mb T}(\mb I)\|_\infty
\le
e^{-2\gamma n}
\left(\frac23\right)^n
9^n=(6e^{-2\gamma})^n$.
For \(\gamma>1\), one has \(6e^{-2\gamma}<1\), and hence this bound is
exponentially small in \(n\).

\medskip
\noindent
\emph{Subcase 2.2: \(m_1m_2<n^2/\log n\).} Since \(m_1>\alpha n\), we obtain $m_2<
\frac{n}{\alpha\log n}$.
Together with
\cref{eq:board-bound-2,eq:board-bound-3},
\begin{equation}
    n_{\mathfrak b},
n_{\mathfrak c},
n_{\mathfrak d},
n_{\mathfrak g}
\le
\beta'n,\qquad n_{\mathfrak e},
n_{\mathfrak f},
n_{\mathfrak h},
n_{\mathfrak i}
\le
m_2
<
\frac{n}{\alpha\log n}.
\end{equation}
Assume \(0<\beta'<1/2\). By monotonicity of the binary entropy function on
\([0,1/2]\), the four counts bounded by \(\beta'n\) contribute at most
\(2^{4nH_2(\beta')}\).
For the remaining four counts, using $\binom nk\le
\left(\frac{en}{k}\right)^k,$
we have uniformly for
\(k\le n/(\alpha\log n)\), $\binom nk
\le
\exp\left(
O\left(\frac{n\log\log n}{\log n}\right)
\right)
=
2^{o(n)}.$
Hence the multiplicity factor in~\cref{eq:case2-master-bound} is bounded by $2^{4nH_2(\beta')+o(n)}.$
Consequently,
\begin{equation}\label{eq:case2-final-bound}
\|\widehat{\mb T}(\mb I)\|_\infty
\le
2^{4nH_2(\beta')+o(n)}
\left(\frac23\right)^n .
\end{equation}

Choose \(\beta'>0\) sufficiently small such that
\begin{equation}\label{eq:anc_const2}
4H_2(\beta')
<
\log_2\left(\frac32\right).
\end{equation}
Then there exists a constant \(\delta_2>0\), independent of \(n\), such that,
for sufficiently large \(n\),$\|\widehat{\mb T}(\mb I)\|_\infty
\le
2^{-\delta_2 n}
=
\mathrm{negl}(n).$

Both subcases therefore give an  exponentially small bound. Consequently,
\begin{equation}\label{eq:case2-final-polished}
\|\widehat{\mb T}(\mb I)\|_\infty
=
\mathrm{negl}(n).
\end{equation}

\medskip
\noindent
\textbf{Case 3:
\(m_3>\min\{\beta,\beta'\}n\) and \(m_1<\alpha'n\).}

Only the orientation \(m_3=n_{\mathfrak b}\) can occur in
Case~3. Indeed, in the \(\mathfrak c\)-orientation we have
\(m_3\le m_1\). Since \(\beta=\beta'=\alpha'\), the assumptions of
Case~3 would require simultaneously
\[
m_3>\alpha'n
\qquad\text{and}\qquad
m_1<\alpha'n,
\]
which is impossible when \(m_3\le m_1\). Therefore
\(m_3=n_{\mathfrak b}\) throughout this case. The factor associated
with \(n_{\mathfrak b}=m_3\) may now be exponentially large, so we
apply the fixed-cardinality block-combination argument to the
\(\mathfrak b\)-block and the residual
\(\widetilde\Delta_3\)-positions. We use the same construction as in
Case~2, with the roles of \(n_{\mathfrak a}\) and
\(n_{\mathfrak b}\) interchanged.

For fixed choices of the pairwise disjoint position sets $\{I^{(\mathfrak a)},I^{(\mathfrak c)},I^{(\mathfrak d)},
I^{(\mathfrak e)},I^{(\mathfrak f)},I^{(\mathfrak g)},
I^{(\mathfrak h)},I^{(\mathfrak i)}\}$,
we set $\Lambda
:=
[n]\setminus
\bigsqcup_{\mathfrak x\in
\{\mathfrak a,\mathfrak c,\mathfrak d,\mathfrak e,
\mathfrak f,\mathfrak g,\mathfrak h,\mathfrak i\}}
I^{(\mathfrak x)}$ in this case.
Then \(\Lambda=I^{(0)}\sqcup I^{(\mathfrak b)}\). Applying
\cref{lem:fixed-cardinality-block-sum} to a new $\mb S_{n_{\mathfrak b}}(\Lambda)
:=
\sum_{\substack{
I^{(\mathfrak b)}\subseteq\Lambda\\
|I^{(\mathfrak b)}|=n_{\mathfrak b}
}}
\widetilde B(\mathfrak b)^{\otimes I^{(\mathfrak b)}}
\otimes
\widetilde\Delta_3^{\otimes
(\Lambda\setminus I^{(\mathfrak b)})}$
gives $\left\|
\mb S_{n_{\mathfrak b}}(\Lambda)
\right\|_\infty
\le
\left(\frac23\right)^{|\Lambda|}.$
Proceeding exactly as in~\cref{eq:case2-T-decomposition,eq:case2-master-bound},
we obtain the analogue of the Case~2 master bound, with the binomial factor
associated with \(n_{\mathfrak b}\) rather than \(n_{\mathfrak a}\) removed.

We again distinguish whether \(m_1m_2\) is larger or smaller than
\(n^2/\log n\). If $m_1m_2\ge\frac{n^2}{\log n}$,
following the argument of Subcase~2.1 gives an exponentially small bound for \(\gamma>1\).

Suppose instead that $m_1m_2<\frac{n^2}{\log n}.$ Since \(m_3>\min\{\beta,\beta'\} n\),
\cref{eq:board-bound-1,eq:board-bound-3,eq:board-bound-4} imply
\begin{equation}\label{eq:case3-count-bounds}
\begin{gathered}
n_{\mathfrak a},
n_{\mathfrak c},
n_{\mathfrak e},
n_{\mathfrak h}
\le
m_1
<
\alpha'n,
\\
n_{\mathfrak d},
n_{\mathfrak f},
n_{\mathfrak g},
n_{\mathfrak i}
\le
\frac{m_1m_2}{m_3}
<
\frac{n}{\min\{\beta,\beta'\}\log n}.
\end{gathered}
\end{equation}
Assuming \(0<\alpha'<1/2\), the first four counts contribute at most
\(2^{4nH_2(\alpha')}\), while the remaining four contribute \(2^{o(n)}\).
Consequently,
\[
\left\|
\widehat{\mb T}(\mb I)
\right\|_\infty
\le
2^{4nH_2(\alpha')+o(n)}
\left(\frac23\right)^n.
\]
Choose \(\alpha'>0\) sufficiently small so that
\begin{equation}\label{eq:anc_const3}
4H_2(\alpha')
<
\log_2\!\left(\frac32\right).
\end{equation}
Then the last expression is exponentially small for sufficiently large \(n\).

Combining the two subcases, we conclude that
\begin{equation}\label{eq:case3-final-polished}
\left\|
\widehat{\mb T}(\mb I)
\right\|_\infty
=
\mathrm{negl}(n).
\end{equation}

\textbf{Case 4: \(m_3>\min\{\beta,\beta'\}n\) and
\(m_1\ge \alpha'n\).}

This is the most challenging regime, since both \(m_1\) and \(m_3\) are
macroscopically large. The single-block combination arguments used in
Cases~2 and~3 are no longer sufficient. We instead combine the blocks
associated with \(\mathfrak a,\mathfrak b,\mathfrak c\), together with
\(\widetilde\Delta_3\).

Our goal remains to bound
\(\|\widehat{\mb T}(\mb I)\|_\infty\). To do so, we return temporarily to
the untwirled blocks, whose Boolean-support structure is explicit in the
computational basis. Define
\begin{equation}\label{eq:k3-XI}
\mb X(\mb I)
:=
\left(1-\frac{2\gamma\log n}{n}\right)^{N_{\mb I}}
\sum_{B_{\mathfrak I}\in\mc C_{\mb I}}
B_{\mathfrak I},
\end{equation}
and its Hermitian pairing
\begin{equation}\label{eq:k3-XI-hat}
\widehat{\mb X}(\mb I)
:=
\frac12\left[
\mb X(\mb I)+\mb X(\mb I^\dagger)
\right].
\end{equation}
By the same argument as in the proof of~\cref{eq:TIdagger},
\[
\mb X(\mb I^\dagger)=\mb X(\mb I)^\dagger,
\]
and hence \(\widehat{\mb X}(\mb I)\) is Hermitian.

Let \(\mc T_3\) denote the single-qubit three-copy Clifford twirling channel.
By the definitions of the twirled blocks and the linearity of \(\mc T_3\),
\[
\mb T(\mb I)
=
\mc T_3^{\otimes n}\bigl(\mb X(\mb I)\bigr),
\qquad
\widehat{\mb T}(\mb I)
=
\mc T_3^{\otimes n}\bigl(\widehat{\mb X}(\mb I)\bigr).
\]
Since \(\mc T_3^{\otimes n}\) is a convex combination of unitary
conjugations, it is contractive in the operator norm. Therefore,
\begin{equation}\label{eq:case4-contractive}
\|\widehat{\mb T}(\mb I)\|_\infty
\le
\|\widehat{\mb X}(\mb I)\|_\infty.
\end{equation}
Thus, it suffices to bound the untwirled Hermitian operator
\(\widehat{\mb X}(\mb I)\), for which the computational-basis
Boolean-support structure can be exploited directly.


By~\cref{Ap:moment3}, every matrix entry of
\(\mb X(\mb I)\) and
\(\mb X(\mb I^\dagger)\) in the computational basis is real and
nonnegative. Indeed, each block \(B_{\mathfrak I}\) is a
Boolean-support operator in the computational basis, and the prefactor $\left(1-\frac{2\gamma\log n}{n}\right)^{N_{\mb I}}$ in~\cref{eq:k3-XI}
is nonnegative by the standing parameter regime. Therefore,
\(\widehat{\mb X}(\mb I)\) is a Hermitian matrix with entrywise
nonnegative matrix elements in the computational basis.

By the Perron--Frobenius theorem, there exists a unit vector
\(\ket x\) with nonnegative computational-basis entries such that
\begin{equation}\label{eq:case4-pf-rayleigh}
\|\widehat{\mb X}(\mb I)\|_\infty
=
\rho(\widehat{\mb X}(\mb I))
=
\bra{x}\widehat{\mb X}(\mb I)\ket{x},
\end{equation}
where \(\rho(A)\) denotes the spectral radius of \(A\).
Since \(\widehat{\mb X}(\mb I)\) is Hermitian and entrywise
nonnegative, its Perron eigenvalue equals its spectral radius, and its
spectral radius equals its operator norm. No irreducibility assumption is
needed for this conclusion.

Furthermore, since both \(\mb X(\mb I)\) and \(\ket{x}\) are real,
\begin{equation}\label{eq:case4-reduce-to-X}
\begin{split}
\|\widehat{\mb X}(\mb I)\|_\infty
&=
\frac12
\left(
\bra{x}\mb X(\mb I)\ket{x}
+
\bra{x}\mb X(\mb I)^\dagger\ket{x}
\right)=
\bra{x}\mb X(\mb I)\ket{x}.
\end{split}
\end{equation}

Combining
\cref{eq:case4-contractive,eq:case4-pf-rayleigh,eq:case4-reduce-to-X}, we obtain
\begin{equation}\label{eq:tIleqxI}
    \|\widehat{\mb T}(\mb I)\|_\infty
\le
\|\widehat{\mb X}(\mb I)\|_\infty = \bra{x}\mb X(\mb I)\ket{x}.
\end{equation}
Thus, it remains to bound the quadratic form of the single contribution
\(\mb X(\mb I)\) on a nonnegative unit vector \(\ket{x}\). For Boolean-support operators \(A\) and \(B\), we write
\(A\subseteq B\) when their computational-basis matrix elements satisfy  $0\le A_{u,v}\le B_{u,v}, 
\text{ for all computational-basis indices \(u,v\)}.$
In particular, this relation implies inclusion of their
computational-basis supports. Since \(\ket{x}\) has nonnegative
computational-basis entries, \(A\subseteq B\) implies
\begin{equation}\label{eq:subseteqUnitx}
\bra{x}A\ket{x}
\le
\bra{x}B\ket{x}.
\end{equation}

For each fixed choice of the six pairwise disjoint position sets $I^{(\mathfrak d)},I^{(\mathfrak e)},I^{(\mathfrak f)},
I^{(\mathfrak g)},I^{(\mathfrak h)},I^{(\mathfrak i)},$
define $\Lambda
:=
[n]\setminus
\bigsqcup_{\mathfrak x\in
\{\mathfrak d,\mathfrak e,\mathfrak f,
\mathfrak g,\mathfrak h,\mathfrak i\}}
I^{(\mathfrak x)}.$
The remaining qubit sites are then partitioned as $\Lambda
=
I^{(0)}
\sqcup I^{(\mathfrak a)}
\sqcup I^{(\mathfrak b)}
\sqcup I^{(\mathfrak c)}.$
For this fixed \(\Lambda\), define
\begin{equation}\label{eq:def-R-Lambda}
\begin{split}
\mb R_{\Lambda}
:=
\sum_{\substack{
I^{(\mathfrak a)},I^{(\mathfrak b)},I^{(\mathfrak c)}
\subseteq\Lambda\\
|I^{(\mathfrak a)}|=n_{\mathfrak a},\
|I^{(\mathfrak b)}|=n_{\mathfrak b},\
|I^{(\mathfrak c)}|=n_{\mathfrak c}\\
\text{pairwise disjoint}
}}
&
B(\mathfrak a)^{\otimes I^{(\mathfrak a)}}
\otimes
B(\mathfrak b)^{\otimes I^{(\mathfrak b)}}
\otimes
B(\mathfrak c)^{\otimes I^{(\mathfrak c)}}
\otimes
\Delta_3^{\otimes
\left(
\Lambda\setminus
\bigl(
I^{(\mathfrak a)}
\sqcup I^{(\mathfrak b)}
\sqcup I^{(\mathfrak c)}
\bigr)
\right)}.
\end{split}
\end{equation}
Accordingly, the definition of \(\mb X(\mb I)\) can be reorganized as
\begin{equation}\label{eq:case4-X-decomposition}
\begin{split}
\mb X(\mb I)
&=
\left(1-\frac{2\gamma\log n}{n}\right)^{N_{\mb I}}
\sum_{\substack{
I^{(\mathfrak d)},\ldots,I^{(\mathfrak i)}\subseteq[n]\\
|I^{(\mathfrak x)}|=n_{\mathfrak x},
\ \mathfrak x\in\{\mathfrak d,\ldots,\mathfrak i\}\\
\text{pairwise disjoint}
}}
\mb R_{\Lambda}
\otimes
\bigotimes_{\mathfrak x\in
\{\mathfrak d,\ldots,\mathfrak i\}}
B(\mathfrak x)^{\otimes I^{(\mathfrak x)}}.
\end{split}
\end{equation}

Recall from~\cref{obs:B-partition-m3} that
\begin{equation}\label{eq:row-sum-untwirled-polished}
\Delta_3
+
B(\mathfrak a)
+
B(\mathfrak b)
+
B(\mathfrak c)
=
V_1(\pi_1),
\end{equation}
where \(\pi_1\in S_{3,\mathrm{odd}}\) indexes the first row of the
chessboard in~\cref{eq:chessboard-new}.

Expanding $\left(
\Delta_3+
B(\mathfrak a)+
B(\mathfrak b)+
B(\mathfrak c)
\right)^{\otimes\Lambda}$ amounts to summing over all partitions $\Lambda
=
I^{(0)}
\sqcup I^{(\mathfrak a)}
\sqcup I^{(\mathfrak b)}
\sqcup I^{(\mathfrak c)},$ with the corresponding local block assigned to each position set. The operator
\(\mb R_\Lambda\) is obtained by restricting this sum to those partitions
satisfying $|I^{(\mathfrak a)}|=n_{\mathfrak a},
|I^{(\mathfrak b)}|=n_{\mathfrak b},
|I^{(\mathfrak c)}|=n_{\mathfrak c}.$
Consequently, \(\mb R_\Lambda\) is entrywise dominated by the full
expansion:
\begin{equation}\label{eq:first-row-support-inclusion}
\mb R_\Lambda
\subseteq
\left(
\Delta_3+
B(\mathfrak a)+
B(\mathfrak b)+
B(\mathfrak c)
\right)^{\otimes\Lambda}
=
V_1(\pi_1)^{\otimes\Lambda}.
\end{equation}
Since the remaining factors
\(B(\mathfrak x)\),
\(\mathfrak x\in
\{\mathfrak d,\mathfrak e,\mathfrak f,
\mathfrak g,\mathfrak h,\mathfrak i\}\),
are also Boolean-support operators, tensoring preserves the support
inclusion. Hence
\begin{equation}\label{eq:subseteqTensor}
\mb R_{\Lambda}
\otimes
\bigotimes_{\mathfrak x\in
\{\mathfrak d,\mathfrak e,\mathfrak f,\mathfrak g,\mathfrak h,\mathfrak i\}}
B(\mathfrak x)^{\otimes I^{(\mathfrak x)}}
\subseteq
V_1(\pi_1)^{\otimes\Lambda}
\otimes
\bigotimes_{\mathfrak x\in
\{\mathfrak d,\mathfrak e,\mathfrak f,\mathfrak g,\mathfrak h,\mathfrak i\}}
B(\mathfrak x)^{\otimes I^{(\mathfrak x)}} .
\end{equation}
Taking the quadratic form of $\mb{X}(\mb{I})$ with respect to
\(\ket{x}\), and then using
\cref{eq:subseteqUnitx,eq:subseteqTensor}, we obtain
\begin{equation}\label{eq:case4-X-first-bound}
\begin{split}
\bra{x}\mb X(\mb I)\ket{x}
&=\left(1-\frac{2\gamma\log n}{n}\right)^{N_{\mb I}}
\sum_{\substack{
I^{(\mathfrak d)},\dots,I^{(\mathfrak i)}\subseteq[n]\\
|I^{(\mathfrak d)}|=n_{\mathfrak d},\dots,
|I^{(\mathfrak i)}|=n_{\mathfrak i}\\
\text{pairwise disjoint}
}}
\bra{x}\mb R_{\Lambda}
\otimes
\bigotimes_{\mathfrak x\in
\{\mathfrak d,\mathfrak e,\mathfrak f,
\mathfrak g,\mathfrak h,\mathfrak i\}}
B(\mathfrak x)^{\otimes I^{(\mathfrak x)}}\ket{x}
\\
&\le
\left(1-\frac{2\gamma\log n}{n}\right)^{N_{\mb I}}\quad\times
\sum_{\substack{
I^{(\mathfrak d)},\dots,I^{(\mathfrak i)}\subseteq[n]\\
|I^{(\mathfrak d)}|=n_{\mathfrak d},\dots,
|I^{(\mathfrak i)}|=n_{\mathfrak i}\\
\text{pairwise disjoint}
}}
\bra{x}
\left[
V_1(\pi_1)^{\otimes\Lambda}
\otimes
\bigotimes_{\mathfrak x\in
\{\mathfrak d,\mathfrak e,\mathfrak f,\mathfrak g,\mathfrak h,\mathfrak i\}}
B(\mathfrak x)^{\otimes I^{(\mathfrak x)}}
\right]
\ket{x}
\\
&\le
\left(1-\frac{2\gamma\log n}{n}\right)^{N_{\mb I}}\quad\times
\sum_{\substack{
I^{(\mathfrak d)},\dots,I^{(\mathfrak i)}\subseteq[n]\\
|I^{(\mathfrak d)}|=n_{\mathfrak d},\dots,
|I^{(\mathfrak i)}|=n_{\mathfrak i}\\
\text{pairwise disjoint}
}}
\left\|
V_1(\pi_1)^{\otimes\Lambda}
\otimes
\bigotimes_{\mathfrak x\in
\{\mathfrak d,\mathfrak e,\mathfrak f,\mathfrak g,\mathfrak h,\mathfrak i\}}
B(\mathfrak x)^{\otimes I^{(\mathfrak x)}}
\right\|_\infty .
\end{split}
\end{equation}

It remains to bound the terms appearing in the last sum of
\cref{eq:case4-X-first-bound}. The operator
\(V_1(\pi_1)\) is a permutation operator on the single-qubit three-copy space,
and therefore  $\|V_1(\pi_1)\|_\infty=1.$

Moreover, property 5 of~\cref{obs:B-partition-m3} shows that 
$\|B(\mathfrak x)\|_\infty\le 1,$ for $
\mathfrak x\in
\{\mathfrak a,\mathfrak b,\mathfrak c,
\mathfrak d,\mathfrak e,\mathfrak f,
\mathfrak g,\mathfrak h,\mathfrak i\}.$
Therefore, by the multiplicativity of the operator norm under tensor products,
each term in the sum of~\cref{eq:case4-X-first-bound} is bounded by one.
Consequently, the remaining contribution is controlled only by the number of
admissible choices of the six remaining position sets:
\begin{equation}\label{eq:case4-rayleigh-bound-revised}
\begin{split}
\bra{x}\mb X(\mb I)\ket{x}
&\le
\left(1-\frac{2\gamma\log n}{n}\right)^{N_{\mb I}}\quad\times
\sum_{\substack{
I^{(\mathfrak d)},\dots,I^{(\mathfrak i)}\subseteq[n]\\
|I^{(\mathfrak d)}|=n_{\mathfrak d},\dots,
|I^{(\mathfrak i)}|=n_{\mathfrak i}\\
\text{pairwise disjoint}
}}
\left\|
V_1(\pi_1)^{\otimes\Lambda}
\otimes
\bigotimes_{\mathfrak x\in
\{\mathfrak d,\mathfrak e,\mathfrak f,\mathfrak g,\mathfrak h,\mathfrak i\}}
B(\mathfrak x)^{\otimes I^{(\mathfrak x)}}
\right\|_\infty .
\\
&\le
\left(1-\frac{2\gamma\log n}{n}\right)^{N_{\mb I}}\quad\times
\sum_{\substack{
I^{(\mathfrak d)},\dots,I^{(\mathfrak i)}\subseteq[n]\\
|I^{(\mathfrak d)}|=n_{\mathfrak d},\dots,
|I^{(\mathfrak i)}|=n_{\mathfrak i}\\
\text{pairwise disjoint}
}}{1}\\
&\leq \left(1-\frac{2\gamma\log n}{n}\right)^{N_{\mb I}}\quad\times
\binom{n}{n_{\mathfrak d}}
\binom{n-n_{\mathfrak d}}{n_{\mathfrak e}}
\binom{n-n_{\mathfrak d}-n_{\mathfrak e}}{n_{\mathfrak f}}
\cdots
\binom{
n-\sum_{\mathfrak x\in
\{\mathfrak d,\ldots,\mathfrak h\}}
n_{\mathfrak x}
}{
n_{\mathfrak i}
}.
\end{split}
\end{equation}

Combining~\cref{eq:case4-rayleigh-bound-revised} with~\cref{eq:tIleqxI}, we
obtain
\begin{equation}\label{eq:case4-after-row-reduction}
\begin{split}
\|\widehat{\mb T}(\mb I)\|_\infty
&\le
\left(1-\frac{2\gamma\log n}{n}\right)^{N_{\mb I}}\quad\times
\binom{n}{n_{\mathfrak d}}
\binom{n-n_{\mathfrak d}}{n_{\mathfrak e}}
\binom{n-n_{\mathfrak d}-n_{\mathfrak e}}{n_{\mathfrak f}}
\cdots
\binom{
n-\sum_{\mathfrak x\in
\{\mathfrak d,\ldots,\mathfrak h\}}
n_{\mathfrak x}
}{
n_{\mathfrak i}
}\\
&\le
\left(1-\frac{2\gamma\log n}{n}\right)^{N_{\mb I}}\quad\times
\binom{n}{n_{\mathfrak d}}
\binom{n}{n_{\mathfrak e}}
\binom{n}{n_{\mathfrak f}}
\cdots
\binom{
n
}{
n_{\mathfrak i}
}.
\end{split}
\end{equation}
The first-row block combination has removed the three potentially
large multiplicity factors associated with
\(n_{\mathfrak a},n_{\mathfrak b},n_{\mathfrak c}\). It remains to
control the six counts in the other two rows.

If \(m_3=n_{\mathfrak b}\), then
\cref{eq:board-bound-3,eq:board-bound-4} give
\[
n_{\mathfrak e},
n_{\mathfrak h}
\le
m_2,
\qquad
n_{\mathfrak d},
n_{\mathfrak f},
n_{\mathfrak g},
n_{\mathfrak i}
\le
\frac{m_1m_2}{m_3}.
\]
If \(m_3=n_{\mathfrak c}\), then
\cref{eq:board-bound-3,eq:board-bound-4c} instead give
\[
n_{\mathfrak f},
n_{\mathfrak i}
\le
m_2,
\qquad
n_{\mathfrak d},
n_{\mathfrak e},
n_{\mathfrak g},
n_{\mathfrak h}
\le
\frac{m_1m_2}{m_3}.
\]
Thus, in either orientation, two of the six remaining counts are
bounded by \(m_2\), while the other four are bounded by
\(m_1m_2/m_3\). Since Case~4 satisfies
\(m_3>\min\{\beta,\beta'\}n\) and \(m_1\le n\),
\[
\frac{m_1m_2}{m_3}
\le
\frac{m_2}{\min\{\beta,\beta'\}}.
\]
Consequently, in both orientations,
\begin{equation}\label{eq:case4-total-count}
n_{\mathfrak d}
+
n_{\mathfrak e}
+
n_{\mathfrak f}
+
n_{\mathfrak g}
+
n_{\mathfrak h}
+
n_{\mathfrak i}
\le
\left(
2+\frac{4}{\min\{\beta,\beta'\}}
\right)m_2
=:
c_{\beta,\beta'}m_2.
\end{equation}

Using the elementary estimate $\binom{n}{k}\le n^k,$
the remaining multiplicity in~\cref{eq:case4-after-row-reduction} satisfies
\begin{equation}\label{eq:case4-multiplicity-final}
\begin{split}
&
\binom{n}{n_{\mathfrak d}}
\binom{n}{n_{\mathfrak e}}
\cdots
\binom{
n
}{
n_{\mathfrak i}
}\le
n^{
n_{\mathfrak d}
+n_{\mathfrak e}
+n_{\mathfrak f}
+n_{\mathfrak g}
+n_{\mathfrak h}
+n_{\mathfrak i}
}
\le
n^{c_{\beta,\beta'}m_2}.
\end{split}
\end{equation}

On the other hand, since \(m_1\ge\alpha'n\) in Case~4, and by
\cref{eq:m1m2max,eq:def-N-I-m3}, $N_{\mb I}
\ge
m_1m_2
\ge
\alpha'nm_2 .$
Therefore, for sufficiently large \(n\),
\begin{equation}\label{eq:case4-sparse-decay}
\begin{split}
\left(
1-\frac{2\gamma\log n}{n}
\right)^{N_{\mb I}}
&\le
\exp\left(
-\frac{2\gamma\log n}{n}N_{\mb I}
\right)
\le
n^{-2\alpha'\gamma m_2}.
\end{split}
\end{equation}

Substituting
\cref{eq:case4-multiplicity-final,eq:case4-sparse-decay}
into~\cref{eq:case4-after-row-reduction}, we obtain
\begin{equation}\label{eq:case4-pre-final-polished}
\|\widehat{\mb T}(\mb I)\|_\infty
\le
n^{(c_{\beta,\beta'}-2\alpha'\gamma)m_2}.
\end{equation}

Choose \(\gamma\) sufficiently large such that $2\alpha'\gamma>c_{\beta,\beta'}$.
Then the exponent in~\cref{eq:case4-pre-final-polished} is negative. Since
\(N_{\mb I}>0\) implies \(m_2\ge1\), we finally obtain
\begin{equation}\label{eq:case4-final-polished}
\|\widehat{\mb T}(\mb I)\|_\infty
\le
n^{c_{\beta,\beta'}-2\alpha'\gamma}.
\end{equation}
\textit{Discussion the two orientations.}
We finally verify that the preceding case analysis covers both
representative orientations of \(m_3\). Cases~1 and~2 use only the
orientation-independent bounds
\cref{eq:board-bound-1,eq:board-bound-2,eq:board-bound-3} and therefore
apply whether \(m_3=n_{\mathfrak b}\) or
\(m_3=n_{\mathfrak c}\). As shown at the beginning of Case~3, the
\(\mathfrak c\)-orientation is incompatible with that regime, so only
\(m_3=n_{\mathfrak b}\) needs to be considered there. Case~4 treats
the two orientations separately, using \cref{eq:board-bound-4} in the
\(\mathfrak b\)-orientation and \cref{eq:board-bound-4c} in the
\(\mathfrak c\)-orientation, and yields the same final bound in both
cases. Hence no possible orientation of the maximizing entry has been
omitted.

\paragraph{\textbf{Step 3: Summing over all count data.}}

We now combine the four cases.
Cases~1--3 are exponentially small by
\cref{eq:case1-final-polished,eq:case2-final-polished,eq:case3-final-polished}, whereas Case~4 gives the polynomial bound
in~\cref{eq:case4-final-polished}. The estimates in Cases~1--3 depend only
on the fixed thresholds \(\alpha,\alpha',\beta,\beta'\) and the fixed lower
bound \(\gamma>\gamma_0\). Thus, they may be represented by a single
negligible function \(\eta(n)\) that is uniform over all count data
\(\mb I\). Hence, for every \(\mb I\) satisfying \(N_{\mb I}>0\),
\begin{equation}\label{eq:uniform-TI-bound}
\|\widehat{\mb T}(\mb I)\|_\infty
\le
n^{c_{\beta,\beta'}-2\alpha'\gamma}
+
\eta(n).
\end{equation}

The number of possible count data is bounded by $\#\{\mb I\}
\le
\binom{n+9}{9}
=
O(n^9).$
Therefore, summing~\cref{eq:uniform-TI-bound} over all count data
in~\cref{eq:triangle-hermitian} gives
\begin{equation}
\begin{split}
&
\left\|
\mb M_{\mc E_{\mathrm{shallow}}^{(\gamma)}}^{(3)}
-
\mb M_{\mc E_{\mathrm{phase}}}^{(3)}
\right\|_\infty\le
2^{-3n}
\binom{n+9}{9}
\left[
n^{c_{\beta,\beta'}-2\alpha'\gamma}
+
\eta(n)
\right].
\end{split}
\end{equation}
Since a polynomial factor times a negligible function remains negligible,
we may choose $B_0:=2\alpha',
A_0:=c_{\beta,\beta'}+10,$ and choose $\gamma_0>
\max\left\{
1,\frac{c_{\beta,\beta'}}{2\alpha'}
\right\}.$ Then, for every constant \(\gamma>\gamma_0\) and all sufficiently
large \(n\), 
\begin{equation}
\left\|
\mb M_{\mc E_{\mathrm{shallow}}^{(\gamma)}}^{(3)}
-
\mb M_{\mc E_{\mathrm{phase}}}^{(3)}
\right\|_\infty
\le
2^{-3n}
\left[
n^{A_0-B_0\gamma}
+
\mathrm{negl}(n)
\right].
\end{equation}
This proves the proposition.
\end{proof}

For completeness, we make all constants in the above argument explicit.
We choose $\alpha=\frac1{500},
\beta=\beta'=\alpha'=\frac1{50}.$
These constants satisfy $0<\alpha<\alpha'<\frac12,
0<\beta=\beta'<\frac12,$ and the conditions in
\cref{eq:anc_const1,eq:anc_const2,eq:anc_const3}
are all satisfied.

For these choices, we can take
\[
A_0:=212,
\qquad
B_0:=2\alpha'=\frac1{25},
\qquad
\gamma_0:=5300.
\]
Indeed, for every \(\gamma>\gamma_0\), the requirement in Case~4 is satisfied, while $A_0-B_0\gamma
=
212-\frac{\gamma}{25}
<0.$

We emphasize that these numerical choices are not optimized. They are chosen
only for convenience, but they satisfy all conditions required throughout
Cases~1--4 and make \(A_0\), \(B_0\), and \(\gamma_0\) explicit absolute
constants in~\cref{prop:twirled-third-moment-main}.

\section{Proof of~\cref{prop:log-sparsity-optimal-main}}
\label{ap:log-sparsity-necessary}

Our construction uses a sparse phase circuit architecture in which each $CZ$ interaction is included independently with probability $p=\gamma\log n/n$, followed by local Clifford twirling. A natural question is whether this logarithmic connectivity is intrinsic to the architecture or merely a feature of our analysis. We show that it is necessary: within the same phase circuit architecture, any ensemble with bounded second-order relative error must have $p=\Omega(\log n/n)$. 

\begin{theorem}[Necessity of logarithmic sparsity]
\label{thm:log-sparsity-necessary}
Let $0\leq p\leq 1/2$ for all sufficiently large $n$. If
$\mc E_{\mathrm{shallow}}^{(\gamma)}$ has uniformly bounded second-order relative error, then necessarily
\begin{equation}
p=\Omega\!\left(\frac{\log n}{n}\right).
\end{equation}
\end{theorem}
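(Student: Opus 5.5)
Since the second moment is explicitly diagonal in $\mc B_2^{\otimes n}$ by \cref{lem:k2-common-diagonal-basis,lem:k2-eigenvalues}, and \cref{lem:sparse-second-moment} holds verbatim for a general edge probability $p$ (with $1-2\gamma\log n/n$ replaced by $q:=1-2p\in[0,1]$), I will exhibit a single eigenvector on which the shallow eigenvalue exceeds the Haar eigenvalue by a factor that diverges unless $p=\Omega(\log n/n)$. By \cref{lem:relative-error-opnorm}, $\epsilon_2\ge r^{(2)}_{\mathrm{sym}}\,|\lambda_{\mathrm{shallow}}(\ket{\mb z})-\lambda_{\mathrm{Haar}}(\ket{\mb z})|$ for every $\ket{\mb z}\in\mc B_2^{\otimes n}$. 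Bounded relative error therefore forces $D^2\lambda_{\mathrm{shallow}}(\ket{\mb z})=O(1)$ for every symmetric-type basis vector.

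I take the fully symmetric vector, $I^{(\mb z)}_{\mathrm{sym}}=[n]$ (e.g.\ $\ket{0,0}^{\otimes n}$). Then \cref{eq:k2-shallow-lambda} has no signs, and all terms are nonnegative:
\[
D^2\lambda_{\mathrm{shallow}}=\sum_{I_i\sqcup I_j\sqcup I_k=[n]} q^{|I_j||I_k|}\Bigl(\tfrac23\Bigr)^{|I_i|}\Bigl(\tfrac13\Bigr)^{n-|I_i|}.
\]
Since every term is nonnegative, I may lower bound by retaining only configurations with $|I_j|=1$ (or, more generally, $|I_j|$ small). For $|I_j|=1$ the damping is $q^{|I_k|}$, and summing over $I_k\subseteq[n]\setminus I_j$ gives
\[
n\cdot\tfrac13\Bigl(\tfrac23+\tfrac{q}{3}\Bigr)^{n-1}=\tfrac n3\Bigl(1-\tfrac{2p}{3}\Bigr)^{n-1}\ge \tfrac n3 e^{-cpn}
\]
for small $p$. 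Together with the $D^2\lambda_{\mathrm{phase}}=2-(2/3)^n$ baseline contained in the terms with $|I_j||I_k|=0$, this gives $D^2\lambda_{\mathrm{shallow}}\gtrsim 2+\tfrac n3 e^{-cpn}$. The Haar value is $D^2\cdot 2/(D(D+1))\to 2$.

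Hence $\epsilon_2\ge \tfrac{D+1}{2D}\bigl(\tfrac n3e^{-cpn}-o(1)\bigr)$, up to care in separating the $|I_j|=1$ terms from the dense-part terms (those with $I_k=\emptyset$ must not be double counted, which changes only constants). Bounded $\epsilon_2\le C$ then gives $n e^{-cpn}=O(1)$, i.e.\ $pn\ge \tfrac1c\log n-O(1)$, which is $p=\Omega(\log n/n)$.

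The main obstacle is bookkeeping: I need to make sure the lower bound on the $|I_j|=1$ contribution is genuinely additional to the phase-part contribution. I will do this by writing $D^2\lambda_{\mathrm{shallow}}-D^2\lambda_{\mathrm{phase}}$ as the sum over $|I_j|,|I_k|\ge1$ of nonnegative terms, exactly as in \cref{eq:caseB-signed} with $a_4=a_5=0$, and then keeping only $|I_j|=1$, $|I_k|\ge1$. This yields $\tfrac n3[(1-2p/3)^{n-1}-(2/3)^{n-1}]$. Combining with \cref{prop:k2-phase-haar-norm} and the triangle inequality completes the argument. The regime $p$ near $1/2$ is harmless because there the claimed bound holds trivially.
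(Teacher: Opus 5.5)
Your proposal is correct and follows essentially the same route as the paper. Both arguments use the fully symmetric eigenvector, restrict the nonnegative sparse-minus-dense gap to $|I_j|=1$, $|I_k|\ge 1$ to get $\frac n3\bigl[(1-\frac{2p}{3})^{n-1}-(\frac23)^{n-1}\bigr]$, and combine this with the dense-to-Haar bound via the triangle inequality.

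The only difference is in the last asymptotic step. The paper uses $(1-\frac{2p}{3})^{n-1}=e^{-\frac23 np+o(1)}$ to show divergence whenever $p\le(\frac32-\eta)\log n/n$, which gives the sharper constant $3/2$. Your cruder bound holds uniformly on $[0,\frac12]$ with $c=1$, since $-\log(1-x)\le x/(1-x)$, and it yields $pn\ge\log n-O(1)$ directly.
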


\begin{proof}
We prove the stronger statement that, for any constant $\eta>0$, if
\begin{equation}
p\leq
\left(\frac32-\eta\right)\frac{\log n}{n},
\label{eq:subcritical-edge-probability}
\end{equation} then
$\epsilon_2(\mc E_{\mathrm{shallow}}^{(\gamma)})\to\infty$. In the parametrization
$p=\gamma\log n/n$, condition
\cref{eq:subcritical-edge-probability} corresponds to
$\gamma\leq 3/2-\eta$. To do so, we choose a basis vector
$\ket{\mb z}\in\mc B_2^{\otimes n}$ in the fully symmetric sector,
$I_{\mathrm{sym}}^{(\mb z)}=[n]$. The eigenvalue formulas in
\cref{lem:k2-eigenvalues} give
\begin{equation}
\begin{split}
G_n(p):=
D^2\left[
\lambda_{\mc E_{\mathrm{shallow}}^{(p)}}(\ket{\mb z})
-
\lambda_{\mc E_{\mathrm{phase}}}(\ket{\mb z})
\right]
=
3^{-n}
\sum_{i=0}^{n-2}\binom{n}{i}2^i
\sum_{j=1}^{n-i-1}
\binom{n-i}{j}
(1-2p)^{j(n-i-j)} .
\end{split}
\label{eq:necessary-sparsity-eigenvalue}
\end{equation}
This eigenvalue gap directly lower bounds the relative error. Indeed, by the normalized operator-norm characterization and the reverse triangle inequality,
\begin{align}
\epsilon_2\!\left(\mc E_{\mathrm{shallow}}^{(p)}\right)
&\geq
r_{\mathrm{sym}}^{(2)}
\left|
\lambda_{\mc E_{\mathrm{shallow}}^{(\gamma)}}(\ket{\mb z})
-
\lambda_{\mc E_{\mathrm{Haar}}}(\ket{\mb z})
\right|
\nonumber\\
&\geq
r_{\mathrm{sym}}^{(2)}
\left[\left|
\lambda_{\mc E_{\mathrm{shallow}}^{(\gamma)}}(\ket{\mb z})
-
\lambda_{\mc E_{\mathrm{phase}}}(\ket{\mb z})\right|
-
\left|
\lambda_{\mc E_{\mathrm{phase}}}(\ket{\mb z})
-
\lambda_{\mc E_{\mathrm{Haar}}}(\ket{\mb z})
\right|
\right]
\nonumber\\
&\geq
\frac{D+1}{2D}
\left[
G_n(p)-\left(\frac23\right)^n
\right],
\label{eq:relative-error-from-eigenvalue-gap}
\end{align}
where we used
$r_{\mathrm{sym}}^{(2)}=D(D+1)/2$ and
\cref{prop:k2-phase-haar-norm}. For
$p=\gamma\log n/n$, the ensemble
$\mc E_{\mathrm{shallow}}^{(p)}$ is precisely
$\mc E_{\mathrm{shallow}}^{(\gamma)}$, so
\cref{eq:relative-error-from-eigenvalue-gap} directly bounds
$\epsilon_2(\mc E_{\mathrm{shallow}}^{(\gamma)})$.

All terms in \cref{eq:necessary-sparsity-eigenvalue} are nonnegative because $0\leq p\leq1/2$. Retaining only the terms with $j=1$ gives
\begin{align}
G_n(p)
&\geq
3^{-n}
\sum_{i=0}^{n-2}
\binom{n}{i}2^i(n-i)(1-2p)^{n-i-1}
\nonumber\\
&=
\sum_{k=2}^{n}
\binom{n}{k}
\left(\frac13\right)^k
\left(\frac23\right)^{n-k}
k(1-2p)^{k-1}
\nonumber\\
&=
\frac n3
\left(1-\frac{2p}{3}\right)^{n-1}
-
\frac n3\left(\frac23\right)^{n-1},
\label{eq:necessary-sparsity-j1}
\end{align}
where in the second line we set $k=n-i$, and the last equality follows by differentiating the corresponding binomial expansion.

Under \cref{eq:subcritical-edge-probability}, one has
$p=O(\log n/n)$ and hence $np^2=o(1)$. It follows that
\begin{equation}
\left(1-\frac{2p}{3}\right)^{n-1}
=
\exp\!\left[-\frac23np+o(1)\right]
\geq
n^{-1+\frac{2\eta}{3}+o(1)}.
\label{eq:subcritical-decay}
\end{equation}
Substituting this estimate into
\cref{eq:necessary-sparsity-j1} yields
\begin{equation}
G_n(p)
\geq
\frac13 n^{\frac{2\eta}{3}+o(1)}
-
\frac n3\left(\frac23\right)^{n-1}
\longrightarrow\infty .
\label{eq:eigenvalue-gap-divergence}
\end{equation}
Combining this with
\cref{eq:relative-error-from-eigenvalue-gap}, we obtain
\begin{equation}
\begin{split}
\epsilon_2\!\left(\mc E_{\mathrm{shallow}}^{(p)}\right)
&\geq
\frac{D+1}{2D}
\left[
G_n(p)-\left(\frac23\right)^n
\right]
\geq
\frac{D+1}{2D}
\left[
\frac13 n^{\frac{2\eta}{3}+o(1)}
-
\frac n3\left(\frac23\right)^{n-1}
-
\left(\frac23\right)^n
\right]
\longrightarrow\infty .
\end{split}
\label{eq:necessary-sparsity-divergence}
\end{equation}
Therefore, any edge probability satisfying
\cref{eq:subcritical-edge-probability} along an infinite subsequence leads to unbounded second-order relative error. Consequently, uniformly bounded relative error requires
$p=\Omega(\log n/n)$.

\end{proof}

Thus, within the phase circuit architecture, logarithmic expected degree is necessary for bounded second-order relative error. Together with \cref{thm:second-moment}, this establishes the optimal system-size scaling $p=\Theta\!\left(\frac{\log n}{n}\right),$
equivalently $\Theta(\log n)$ expected degree and
$\Theta(n\log n)$ expected number of $CZ$ gates, up to constant factors.

\end{appendix}
\end{document}